\begin{document}
\newtheorem{thm}{Theorem}[section]
\newtheorem{lem}[thm]{Lemma}
\newtheorem{prop}[thm]{Proposition}
\newtheorem{cor}[thm]{Corollary}
\theoremstyle{definition}
\newtheorem{assum}[thm]{Assumption}
\newtheorem{notation}[thm]{Notation}
\newtheorem{defn}[thm]{Definition}
\newtheorem{clm}[thm]{Claim}
\newtheorem{ex}[thm]{Example}
\theoremstyle{remark}
\newtheorem{rem}[thm]{Remark}
\newcommand{\unit}{\mathbb I}
\newcommand{\ali}[1]{{\mathfrak A}_{[ #1 ,\infty)}}
\newcommand{\alm}[1]{{\mathfrak A}_{(-\infty, #1 ]}}
\newcommand{\nn}[1]{\lV #1 \rV}
\newcommand{\br}{{\mathbb R}}
\newcommand{\dm}{{\rm dom}\mu}
\newcommand{\Ad}{\mathop{\mathrm{Ad}}\nolimits}
\newcommand{\Proj}{\mathop{\mathrm{Proj}}\nolimits}
\newcommand{\RRe}{\mathop{\mathrm{Re}}\nolimits}
\newcommand{\RIm}{\mathop{\mathrm{Im}}\nolimits}
\newcommand{\Wo}{\mathop{\mathrm{Wo}}\nolimits}
\newcommand{\Prim}{\mathop{\mathrm{Prim}_1}\nolimits}
\newcommand{\Primz}{\mathop{\mathrm{Prim}}\nolimits}
\newcommand{\Class}{\mathop{\mathrm{Class}}\nolimits}
\newcommand{\ClassA}{\mathop{\mathrm{ClassA}}\nolimits}
\def\qed{{\unskip\nobreak\hfil\penalty50
\hskip2em\hbox{}\nobreak\hfil$\square$
\parfillskip=0pt \finalhyphendemerits=0\par}\medskip}
\def\proof{\trivlist \item[\hskip \labelsep{\bf Proof.\ }]}
\def\endproof{\null\hfill\qed\endtrivlist\noindent}
\def\proofof[#1]{\trivlist \item[\hskip \labelsep{\bf Proof of #1.\ }]}
\def\endproofof{\null\hfill\qed\endtrivlist\noindent}
\newcommand{\caA}{{\mathcal A}}
\newcommand{\caB}{{\mathcal B}}
\newcommand{\caC}{{\mathcal C}}
\newcommand{\caD}{{\mathcal D}}
\newcommand{\caE}{{\mathcal E}}
\newcommand{\caF}{{\mathcal F}}
\newcommand{\caG}{{\mathcal G}}
\newcommand{\caH}{{\mathcal H}}
\newcommand{\caI}{{\mathcal I}}
\newcommand{\caJ}{{\mathcal J}}
\newcommand{\caK}{{\mathcal K}}
\newcommand{\caL}{{\mathcal L}}
\newcommand{\caM}{{\mathcal M}}
\newcommand{\caN}{{\mathcal N}}
\newcommand{\caO}{{\mathcal O}}
\newcommand{\caP}{{\mathcal P}}
\newcommand{\caQ}{{\mathcal Q}}
\newcommand{\caR}{{\mathcal R}}
\newcommand{\caS}{{\mathcal S}}
\newcommand{\caT}{{\mathcal T}}
\newcommand{\caU}{{\mathcal U}}
\newcommand{\caV}{{\mathcal V}}
\newcommand{\caW}{{\mathcal W}}
\newcommand{\caX}{{\mathcal X}}
\newcommand{\caY}{{\mathcal Y}}
\newcommand{\caZ}{{\mathcal Z}}
\newcommand{\bbA}{{\mathbb A}}
\newcommand{\bbB}{{\mathbb B}}
\newcommand{\bbC}{{\mathbb C}}
\newcommand{\bbD}{{\mathbb D}}
\newcommand{\bbE}{{\mathbb E}}
\newcommand{\bbF}{{\mathbb F}}
\newcommand{\bbG}{{\mathbb G}}
\newcommand{\bbH}{{\mathbb H}}
\newcommand{\bbI}{{\mathbb I}}
\newcommand{\bbJ}{{\mathbb J}}
\newcommand{\bbK}{{\mathbb K}}
\newcommand{\bbL}{{\mathbb L}}
\newcommand{\bbM}{{\mathbb M}}
\newcommand{\bbN}{{\mathbb N}}
\newcommand{\bbO}{{\mathbb O}}
\newcommand{\bbP}{{\mathbb P}}
\newcommand{\bbQ}{{\mathbb Q}}
\newcommand{\bbR}{{\mathbb R}}
\newcommand{\bbS}{{\mathbb S}}
\newcommand{\bbT}{{\mathbb T}}
\newcommand{\bbU}{{\mathbb U}}
\newcommand{\bbV}{{\mathbb V}}
\newcommand{\bbW}{{\mathbb W}}
\newcommand{\bbX}{{\mathbb X}}
\newcommand{\bbY}{{\mathbb Y}}
\newcommand{\bbZ}{{\mathbb Z}}
\newcommand{\str}{^*}
\newcommand{\lv}{\left \vert}
\newcommand{\rv}{\right \vert}
\newcommand{\lV}{\left \Vert}
\newcommand{\rV}{\right \Vert}
\newcommand{\la}{\left \langle}
\newcommand{\ra}{\right \rangle}
\newcommand{\ltm}{\left \{}
\newcommand{\rtm}{\right \}}
\newcommand{\lcm}{\left [}
\newcommand{\rcm}{\right ]}
\newcommand{\ket}[1]{\lv #1 \ra}
\newcommand{\bra}[1]{\la #1 \rv}
\newcommand{\lmk}{\left (}
\newcommand{\rmk}{\right )}
\newcommand{\al}{{\mathcal A}}
\newcommand{\md}{M_d({\mathbb C})}
\newcommand{\Tr}{\mathop{\mathrm{Tr}}\nolimits}
\newcommand{\Ran}{\mathop{\mathrm{Ran}}\nolimits}
\newcommand{\Ker}{\mathop{\mathrm{Ker}}\nolimits}
\newcommand{\spn}{\mathop{\mathrm{span}}\nolimits}
\newcommand{\Mat}{\mathop{\mathrm{M}}\nolimits}
\newcommand{\UT}{\mathop{\mathrm{UT}}\nolimits}
\newcommand{\GL}{\mathop{\mathrm{GL}}\nolimits}
\newcommand{\spa}{\mathop{\mathrm{span}}\nolimits}
\newcommand{\supp}{\mathop{\mathrm{supp}}\nolimits}
\newcommand{\rank}{\mathop{\mathrm{rank}}\nolimits}
\newcommand{\id}{\mathop{\mathrm{id}}\nolimits}
\newcommand{\idd}{\mathop{\mathrm{id}}\nolimits}
\newcommand{\ran}{\mathop{\mathrm{Ran}}\nolimits}
\newcommand{\dr}{ \mathop{\mathrm{d}_{{\mathbb R}^k}}\nolimits} 
\newcommand{\dc}{ \mathop{\mathrm{d}_{\cc}}\nolimits} \newcommand{\drr}{ \mathop{\mathrm{d}_{\rr}}\nolimits} 
\newcommand{\hee}{\widehat{e_{11}^{(n_0)}}}
\newcommand{\zin}{\mathbb{Z}}
\newcommand{\rr}{\mathbb{R}}
\newcommand{\cc}{\mathbb{C}}
\newcommand{\ww}{\mathbb{W}}
\newcommand{\nan}{\mathbb{N}}\newcommand{\bb}{\mathbb{B}}
\newcommand{\aaa}{\mathbb{A}}\newcommand{\ee}{\mathbb{E}}
\newcommand{\pp}{\mathbb{P}}
\newcommand{\wks}{\mathop{\mathrm{wk^*-}}\nolimits}
\newcommand{\he}{\hat {\mathbb E}}
\newcommand{\ikn}{{\caI}_{k,n}}
\newcommand{\mk}{{\Mat_k}}
\newcommand{\mnz}{\Mat_{n_0}}
\newcommand{\mn}{\Mat_{n}}
\newcommand{\mkk}{\Mat_{k_R+k_L+1}}
\newcommand{\mnzk}{\mnz\otimes \mkk}
\newcommand{\hbb}{H^{k,\bb}_{m,p,q}}
\newcommand{\gb}[1]{\Gamma_{#1,\bb}}
\newcommand{\cgv}[1]{\caG_{#1,\vv}}
\newcommand{\gv}[1]{\Gamma_{#1,\vv}}
\newcommand{\ga}[1]{\Gamma_{#1,\bbA}}
\newcommand{\gvtr}[1]{\Gamma_{#1,\vv(t)}^{(R)}}
\newcommand{\gbtr}[1]{\Gamma_{#1,\bb(t)}^{(R)}}
\newcommand{\cgb}[1]{\caG_{#1,\bb}}
\newcommand{\cgbt}[1]{\caG_{#1,\bb(t)}}
\newcommand{\gvp}[1]{G_{#1,\vv}}
\newcommand{\gbp}[1]{G_{#1,\bb}}
\newcommand{\gbpt}[1]{G_{#1,\bb(t)}}
\newcommand{\Pbm}[1]{\Phi_{#1,\bb}}
\newcommand{\Pvm}[1]{\Phi_{#1,\bb}}
\newcommand{\mb}{m_{\bb}}
\newcommand{\hpuk}{\hat P^{(n_{0,\bbV},k_{R,\bbK},k_{L\bbV})}_R}
\newcommand{\hpdk}{\hat P^{(n_0,k_{R\bbV},k_{L\bbV})}_L}
\newcommand{\E}[1]{\widehat{\mathbb{E}}^{(#1)}}
\newcommand{\lal}{{\boldsymbol\lambda}}
\newcommand{\vpl}{{\boldsymbol\varpi}}
\newcommand{\oo}{{\boldsymbol\omega}}
\newcommand{\uu}{{\boldsymbol u}}
\newcommand{\uup}{{\boldsymbol \upsilon}}
\newcommand{\vv}{{\boldsymbol v}}
\newcommand{\bbm}{{\boldsymbol m}}
\newcommand{\kl}[1]{{\mathcal K}_{#1}}
\newcommand{\wb}[1]{\widehat{B_{\mu^{(#1)}}}}
\newcommand{\wa}[1]{\widehat{A_{\mu^{(#1)}}}}
\newcommand{\ws}[1]{\widehat{\psi_{\mu^{(#1)}}}}
\newcommand{\wsn}[1]{\widehat{\psi_{\nu^{(#1)}}}}
\newcommand{\wv}[1]{\widehat{v_{\mu^{(#1)}}}}
\newcommand{\wbn}[1]{\widehat{B_{\nu^{(#1)}}}}
\newcommand{\wan}[1]{\widehat{A_{\nu^{(#1)}}}}
\newcommand{\wo}[1]{\widehat{\omega_{\mu^{(#1)}}}}
\newcommand{\dist}{\dc}
\newcommand{\hpu}{\hat P^{(n_0,k_R,k_L)}_R}
\newcommand{\hpd}{\hat P^{(n_0,k_R,k_L)}_L}
\newcommand{\pu}{ P^{(k_R,k_L)}_R}
\newcommand{\pd}{ P^{(k_R,k_L)}_L}
\newcommand{\pzu}{ P^{(k_R,0)}_R}
\newcommand{\pzd}{ P^{(0,k_L)}_L}
\newcommand{\puuz}{P_{R,\bbA}}
\newcommand{\pddz}{P_{L,\bbA}}
\newcommand{\puu}{\tilde P_R}
\newcommand{\pdd}{\tilde P_L}
\newcommand{\qu}[1]{ Q^{(k_R,k_L)}_{R, #1}}
\newcommand{\qd}[1]{ Q^{(k_R,k_L)}_{L,#1}}
\newcommand{\qur}[1]{ Q^{(k_R,0)}_{R, #1}}
\newcommand{\qdl}[1]{ Q^{(0,k_L)}_{L,#1}}
\newcommand{\hqu}[1]{ \hat Q^{(n_0,k_R,k_L)}_{R, #1}}
\newcommand{\hqd}[1]{ \hat Q^{(n_0,k_R,k_L)}_{L,#1}}
\newcommand{\eij}[1] {E^{(k_R,k_L)}_{#1}}
\newcommand{\eijz}[1] {E^{(n_0-1,n_0-1)}_{#1}}
\newcommand{\heij}[1] {\hat E^{(k_R,k_L)}_{#1}}
\newcommand{\cn}{\mathop{\mathrm{CN}(n_0,k_R,k_L)}\nolimits}
\newcommand{\ghd}[1]{\mathop{\mathrm{GHL}(#1,n_0,k_R,k_L,\bbG)}\nolimits}
\newcommand{\ghu}[1]{\mathop{\mathrm{GHR}(#1,n_0,k_R,k_L,\bbD)}\nolimits}
\newcommand{\ghdb}[1]{\mathop{\mathrm{GHL}(#1,n_0,k_R,k_L,\bbG)}\nolimits}
\newcommand{\ghub}[1]{\mathop{\mathrm{GHR}(#1,n_0,k_R,k_L,\bbD)}\nolimits}
\newcommand{\hfu}[1]{{\mathfrak H}_{#1}^R}
\newcommand{\hfd}[1]{{\mathfrak H}_{#1}^L}
\newcommand{\hfui}[1]{{\mathfrak H}_{#1,1}^R}
\newcommand{\hfdi}[1]{{\mathfrak H}_{#1,1}^L}
\newcommand{\hfuz}[1]{{\mathfrak H}_{#1,0}^R}
\newcommand{\hfdz}[1]{{\mathfrak H}_{#1,0}^L}
\newcommand{\CN}{\overline{\hpd}\lmk\mnzk \rmk\overline{\hpu}}
\newcommand{\cnz}[1] {\chi_{#1}^{(n_0)}}
\newcommand{\eu}{\eta_{R}^{(k_R,k_L)}}
\newcommand{\ezu}{\eta_{R}^{(n_0-1,n_0-1)}}
\newcommand{\ed}{\eta_{L}^{(k_R,k_L)}}
\newcommand{\ezd}{\eta_{L}^{(n_0-1,n_0-1)}}
\newcommand{\fii}[1]{f_{#1}^{(k_R,k_L)}}
\newcommand{\fiz}[1]{f_{#1}^{(n_0-1,n_0-1)}}
\newcommand{\gaa}[2]{g_{#1,#2,\bbA}}
\newcommand{\zeij}[1] {e_{#1}^{(n_0)}}
\newcommand{\CL}{\Class(n,n_0,k_R,k_L)}
\newcommand{\CLn}{\Class_2(n,n_0,k_R,k_L)}
\newcommand{\braket}[2]{\left\langle#1,#2\right\rangle}
\newcommand{\abs}[1]{\left\vert#1\right\vert}
\newcommand{\puz}{ P^{(n_0-1,n_0-1)}_R}
\newcommand{\pdz}{ P^{(n_0-1,n_0-1)}_L}
\newcommand{\ir}{I_{R}^{(k_R,k_L)}}
\newcommand{\il}{I_{L}^{(k_R,k_L)}}
\newcommand{\eijr}[1] {E^{(k_R,0)}_{#1}}
\newcommand{\eijl}[1] {E^{(0,k_L)}_{#1}}
\newcommand{\prr}{ P^{(k_R,0)}_R}
\newcommand{\pll}{ P^{(0,k_L)}_L}
\newtheorem{nota}{Notation}[section]
\def\qed{{\unskip\nobreak\hfil\penalty50
\hskip2em\hbox{}\nobreak\hfil$\square$
\parfillskip=0pt \finalhyphendemerits=0\par}\medskip}
\def\proof{\trivlist \item[\hskip \labelsep{\bf Proof.\ }]}
\def\endproof{\null\hfill\qed\endtrivlist\noindent}
\def\proofof[#1]{\trivlist \item[\hskip \labelsep{\bf Proof of #1.\ }]}
\def\endproofof{\null\hfill\qed\endtrivlist\noindent}
\title{A class of asymmetric gapped Hamiltonians on quantum spin chains and its characterization III}
\author{
{\sc Yoshiko Ogata}\footnote{Supported in part by
the Grants-in-Aid for
Scientific Research, JSPS.}\\
{\small Graduate School of Mathematical Sciences}\\
{\small The University of Tokyo, Komaba, Tokyo, 153-8914, Japan}
}

\maketitle{}
\centerline{\sl }
\begin{abstract}
In this paper, we consider classification problem of asymmetric gapped Hamiltonians, which are given as the non-degenerate part of the Hamiltonians introduced in \cite{Ogata1}.
We consider the $C^1$-classification, which takes into account the effect of boundaries.
We show that the left and right degeneracies of edge ground states are the complete invariant.
As a corollary, we consider the bulk-classification problem.
We study 
Hamiltonians that1.are given by translation invariant finite range interactions, 2.are gapped in the bulk, 3.are
frustration-free, 4.have uniformly bounded ground state degeneracy on finite intervals, and
5.have a unique bulk ground state.
We show that for the bulk-classification, any such Hamiltonians are equivalent.

\end{abstract}

\section{Introduction}

In Part I \cite{Ogata1}, we introduced a class of gapped Hamiltonians on quantum spin chains,
which allows asymmetric edge ground states.
They are defined as MPS (matrix product state) Hamiltonians given by $n$-tuple of matrices in $\ClassA$, a class we introduced in \cite{Ogata1}.
It is an asymmetric generalization of the class of Hamiltonians given in 
\cite{Fannes:1992vq}. 
We investigated the properties of this new class in Part I \cite{Ogata1}.
In particular, we showed that Hamiltonians in this class satisfies five qualitative physical conditions, which are listed as [A1]-[A5] in \cite{Ogata2}. 
We call $\tilde H$, the set of Hamiltonians satisfying [A1]-[A5].
In Part II \cite{Ogata2}, conversely,
we showed that these five properties [A1]-[A5]
actually guarantee the ground state structure of the Hamiltonian to be 
captured by the MPS Hamiltonians given by $\ClassA$.
More precisely, we showed for any Hamiltonian in $\tilde H$, there is an MPS Hamiltonian given by $\ClassA$
satisfying the followings: {\it 1.} The ground state spaces  of the two Hamiltonians on the infinite intervals coincide.
{\it 2.} The spectral projections onto the ground state space of the original Hamiltonian on finite intervals
are well approximated by that of the MPS one. 
From the latter property we see that two Hamiltonians are in the same class in the classification problem of gapped Hamiltonians with open boundary conditions,
(the type II-$C^1$-classification).
Hence the classification problem of $\tilde H$ is reduced to the classification problem of 
MPS Hamiltonians given by $\ClassA$.
In this paper, we classify MPS Hamiltonians given by $\ClassA'$, where $\ClassA'$ is the ``non-degenerate'' part of $\ClassA$.
We show that the left and right degeneracies of edge ground states are the complete invariant for the type II-$C^1$-classification, for this class.

As an important application, we consider bulk-classification problem of gapped Hamiltonians.
It is believed that there exists only one gapped bulk ground state phase in one dimensional quantum spin systems.
It has been an open problem for a while, if two MPS-Hamiltonians from the class of \cite{Fannes:1992vq} can be connected to each other without closing the gap {\it in the bulk}, and 
{\it without breaking the translation invariance} (\cite{bo},\cite{SW}).
We solve this affirmatively.
More generally, we consider
Hamiltonians which 1. are given by translation invariant finite range interactions, 2. are gapped in the bulk, 3. are
frustration-free, 4. have uniformly bounded ground state degeneracy on finite intervals, and
5. have a unique bulk ground state,.
We show that for the bulk-classification, any such Hamiltonians are equivalent.

Throughout this article, $2\le n\in\nan$ is fixed as the dimension of the spin under consideration.
We use freely the notations and definitions given in Part I \cite{Ogata1} Subsection 1.1, 1.2, 1.3, and Appendix A.

\subsection{$\Class A'$ and its type II-$C^1$-classification}For $m\in\nan$, we denote by $\caJ_m$ the set of all positive translation invariant
interactions with interaction length less than or equal to $m$.
We also set $\caJ:=\cup_{m\in\nan} \caJ_m$.
A natural number $m\in\nan$ and an element $h\in\caA_{[0,m-1]}$,
define an interaction $\Phi_h$ by
\begin{align}\label{hamdef}
\Phi_h(X):=\left\{
\begin{gathered}
\tau_x\lmk h\rmk,\quad \text{if}\quad  X=[x,x+m-1] \quad \text{for some}\quad  x\in\bbZ\\
0,\quad\text{otherwise}
\end{gathered}\right.
\end{align}
for $X\in {\mathfrak S}_{\bbZ}$.
A Hamiltonian associated with $\Phi$ is a net of self-adjoint operators $H_{\Phi}:=\left((H_{\Phi })_\Lambda\right)_{\Lambda\in{\mathfrak I}_{\bbZ}}$ such that 
\begin{equation}\label{GenHamiltonian}
\lmk H_{\Phi}\rmk_{\Lambda}:=\sum_{X\subset{\Lambda}}\Phi(X).
\end{equation}
First we confirm what we mean by gapped Hamiltonian with open boundary conditions.
\begin{defn}
Let $\Phi$ be a translation invariant finite range interaction on $\caA_{\bbZ}$.
For each $N\in\nan$, let $G_{[0,N-1]}$ be the spectral projection of
$\lmk H_{\Phi}\rmk_{[0,N-1]}$  corresponding to the lowest eigenvalue $\inf\sigma\lmk \lmk H_{\Phi}\rmk_{[0,N-1]}\rmk$.
We say that the Hamiltonian $H_\Phi$ is {gapped with respect to the open boundary condition}  if there exist $\gamma>0$ and $N_0\in\nan$ such that
\[
\inf\sigma\lmk \lmk H_{\Phi}\rmk_{[0,N-1]}\rmk +\gamma\lmk \unit-G_{[0,N-1]]}\rmk \le
\lmk H_{\Phi}\rmk_{[0,N-1]},\quad
N_0\le N\in\nan.
\]
We call this $\gamma$, a lower bound of the gap.
\end{defn}

Let us recall the definition of $C^1$-classification I, II.
\begin{defn}[$C^1$-classification I of gapped Hamiltonians]\label{def:phafst}
Let $H_0,H_1$ be Hamiltonians gapped with respect to the open boundary conditions, associated with interactions
$\Phi_{0},\Phi_{1}\in{\caJ}$.
We say that  $H_0,H_1$ are 
typeI-$C^1$-equivalent if the following conditions are satisfied.
\begin{enumerate}
\item There exist an $m\in\nan$ and a continuous and piecewise $C^1$-path $\Phi:[0,1]\to {\caJ}_m$ such that $\Phi(0)=\Phi_0$, 
$\Phi(1)= \Phi_1$.
\item  Let   $H(t)$ be the Hamiltonian associated with $\Phi(t)$ for each $t\in[0,1]$.
There are $\gamma>0$, $N_0\in\nan$, and finite intervals $I(t)=[a(t), b(t)]$, whose endpoints $a(t), b(t)$ smoothly depending on $t\in[0,1]$,
such that for any $N_0\le N\in\nan$,
the smallest eigenvalue of $H(t)_{[0,N-1]}$ is in $I(t)$ and
the rest of the spectrum is in $[b(t)+\gamma,\infty)$.
\end{enumerate}
We write $H_0\simeq _IH_1$ when $H_0,H_1$ are 
type I-$C^1$-equivalent.
\end{defn}
\begin{defn}[$C^1$-classification II of gapped Hamiltonians]\label{def:phasec}
Let $H_0,H_1$ be Hamiltonians gapped with respect to the open boundary conditions, associated with interactions
$\Phi_{0},\Phi_{1}\in{\caJ}$.
We say that  $H_0,H_1$ are 
type II-$C^1$-equivalent if the following conditions are satisfied.
\begin{enumerate}
\item There exist $m\in\nan$ and a continuous and piecewise $C^1$-path $\Phi:[0,1]\to {\caJ}_m$ such that $\Phi(0)=\Phi_0$, 
$\Phi(1)= \Phi_1$. 
\item  Let $H(t)$ be the Hamiltonian associated with $\Phi(t)$ for each $t\in[0,1]$.
There are $\gamma>0$, $N_0\in\nan$, and finite intervals $I(t)=[a(t), b(t)]$, $t\in[0,1]$, satisfying the followings:
\begin{description}
\item[(i)] the endpoints $a(t), b(t)$ smoothly depend on $t\in[0,1]$, and
\item[(ii)]there exists a sequence $\{\varepsilon_N\}_{N\in\nan}$
of positive numbers with $\varepsilon_N\to 0$, for $N\to\infty$,
such that $\sigma\lmk H(t)_{[0,N-1]}\rmk\cap I(t)
=\sigma\lmk H(t)_{[0,N-1]}\rmk\cap 
[\lambda(t,N), \lambda(t,N)+\varepsilon_N]$,
and $\sigma\lmk H(t)_{[0,N-1]}\rmk\cap I(t)^c=
\sigma\lmk H(t)_{[0,N-1]}\rmk\cap [b(t)+\gamma,\infty)$
 for all $N\ge N_0$ and $t\in[0,1]$,
where $\lambda(t,N)$
is the smallest eigenvalue  of $H(t)_{[0,N-1]}$.
\end{description}
%
\end{enumerate}
We write 
$H_0\simeq_{II} H_1$, when $H_0$ and $H_1$ are $type II-C^1$-equivalent.
\end{defn}
In Part II \cite{Ogata2}, we discussed the type II-$C^1$-classification for a class of Hamiltonians $\tilde\caH$ satisfying five qualitative properties.
We showed that any element in $\tilde \caH$ is type II-$C^1$-equivalent to
an MPS Hamiltonian $H_{\Phi_{m,\bb}}$
which is given by some $\bb\in\ClassA$, the class which was introduced in Part I \cite{Ogata1}.
In other words, the problem of type II-$C^1$-classification for this family $\tilde \caH$  is reduced to the classification problem of
MPS Hamiltonians given by $\ClassA$.
In this paper, we classify Hamiltonians given by the ``non-degenerate'' part of $\ClassA$, which we call $\ClassA'$.

The subclass $\ClassA'$ is the set of elements of 
$\Class A$ with the non-degeneracy condition on $\lal$.
Namely we assume $\lal$ in Definition 1.14 Part I \cite{Ogata1} to be in
$\Wo'(K_R,k_L)$ defined in the following.
\begin{defn}\label{def:class}
For $k_R,k_L\in\nan\cup\{0\}$,
we denote by $\Wo'(k_R,k_L)$ the set of all
${\boldsymbol\lambda}=(\lambda_{-k_R},\ldots,\lambda_{-1},\lambda_0,\lambda_1,\ldots,\lambda_{k_L})\in
\cc^{k_R+k_L+1}$
satisfying
\begin{align*}
\lambda_0=1,\quad 0<\lv\lambda_{-k_R}\rv\le \lv\lambda_{-k_R+1}\rv\le \cdots
\lv\lambda_{-1}\rv<1,\quad 0<\lv\lambda_{k_L}\rv\le \lv\lambda_{k_L-1}\rv\le \cdots
\lv\lambda_{1}\rv<1,
\end{align*}
and $\lambda_i\neq\lambda_j$ if $-k_R\le i<j\le 0$ or $0\le i<j\le k_L$.
\end{defn}
The last condition in the definition corresponds to the non-degeneracy.
Recall the definitions in subsection 1.3 of \cite{Ogata1}.
In particular,  $\ClassA$ is defined as follows.
\begin{defn}
Let $n,n_0\in\nan$ with $n\ge 2$, $k_R,k_L\in \nan\cup\{0\}$ and
$(\lal,\bbD,\bbG,Y)\in\caT(k_R,k_L)$.
We denote by ${\mathfrak B}(n,n_0,k_R,k_L,\lal,\bbD,\bbG,Y)$
the set of all $n$-tuples $\bb=(B_1,\ldots,B_n)\in\mnz\otimes\lmk
 \caD(k_R,k_L,\bbD,\bbG)\Lambda_{\lal}\lmk 1+Y\rmk
\rmk$
satisfying 
\begin{align}\label{eq:lblb}
l_\bb=
l_{\bb}(n,n_0,k_R,k_L,\lal,\bbD,\bbG,Y)
:=\inf\left\{l\mid
\caK_{l'}(\bbB)=
\mnz\otimes\lmk
 \caD(k_R,k_L,\bbD,\bbG)\lmk \Lambda_{\lal}\lmk 1+Y\rmk\rmk^{l'}
\rmk
\text{ for all } l'\ge l
\right\}
<\infty,
\end{align}
and $r_{T_{\bb}}=1$.
We define $\ClassA$ by
\[
\ClassA:=\bigcup
\left\{ {\mathfrak B}(n,n_0,k_R,k_L,\lal,\bbD,\bbG,Y)
\mid n_0\in\nan, k_R,k_L\in \nan\cup\{0\}, (\lal,\bbD,\bbG,Y)\in\caT(k_R,k_L)
\right\}.
\]
\end{defn}We define $\ClassA'$, as the "non-degenerate" part of $\ClassA$.
The non-degeneracy is about $\lal$, i.e., $\lal$ belongs to $\Wo'(k_R,k_L)$, the non-degenerate part $\Wo(k_R,k_L)$.
\begin{defn}
We define $\ClassA'$ by
\begin{align*}
\Class A':=&\bigcup
\left\{ {\mathfrak B}(n,n_0,k_R,k_L,\lal,\bbD,\bbG,Y)
\mid n_0\in\nan, k_R,k_L\in\nan\cup\{0\},\;(\lal,\bbD,\bbG,Y)\in\caT(k_R,k_L),\lal\in\Wo'(k_R,k_L)
\right\}\\
=&\bigcup
\left\{ {\mathfrak B}(n,n_0,k_R,k_L,\lal,\bbD,\bbG,0)
\mid n_0\in\nan, k_R,k_L\in\nan\cup\{0\},\;(\lal,\bbD,\bbG,0)\in\caT(k_R,k_L),\lal\in\Wo'(k_R,k_L)
\right\}.
\end{align*}
\begin{rem}
The condition $\lal\in\Wo'(k_R,k_L)$ and $(\lal,\bbD,\bbG,Y)\in\caT(k_R,k_L)$ automatically implies $Y=0$
(Lemma \ref{lem:ad}).
This corresponds to the second inequality.
\end{rem}

Furthermore, for each $n_0\in\nan$ and $k_R,k_L\in\nan\cup\{0\}$, we set
\[
\Class(n,n_0,k_R,k_L):=\bigcup
\left\{ {\mathfrak B}(n,n_0,k_R,k_L,\lal,\bbD,\bbG,0)
\mid(\lal,\bbD,\bbG,0)\in\caT(k_R,k_L),\lal\in\Wo'(k_R,k_L)
\right\}.
\]
\end{defn}
We denote by $\caH(n,n_0,k_R,k_L) $, the set of Hamiltonians $H_{\Phi_{m,\bbB}}$ given by
$\bbB\in \Class(n,n_0,k_R,k_L)$, and $m\ge 2l_\bbB$.
Recall from Part I \cite{Ogata1} Theorem 1.18 (vi) that $n_0(k_R+1)$ (resp. $n_0(k_L+1)$) is the degree of ground state degeneracy on right (resp. left) half infinite chain.
The following theorem says that these numbers are complete invariant of the type II $C^1$-classification of MPS Hamiltonians given by $\ClassA'$.
\begin{thm}\label{singthm}
Let $n_0, n_0'\in\nan$, $k_R,k_L,k_R',k_L'\in\nan\cup\{0\}$,
 $\bb\in\Class(n,n_0,k_R,k_L)$, and
$\bb'\in\Class(n,n_0',k_R',k_L')$.
Then for any $m\ge 2l_\bbB$, and
$m'\ge 2l_{\bbB'}$,
we have
$H_{\Phi_{m,\bb}}\simeq_{II} H_{\Phi_{m',\bb'}}$
if and only of
\[
n_0(k_R+1)=n_0'(k_R'+1),\;\text{and}\;\;
n_0(k_L+1)=n_0'(k_L'+1).
\]
\end{thm}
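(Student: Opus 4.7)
The plan is to prove both implications, with the sufficiency direction being the substantial one.

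For necessity, I would argue that $n_0(k_R+1)$ and $n_0(k_L+1)$ are individually invariants of type II-$C^1$-equivalence. Along a type II path $\Phi(t)$, the spectral projection of $H(t)_{[0,N-1]}$ onto $I(t)$ is continuous in $t$ with locally constant rank, so the total almost-ground multiplicity on $[0,N-1]$ is preserved. By Part I Theorem 1.18 (vi), for large $N$ this total factors as $n_0(k_R+1)\cdot n_0(k_L+1)$. To separate the two factors I would use that the bulk gap remains open along the path, so the half-infinite ground state dimensions---which are $n_0(k_R+1)$ on $[0,\infty)$ and $n_0(k_L+1)$ on $(-\infty,0]$---are each continuous in $t$ and hence individually preserved.

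For sufficiency, I would construct an explicit type II-$C^1$ path in two steps. \emph{Step 1.} Fix $(n_0,k_R,k_L)$ and show that $\caH(n,n_0,k_R,k_L)$ lies in a single equivalence class. The underlying parameter space is path-connected: $\Wo'(k_R,k_L)$ is open and path-connected (the nondegeneracy $\lambda_i\neq\lambda_j$ removes only a thin subset), and the remaining data $(\bbD,\bbG,\bb)$ vary over a connected complex-algebraic locus. Any continuous path in parameters induces a continuous path of interactions with a common range $m$; the uniform Part I analysis along this path yields the gap and $\varepsilon_N$-cluster property required by Definition~\ref{def:phasec}.

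\emph{Step 2.} Connect triples $(n_0,k_R,k_L)$ and $(n_0',k_R',k_L')$ whenever the two products agree. I would reduce every class with invariants $(d_R,d_L)$ to a canonical form, say $(1,d_R-1,d_L-1)$, through an explicit MPS deformation that ``merges'' the internal $\mnz$-factor into the diagonal block: starting from a suitable block-diagonal MPS in $\Class(n,n_0,k_R,k_L)$, one continuously deforms $\lal$ so that the $n_0$-fold multiplicities at each level of the transfer-matrix spectrum split into distinct eigenvalues (or vice versa), together with a matching deformation of $\bbD,\bbG$ and reblocking of $\bb$, ending in an element of $\Class(n,1,d_R-1,d_L-1)$.

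The main obstacle is Step 2: executing this merge/split deformation within $\ClassA'$ so that the Part I/II machinery applies uniformly along the path. During the deformation the effective dimensions $k_R(t),k_L(t)$ change, so the nominal target class varies. A likely resolution is to embed both endpoints into a common ambient class, with all MPS matrices living in the same $\mnz\otimes\mkk$, and to perform the deformation within that ambient class while checking at each $t$ that $\lal(t)\in\Wo'$ and $r_{T_{\bb(t)}}=1$. Care is needed to arrange that level crossings in the transfer-matrix spectrum happen only at the endpoints, so that the nondegeneracy condition is maintained on the open interval and the finite-volume spectrum has the required $\varepsilon_N$-cluster structure throughout.
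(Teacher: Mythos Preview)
Your necessity argument and Step~1 are essentially in line with the paper. For necessity the paper simply cites \cite{Bachmann:2011kw} together with Theorem~1.18(vi) of Part~I to conclude that the half-infinite degeneracies $n_0(k_R+1)$ and $n_0(k_L+1)$ are separately preserved; your outline is a reasonable sketch of what that citation provides. For Step~1, the paper proves exactly this as Proposition~\ref{classthm}, connecting any two elements of $\Class(n,n_0,k_R,k_L)$ by a piecewise-$C^\infty$ path inside that class.

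The real gap is in Step~2. Your plan is to stay inside $\ClassA'$ throughout, embedding both endpoints in a ``common ambient class'' and arranging that the nondegeneracy condition $\lal(t)\in\Wo'$ fails only at the endpoints. But the parameter $n_0$ is not a continuous datum one can slide: it is the dimension of the primitive part $\oo\in\Prim(n,n_0)$, and changing it forces a genuine degeneration where primitivity is lost. The paper does \emph{not} stay in $\ClassA'$. It constructs an explicit path $\bbB(t)$, $t\in(0,1]$, inside $\Class(n,n_0,k_R,k_L)$ whose $t\downarrow 0$ limit $\bbB(0)$ is \emph{not} in $\ClassA$ at all (the support of $e_{\bbB(0)}$ collapses). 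The substantial work---most of Section~4---is to control this singular limit: one shows that the rescaled ground-state vectors $\Gamma_{l,\bbB(t)}^{(R)}\circ\Theta_{l,t}$ converge to explicit vectors $\xi^{(l)}_{\alpha,\beta,a,b}$ spanning a space $\caD_l$ which is exponentially close to $\caG_{l,\bbA}$ for an auxiliary MPS $\bbA$ living in $\Mat_{2n_0-1}\otimes\mkk$ (note the changed matrix size). One then verifies Condition~5 uniformly in $t\in[0,1]$ to keep the gap open across the singularity, invokes Part~II (Theorem~1.2, Corollary~1.4) to produce $\bbV\in\ClassA$ with $H_{\Phi_{m,\bbA}}\simeq_{II}H_{\Phi_{m_1,\bbV}}$, and finally proves via a delicate spectral analysis that $\bbV\in\Class(n,1,n_0(k_R+1)-1,n_0(k_L+1)-1)$.

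None of this machinery---the singular limit, the auxiliary $\bbA$ in a larger matrix algebra, the uniform overlap estimates of Lemma~\ref{lem:overl}, or the appeal to Part~II to identify $\bbV$---is present in your proposal, and your suggested alternative (deforming within a fixed ambient $\mnz\otimes\mkk$ while keeping $\lal(t)\in\Wo'$ on the open interval) does not explain how the path lands in the target class $\Class(n,1,d_R-1,d_L-1)$ at the endpoint, nor how the gap survives the moment where primitivity breaks.
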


\subsection{Bulk Classification}

Let us recall the definition of a ground state of $C^*$-dynamical system \cite{BR2}.
Let $\mathfrak A$ be a unital $C^*$-algebra and $\alpha$ a strongly continuous one parameter group on $\mathfrak A$.
We denote the generator of $\alpha$ by $\delta_{\alpha}$.
A state $\omega$ on $\mathfrak A$ is called an $\alpha$-ground state
if the inequality
$
-i\omega\lmk A^*{\delta_\alpha}\lmk A\rmk\rmk\ge 0
$
holds
for any element $A$ in the domain $\caD({\delta_\alpha})$ of ${\delta_\alpha}$.
Let $\omega$ be an $\alpha$-ground state, with the GNS triple $(\caH,\pi,\Omega)$.
Then there exists a unique positive operator $H_{\omega,\alpha}$ on $\caH$ such that
$e^{itH_{\omega,\alpha}}\pi\lmk A\rmk\Omega=\pi\lmk \alpha_t\lmk A\rmk\rmk\Omega$,
for all $A\in\mathfrak A$ and $t\in\mathbb R$.
Note that $\Omega$ is an eigenvector of $H_{\omega,\alpha}$ with eigenvalue $0$.
We say that $H_{\omega,\alpha}$ has a spectral gap if $0$ is a non-degenerate eigenvalue of $H_{\omega,\alpha}$ and
there exists  
 $\gamma>0$ such that
$\sigma(H_{\omega,\alpha})\setminus\{0\}\subset [\gamma,\infty)$.
(Here, $\sigma(H_{\omega,\alpha})$ denotes the spectrum of $H_{\omega,\alpha}$.)

We consider $C^*$-dynamical systems given by translation invariant finite range interactions on the quantum spin chain $\caA_{\bbZ}$. Let $\Phi$ be a  translation invariant finite range interaction
.
Our $C^*$-dynamics is given
by
$\alpha_{\Phi,t}\lmk A\rmk:=\lim_{\Lambda\to\bbZ}e^{it(H_\Phi)_\Lambda} A e^{-it(H_\Phi)_\Lambda}$,
$A\in \caA_{\bbZ}$, $t\in\mathbb R$ with $(H_\Phi)_\Lambda$ in
(\ref{GenHamiltonian}).
We denote the set of all $\alpha_\Phi$-ground states on 
$\caA_\bbZ$ by ${\mathfrak B}_{\Phi}$. 
Recall $\caS_{\bbZ}(H_\Phi)$ from subsection 1.1 of \cite{Ogata1}.
It is known that $\emptyset\neq \caS_{\bbZ}(H_\Phi)\subset {\mathfrak B}_{\Phi}$ (See Proposition 5.3.25 of \cite{BR2}).
For each $\varphi\in{\mathfrak B}_{\Phi}$, we call
the positive operator $H_{\varphi,\alpha_{\Phi}}$ the bulk Hamiltonian
associated to $\varphi$, $\Phi$.
We consider the following class of the interactions.
\begin{defn}\label{def:jb}
We denote by $\caJ_B$, the set of  all $\Phi\in\caJ$ which satisfiy the following conditions. 
\begin{enumerate}
\item
For any $\varphi\in {\mathfrak B}_{\Phi}$, 
$0$ is the non-degenerate eigenvalue of the bulk Hamiltonian
$H_{\varphi,\alpha_{\Phi}}$.
\item 
There is a constant $\gamma>0$, 
such that
\[
\sigma\lmk H_{\varphi,\alpha_{\Phi}}\rmk\setminus \left\{0\right\}
\subset [\gamma,\infty),
\]
for any $\varphi\in {\mathfrak B}_{\Phi}$.
\end{enumerate}
\end{defn}
Now we specify what we mean by bulk-classification in this paper.
\begin{defn}[Bulk Classification]\label{def:bulk}
Let 
$\Phi_{0},\Phi_{1}\in{\caJ}_B$.
We say that the Hamiltonians $H_{\Phi_0},H_{\Phi_1}$ are 
bulk equivalent if the following conditions are satisfied.
\begin{enumerate}
\item There exist $m\in\nan$ and a continuous path of interactions $\Phi:[0,1]\to {\caJ}_m\cap \caJ_B$ such that $\Phi(0)=\Phi_0$, 
$\Phi(1)= \Phi_1$. 
\item  
There is a constant $\gamma>0$, 
such that
\[
\sigma\lmk H_{\varphi_s,\alpha_{\Phi(s)}}\rmk\setminus \left\{0\right\}
\subset [\gamma,\infty),
\]
for any $s\in[0,1]$ and $\varphi_s\in{\mathfrak B}_{\Phi(s)}$.
\end{enumerate}
When $H_{\Phi_0},H_{\Phi_1}$ are 
bulk equivalent, we write
$H_{\Phi_0}\simeq_{B} H_{\Phi_1}$.
\end{defn}
The class we bulk-classify is the class of frustration free Hamiltonians whose degrees of local ground states degeneracy 
 are
uniformly bounded.
\begin{defn}Let $\Phi$ be a positive translation invariant finite range interaction on $\caA_{\bbZ}$.
We say $H_\Phi$ is frustration free if 
$1\le \dim \ker\lmk H_\Phi\rmk_{[0,N-1]}$
for all $N\in\nan$.
Furthermore, we say that the ground state degeneracy 
of $H_{\Phi}:=\left((H_{\Phi })_\Lambda\right)_{\Lambda\in{\mathfrak I}_{\bbZ}}$ is
uniformly bounded
if $\sup_{N}\dim  \ker\lmk H_\Phi\rmk_{[0,N-1]}<\infty$.
We denote by $\caJ_{FB}$, the set of all interactions $\Phi\in\caJ_B$ satisfying the followings:
\begin{enumerate}
\item There exists a unique $\alpha_\Phi$-ground state $\omega$ on $\caA_{\bbZ}$.
\item The Hamiltonian $H_\Phi$ is frustration free and the ground state degeneracy 
of the corresponding local Hamiltonians is 
uniformly bounded.
\end{enumerate}
\end{defn}
We prove the following.
\begin{thm}\label{thm:bulk}
For any $\Phi_0,\Phi_1\in \caJ_{FB}$, we have
$H_{\Phi_0}\simeq_{B}H_{\Phi_1}$.
\end{thm}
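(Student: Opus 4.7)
My approach is a two-stage reduction: first, use Parts~I and II of the series to bulk-connect any $\Phi\in\caJ_{FB}$ to an MPS Hamiltonian from $\ClassA$; second, bulk-connect arbitrary such MPS Hamiltonians to a single trivial model.

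For the first reduction, the conditions defining $\caJ_{FB}$---translation-invariant finite range interactions, bulk gap, frustration-freeness with uniformly bounded local ground-state degeneracy, and unique bulk ground state---should directly imply the five qualitative hypotheses [A1]--[A5] that cut out the class $\tilde{\caH}$ of Part~II. Applying the main theorem of Part~II then yields an MPS Hamiltonian $H_{\Phi_{m,\bb}}$ with $\bb\in\ClassA$ that is type~II $C^1$-equivalent to $H_\Phi$. I would then argue that the interpolating path produced there can be arranged to lie in $\caJ_m\cap\caJ_B$: the spectral clustering built into type~II $C^1$-equivalence guarantees that every $\alpha_{\Phi(t)}$-ground state is a weak-$*$ limit of low-energy finite-volume states, whose GNS Hamiltonian therefore inherits a uniform gap with non-degenerate zero eigenvalue; uniqueness of the bulk ground state at the endpoints persists because the Part~II construction produces an MPS Hamiltonian whose bulk ground state coincides with the given unique $\omega$ of $\Phi$. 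This yields $H_\Phi\simeq_B H_{\Phi_{m,\bb}}$.

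For the second reduction, the key point is that bulk-equivalence is strictly coarser than the type~II $C^1$-equivalence classified by Theorem~\ref{singthm}: the invariants $n_0(k_R+1)$ and $n_0(k_L+1)$, which obstruct type~II equivalences, may be lifted in the bulk sense by opening the edge gap while keeping the bulk gap open. I would construct explicit paths inside $\ClassA$ along which $\lal\in\Wo'(k_R,k_L)$ is deformed and the off-diagonal blocks of $(\bbD,\bbG)$ are continuously switched off, reducing $\bb$ to an element of the $k_R=k_L=0$ subclass. Uniqueness of the bulk ground state of $\Phi$ forces the leading eigenvalue $1$ of the transfer operator $T_{\bb}$ to be non-degenerate, and along such a path the bulk gap of $H_{\Phi_{m,\bb(t)}}$ can be controlled by Fannes--Nachtergaele--Werner type estimates of the kind developed in Part~I. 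A further deformation within the $k_R=k_L=0$ subclass then brings $(n_0,\bb)$ to a rank-one MPS, i.e.\ a product-state parent Hamiltonian, which serves as the universal bulk representative and proves the theorem by transitivity.

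The principal obstacle is maintaining the bulk gap along the paths of the second step, particularly when the combinatorial parameters $(n_0,k_R,k_L)$ effectively change as the edge structure is collapsed: one must check that the local ground-state projections defining $\Phi_{m,\bb(t)}$ vary continuously and that the transfer operator retains a uniform spectral gap around~$1$ through the collapse. The non-degeneracy condition built into $\Wo'(k_R,k_L)$ is precisely what makes these deformations analytically tractable. A secondary subtlety in the first step is that membership in $\caJ_B$ demands the bulk gap uniformly across \emph{every} $\alpha_{\Phi(t)}$-ground state---not merely thermodynamic limits of open-boundary ground states---so one must verify that no additional bulk ground states appear along the path produced by the Part~II reduction.
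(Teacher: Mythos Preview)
Your first reduction has a genuine gap. You assert that the defining conditions of $\caJ_{FB}$ ``should directly imply'' [A1]--[A5] and hence place $H_\Phi$ in the class $\tilde{\caH}$ of Part~II. But $\caJ_{FB}$ only assumes a \emph{bulk} gap (membership in $\caJ_B$), whereas the type~II $C^1$-equivalence machinery of Part~II and Definition~\ref{def:phasec} presuppose that $H_\Phi$ is gapped with respect to \emph{open boundary conditions}. No such open-boundary gap is given, and deducing it from the bulk gap plus frustration-freeness plus bounded degeneracy is not straightforward (and is not what the paper does). The paper bypasses this entirely: by Nachtergaele--Sims exponential clustering and Matsui's characterization of pure finitely correlated states, the unique bulk ground state $\omega$ is generated by some $\vv\in\Prim(n,n_0)$, so that $G_{N,\vv}\le G_{N,\Phi}$ for all $N$. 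Then the pair $(\Phi,\Phi_{m,\vv})$ is shown to satisfy a tailored ``Condition~8'' whose case~(i) requires of $\Phi_0=\Phi$ only that it lie in $\caJ_B$ with a unique bulk ground state---precisely what $\caJ_{FB}$ provides---and a direct linear interpolation of interactions is shown to stay in $\caJ_B$ (Lemma~\ref{lem:ugs}). No appeal to [A1]--[A5] or to an open-boundary gap of $\Phi$ is made.

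You correctly flag the second subtlety---that one must control \emph{every} $\alpha_{\Phi(t)}$-ground state along the path, not just limits of finite-volume ground states---but you do not resolve it. The paper's mechanism is Lemma~\ref{lem:zee}: along a linear interpolation between two frustration-free interactions satisfying Condition~7, any pure $\alpha_{\Phi(s)}$-ground state is forced (via an energy-cost argument and the quasi-equivalence hypothesis) to be $\omega_\infty$-normal and then to coincide with $\omega_\infty$; Krein--Milman then gives uniqueness. This is the missing ingredient in your outline. Your second reduction (collapsing edge structure by deforming $\lal,\bbD,\bbG$) is also too vague about how $n_0$ changes; the paper handles this through the explicit singular path of Section~\ref{sec:first} followed by Lemma~\ref{lem:bob}, which bulk-connects any $\bbB\in\ClassA$ to its associated $\oo_{\bbB}\in\Prim(n,n_0)$, again via Condition~8.
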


\section{Paths of gapped Hamiltonians}\label{general}
In this section, we introduce a sufficient condition for a path of Hamiltonians to be uniformly gapped.
We first introduce a set of conditions on sequences of subspaces.
\begin{defn}[\it Condition 5]
Let  $n,a\in\nan$. Let 
$\xi_j^l:[0,1]\to \bigotimes_{i=0}^{l-1}\cc^n$
be a map
 given for each $l\in\nan$ and $j=1,\ldots,a$.
For each $l\in\nan$ and $t\in[0,1]$, let 
$\caD_l(t)$ be the subspace of 
$\bigotimes_{i=0}^{l-1}\cc^n$
 spanned by $\{\xi_j^l(t)\}_{j=1}^a$, and $G_l(t)$ the orthogonal projection
 onto $\caD_l(t)$.
 We say $\{\xi_j^l\}_{j=1,\ldots,a,\;l\in\nan}$
 satisfies the {\it Condition 5} if  
 the following conditions are satisfied.
\begin{description}
\item[(i)]
For any $l\in\nan$ and $j=1,\ldots,a$, the map
$
\xi_j^l:[0,1]\to \bigotimes_{i=0}^{l-1}\cc^n
$
is continuous and piecewise $C^{\infty}$.
\item[(ii)]
There exists an $l'\in\nan$
such that for all $l\ge l'$ and $t\in[0,1]$,
the vectors $\{\xi_j^l(t)\}_{j=1}^a$ are linearly independent.
\item[(iii)]
There exists an $m'\in\nan$
such that for all 
$N\ge m'+1$ and $t\in[0,1]$,
\begin{align}
\caD_N(t)=\lmk\caD_{N-1}(t)\otimes\cc^n\rmk\cap\lmk\cc^n\otimes\caD_{N-1}(t)\rmk.\end{align}
\item[(iv)]
There exists an $l''\in\nan$
such that for all $l\ge l''$,
there exists $0<\varepsilon_l<\frac1{2\sqrt l}$
such that
\begin{align}
\sup_{t\in[0,1]}\lV
\lmk \unit_{[0,N-l]}\otimes G_l(t)\rmk\lmk G_N(t)\otimes\unit_{\{N\}}-G_{N+1}(t)\rmk
\rV<\varepsilon_l,
\end{align}
for all $N\ge 2l$.
\end{description}
We say $\{\xi_j^l\}_{j=1}^a$
satisfies the {\it Condition 5} for $(l',m',l'')$
when we would like to specify the numbers.
\end{defn}
\begin{rem}\label{rem:nt}
Suppose that  $\{\xi_j^l\}_{j=1}^a$ satisfies {\it Condition 5}.
As the dimensions of $\caD_N(t)$s are uniformly bounded by $a$,
(iii) of {\it Condition 5} implies $\caD_m(t)\neq\bigotimes_{i=0}^{m-1}\bbC^n$, for any $m\ge m'$, i.e., 
$\dim\caD_m(t)<n^m$, for all $m\ge m'$.
In particular, we have $a=\dim\caD_{\max\{l',m'\}}(t)<n^{\max\{l',m'\}}$.
For the same reason, for $\vv\in \mk^{\times n}$ with $m_\vv<\infty$, we have 
$\dim \caG_{m,\vv}<n^m$ for all $m\ge m_\vv$.
In particular, if $\bbB\in\ClassA$ with respect to 
$(n_0,k_R,k_L,\lal,\bbD,\bbG,Y)$, then we have $n_0^2(k_R+1)(k_L+1)<n^{2l_\bbB}$
by Proposition 3.1 of \cite{Ogata1}.
Therefore, in the statement of Theorem 1.18 of \cite{Ogata1}, we note
$2l_\bbB=\max\{ 2l_\bbB, \frac{\log\lmk n_0^2(k_R+1)(k_L+1)+1\rmk}{\log n}\}$.
\end{rem}

\begin{lem}\label{lem:vpath}
Let  $n,a\in\nan$ with $2\le n$, and $\xi_j^l: [0,1]\to \bigotimes_{i=0}^{l-1}\cc^n$
be maps
 given for each $l\in\nan$ and  $j=1,\ldots,a$.
For each $l\in\nan$ and $t\in[0,1]$, let $\caD_l(t)$ be the subspace of $\bigotimes_{j=0}^{l-1}\cc^n$
 spanned by $\{\xi_j^l(t)\}_{j=1}^a$ and $G_l(t)$ the orthogonal projection
 onto $\caD_l(t)$.
Suppose that $\{\xi_j^l\}_{j=1,\ldots,a,\;l\in\nan}$
 satisfies {\it Condition 5} for $(l',m',l'')$.
Then we have the followings.
\begin{enumerate}
\item For any $l\ge l'$, the map $G_l:[0,1]\to \bigotimes_{i=0}^{l-1}\Mat_n(\cc)$
is continuous and piecewise $C^\infty$.
\item For any $t\in[0,1]$, $\bbm_{\{\caD_N(t)\}}\le m'$,
(see Definition 1.3 of Part I \cite{Ogata1} for $\bbm_{\{\caD_N(t)\}}$).
\item For any $m\ge m'$ and $t\in[0,1]$,
$\ker\lmk H_{\Phi_{1-G_{m}(t)}}\rmk_{[0,N-1]}=\caD_N(t)$
for all $N\ge m$.
\item For any $m\ge m'$ and $l\ge\max\{ l',m\}$,
\[
\gamma_{l}^m:=\inf_{t\in[0,1]}\drr\lmk \sigma\lmk \lmk H_{\Phi_{1-G_{m}(t)}}\rmk_{[0,l-1]}   \rmk\setminus\{0\},\{0\} \rmk
>0.
\]
(Here, $\drr(A,B)$ denotes the Euclidean distance between $A,B\subset\bbR$.)
\item
For any $m\ge m'$, and $t\in[0,1]$,
\[
\frac{\gamma_l^m\wedge \gamma_{2l}^m}{4(l+2)}\lmk1-G_{N}(t)\rmk
\le
\lmk H_{\Phi_{1-G_{m}(t)}}\rmk_{[0,N-1]} ,\quad
\text{for all}\quad l\ge \max\{l'',l',m\},\; \text{and}\;N\ge 2l+1.
\]
\item For any  $m_0,m_1\ge m'$, we have
$H_{\Phi_{1-G_{m_0}(0)}} \simeq_{I} H_{\Phi_{1-G_{m_1}(1)}}$.
\end{enumerate}
\end{lem}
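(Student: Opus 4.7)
The plan is to treat the six claims in the order given, each leveraging the previous ones. Claims (1)--(3) are essentially bookkeeping: for (1), condition (ii) ensures that for $l\ge l'$ the Gram matrix $M_{ij}(t)=\la\xi_i^l(t),\xi_j^l(t)\ra$ is invertible with continuous, piecewise $C^\infty$ entries, so the closed form $G_l(t)=\sum_{i,j}(M(t)^{-1})_{ij}\ket{\xi_i^l(t)}\bra{\xi_j^l(t)}$ inherits the regularity; (2) is immediate from (iii) and the definition of $\bbm_{\{\caD_N(t)\}}$ recalled from Part I \cite{Ogata1}; and for (3), frustration-freeness of the positive interaction $\Phi_{1-G_m(t)}$ writes the kernel of the local Hamiltonian as the intersection $\bigcap_{x=0}^{N-m}\cc^{\otimes x}\otimes\caD_m(t)\otimes\cc^{\otimes(N-m-x)}$, and induction on $N$ using the intersection property (iii) collapses this intersection to $\caD_N(t)$ whenever $N\ge m\ge m'$.

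Claim (4) follows by a routine compactness argument: fix $m\ge m'$ and $l\ge\max\{l',m\}$; the positive operator $\lmk H_{\Phi_{1-G_m(t)}}\rmk_{[0,l-1]}$ depends continuously on $t\in[0,1]$ and, by (3) together with (ii), has kernel of constant dimension $a$, so its smallest nonzero eigenvalue is a strictly positive continuous function on the compact interval $[0,1]$ with positive infimum $\gamma_l^m$. Claim (5) is the main technical hurdle and is where condition (iv) earns its keep: the bound $\varepsilon_l<\frac{1}{2\sqrt{l}}$ has exactly the shape of the overlap hypothesis in Nachtergaele's martingale method for frustration-free lattice systems. The plan is to decompose the ground-state projection as a telescoping sum $\unit-G_N(t)=\sum_{k}\Delta_k(t)$ of local increments of the form $G_k(t)\otimes\unit-G_{k+1}(t)$ (suitably padded by identities), use (iv) to show that each $\Delta_k(t)$ is absorbed into a local projection $\unit\otimes\lmk\unit-G_l(t)\rmk\otimes\unit$ modulo a small error, and then bound these local projections by the Hamiltonian density via the finite-volume gap $\gamma_l^m\wedge\gamma_{2l}^m$ from (4). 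A Cauchy--Schwarz bookkeeping combined with the threshold $\varepsilon_l<\frac{1}{2\sqrt{l}}$ makes cross-term contributions strictly subdominant and produces the $\frac{1}{4(l+2)}$ prefactor. This is the step that demands the most care.

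For (6), I concatenate two legs in $\caJ_{m_*}$ with $m_*=\max\{m_0,m_1\}$. The first leg keeps $m=m_0$ fixed and varies $t$ from $0$ to $1$: by (5) applied at each $t$, together with (4), the gap is bounded below uniformly by $\frac{\gamma_l^{m_0}\wedge\gamma_{2l}^{m_0}}{4(l+2)}>0$ for any $l\ge\max\{l',l'',m_0\}$, and the regularity of the path follows from (1). The second leg fixes $t=1$ and uses the convex interpolation $\Phi(s)=(1-s)\Phi_{1-G_{m_0}(1)}+s\Phi_{1-G_{m_1}(1)}$, which lies in $\caJ_{m_*}$; claim (3) forces both summands to share the kernel $\caD_N(1)$ on every $[0,N-1]$, so the interpolated Hamiltonian is frustration-free with that same ground space, and for any $\xi\perp\caD_N(1)$ one has $\la\xi,\lmk H_{\Phi(s)}\rmk_{[0,N-1]}\xi\ra\ge\lmk(1-s)\gamma_0+s\gamma_1\rmk\nn{\xi}^2\ge\min(\gamma_0,\gamma_1)\nn{\xi}^2$, with $\gamma_0,\gamma_1$ the endpoint gaps supplied by (5); hence the gap stays uniformly positive along the leg. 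Concatenation produces a piecewise-$C^\infty$ path witnessing $H_{\Phi_{1-G_{m_0}(0)}}\simeq_I H_{\Phi_{1-G_{m_1}(1)}}$.
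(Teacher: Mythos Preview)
Your proposal is correct and follows essentially the same route as the paper. The paper's proof is terser, outsourcing each step to an external lemma: (1) to the Gram--Schmidt Lemma~\ref{li}, (4) to the compactness Lemma~\ref{poa}, (5) to Theorem~3 of \cite{Nachtergaele:1996vc}, and the interaction-length change in (6) to Lemma~4.2 of \cite{bo}; you have spelled out the underlying mechanisms of each of these, which is fine. One small technical point you glossed over in (4): to know the Hamiltonian is not identically zero you need $a<n^l$, which the paper extracts from Remark~\ref{rem:nt} via $a<n^{\max\{l',m'\}}$.
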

\begin{proof}
{\it 1.} is immediate from  Lemma \ref{li}.
It is clear that (iii) implies Property~(I,$m'$), namely {\it 2} holds.  (Recall (3) of \cite{Ogata1} for Property~(I,$m'$).)
Property~(I,$m'$) implies {\it 3}.
Applying Lemma \ref{poa} to $\lmk H_{\Phi_{1-G_{m}(t)}}\rmk_{[0,l-1]}$, we obtain
{\it 4.} (Here, we used $a<n^{\max\{l',m'\}}$ to guarantee $\rank \lmk H_{\Phi_{1-G_{m}(t)}}\rmk_{[0,l-1]}=n^l-a\in\nan$,
for $l\ge\max\{ l',m'\}$.)
Theorem 3 of \cite{Nachtergaele:1996vc} gives {\it 5.}
The properties {\it 1.-5.} implies $ H_{\Phi_{1-G_{m}(0)}} \simeq_{I} H_{\Phi_{1-G_{m}(1)}}$ for $m\ge m'$. By the argument of Lemma 4.2 \cite{bo}, we obtain
\[
H_{\Phi_{1-G_{m_0}(0)}} \simeq_{I} H_{\Phi_{1-G_{m}(0)}},\quad H_{\Phi_{1-G_{m_1}(1)}} \simeq_{I} H_{\Phi_{1-G_{m}(1)}}.\]
Hence we obtain $H_{\Phi_{1-G_{m_0}(0)}} \simeq_{I} H_{\Phi_{1-G_{m_1}(1)}}$.
\end{proof}
Recall the definition of $M_{\vv,p,q}$ given in (8), and {\it Condition 2,3,4} given by Definition 2.3, 2.4, 2.5 of Part I \cite{Ogata1}
\begin{prop}\label{prop:maingen}
Let $n,k,m_1,m_2,m_3\in\nan$ and $p,q\in\caP(\mk)$.
Let $v_{\mu}:[0,1]\to\mk$, $\mu=1,\ldots,n$ be $C^{\infty}$-maps.
Let $\{x_i\}_{i=1}^{\rank p\cdot\rank q}$ be a basis of $p\mk q$.
Suppose that for each $t\in[0,1]$, 
the pentad $(n,k,p,q,\vv(t))$ satisfies the {\it Condition 2}, and
the {\it  Condition 3} for $m_1$.
Furthermore, assume that the triple $(n,k,\vv(t))$
satisfies the {\it Condition 4} for $(m_2,m_3)$
for each $t\in[0,1]$.
Then $M_0:=\sup_{t\in[0,1]}\left\{
M_{\vv(t),p,q}\right\}<\infty$
and $\{\gvtr{l}(x_i)\}_{i=1,\ldots,\rank p\cdot\rank q,\; l\in\nan}$  satisfies 
the {\it Condition 5}.
Here, 
$l',m'\in\nan$ of (ii), (iii) in  {\it Condition 5}
can be taken $l'=M_0$ and $m'=m_2+m_3$.
\end{prop}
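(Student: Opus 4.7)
The plan is to verify the four items of Condition 5 for the family $\{\gvtr{l}(x_i)\}_{i,l}$ directly, handling the parameter $t$ by combining the pointwise hypotheses (Conditions 2, 3, and 4 at each $t$) with compactness of $[0,1]$.

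For item (i), observe that $\gvtr{l}(x_i)$ is a polynomial in the matrix entries of $v_1(t),\ldots,v_n(t)$ and of $x_i$, since it is defined by contracting $l$ of these matrices against $x_i$. The $C^{\infty}$-hypothesis on each $v_\mu$ therefore propagates, so $t\mapsto \gvtr{l}(x_i)$ is $C^\infty$ on $[0,1]$, hence in particular continuous and piecewise $C^\infty$.

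Next I would establish $M_0<\infty$ and item (ii) simultaneously. By the assumed Condition 2 at each $t_0\in[0,1]$, the number $M_{\vv(t_0),p,q}$ is finite, and for any $l\ge M_{\vv(t_0),p,q}$ the vectors $\{\gvtr{l}(x_i)\}_i$ are linearly independent. Linear independence is an open condition (non-vanishing of the Gram determinant), and by step (i) the Gram determinant depends continuously on $t$. Hence on an open neighborhood $U_{t_0}$ of $t_0$ the vectors remain independent at level $M_{\vv(t_0),p,q}$; by the monotonicity built into Condition 2 (linear independence at one level persists at every larger level in the $\Gamma$-construction), it remains valid at all $l\ge M_{\vv(t_0),p,q}$ on $U_{t_0}$. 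A finite subcover $\{U_{t_j}\}$ of the compact interval $[0,1]$ then yields $M_0=\max_j M_{\vv(t_j),p,q}<\infty$, which simultaneously gives the required uniform bound and furnishes (ii) with $l'=M_0$.

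For item (iii), the intersection identity $\caD_N(t)=(\caD_{N-1}(t)\otimes\cc^n)\cap(\cc^n\otimes\caD_{N-1}(t))$ for $N\ge m_2+m_3+1$ is purely algebraic at each fixed $t$, so I would import it from the corresponding structural result in Part I \cite{Ogata1} associated with Condition 4 for $(m_2,m_3)$. Since that hypothesis is assumed at every $t\in[0,1]$, (iii) holds uniformly with $m'=m_2+m_3$. The main obstacle lies in item (iv): the uniform quantitative estimate $\varepsilon_l<1/(2\sqrt{l})$ on the norm of $\bigl(\unit_{[0,N-l]}\otimes G_l(t)\bigr)\bigl(G_N(t)\otimes\unit_{\{N\}}-G_{N+1}(t)\bigr)$. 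Pointwise in $t$, this is the standard MPS ``approximate martingale'' bound whose decay rate is controlled by the sub-peripheral spectral radius of the transfer operator $T_{\vv(t)}$ through Conditions 3 and 4. To upgrade this to uniformity in $t$, I would use that $T_{\vv(t)}$ depends smoothly on $t$ by the $C^\infty$-assumption on the $v_\mu$'s, hence its second-largest eigenvalue modulus depends continuously on $t$ and, by compactness of $[0,1]$, is bounded away from $1$ uniformly. This furnishes a uniform decay rate and hence a single choice of $l''$ and of the sequence $\{\varepsilon_l\}$ satisfying (iv) simultaneously for all $t\in[0,1]$.
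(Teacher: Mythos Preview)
Your argument for item (ii) has a gap. You assert ``monotonicity built into Condition~2 (linear independence at one level persists at every larger level in the $\Gamma$-construction),'' but Condition~2 is a spectral hypothesis on the transfer operator $T_{\vv(t)}$; it does not directly say that injectivity of $\Gamma_{l,\vv(t)}^{(R)}\vert_{p\Mat_k q}$ at some level $l_0$ propagates to all $l\ge l_0$. In general $Xv_\mu^*$ (or $v_\mu^*X$) need not lie in $p\Mat_k q$ when $X$ does, so the obvious recursive argument fails. Your openness argument therefore gives, for each $t_0$, a neighbourhood on which independence holds at the single level $M_{\vv(t_0),p,q}$, but the neighbourhoods for higher levels could in principle shrink, and the finite-subcover step does not yield a uniform $l'$.

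The paper avoids this by running the spectral argument \emph{first}---essentially the argument you sketched for (iv). From smoothness of $t\mapsto T_{\vv(t)}$ and the pointwise Spectral Property~II one extracts a uniform $0<s<1$ with $\sigma(T_{\vv(t)})\setminus\{1\}\subset\caB_s(0)$, and hence a uniform exponential bound $E_{\vv(t)}(N)\le C_1(s')^N$. This gives a uniform $L_0$ with $L_{\vv(t)}\le L_0$ for all $t$, and then Lemma~2.15 of Part~I yields $M_{\vv(t),p,q}\le L_{\vv(t)}\le L_0$, so $M_0\le L_0<\infty$ and (ii) holds with $l'=M_0$. Note also that for (iv) the paper needs not only a uniform decay rate but uniform control of the prefactors $F_{\vv(t)}$, $a_{\vv(t)}^{-1}$, $c_{\vv(t)}^{-1}$; these come from the continuity in $t$ of the rank-one spectral projection $P_{\{1\}}^{T_{\vv(t)}}$ (hence of $e_{\vv(t)}$ and $\rho_{\vv(t)}$), which you should make explicit. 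Once you reorder---establish the uniform spectral bounds before attacking (ii)---your outline becomes essentially the paper's proof.
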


\begin{proof}
The first condition (i) is clear from the definition of $\gvtr{l}$
and the fact that $\vv(t)$ is $C^\infty$.
We recall the definitions of $a_\vv$, $c_\vv$,
$e_\vv$, $\rho_\vv$, $\varphi_\vv$,
$E_\vv$, $F_\vv$,
$L_\vv$ introduced in pp10--11 of Part I.
As $(n,k,p,q,\vv(t))$ satisfies the {\it Condition 2}, from Lemma 2.9 of \cite{Ogata1},
there exists a constant $0<s_{\vv(t)}<1$, a state $\varphi_{\vv(t)}$, 
and a positive element $e_{\vv(t)}\in\mk_{+}$ and 
$T_{\vv(t)}$ satisfies the Spectral Property II
with respect to $(s_{\vv(t)},e_{\vv(t)},\varphi_{\vv(t)})$.
Furthermore, 
we have
$s(e_{\vv(t)})=p,\quad s(\varphi_{_\vv(t)})=q$.
In particular, 
$[0,1]\ni t \mapsto T_{\vv(t)}$
is a $C^{\infty}$ -map satisfying
the (1), (2), (3) of Lemma \ref{lem:cinf} with $r_{T_{\vv(t)}}=1$.
Therefore, from Lemma \ref{lem:cinf},
there exists $0<s<1$ such that
$\sigma(T_{\vv(t)})\setminus\{1\}\subset \caB_s(0)$
for all $t\in [0,1]$.
Furthermore, $[0,1]\ni t\mapsto e_{\vv(t)}=P_{\{1\}}^{T_{\vv(t)}}(1)$
is a $C^{\infty}$-path of positive elements of $\mk$
whose rank is equal to $\rank p$.
As $a_{\vv(t)}=\drr\lmk \sigma(e_{\vv(t)})\setminus \{0\},\{0\}\rmk$,
we have $a:=\inf_{t\in[0,1]}a_{\vv(t)}>0$ by Lemma
\ref{poa}.
Similarly,  $[0,1]\ni t\mapsto 
\Tr\lmk P_{\{1\}}^{T_{\vv(t)}}(\cdot )\rmk\backslash \Tr e_{\vv(t)}
=\varphi_{\vv(t)}=\Tr\rho_{\vv(t)}\lmk \cdot\rmk$
is $C^{\infty}$ and we have
$c:=\inf_{t\in[0,1]}c_{\vv(t)}>0$.
Therefore, for any $s<s'<1$ fixed, we have
\begin{align*}
&C_1:=\lmk ac\rmk^{-1}
k^2
\lmk
\sup_{t\in[0,1]}\sup_{|z|=s'}
\lV(z-T_{\vv(t)})^{-1}\rV
\rmk
<\infty,\\
&C_2:=\sup_{t\in[0,1]}
F_{\vv(t)}=
\sup_{t\in[0,1]}
\lmk
\sup_{N\in\nan}\lV T_{\vv(t)}^N\lmk 1- P^{T_{\vv(t)}}_{\{1\}}\rmk\rV+\lV e_{\vv(t)}\rV\rmk<\infty.
\end{align*}
With these constants, we have
\begin{align}\label{eq:ltl}
\sup_{t\in[0,1]}E_{\vv(t)}(N)
=\sup_{t\in[0,1]}\lmk a_{\vv(t)}c_{\vv(t)}\rmk^{-1}
k^2\lV T_{\vv(t)}^N\lmk 1- P^{T_{\vv(t)}}_{\{1\}}\rmk\rV
\le C_1
 (s')^{N}.
\end{align}
Fix some $L_0\in\nan$ such that 
$C_1
 (s')^{L_0}<\frac 12$.
Then we have
\begin{align*}
L_{\vv(t)}=\inf \left\{L\in\nan\mid
\sup_{N\ge L}E_{\vv(t)}(N)<\frac 12\right\}
\le L_0,\quad t\in[0,1],
\end{align*}
by (\ref{eq:ltl}).
By Lemma 2.15 of Part I \cite{Ogata1},
we have
\begin{align*}
M_0:=\sup_{t\in[0,1]}M_{\vv(t),p,q}
\le \sup_{t\in[0,1]}L_{\vv(t)}\le L_0<\infty.
\end{align*}
This proves (ii) of {\it Condition 5}
for $l'=M_0$.

(iii) of {\it Condition 5} for $m'=m_2+m_3$
follows from Lemma 2.14 of Part I \cite{Ogata1}.

To prove (iv), note that 
\begin{align*}
\sup_{t\in[0,1]}2 F_{\vv(t)}E_{\vv(t)}(m)\lmk
F_{\vv(t)}^2E_{\vv(t)}(m)+1
\rmk
\le
2C_1C_2\lmk
C_2^2C_1+1
\rmk s'^m.
\end{align*}
Choose $L_1\in\nan$ such that
$8C_1C_2\lmk
C_2^2C_1+1
\rmk \sqrt {m+1} {s'}^m<1$
for any $m\ge L_1$.
We use Lemma 2.17 of Part I \cite{Ogata1}
replacing $(l,m,r)$ in Lemma 2.17 of Part I \cite{Ogata1}
by $(N-l+1,l-1,1)$
with $l-1\ge\max\{m_1,L_0,L_1\}=:l''$.
Then we obtain (iv) with this $l''$.
\end{proof}

\section{$C^1$-classification of $\caH(n,n_0,k_R,k_L)$}\label{classc1}
In this section, we show the following Proposition.
\begin{prop}\label{classthm}
Let $n,n_0\in\nan$ with $n\ge 2$ and $k_R,k_L\in\nan\cup\{0\}$.
Then for any $\bb_0,\bb_1\in \Class(n,n_0,k_R,k_L)$, 
and $m_0,m_1\in\nan$ with $m_0,m_1\ge 2n_0^6(k_R+1)(k_L+1)$,
we have
$H_{\Phi_{m_0,\bb_0}}\simeq_{I} H_{\Phi_{m_1,\bb_1}}$.
\end{prop}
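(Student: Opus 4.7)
The plan is to apply Lemma \ref{lem:vpath}(6) to a suitable interpolating path of ground-state subspaces, with Condition 5 verified through Proposition \ref{prop:maingen}. Given $\bb_0,\bb_1\in\Class(n,n_0,k_R,k_L)$ associated to parameter tuples $(\lal_j,\bbD_j,\bbG_j,0)\in\caT(k_R,k_L)$ (with $\lal_j\in\Wo'(k_R,k_L)$) for $j=0,1$, the goal is to construct a continuous piecewise $C^\infty$ path $\vv:[0,1]\to\mnzk^{\times n}$ with $\vv(0)=\bb_0$, $\vv(1)=\bb_1$ such that for every $t\in[0,1]$ the tuple $\vv(t)$ still lies in $\Class(n,n_0,k_R,k_L)$ and satisfies Conditions 2, 3, 4 uniformly in $t$.

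I would construct this path in two layers. First, connect the parameters: $\Wo'(k_R,k_L)$ is open and path-connected, being cut out of $\cc^{k_R+k_L+1}$ by the strict modulus inequalities and the open non-coincidence conditions $\lambda_i\neq\lambda_j$; hence $\lal_0$ and $\lal_1$ admit a piecewise $C^\infty$ joining path $\lal(t)\in\Wo'(k_R,k_L)$. Analogous path-connectedness for the admissible $(\bbD,\bbG)$-data gives continuous $\bbD(t),\bbG(t)$. Second, interpolate within the fiber: for fixed $(\lal(t),\bbD(t),\bbG(t),0)$, the set of admissible $\bb$'s is (up to a gauge) an open subset of a linear space in $\mnzk^{\times n}$, so after trivialising the fiber via a piecewise $C^\infty$ family of gauges, straight-line interpolation produces $\vv(t)$.

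Along this path I would verify the uniform versions of Conditions 2, 3, 4 from Part I. The non-degeneracy $\lal(t)\in\Wo'(k_R,k_L)$ preserves Spectral Property II of $T_{\vv(t)}$ with $r_{T_{\vv(t)}}=1$ and a simple peripheral eigenvalue, so Condition 2 passes to the path. Conditions 3 and 4 hold pointwise by construction; their uniformity in $t$ follows from compactness of $[0,1]$, continuity of the spectral data of $T_{\vv(t)}$, and upper semicontinuity of the relevant indices, with the dimension bound $n_0^2(k_R+1)(k_L+1)\le\dim\caG_{m,\vv(t)}$-style estimates yielding $m_2+m_3\le 2n_0^6(k_R+1)(k_L+1)$. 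Proposition \ref{prop:maingen} then produces a family $\{\gvtr{l}(x_i)\}_{i,l}$ satisfying Condition 5 for some $(l',m',l'')$ with $m'\le 2n_0^6(k_R+1)(k_L+1)$.

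Since by assumption $m_0,m_1\ge 2n_0^6(k_R+1)(k_L+1)\ge m'$, Lemma \ref{lem:vpath}(6) supplies $H_{\Phi_{1-G_{m_0}(0)}}\simeq_I H_{\Phi_{1-G_{m_1}(1)}}$, and identifying $\Phi_{1-G_{m_j}(j)}$ with $\Phi_{m_j,\bb_j}$ (possible thanks to $m_j\ge 2l_{\bb_j}$, as noted in Remark \ref{rem:nt}) finishes the proof. The main obstacle is the uniform control of Condition 4 along the interpolation: one must keep the index $L_{\vv(t)}$ of Proposition \ref{prop:maingen} bounded uniformly in $t$, which in turn requires a uniform bound on the subdominant spectral radius of $T_{\vv(t)}$ away from $1$; the non-degeneracy condition $\lal(t)\in\Wo'(k_R,k_L)$ is precisely what prevents eigenvalues from colliding or approaching the unit circle, and so supplies the uniform gap that this argument depends on.
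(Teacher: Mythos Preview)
Your overall strategy---build a path inside $\Class(n,n_0,k_R,k_L)$, verify Conditions~2--4 uniformly, then invoke Proposition~\ref{prop:maingen} and Lemma~\ref{lem:vpath}(6)---matches the paper. But the heart of the proof is exactly the step you treat as routine: producing a continuous piecewise $C^\infty$ path that stays in $\Class(n,n_0,k_R,k_L)$. Your ``straight-line interpolation in an open subset of a linear space'' is not an argument. Even granting openness (which is not obvious from the defining conditions $l_\bb<\infty$ and $r_{T_\bb}=1$), open subsets of affine spaces need not be connected, and a segment between two points of an open set can leave it. The paper deals with this by first proving a constructive characterization (Lemma~\ref{lem:cc1}): $\bb$ lies in $\Class(n,n_0,k_R,k_L)$ with quadruple $(\lal,\bbD,\bbG,\oo)$ iff certain associated tuples $\bbX_{a,\bb},\bbY_{b,\bb}$ avoid the explicit proper subspaces $\caL(\oo,\lambda_{-a})$, $\caL(\oo,\lambda_b^{-1})$. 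Only then are paths constructed, and still not by straight lines: one first deforms to a normalized representative $\bbS_\bb$ with $D_a=\eijr{-a,0}$, $G_b=\eijl{0,b}$ (Lemmas~\ref{lem:c02} and~\ref{lem:bsb}), and then uses Lemma~\ref{lem:cio} to steer the $\bbX,\bbY$ around the forbidden subspaces while simultaneously moving $\lal$ and $\oo$ (Lemma~\ref{lem:c2c2}). Your ``analogous path-connectedness for the admissible $(\bbD,\bbG)$-data'' glosses over a genuine difficulty: the $D_a$'s must span a subalgebra of $\UT_{0,k_R+1}$ compatible with $\lal$, and a generic homotopy destroys this; the deformation in Lemma~\ref{lem:bsb} is designed precisely to preserve it.

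Your identification of the ``main obstacle'' is also off. The uniform spectral gap of $T_{\vv(t)}$ follows automatically from Condition~2 via Lemma~\ref{lem:cinf} inside the proof of Proposition~\ref{prop:maingen}; it has nothing to do with $\lal(t)\in\Wo'(k_R,k_L)$---those $\lambda_i$ are eigenvalues of $\Lambda_{\lal}$, not of $T_\vv$. The uniform choice $m_1=m_2=m_3=n_0^6(k_R+1)(k_L+1)$ that feeds into Proposition~\ref{prop:maingen} comes instead from the explicit bound $l_{\bb(t)}\le n_0^6(k_R+1)(k_L+1)$ of Lemma~\ref{lem:klmain}, valid for every $\bb(t)\in\Class(n,n_0,k_R,k_L)$; once the path exists, this step is immediate.
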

First we check what the condition $\lal\in \Wo'(k_R,k_L)$ implies.
\begin{lem}\label{lem:ad}
Let $k_R,k_L\in\nan\cup\{0\}$, and $(\lal,\bbD,\bbG,Y)\in \caT(k_R,k_L)$.
If furthermore $\lal\in\Wo'(k_R,k_L)$, then $Y=0$ and $\bbD$, $\bbG$ satisfy the followings.:
\begin{enumerate}
\item
For any $1\le a_1, a_2\le k_R$,
either there exist $\sigma(a_1, a_2)\in\{1,\ldots,k_R\}$
and a nonzero $\kappa(a_1,a_2)\in\cc$ such that
$D_{a_1}D_{a_2}=\kappa(a_1,a_2)D_{\sigma(a_1,a_2)}$,
or $D_{a_1}D_{a_2}=0$.
\item
For any $1\le b_1,b_2\le k_L$,
either there exist $\sigma(b_1, b_2)\in\{1,\ldots,k_L\}$
and a nonzero $\kappa(b_1,b_2)\in\cc$ such that
$G_{b_1}G_{b_2}=\kappa(b_1,b_2)G_{\sigma(b_1,b_2)}$,
or $G_{b_1}G_{b_2}=0$.
\end{enumerate}
\end{lem}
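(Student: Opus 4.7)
The plan is to reduce everything to the commutation relations built into the definition of $\caT(k_R,k_L)$ in Part I \cite{Ogata1}, and then exploit the injectivity of the map $(i_1,i_2)\mapsto \lambda_{i_1}\lambda_{i_2}$ forced by $\lal\in\Wo'(k_R,k_L)$.

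Recall that membership in $\caT(k_R,k_L)$ encodes, roughly, the data of a family of matrices that intertwines the diagonal action of $\Lambda_{\lal}$ with a prescribed eigenvalue pattern: each $D_a$ ($1\le a\le k_R$) is the basis element associated to the eigenvalue $\lambda_{-a}$ of the adjoint action $\Ad\lmk\Lambda_{\lal}\rmk$ on its relevant subspace, each $G_b$ ($1\le b\le k_L$) is associated to $\lambda_b$, and the remainder term $Y$ collects the ``Jordan'' corrections that are only permitted when the eigenvalue pattern has coincidences. The first step of the proof is to state this precisely in our notation by unpacking the definition of $\caT(k_R,k_L)$ from Part I. Once this is done, the assumption $\lambda_i\neq \lambda_j$ (for $-k_R\le i<j\le 0$, resp.\ $0\le i<j\le k_L$) means that every eigenvalue of $\Ad(\Lambda_{\lal})$ on the relevant subspace appears with multiplicity one, so the generalized eigenspaces coincide with the usual eigenspaces. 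This directly forces $Y=0$, which handles the first assertion.

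With $Y=0$ in hand, the relations from $\caT(k_R,k_L)$ take the clean form
\begin{equation*}
\Lambda_{\lal}\, D_{a}=\lambda_{-a}\, D_{a}\,\Lambda_{\lal}\,\bbD,
\qquad
\Lambda_{\lal}\, G_{b}=\lambda_{b}\, G_{b}\,\Lambda_{\lal}\,\bbG,
\end{equation*}
or equivalently (after passing to the appropriate adjoint action) each $D_a$, resp.\ $G_b$, is an eigenvector of $\Ad(\Lambda_{\lal})$ with eigenvalue $\lambda_{-a}$, resp.\ $\lambda_b$. For a product $D_{a_1}D_{a_2}$ one computes that it is again an eigenvector of the same action, now with eigenvalue $\lambda_{-a_1}\lambda_{-a_2}$. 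By the non-degeneracy hypothesis there is at most one $c\in\{1,\ldots,k_R\}$ with $\lambda_{-c}=\lambda_{-a_1}\lambda_{-a_2}$; if such a $c$ exists set $\sigma(a_1,a_2):=c$, and the full algebraic structure of $\caT$ (closure of the span of the $D_a$'s under multiplication, which follows from the fact that $\caD(k_R,k_L,\bbD,\bbG)$ is an algebra in the Part I setup) forces $D_{a_1}D_{a_2}=\kappa(a_1,a_2)\,D_{\sigma(a_1,a_2)}$ for a scalar $\kappa(a_1,a_2)\in\cc$; otherwise the product lies in an empty eigenspace of $\Ad(\Lambda_{\lal})$ inside $\caD(k_R,k_L,\bbD,\bbG)$ and must vanish. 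The case $\lambda_{-c}=\lambda_{-a_1}\lambda_{-a_2}=1$ (i.e., $c=0$) is excluded by the strict inequalities in the definition of $\Wo'$, since $|\lambda_{-a_i}|<1$ forces $|\lambda_{-a_1}\lambda_{-a_2}|<1$. The argument for $G_{b_1}G_{b_2}$ is identical with $\lambda_b$ in place of $\lambda_{-a}$.

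The key obstacle is the first step: translating the compact condition ``$(\lal,\bbD,\bbG,Y)\in\caT(k_R,k_L)$'' into explicit eigenvalue-indexed intertwining relations and identifying exactly where $Y$ sits in that presentation. Once the bookkeeping from Part I is in place, both halves of the lemma follow from the single observation that $\Wo'$ prohibits all coincidences among the $\lambda_i$ on each side of $0$, making the eigenvalue-bookkeeping argument deterministic: if the required target index exists it is unique, and if it does not the product is forced to be zero.
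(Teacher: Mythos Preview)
Your approach is essentially the paper's: the $\bbD,\bbG$ assertions are proved exactly as you outline, by writing $D_{a_1}D_{a_2}$ in the $\{D_a\}$-basis (the span is a subalgebra by the definition of $\caT$ in Part~I) and using that each $D_a$ is an $\Ad(\Lambda_{\lal})$-eigenvector with eigenvalue $\lambda_{-a}$, so distinctness of the $\lambda_{-a}$ kills all but (at most) one coefficient. The paper carries this out via equation~(9) of \cite{Ogata1}, which records precisely the intertwining relation you wrote (modulo the stray ``$\bbD$'' and ``$\bbG$'' at the ends of your displayed formulas, which are typos).

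Where your sketch is genuinely incomplete is the $Y=0$ step. Your heuristic ``$Y$ collects Jordan corrections, and these vanish when eigenvalues are simple'' is not quite what is happening and would not close the argument by itself. In the definition of $\caT(k_R,k_L)$, $Y\in\UT_{0,k_R+k_L+1}$ is strictly upper triangular and \emph{commutes} with $\Lambda_{\lal}$ (equation~(11) of \cite{Ogata1}). Distinctness of $\lambda_{-k_R},\ldots,\lambda_0$ forces $\pu Y\pu$ to be diagonal, hence zero; likewise $\pd Y\pd=0$. But $\Wo'(k_R,k_L)$ does \emph{not} forbid coincidences $\lambda_{-a}=\lambda_b$ with $a,b\ge 1$, so commutation alone does not kill the off-block piece $\pu Y\pd$. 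The paper invokes a further structural feature built into $\caT$, namely $\pu Y\pd=0$, to finish (the remaining block $\pd Y\pu$ vanishes automatically by strict upper-triangularity). So when you ``unpack $\caT$'' you must extract not only the commutation relation but also this vanishing condition on the cross-block; without it your argument for $Y=0$ has a gap.
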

\begin{proof}
As $Y$ belongs to $\UT_{0, k_R+k_L+1}$ and commutes with $\Lambda_\lal$ ((11) of \cite{Ogata1}) , the assumption $\lal\in \Wo'(k_R,k_L)$ implies that
$\pu Y \pu=\pd Y \pd=0$. As we also have $\pu Y\pd=0$, we get $Y=0$.
To see the property of $\bbD$ in the claim, let $1\le a_1, a_2\le k_R$.
As the linear span of 
$\{D_a\}_{a=1}^{k_R}$ is a subalgebra of $\UT_{0,k_R+1}$, 
$D_{a_1}D_{a_2}$ can be expanded as $D_{a_1}D_{a_2}=\sum_{a=1}^{k_R}c_a D_a$ with some coefficients $c_a\in\bbC$.
From (9) of \cite{Ogata1}, we have
\begin{align*}
\sum_{a=1}^{k_R}c_a \lambda_{-a}\eij{-a0}
={\Lambda_{\lal}}I_R^{(k_R,k_L)}(D_{a_1}D_{a_2}){\Lambda_{\lal}^{-1}}\eij{00}=\lambda_{-a_1}\lambda_{-a_2}I_R^{(k_R,k_L)}(D_{a_1}D_{a_2})\eij{00}
=\sum_{a=1}^{k_R}c_a \lambda_{-a_1}\lambda_{-a_2}\eij{-a0}
\end{align*}
From this, we have $c_a=0$ unless $\lambda_{-a_1}\lambda_{-a_2}=\lambda_{-a}$. As our $\lal$ is in $\Wo'(k_R,k_L)$,
such an $a$ is at most one. Hence we obtain the claim.
The assertion for $\bbG$ is proven analogously.
\end{proof}

\begin{lem}\label{lem:dbb}
Let $n,n_0\in\nan$ with $n\ge 2$, $k_R,k_L\in\nan\cup\{0\}$ 
and $\bb\in\Class(n,n_0,k_R,k_L)$.Then
$(\lal,\bbD,\bbG,0)\in{\caT}(k_R,k_L)$, $\lal\in\Wo'(k_R,k_L)$ with respect to which $\bbB$ belongs to $\Class(n,n_0,k_R,k_L)$
is uniquely determined.
Furthermore, there exists a unique $\oo\in\mnz^{\times n}$ such that 
\begin{align}\label{eq:loe}
\lambda_i\omega_{\mu}\otimes E^{(k_R,k_L)}_{ii}=\lmk \unit\otimes E^{(k_R,k_L)}_{ii}\rmk
B_{\mu}
\lmk \unit\otimes E^{(k_R,k_L)}_{ii}\rmk,\quad \mu=1,\ldots,n,\quad i=-k_R,\ldots,k_L.
\end{align}
This $\oo$ belongs to $\Prim(n,n_0)$. (Recall Definition 1.6 \cite{Ogata1} for $\Prim(n,n_0)$.)
\end{lem}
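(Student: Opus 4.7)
The plan is to exploit the block structure of $\mkk$ imposed by $(\lal, \bbD, \bbG, Y) \in \caT(k_R, k_L)$, extract $\oo$ as the ``scalar diagonal part'' of $\bb$, and use the intrinsic data in the compressions of $B_\mu$ to pin down the remaining parameters.

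Since $\bb \in \Class(n,n_0,k_R,k_L)$, by definition there exists $(\lal,\bbD,\bbG,Y) \in \caT(k_R,k_L)$ with $\lal \in \Wo'(k_R,k_L)$ such that $\bb \in {\mathfrak B}(n,n_0,k_R,k_L,\lal,\bbD,\bbG,Y)$. Lemma \ref{lem:ad} forces $Y = 0$, so each $B_\mu$ lies in $\mnz \otimes \caD(k_R,k_L,\bbD,\bbG)\Lambda_\lal$. By the construction of $\caD(k_R,k_L,\bbD,\bbG)$ from Part I, every $d \in \caD(k_R,k_L,\bbD,\bbG)$ decomposes as $d = \alpha(d)\, \unit_{\mkk} + d_{\rm off}$ with $\alpha(d) \in \bbC$, where $d_{\rm off}$ has vanishing diagonal entries (its non-zero parts being generated by $\bbD$ on the upper-right and $\bbG$ on the lower-left). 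Consequently the $(i,i)$-compression of $d\Lambda_\lal$ equals $\alpha(d)\lambda_i E^{(k_R,k_L)}_{ii}$. Expanding $B_\mu = \sum_k y^{(k)}_\mu \otimes d^{(k)}\Lambda_\lal$ and setting $\omega_\mu := \sum_k \alpha(d^{(k)})\, y^{(k)}_\mu \in \mnz$ yields both the existence of $\oo = (\omega_1, \ldots, \omega_n)$ and the formula (\ref{eq:loe}); uniqueness of $\oo$ satisfying (\ref{eq:loe}) is immediate by specializing to $i = 0$ with $\lambda_0 = 1$.

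For uniqueness of $(\lal, \bbD, \bbG)$ given $\bb$, I would invert this construction. Equation (\ref{eq:loe}) shows each compressed $(i,i)$-block of $B_\mu$ is proportional to the common factor $\omega_\mu$ with ratio $\lambda_i$, so once $\oo$ has been recovered (and is non-zero, a fact confirmed by the primitivity established below), the $\lambda_i$'s are intrinsic, and the ordering condition $\lal \in \Wo'(k_R,k_L)$ removes any remaining permutational ambiguity. The off-diagonal compressions $(\unit \otimes E^{(k_R,k_L)}_{ij}) B_\mu (\unit \otimes E^{(k_R,k_L)}_{jj})$ for $i \neq j$ then expose the coefficients of the generators $D_a$ and $G_b$, and the multiplicative relations in Lemma \ref{lem:ad} make this expansion unique.

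Finally, to show $\oo \in \Prim(n,n_0)$, I would compare $T_\oo$ to the $(0,0)$-compression of $T_\bb$: taking $\lambda_0 = 1$ in (\ref{eq:loe}) gives $(\unit \otimes E^{(k_R,k_L)}_{00}) B_\mu (\unit \otimes E^{(k_R,k_L)}_{00}) = \omega_\mu \otimes E^{(k_R,k_L)}_{00}$, so the Kraus-span condition $l_\bb < \infty$ for $\bb$ specializes to a Kraus-span condition for $\oo$ on $\mnz$, while $r_{T_\bb} = 1$ combined with the isolation of the $\lambda_0 = 1$ block ensures $T_\oo$ has $1$ as a non-degenerate peripheral eigenvalue, which is the defining property of $\Prim(n,n_0)$. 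The main obstacle will be justifying the scalar-diagonal decomposition of $\caD(k_R,k_L,\bbD,\bbG)$ and transferring primitivity through the compression; once these structural facts are in place, the remainder is essentially bookkeeping.
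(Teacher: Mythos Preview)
Your treatment of the existence and uniqueness of $\oo$, and of the uniqueness of $\lal$, follows the same lines as the paper and is correct. Likewise, your argument that $\oo\in\Prim(n,n_0)$ via the $(0,0)$-compression is essentially a rederivation of Lemma~3.2 of \cite{Ogata1}, which the paper simply cites.

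The gap is in the uniqueness of $\bbD$ and $\bbG$. Looking at off-diagonal compressions of the individual $B_\mu$ is not enough: from $B_\mu\hpu=\omega_\mu\otimes\Lambda_\lal\pu+\sum_a\caX_{\mu,a}\otimes\ir(D_a)\Lambda_\lal$ and the analogous expression with $D_a'$ you only get $\sum_a\caX_{\mu,a}\otimes\ir(D_a)=\sum_a\caX'_{\mu,a}\otimes\ir(D_a')$, which does not force $D_a=D_a'$ since the $\mnz$-coefficients may differ. The multiplicative relations of Lemma~\ref{lem:ad} do not help here either; they describe the algebra structure of a \emph{given} $\bbD$, not how to recover $\bbD$ from $\bb$.

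The paper's argument instead works at the level of the full space $\caK_l(\bb)$. Since $\caK_l(\bb)\hpu=\mnz\otimes\spa\{\ir(D_a)\Lambda_\lal^l\}_{a=0}^{k_R}$ for both choices, one has $\unit\otimes\ir(D_a')\Lambda_\lal^l\in\mnz\otimes\spa\{\ir(D_a)\Lambda_\lal^l\}_a$. The key step is then to exploit the relation $\Lambda_\lal\ir(D_a)\Lambda_\lal^{-1}=\lambda_{-a}\ir(D_a)$: since $\lal\in\Wo'(k_R,k_L)$ the eigenvalues $\lambda_{-a}$ are distinct, and a Vandermonde-type interpolation (Lemma~C.7 of \cite{Ogata1}) projects onto the $\lambda_{-a}$-eigenspace, forcing $D_a'\propto D_a$. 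The normalization $\ir(D_a)\eij{00}=\eij{-a,0}$ then gives equality. Your sketch is missing both the passage to $\caK_l(\bb)$ and this eigenspace isolation; without them the argument does not close.
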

\begin{defn}\label{def:ldgob}
We call the $(\lal,\bbD,\bbG,\oo)$ determined uniquely for
$\bb\in\Class(n,n_0,k_R,k_L)$ by Lemma \ref{lem:dbb},
the quadruplet associated with $\bb$ and write it
as
$(\lal_\bb,\bbD_\bb,\bbG_\bb,\oo_{\bb})$.
\end{defn}
\begin{proof}
The formula (\ref{eq:loe}) with $i=0$ corresponds to the formula in Lemma 3.2 \cite{Ogata1}, and the latter Lemma implies 
$\oo\in \Prim(n,n_0)$. As we have $B_\mu\in\mnz\otimes
 \caD(k_R,k_L,\bbD,\bbG)\Lambda_{\lal}$, we obtain (\ref{eq:loe}) for all $i=-k_R,\ldots,k_L$.
Suppose that 
$\bbB$ is in $\Class(n,n_0,k_R,k_L)$ with respect to $(\lal,\bbD,\bbG,0)\in{\caT}(k_R,k_L)$
as well as $(\lal',\bbD',\bbG',0)\in{\caT}(k_R,k_L)$.
As $r_{T_{\oo}}=1\neq0 $, there exists a $\mu$ such that $\omega_{\mu}\neq 0$. 
For this $\mu$ and 
any $i=-k_R,\ldots,k_L$, we have
\begin{align}
\lambda_i\omega_\mu\otimes\eij{ii}=\lmk \unit\otimes E^{(k_R,k_L)}_{ii}\rmk
B_{\mu}
\lmk \unit\otimes E^{(k_R,k_L)}_{ii}\rmk
=\lambda_i'\omega_\mu\otimes\eij{ii},
\end{align}
and obtain $\lal=\lal'$.
If $k_R\in \nan$, then for any $a\in\{1,\ldots, k_R\}$ fixed,
we have
\[
\unit\otimes I_R^{(k_R,k_L)}(D_a')\Lambda_{\lal}^l\in 
\kl{l}(\bb)\hat P^{(n_0,k_R,k_L)}_R=
\mnz\otimes\spa\left\{I_R^{(k_R,k_L)}(D_{a})
\Lambda_{\lal}^l\right\}_{a=0}^{k_R}
\]
for all $l$ large enough. Here, we use a notation $D_0:=P_{R}^{(k_R,0)}$.
This means fixing $l$ large enough, there exist $\{Z_{il}\}_{i=0}^{k_R}$
such that
\[
\unit\otimes I_R^{(k_R,k_L)}(D_a')\Lambda_{\lal}^l
=\sum_{i=0}^{k_R} Z_{il}\otimes I_R^{(k_R,k_L)}(D_{i})\Lambda_{\lal}^l.
\]
As $\{\lambda_{-i}\}_{i=0}^{k_R}$ is a set of $k_R+1$
distinct nonzero complex numbers, from Lemma C.7 Part I \cite{Ogata1}, there exist
$\varsigma_{i}=(\varsigma_i(j))_{j=0}^{k_R}\in\cc^{k_R+1}$,
$i=0,\ldots,k_R$ such that
$\sum_{j=0}^{k_R}\varsigma_i(j)\lambda_{-i'}^j=\delta_{i,i'}$, $i,i'=0,\ldots,k_R$.
Using this $\varsigma_{i}$, 
we have
\begin{align*}
&\unit\otimes I_R^{(k_R,k_L)}(D_a')\Lambda_{\lal}^l
=\sum_{j=0}^{k_R} \varsigma_a(j) \lambda_{-a}^j
\lmk \unit\otimes I_R^{(k_R,k_L)}(D_{a}')\Lambda_{\lal}^{l} \rmk
=\sum_{j=0}^{k_R} \varsigma_a(j) 
\lmk \unit\otimes \Lambda_{\lal}^j\rmk
\lmk \unit\otimes I_R^{(k_R,k_L)}(D_{a}')\Lambda_{\lal}^{l} \rmk
\lmk \unit\otimes \Lambda_{\lal}^{-j}\rmk\\
&=\sum_{i=0}^{k_R}\sum_{j=0}^{k_R} \varsigma_a(j) 
\lmk \unit\otimes \Lambda_{\lal}^j\rmk
\lmk Z_{il}\otimes I_R^{(k_R,k_L)}(D_{i})\Lambda_{\lal}^{l} \rmk
\lmk \unit\otimes \Lambda_{\lal}^{-j}\rmk
=\sum_{i=0}^{k_R}\sum_{j=0}^{k_R} \varsigma_a(j) 
\lambda_{-i}^j\lmk Z_{il}\otimes I_R^{(k_R,k_L)}(D_{i})\Lambda_{\lal}^{l} \rmk\\
&=Z_{al}\otimes I_R^{(k_R,k_L)}(D_{a})\Lambda_{\lal}^l.
\end{align*}
This means $D_a'$ is proportional to $D_a$. By the normalization
$I_R^{(k_R,k_L)}(D_a)\eij{00}= I_R^{(k_R,k_L)}(D_a')\eij{00}=\eij{-a0}$, we conclude $D_a=D_a'$.
Similarly, we obtain $G_b=G_b'$. 
\end{proof}

\subsection{Constructive characterization of $\Class(n,n_0,k_R,k_L)$ }
In order to classify $\caH(n,n_0,k_R,k_L)$, we need to understand the property of $\Class(n,n_0,k_R,k_L)$. We carry it out in this subsection.
The number $l_\bbB$ in (\ref{eq:lblb}) for $\bbB\in\Class(n,n_0,k_R,k_L)$ has a uniform upper bound.
\begin{lem}\label{lem:klmain}
Let $n,n_0\in\nan$ with $n\ge 2$, $k_R,k_L\in\nan\cup\{0\}$. Let $\bb\in\Class(n,n_0,k_R,k_L)$ with respect to $(\lal,\bbD,\bbG,0)$.
Then $l_{\bb}(n,n_0,k_R,k_L,\lal,\bbD,\bbG,0)\le n_0^6(k_R+1)(k_L+1)$.
\end{lem}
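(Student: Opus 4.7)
The aim is to show $\caK_{l'}(\bbB) = \mnz \otimes \caD(k_R,k_L,\bbD,\bbG)\Lambda_\lal^{l'}$ for every $l' \ge N := n_0^6(k_R+1)(k_L+1)$. The inclusion $\caK_{l'}(\bbB) \subseteq \mnz \otimes \caD\Lambda_\lal^{l'}$ is immediate from $B_\mu \in \mnz \otimes \caD\Lambda_\lal$ together with the algebra closure of $\caD\Lambda_\lal$ up to scalars provided by Lemma \ref{lem:ad} (which also forces $Y=0$). The substance is the reverse inclusion.

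For the reverse inclusion, the plan is to exploit the block decomposition determined by the spectral projections $\eij{ii}$ of $\Lambda_\lal$, which is intrinsic because $\lal \in \Wo'(k_R,k_L)$ has pairwise distinct entries. By Lemma \ref{lem:dbb}, the $(i,i)$-diagonal block of each $B_\mu$ equals $\lambda_i \omega_\mu \otimes \eij{ii}$ for the associated $\oo \in \Prim(n,n_0)$, so products of the $B_\mu$'s project onto products of the primitive family $\oo$ on every diagonal block; by a Wielandt-type bound for primitive CP maps, after $\lesssim n_0^2$ factors the $\mnz$-factor is fully saturated on each such block. The off-diagonal $(-a,0)$ and $(0,b)$ components of $B_\mu$ are carried by $I_R^{(k_R,k_L)}(D_a)$ and $I_L^{(k_R,k_L)}(G_b)$ respectively, and Lemma \ref{lem:ad} guarantees that longer products of these collapse to single generators up to nonzero scalars. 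Thus any target element of $\mnz \otimes \caD\Lambda_\lal^{l'}$ can be built by a sequence that interleaves long ``diagonal runs'' (which use primitivity to freely fill the $\mnz$-factor) with isolated off-diagonal $B_\mu$'s used to switch between blocks.

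Quantitatively, isolating the contribution of a single $(i,j)$-block from the mixed $B_\mu$ products requires a Vandermonde-type argument on the distinct eigenvalues $\{\lambda_i\}$, analogous to the isolation of $D_a$ (and $G_b$) from mixed $l$-th powers appearing in the uniqueness proof of Lemma \ref{lem:dbb}; this costs at most $\sim n_0^2$ additional factors. Combining primitivity on either side of an off-diagonal jump ($\sim n_0^2\cdot n_0^2$), the Vandermonde isolation cost ($\sim n_0^2$), and the at most $(k_R+1)(k_L+1)$ block pairs that must be covered yields the claimed bound $n_0^6(k_R+1)(k_L+1)$. The main obstacle will be precisely this bookkeeping: one must schedule the diagonal primitivity runs, the Vandermonde isolations, and the off-diagonal jumps inside a single word in the $B_\mu$'s so that the full cost is paid only once per block pair rather than multiplicatively, and verify that the primitivity index of $\oo$ really is controlled by $n_0^2$ by using the specific structure of $\Prim(n,n_0)$ rather than generic CP estimates. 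The hypothesis $\lal\in\Wo'(k_R,k_L)$ is indispensable throughout, as it simultaneously produces the intrinsic block decomposition through $\eij{ii}$ and furnishes the Vandermonde separation that extracts single generators from mixed products.
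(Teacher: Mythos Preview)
Your sketch has a genuine gap in the arithmetic, and it stems from an incorrect primitivity bound. You assert that ``after $\lesssim n_0^2$ factors the $\mnz$-factor is fully saturated,'' and then build the count $n_0^2\cdot n_0^2\cdot n_0^2\cdot(k_R+1)(k_L+1)$ on that. But $\oo\in\Prim(n,n_0)$ carries no structure beyond primitivity and $r_{T_\oo}=1$; the quantum Wielandt inequality (Lemma~\ref{lem:oqw}, from \cite{Sanz:2010aa}) only gives $l_\oo\le n_0^4$, not $n_0^2$. There is no ``specific structure of $\Prim(n,n_0)$'' to appeal to---that set is precisely the generic primitive tuples. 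With the correct bound $l_\oo\le n_0^4$ substituted into your accounting, the direct-construction scheme overshoots the target, and your proposal to ``pay the full cost only once per block pair rather than multiplicatively'' is left as exactly the hard combinatorics you have not done.

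The paper avoids all of this bookkeeping. It never constructs the generators of $\mnz\otimes\caD\Lambda_\lal^l$ explicitly. Instead it uses an abstract Sanz-type growth lemma (Lemma~\ref{lem:qw}): once $\caK_{l_0}(\bbB)$ contains an \emph{invertible} element, the dimension of $\caK_l(\bbB)$ is nondecreasing until it stabilizes at its eventual value $d$, and stabilization occurs by $l=d\cdot l_0$. The single concrete computation is that $\caK_{l_\oo}(\bbB)$ contains an invertible element: since $\unit_{n_0}\in\caK_{l_\oo}(\oo)$, one can choose coefficients so that $\sum c_{\mu^{(l_\oo)}}\widehat{B_{\mu^{(l_\oo)}}}=\unit_{n_0}\otimes\Lambda_\lal^{l_\oo}+(\text{strictly upper triangular})$, which is invertible. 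Then $d=\dim\bigl(\mnz\otimes\caD\Lambda_\lal^l\bigr)=n_0^2(k_R+1)(k_L+1)$ and $l_0=l_\oo\le n_0^4$ give $l_\bbB\le n_0^2(k_R+1)(k_L+1)\cdot n_0^4$. The point is that the $n_0^6$ factor arises as $d\cdot l_\oo=n_0^2\cdot n_0^4$, not as three independent $n_0^2$ costs; your decomposition of the exponent does not match what is actually proved.
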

\begin{proof}
We would like to apply Lemma \ref{lem:qw} to
see $\l_\bb(n,n_0,k_R,k_L,\lal,\bbD,\bbG,0)\le n_0^6(k_R+1)(k_L+1)$.
The dimension of  $\mnz\otimes \caD(k_R,k_L,\bbD,\bbG)\Lambda_{\lal}^l$ is $n_0^2(k_R+1)(k_L+1)$.
Therefore, $\dim\kl{l}(\bb)=
\dim\lmk \mnz\otimes \caD(k_R,k_L,\bbD,\bbG)\Lambda_{\lal}^l\rmk=n_0^2(k_R+1)(k_L+1)$
for all $l\ge l_\bb(n,n_0,k_R,k_L,\lal,\bbD,\bbG,0)$.

On the other hand, $\kl{l_{\oo}}(\bb)$ has an invertible element.
(Recall Definition 1.6 of Part I \cite{Ogata1} for $l_\oo$.)
To see this, note by the definition of $l_{\oo}$, that
$\unit_{n_0}\in\mnz=\kl{l_{\oo}}(\oo)$.
This means there exists a set of coefficients $\{c_{\mu^{(l_\omega)}}\}$
such that $\sum_{\mu^{(l_\omega)}}c_{\mu^{(l_\omega)}}\wo{l_{\omega}}=\unit_{n_0}$.
Therefore,
we have
\[
\kl{l_{\omega}}(\bb)\ni \sum_{\mu^{(l_\omega)}}c_{\mu^{(l_\omega)}}\wb{l_{\omega}}
=\unit_{n_0}\otimes\Lambda_{\lal}^{l_{\omega}}
+\text{an element of }\mnz\otimes \UT_{0,k_R+k_L+1}.
\]
From the right hand side, we see that
this matrix is invertible.

Note that 
$\dim\caK_l(\bb)=n_0^2(k_R+1)(k_L+1)$
if and only if 
$\caK_l(\bb)=\mnz\otimes \caD (k_R,k_L,\bbD,\bbG)\Lambda_{\lal}^l$, because
we always have $\caK_l(\bb)\subset \mnz\otimes \caD (k_R,k_L,\bbD,\bbG)\Lambda_{\lal}^l$, due to the fact that
$\caD (k_R,k_L,\bbD,\bbG)$ is an algebra invariant under operations $\Lambda_{\lal}^x\lmk \cdot\rmk\Lambda_{\lal}^{-x}$, $x\in\bbZ$.
 Applying Lemma \ref{lem:qw},
we obtain
\[
l_\bb(n,n_0,k_R,k_L,\lal,\bbD,\bbG,0)\le n_0^2(k_R+1)(k_L+1) l_{\oo}
\le n_0^6(k_R+1)(k_L+1).
\]
For the second inequality, we used Lemma \ref{lem:oqw}.
\end{proof}
We would like to give a constructive characterization of $\Class(n,n_0,k_R,k_L)$.
To do so,
we introduce the following notations.
\begin{defn}
Let $n,n_0\in\nan$ with $n\ge 2$, $\lambda\in\cc$, and
$\oo\in\Prim(n,n_0)$.
We define
a linear subspace $\caL(\oo,\lambda)$
of $\oplus_{\mu=1}^n\mnz$ by
\[
\caL(\oo,\lambda):=\left\{ (J\omega_{\mu}-
\lambda\omega_{\mu}J)_{\mu=1}^n\mid
J\in\mnz\right\}.
\]
We also define a linear map $\Delta_{\oo,\lambda}:\mnz\to \caL(\oo,\lambda)$ by
\[
\Delta_{\oo,\lambda}(J):=(J\omega_{\mu}-
\lambda\omega_{\mu}J)_{\mu=1}^n,\quad J\in\mnz.
\]
\end{defn}
\begin{defn}
Let $n,n_0\in\nan$ with $n\ge 2$, $k_R,k_L\in\nan\cup\{0\}$
and $(\lal,\bbD,\bbG,0)\in\caT(k_R,k_L)$ with $\lal\in\Wo'(k_R,k_L)$.
We define
\begin{align*}
&{\hfu{\bbD}}:=\{1\le a \le k_R \mid
\text{there are no }1\le a_1,a_2\le  k_R\;
\text{such that}\;
D_a\text{is a scalar multiple of}\; D_{a_1}D_{a_2}
\},\\
&{\hfd{\bbG}}:=\{1\le b\le k_L \mid
\text{there are no }1\le b_1,b_2\le k_L\;
\text{such that}\;G_b 
\text{is a scalar multiple of}\;G_{b_1}G_{b_2}
\},\\
&\ghu{l}:=
\mnz\otimes 
\spa\{\ir\lmk D_a\rmk \Lambda_{\lal}^l\}_{1\le a\notin{{\hfu{\bbD}}}},\\
&\ghd{l}:=\mnz\otimes \spa\{\il(G_b)\Lambda_{\lal}^l\}_{1\le b\notin{{\hfd{\bbG}}}}.
\end{align*}
\end{defn}
\begin{lem}\label{lem:Delta}
Let $n,n_0\in\nan$ with $n\ge 2$, $\lambda\in\cc$ with $\lambda\neq 1$, and
$\oo\in\Prim(n,n_0)$.
Then $\Delta_{\oo,\lambda}$ is an injection.
\end{lem}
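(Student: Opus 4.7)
The plan is to show $\ker\Delta_{\oo,\lambda}=\{0\}$. Suppose $J\in\mnz$ satisfies $J\omega_\mu=\lambda\omega_\mu J$ for every $\mu=1,\ldots,n$. First I would iterate this relation by induction on $l$ to obtain
\begin{equation*}
J\,\omega_{\mu_1}\cdots\omega_{\mu_l}=\lambda^l\,\omega_{\mu_1}\cdots\omega_{\mu_l}\,J
\end{equation*}
for every multi-index $(\mu_1,\ldots,\mu_l)$ and every $l\in\nan$.

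Next I would invoke the primitivity of $\oo$: since $\oo\in\Prim(n,n_0)$, there exists $l_\oo\in\nan$ such that $\kl{l}(\oo)=\mnz$ for all $l\ge l_\oo$. This is exactly the fact used in the proof of Lemma \ref{lem:klmain}, where $\unit_{n_0}\in\kl{l_\oo}(\oo)=\mnz$ was written down explicitly. Expressing an arbitrary $X\in\mnz$ as a linear combination of length-$l$ products of the $\omega_\mu$'s and substituting into the iterated identity gives
\begin{equation*}
JX=\lambda^l\,XJ,\qquad X\in\mnz,\ l\ge l_\oo.
\end{equation*}

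Finally, setting $X=\unit_{n_0}$ yields $(1-\lambda^l)J=0$ for every $l\ge l_\oo$. If $\lambda^{l_\oo}\neq 1$ the conclusion $J=0$ is immediate; otherwise $\lambda^{l_\oo}=1$, and applying the identity instead with $l=l_\oo+1$ reduces it to $(1-\lambda)J=0$, so the hypothesis $\lambda\neq 1$ again forces $J=0$. I do not foresee any real obstacle: the only mild subtlety is the root-of-unity case, handled by comparing two consecutive values of $l$, and everything else is a direct consequence of primitivity.
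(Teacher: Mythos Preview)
Your proof is correct and follows exactly the same route as the paper: iterate the intertwining relation, use primitivity to write $\unit_{n_0}$ as a linear combination of length-$l$ monomials, and conclude $J=\lambda^l J$ for all sufficiently large $l$. The paper's argument is terser on the final step (it simply notes that $\lambda\neq 1$ forces $J=0$), while you spell out the root-of-unity case explicitly by comparing $l=l_\oo$ and $l=l_\oo+1$; this extra care is fine and makes the argument more transparent.
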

\begin{proof}
If $J\in \ker \Delta_{\oo,\lambda}$, then
we have
$
J\omega_{\mu}=\lambda\omega_{\mu}J, 
$ for all $\mu=1,\ldots n$.
From this, we obtain
$
J\omega_{\mu^{(l)}}=\lambda^l\omega_{\mu^{(l)}}J,
$ for any $l\in\nan$ and $\mu^{(l)}\in\{1,\ldots n\}^{\times l}$.
As $\oo\in\Prim(n,n_0)$, we have $\kl{l}(\oo)=\mnz$ for $l$ large enough.
Hence, in particular, we have
$
J=\lambda^{l}J,
$
for $l$ large enough.
As $\lambda\neq 1$, this implies $J=0$.
\end{proof}

\begin{defn}\label{def:cl1}
Let $n,n_0\in\nan$ with $n\ge 2$ and $k_R,k_L\in \nan\cup\{0\}$.
Let $\bb=(B_1,\ldots,B_n)$ be 
an $n$-tuple of matrices in $\mnz\otimes \UT_{k_R+k_L+1}\subset\mnz\otimes\mkk$.
We say $\bb$ belongs to $\Class_1(n,n_0,k_R,k_L)$
if 
there exist $(\lal,\bbD,\bbG,0)\in\caT(k_R,k_L)$
with $\lal\in \Wo'(k_R,k_L)$
 and $\oo\in\Prim(n,n_0)$
satisfying the following conditions.:
\begin{description}
\item[(1)]For all $i=-k_R,\ldots,k_L$ and $\mu=1,\ldots,n$, we have
\[
\lambda_i\omega_{\mu}\otimes E^{(k_R,k_L)}_{ii}=\lmk \unit\otimes E^{(k_R,k_L)}_{ii}\rmk
B_{\mu}
\lmk \unit\otimes E^{(k_R,k_L)}_{ii}\rmk.
\]
\item[(2)]
For any $\mu=1,\ldots,n$, we have
\begin{align}\label{eq:cl1}
&B_{\mu} \hat P^{(n_0,k_R,k_L)}_R
=\omega_\mu\otimes\Lambda_{\lal}\pu
+\sum_{1\le a\le k_R}\caX_{\mu,a,\bb}\otimes \ir(D_a)\Lambda_{\lal},\nonumber\\
&\hpd B_{\mu} 
= \omega_\mu\otimes\Lambda_{\lal}\pd
+\sum_{1\le b\le k_L}
\caY_{\mu ,b,\bb}\otimes \il(G_b)\Lambda_{\lal},
\end{align}
where
$\bbX_{a,\bb}:=(\caX_{\mu,a,\bb})_{\mu=1}^n\notin \caL(\oo,\lambda_{-a})$
and $\bbY_{b,\bb}:=
(\caY_{\mu,b,\bb})_{\mu=1}^n\notin \caL(\oo,\lambda_{b}^{-1})$,
for any $a\in {\mathfrak H}_{\bbD}^R$ and 
$b\in{\mathfrak H}_{\bbG}^L$.
\end{description}
\end{defn}
\begin{rem}\label{rem:xy}
We say $\bb$ belongs to $\Class_1(n,n_0,k_R,k_L)$
with  respect to $(\lal,\bbD,\bbG,\oo)$,
when we would like to state explicitly.
We denote 
$\boldsymbol{\caX}_\bb:=\{{\caX}_{\mu,a,\bb}\}_{a,\mu}$,
$\boldsymbol{\caY}_\bb:=\{\caY_{\mu,b,\bb}\}_{b,\mu}$,
and call them the sets of matrices associated with $\bb$.
(Note that they are determined uniquely by $\bbB$ due to the independence of
$\{\unit\}\cup\{I_R^{(k_R,k_L)}(D_a)\}_{a=1}^{k_R}
\cup\{I_L^{(k_R,k_L)}(G_b)\}_{b=1}^{k_L}\cup\{ E_{-a,b}^{(k_R,k_L)}\}_{a=1,\ldots,k_R, b=1,\ldots,k_L}$.)
\end{rem}
The rest of this subsection is devoted to showing the following Lemma:
\begin{lem}\label{lem:cc1}
Let $n,n_0\in\nan$ with $n\ge 2$ and $k_R,k_L\in \nan\cup\{0\}$.
Let $(\lal,\bbD,\bbG,0)\in\caT(k_R,k_L)$ with $\lal\in\Wo'(k_R,k_L)$ and $\oo\in\Prim(n,n_0)$.
Then $\bb\in \lmk\mnz\otimes \UT_{k_R+k_L+1}\rmk^{\times n}$ belongs to 
$\Class(n,n_0,k_R,k_L)$ with associated quadraple  $(\lal,\bbD,\bbG,\oo)$ if and only if
it belongs to $\Class_1(n,n_0,k_R,k_L)$ with respect to $(\lal,\bbD,\bbG,\oo)$.
\end{lem}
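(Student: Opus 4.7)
My plan is to prove the equivalence in both directions, combining Lemma~\ref{lem:dbb} with a gauge-transformation argument keyed to the subset $\hfu{\bbD}$ (and symmetrically $\hfd{\bbG}$). For the forward direction, start from $\bb\in\Class$ with associated quadruple $(\lal,\bbD,\bbG,\oo)$. Lemma~\ref{lem:dbb} immediately supplies (\ref{eq:loe}), which is exactly condition~(1) of $\Class_1$; uniquely expanding each $B_\mu\in\mnz\otimes\caD(k_R,k_L,\bbD,\bbG)\Lambda_{\lal}$ in the natural basis of $\caD(k_R,k_L,\bbD,\bbG)$ defines the coefficients $\caX_{\mu,a,\bb}$ and $\caY_{\mu,b,\bb}$ satisfying (\ref{eq:cl1}). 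The substantive point is the non-coboundary condition for $a\in\hfu{\bbD}$, which I would argue by contradiction: suppose $\bbX_{a,\bb}=\Delta_{\oo,\lambda_{-a}}(J)$ for some $J\in\mnz$. Consider the gauge $V:=\unit\otimes\unit-J\otimes\ir(D_a)\in\mnz\otimes\caD(k_R,k_L,\bbD,\bbG)$, invertible because $\ir(D_a)$ is nilpotent. A direct commutator computation using $\Lambda_{\lal}\ir(D_a)=\lambda_{-a}\ir(D_a)\Lambda_{\lal}$ together with Lemma~\ref{lem:ad} (to track $\ir(D_a)^k$) shows that the $\ir(D_a)\Lambda_{\lal}$-coefficient of $VB_\mu V^{-1}$ equals $\caX_{\mu,a}-\Delta_{\oo,\lambda_{-a}}(J)(\mu)=0$. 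Because $a\in\hfu{\bbD}$ forbids any product $D_{a_1}D_{a_2}$ from being a scalar multiple of $D_a$, no product of the conjugates reintroduces the $\ir(D_a)\Lambda_{\lal}^l$-slot, so $\caK_l(V\bb V^{-1})$ is contained in the proper subalgebra obtained by omitting $\ir(D_a)$ from $\caD(k_R,k_L,\bbD,\bbG)$, which has strictly smaller dimension. Since conjugation preserves dimension, $\dim\caK_l(\bb)=\dim\caK_l(V\bb V^{-1})<n_0^2(k_R+1)(k_L+1)$, contradicting $\bb\in\Class$. The $\hfd{\bbG}$ case is symmetric.

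For the reverse direction, the containment $\bb\in\mnz\otimes\caD(k_R,k_L,\bbD,\bbG)\Lambda_{\lal}$ is immediate from (\ref{eq:cl1}) together with $E_{-a,b}=\ir(D_a)\il(G_b)\in\caD(k_R,k_L,\bbD,\bbG)$. The spectral-radius condition $r_{T_{\bb}}=1$ follows from the block-triangular structure of $T_{\bb}$: its $(0,0)$-diagonal block is $T_{\oo}$ with simple top eigenvalue $1$ by $\oo\in\Prim(n,n_0)$, while every other diagonal block carries a factor $\bar\lambda_i\lambda_j$ of modulus $<1$. The core task is $l_{\bb}<\infty$, i.e.\ $\caK_l(\bb)\supset\mnz\otimes\caD(k_R,k_L,\bbD,\bbG)\Lambda_{\lal}^l$ for $l$ large, which I would prove slot by slot: the diagonal slots are covered by products of diagonal parts using $\kl{l}(\oo)=\mnz$ for $l$ large; for $a\in\hfu{\bbD}$, the telescoping identity expressing the $\ir(D_a)\Lambda_{\lal}^l$-coefficient of $B_{\mu^{(l)}}$ as $\sum_{i=1}^{l}\lambda_{-a}^{i-1}\omega_{\mu_1}\cdots\caX_{\mu_i,a}\cdots\omega_{\mu_l}$, combined with $\bbX_{a,\bb}\notin\caL(\oo,\lambda_{-a})$, fills the slot independently of the diagonal (symmetrically for $b\in\hfd{\bbG}$); the remaining $\ir(D_a)$ with $a\notin\hfu{\bbD}$, $\il(G_b)$ with $b\notin\hfd{\bbG}$, and the cross terms $E_{-a,b}$ are reached multiplicatively via Lemma~\ref{lem:ad}.

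The main obstacle will be the cohomological step in the reverse direction, articulating precisely how $\bbX_{a,\bb}\notin\caL(\oo,\lambda_{-a})$ forces the $\ir(D_a)\Lambda_{\lal}^l$-projection of $\caK_l(\bb)$ to fill $\mnz$ independently of the diagonal projections. This is the dual of the forward gauge computation: were the projection deficient, the obstruction to surjectivity would produce a $J\in\mnz$ witnessing $\bbX_{a,\bb}=\Delta_{\oo,\lambda_{-a}}(J)\in\caL(\oo,\lambda_{-a})$, contradicting the hypothesis. Lemma~\ref{lem:Delta} (injectivity of $\Delta_{\oo,\lambda_{-a}}$) guarantees that this obstruction is uniquely parameterized by a single matrix $J$, which is what allows the argument to close.
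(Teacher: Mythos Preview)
Your forward direction ($\Class\Rightarrow\Class_1$) via the gauge transformation $V=\unit-J\otimes\ir(D_a)$ is correct and is a genuinely different argument from the paper's. The paper instead propagates the coboundary form inductively: assuming $\bbY_{b,\bb}=\Delta_{\oo,\lambda_b^{-1}}(J_b)$ for some $b\in\hfd{\bbG}$, it shows directly that the $\il(G_b)\Lambda_{\lal}^l$-coefficient of $\hpd\wb{l}$ equals $J_b\wo{l}-\lambda_b^{-l}\wo{l}J_b$ for all $l$ (this uses $b\in\hfd{\bbG}$ to kill the product contributions), and then contradicts $\unit\otimes\il(G_b)\Lambda_{\lal}^l\in\hpd\caK_l(\bb)$ by reading off the $P_L$- and $\il(G_b)$-components. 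Your gauge picture is the geometric counterpart of this computation: gauging by $V$ is exactly what trivializes a coboundary, and the dimension drop is equivalent to the paper's impossibility of hitting~$\unit$. Both arguments require the same structural input (Lemma~\ref{lem:ad} and the block decomposition showing that $\il(G_{b'})\ir(D_a)=0$, etc.), so neither is markedly shorter, but yours is more conceptual.

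For the reverse direction ($\Class_1\Rightarrow\Class$) your outline is right, but the step you flag as the ``main obstacle'' is not merely a matter of articulation; it requires a real argument that you have not supplied. The claim ``were the projection deficient, the obstruction would produce a $J$'' is the content of Lemma~6.6 of Part~II, invoked via Lemma~\ref{lem:yzd}: one first shows that if the $\il(G_b)$-slot (for $b\in\hfd{\bbG}$) is never filled modulo $\ghdb{l}$, then the coefficients $\caW_{\alpha,\beta,b}^{(l)}$ satisfy the multiplicative relation
\[
\caW_{\alpha_1,\beta_1,b}^{(l_1)}\zeij{\alpha_2,\beta_2}+\lambda_b^{-l_1}\zeij{\alpha_1,\beta_1}\caW_{\alpha_2,\beta_2,b}^{(l_2)}=\delta_{\beta_1,\alpha_2}\caW_{\alpha_1,\beta_2,b}^{(l_1+l_2)},
\]
and it is this relation (not injectivity of $\Delta_{\oo,\lambda}$) that forces the existence of a single $J$ with $\caW_{\alpha,\beta,b}^{(l)}=J\zeij{\alpha,\beta}-\lambda_b^{-l}\zeij{\alpha,\beta}J$. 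One then substitutes back through a product with a single $B_\mu$ to extract $\caY_{\mu,b}=J\omega_\mu-\lambda_b^{-1}\omega_\mu J$, contradicting $\bbY_{b,\bb}\notin\caL(\oo,\lambda_b^{-1})$. Your sketch also underplays two further points the paper handles explicitly: (i) one only gets each $b\in\hfd{\bbG}$-slot modulo $\ghdb{l}$, so filling all slots simultaneously requires the downward induction on $m_i$ of Lemma~\ref{lem:hcc1} (decompose each $G_i$ as a product of $\hfd{\bbG}$-generators via Lemma~\ref{lem:bdecom}, then peel off from the largest $m_i$); and (ii) having filled $\caK_l(\bb)\hpu$ and $\hpd\caK_l(\bb)$ does not immediately give the full $\caK_l(\bb)$ with the cross block --- the paper defers this to Lemmas~7.3, 7.6, 7.7 of Part~II rather than claiming it follows from $\ir(D_a)\il(G_b)=\eij{-a,b}$ alone.
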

\begin{rem}
The advantage of this characterization is that the conditions for $\Class_1(n,n_0,k_R,k_L)$ is easier to check, when we construct a path $\bbB(t)$.
\end{rem}

\begin{lem}\label{lem:bdecom}
Let $n,n_0\in\nan$ with $n\ge 2$, $k_R,k_L\in\nan$
and $(\lal,\bbD,\bbG,0)\in\caT(k_R,k_L)$ with $\lal\in\Wo'(k_R,k_L)$.
For any $b\in\{1,\ldots, k_L\}$,
there exists an $m\in\nan$ with $m\le k_L$,
$b_1,\ldots,b_m\in {\mathfrak H}_{\bbG}^L$,
and $c_b\in\cc$
such that
$G_b=c_bG_{b_1}\cdots G_{b_m}$.
\end{lem}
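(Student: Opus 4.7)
My plan is to argue by strong induction on $b$, using as key structural input a refinement of Lemma \ref{lem:ad} that exploits $\lal\in\Wo'(k_R,k_L)$. Specifically, I first sharpen Lemma \ref{lem:ad} by observing that whenever $G_{b_1}G_{b_2}\neq 0$, so that $G_{b_1}G_{b_2}=\kappa(b_1,b_2)G_{\sigma(b_1,b_2)}$, the analog for $\bbG$ of the computation carried out for $\bbD$ in the proof of Lemma \ref{lem:ad} yields $\lambda_{\sigma(b_1,b_2)}=\lambda_{b_1}\lambda_{b_2}$. Since $|\lambda_j|<1$ for $j\ge 1$ and the moduli are weakly ordered as $|\lambda_{k_L}|\le\cdots\le|\lambda_1|<1$ in Definition \ref{def:class}, this forces $|\lambda_{\sigma(b_1,b_2)}|<\min\{|\lambda_{b_1}|,|\lambda_{b_2}|\}$, hence $\sigma(b_1,b_2)>\max\{b_1,b_2\}$ strictly.

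Iterating this pairwise rule yields a crucial auxiliary observation: any nonzero product $G_{b_1}\cdots G_{b_m}$ collapses to a scalar multiple of a single $G_{b_\ast}$, through a sequence of running indices $c_1:=b_1$, $c_{i+1}:=\sigma(c_i,b_{i+1})$ that is strictly increasing inside $\{1,\ldots,k_L\}$. In particular every such nonzero product automatically has length $m\le k_L$, which supplies the length bound in the lemma for free.

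With these in hand the strong induction on $b$ is routine. For $b=1$, the running-index bound precludes any decomposition $G_1=c\,G_{b_1}G_{b_2}$ with $c\neq 0$ (it would require $\sigma(b_1,b_2)=1$), so $1\in\hfd{\bbG}$ and the assertion is trivial. For $b>1$: if $b\in\hfd{\bbG}$, take $m=1$; otherwise $G_b=c\,G_{b_1'}G_{b_2'}$ with $c\neq 0$, Lemma \ref{lem:ad} rewrites this as $G_b=c\kappa\,G_{\sigma(b_1',b_2')}$, and the linear independence of the $G_b$'s (by the same normalization argument used in the proof of Lemma \ref{lem:dbb}) forces $b=\sigma(b_1',b_2')$, giving $b_1',b_2'<b$. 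Applying the induction hypothesis to $b_1'$ and $b_2'$ and concatenating the resulting products produces the required expression for $G_b$, and the bound $m\le k_L$ is automatic from the auxiliary observation.

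The only genuinely delicate ingredient is the strict inequality $\sigma(b_1,b_2)>\max\{b_1,b_2\}$; this relies on $\lal\in\Wo'(k_R,k_L)$ (in particular $|\lambda_j|<1$ for $j\neq 0$), and without it both the reduction step $b_1',b_2'<b$ and the running-index length bound would break down. Everything else is bookkeeping on top of Lemma \ref{lem:ad}.
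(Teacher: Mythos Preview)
Your proof is correct and takes a genuinely different route from the paper's. The paper argues via the ``nilpotency depth'' $n_b:=\min\{j-i:(G_b)_{ij}\neq 0\}$ of the strictly upper triangular matrix $G_b$: whenever $G_b\propto G_{b_1}G_{b_2}$ one has $n_{b_1}+n_{b_2}\le n_b$, so iterated splitting produces at most $n_b\le k_L$ factors and must terminate; no eigenvalue information is used beyond Lemma~\ref{lem:ad} itself. You instead extract from the computation behind Lemma~\ref{lem:ad} the multiplicative relation $\lambda_{\sigma(b_1,b_2)}=\lambda_{b_1}\lambda_{b_2}$, combine it with $|\lambda_j|<1$ and the weak monotonicity of $|\lambda_j|$ in $j$ to obtain the strict inequality $\sigma(b_1,b_2)>\max\{b_1,b_2\}$, and then run strong induction on $b$, with your running-index argument supplying the length bound. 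Both approaches terminate the recursion for structural reasons, but with different invariants: the paper uses off-diagonal depth inside $\UT_{0,k_L+1}$, while you use the spectral grading by $\lambda_b$. Your argument makes the $\lambda$-grading of the algebra generated by the $G_b$'s explicit and gives a clean well-founded order on the index $b$; the paper's argument is slightly more self-contained (it never invokes the eigenvalue identity) and makes the bound $m\le k_L$ transparent as nilpotency of $\UT_{0,k_L+1}$.
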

\begin{proof}
Set $n_b:=\min\{j-i:(G_b)_{ij}\neq 0\}$.
By the definition,
we have
$1\le n_b\le k_L$.
The claim is trivial when $b\in {\mathfrak H}_{\bbG}^L$.
If $b\notin {\mathfrak H}_{\bbG}^L$, then
there exist $b_1,b_2\in\{1,\ldots, k_L\}$
such that $\sigma(b_1,b_2)=b$, and 
$G_b=\kappa(b_1,b_2)^{-1}G_{b_1}G_{b_2}$
with nonzero $\kappa(b_1,b_2)\in \cc$.
From this, we see
$2\le n_{b_1}+n_{b_2}\le n_b$.
If furthermore $b_1\notin {\mathfrak H}_{\bbG}^L$,
we repeat the same argument
to obtain $b_{11}',b_{12}' \in\{1,\ldots, k_L\}$
such that
$ G_{b}\propto G_{b_{11}'}G_{b_{12}'}G_{b_2}$.
We repeat this procedure 
and obtain $b_1,b_2,\ldots, b_m$
such that
$G_b\propto G_{b_1}G_{b_2}\cdots G_{b_m}$
and $m\le n_{b_1}+n_{b_2}+\cdots+n_{b_m}\le n_b$.
If some of $b_i$ is not in ${\mathfrak H}_\bbG^L$,
we repeat the same argument to split $b_i$ into two.
However, this procedure stops in finite time
because of the bound $m\le n_b\le k_L$.
Namely, at some point, all of $b_1,\ldots,b_m$ will be in
${\mathfrak H}_\bbG^L$.
\end{proof}

\begin{lem}\label{lem:hcc1}
Let $n,n_0\in\nan$ with $n\ge 2$, $k_R,k_L\in \nan\cup\{0\}$.
Let $\bb\in\Class_1(n,n_0,k_R,k_L)$ with respect to
$(\lal,\bbD,\bbG,\oo)$.
Then we have $\mnz\otimes P_L^{(k_R,k_L)}\Lambda_{\lal}^l\subset 
\hpd \kl{l}(\bb)$
(resp. $\mnz\otimes P_R^{(k_R,k_L)}\Lambda_{\lal}^l\subset 
\kl{l}(\bb)\hpu$),
for $l$ large enough.
Furthermore, 
if $k_L\in\nan$, (resp. $k_R\in\nan$),
for any $b\in\{1,\ldots,k_L\}$
( resp. $a\in\{1,\ldots,k_R\}$),
we have $\mnz\otimes \il(G_b)\Lambda_{\lal}^l\subset 
\hpd \kl{l}(\bb)$
(resp. $\mnz\otimes \ir(D_a)\Lambda_{\lal}^l\subset 
\kl{l}(\bb)\hpu$),
for $l$ large enough.
\end{lem}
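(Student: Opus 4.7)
My plan is to prove the right-sided containments (those involving $\kl{l}(\bb)\hpu$); the left-sided statements follow symmetrically from the analogous decomposition of $\hpd B_\mu$ in condition~(2) of Definition~\ref{def:cl1}. I first reduce: since $G_b\pu=0$ and $E^{(k_R,k_L)}_{-a,b}\pu=0$ in $\mkk$, one has $\caD(k_R,k_L,\bbD,\bbG)\cdot \pu=\spa\{D_a\}_{a=0}^{k_R}$ (with the convention $D_0:=\pu$), and hence
\[
\kl{l}(\bb)\hpu\subset\mnz\otimes\spa\{\ir(D_a)\Lambda_\lal^l\}_{a=0}^{k_R}.
\]
Each element thus admits a unique expansion $\sum_{a=0}^{k_R}M_a\otimes \ir(D_a)\Lambda_\lal^l$, where $M_a(B_{\mu^{(l)}}\hpu)$ equals the $(-a,0)$-block of $B_{\mu^{(l)}}\hpu$; in particular $M_0(B_{\mu^{(l)}}\hpu)=\omega_{\mu^{(l)}}$.

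The heart of the argument is Stage I, treating the generators $a\in\hfu{\bbD}$ by induction on $a$. Expanding $(B_{\mu^{(l)}}\hpu)_{-a,0}$ as a sum over non-decreasing paths in $\{-k_R,\ldots,0\}$ from $-a$ up to $0$, the single-jump contribution is
\[
S_a^{(l)}(\mu^{(l)})=\sum_{t=1}^l \lambda_{-a}^{t-1}\,\omega_{\mu_1\cdots\mu_{t-1}}\,\caX_{\mu_t,a}\,\omega_{\mu_{t+1}\cdots\mu_l},
\]
while multi-jump contributions involve $\caX_{\mu_t,a''}$'s at intermediate heights which, by the inductive hypothesis together with Stage III below, have already been realized inside $\kl{l}(\bb)\hpu$ and can thus be subtracted. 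It remains to show $\spa\{(\omega_{\mu^{(l)}},S_a^{(l)}(\mu^{(l)}))\}_{\mu^{(l)}}=\mnz\oplus\mnz$ for $l$ large. I establish this by duality: a nontrivial functional $(\alpha,\beta)\in(\mnz\oplus\mnz)^*$ vanishing on this set would, by primitivity of $\oo$ (so that $\omega_{\mu^{(l)}}$ ranges over $\mnz$ for $l\geq l_\oo$) and the telescoping identity
\[
\sum_{t=1}^l \lambda_{-a}^{t-1}\,\omega_{\mu_{<t}}(J\omega_{\mu_t}-\lambda_{-a}\omega_{\mu_t}J)\,\omega_{\mu_{>t}}=J\omega_{\mu^{(l)}}-\lambda_{-a}^l\omega_{\mu^{(l)}}J,
\]
force $\caX_{\mu,a}=J\omega_\mu-\lambda_{-a}\omega_\mu J$ for some $J\in\mnz$, i.e.\ $\bbX_{a,\bb}\in\Ran\Delta_{\oo,\lambda_{-a}}=\caL(\oo,\lambda_{-a})$, contradicting the hypothesis $\bbX_{a,\bb}\notin\caL(\oo,\lambda_{-a})$ of Definition~\ref{def:cl1}.

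Stage II derives $\mnz\otimes\pu\Lambda_\lal^l\subset\kl{l}(\bb)\hpu$: since $M_0(B_{\mu^{(l)}}\hpu)=\omega_{\mu^{(l)}}$ spans $\mnz$ by primitivity for $l\geq l_\oo$, I pick a combination $\sum c_{\mu^{(l)}}B_{\mu^{(l)}}\hpu$ whose $\pu$-component equals the target and cancel the unwanted $\ir(D_a)$-components using Stages I and III. Stage III handles non-generators $a\in\{1,\ldots,k_R\}\setminus\hfu{\bbD}$: by the right-sided analogue of Lemma~\ref{lem:bdecom}, $D_a=c_a D_{a_1}\cdots D_{a_s}$ with $a_j\in\hfu{\bbD}$, and iterating the inclusion $\kl{l_1+l_2}(\bb)\hpu\supset\kl{l_1}(\bb)\cdot(\kl{l_2}(\bb)\hpu)$ (immediate from the definitions) reduces the containment to Stage~I applied at each $a_j$.

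The main obstacle is Stage I, specifically the duality/telescoping step. The injectivity of $\Delta_{\oo,\lambda_{-a}}$ (Lemma~\ref{lem:Delta}, valid since $\lambda_{-a}\neq 1$) makes the identification of the obstruction with $\caL(\oo,\lambda_{-a})$ clean; however, the multi-jump bookkeeping and the interplay between the induction on $a$ and Stage~III need to be set up carefully to avoid circularity. The left-sided statements are then proved in parallel, replacing $(\bbX_{a,\bb},\lambda_{-a},\ir(D_a))$ by $(\bbY_{b,\bb},\lambda_b^{-1},\il(G_b))$.
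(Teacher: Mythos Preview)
Your overall strategy --- isolate generators $a\in\hfu{\bbD}$, derive a contradiction with $\bbX_{a,\bb}\notin\caL(\oo,\lambda_{-a})$ if the inclusion fails, then build non-generators by products via Lemma~\ref{lem:bdecom} --- matches the paper's. But two steps do not close as written.

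\textbf{The duality step.} The telescoping identity shows that \emph{if} $\caX_{\mu,a}=J\omega_\mu-\lambda_{-a}\omega_\mu J$, then $S_a^{(l)}(\mu^{(l)})=J\omega_{\mu^{(l)}}-\lambda_{-a}^l\omega_{\mu^{(l)}}J$. You need the converse: from a single nontrivial functional $(\alpha,\beta)$ annihilating $\{(\omega_{\mu^{(l)}},S_a^{(l)}(\mu^{(l)}))\}$ for one fixed $l$, extract $J$. That is a genuine cocycle-type statement, not a formal consequence of the identity. The paper does not attempt a direct functional argument; instead it forms the products $X_{\alpha_1,\beta_1}^{(l_1)}X_{\alpha_2,\beta_2}^{(l_2)}$ and $X_{\alpha_1,\alpha_1}^{(l_1)}B_\mu X_{\alpha_2,\alpha_2}^{(l_2)}$, reads off the $b$-coefficient (modulo non-generators), and then invokes Lemma~6.6 of Part~II to produce $J$ from the resulting functional equations on $\caW_{\alpha,\beta,b}^{(l)}$. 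That lemma is the missing ingredient in your argument.

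\textbf{The induction and the circularity you flag.} Even granting $\spa\{(\omega_{\mu^{(l)}},S_a^{(l)}(\mu^{(l)}))\}=\mnz\oplus\mnz$, you only control the $M_0$- and single-jump part of the $M_a$-component; the components $M_{a'}$ for $a'>a$ remain uncontrolled, so you cannot yet conclude $\mnz\otimes\ir(D_a)\Lambda_\lal^l\subset\kl{l}(\bb)\hpu$. Subtracting multi-jump terms via the inductive hypothesis on $a'<a$ does not touch these. The paper's fix is organizational: it works \emph{modulo} $\ghub{l}$ (the span of non-generator directions) from the start, showing only
\[
\mnz\otimes\ir(D_a)\Lambda_\lal^l\subset \kl{l}(\bb)\hpu+\ghub{l}
\]
for each generator $a$ (no induction on $a$ needed here; the error terms live in a fixed subspace). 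Only afterwards does it run a \emph{separate} induction --- on the factorization depth $m_i$ from large to small --- to absorb the $\ghub{l}$-errors via products, finally yielding the generators at $m_i=1$. This two-pass structure is precisely what dissolves the circularity you were worried about.
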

\begin{proof}
Note that the first statement is trivial if $k_L=0$ (resp. $k_R=0$),
from the primitivity of $\oo$.
We prove the Lemma for $k_L\in\nan$, $b\in\{1,\ldots,k_L\}$.
The proof for $k_R\in\nan$, $a\in\{1,\ldots,k_R\}$ is the same.
Assume $k_L\in\nan$.
For each
$b\in\{1,\ldots,k_L\}$, we define
\[
\check l_b:=\inf\left\{l\in\nan\mid 
\mnz\otimes \il(G_b)\Lambda_{\lal}^{l'}
\subset \hpd \kl{l'}(\bb)+\ghdb{l'},
\quad \text{for all } \quad l'\ge l
\right\}.
\]
First we see $\check l_b<\infty$ for all
$b\in\hfd{\bbG}$.
From (\ref{eq:cl1}) and $(\lal,\bbD,\bbG,0)\in{\caT}(k_R,k_L)$, we can check inductively
that for any $l\in\nan$ and $\mu^{(l)}\in\{1,\ldots,n\}^{\times l}$,
$\hpd\wb{l}$ is of the form
\[
\hpd \wb{l}=\wo{l}\otimes\Lambda_{\lal}^l\pd
+\sum_{b=1}^{k_L}
\caZ_{\mu^{(l)},b}\otimes \il(G_b)\Lambda_{\lal}^l,
\]
with $\caZ_{\mu^{(l)},b}\in\mnz$.

Recall that
we have $\mnz=\kl{l}(\oo)$ for all $l\ge l_{\oo}$.
Therefore, for any 
$l\ge l_{\oo}$ and $\alpha,\beta\in\{1,\ldots,n_0\}$, there exists
$X_{\alpha,\beta}^{(l)}\in\kl{l}(\bb)$
such that $\hpd X_{\alpha,\beta}^{(l)}$ is the of the form
\[
\hpd X_{\alpha,\beta}^{(l)}
=\zeij{\alpha,\beta}\otimes\Lambda_{\lal}^l\pd
+\sum_{b=1}^{k_L}
\caW_{\alpha,\beta,b}^{(l)}\otimes \il(G_b)\Lambda_{\lal}^l,
\]
with some $\caW_{\alpha,\beta,b}^{(l)}\in\mnz$.

For any $\alpha_1,\beta_1,
\alpha_2,\beta_2\in\{1,\ldots,n_0\}$
and $l_1,l_2\ge l_{\oo}$,
we have
\begin{align}\label{eq:xabd}
&
\hpd{\caK}_{l_1+l_2}(\bb)\ni \hpd X_{\alpha_1,\beta_1}^{(l_1)}
 X_{\alpha_2,\beta_2}^{(l_2)}
=\hpd X_{\alpha_1,\beta_1}^{(l_1)}
\hpd X_{\alpha_2,\beta_2}^{(l_2)}\nonumber\\
&=\delta_{\beta_1,\alpha_2}\zeij{\alpha_1,\beta_2}
\otimes\Lambda_{\lal}^{l_1+l_2}\pd
+
\sum_{b=1}^{k_L}
\lmk
\caW_{\alpha_1,\beta_1,b}^{(l_1)}\zeij{\alpha_2,\beta_2}
+\lambda^{-l_1}_{b}\zeij{\alpha_1\beta_1}
\caW_{\alpha_2,\beta_2,b}^{(l_2)}
\rmk\otimes \il(G_b)\Lambda_{\lal}^{(l_1+l_2)}\nonumber\\
&\quad +\text{an element of }\ghdb{l_1+l_2}.
\end{align}
We use the following Lemma which can be proven by the same argument as Lemma 7.14 of \cite{Ogata2}.
\begin{lem}\label{lem:yzd}
Let $n,n_0\in\nan$, $k_R\in\nan\cup \{0\}$ and $k_L\in\nan$. Let
$\bb\in\Class_1(n,n_0,k_R,k_L)$ with respect to $(\lal,\bbD,\bbG,\oo)$.
Suppose that there exist an $l'\in\nan$ and matrices 
$y_b\in \mnz$, $b=1,\ldots, k_L$,
such that
\begin{align*}
\sum_{b=1}^{k_L}y_b\otimes \il(G_b)\Lambda_{\lal}^{l'}\in
\hpd \kl{l'}(\bb).
\end{align*}
Then for any $b\in\hfd{\bbG}$ with $y_b\neq 0$,
we have
\[
\mnz\otimes \il(G_b)\Lambda_{\lal}^l\subset \hpd
\kl{l}(\bb)+
\ghdb{l},
\]
for all
$l\ge 2l_{\oo}+k_L+l'-1$.
\end{lem}
Suppose that $\check l_b=\infty$ for some $b\in\hfd{\bbG}$.
Applying Lemma \ref{lem:yzd} to
(\ref{eq:xabd}), we see that
$
x_{\alpha,\beta}^{(l)}:=\caW_{\alpha,\beta,b}^{(l)}
$
and $\lambda:=\lambda_b$
satisfy
the conditions in Lemma 6.6 of Part II \cite{Ogata2}
with $l_0=l_\oo$.
By the latter Lemma,
there exists $J\in\mnz$ such that  $\caW_{\alpha,\beta,b}^{(l)}
=J\zeij{\alpha,\beta}-\lambda_b^{-l}\zeij{\alpha\beta}J$, $l\ge l_\omega$, $\alpha,\beta=1,\ldots,n_0$, $l\ge l_\oo$.

Substituting this and (\ref{eq:cl1}), we have
\begin{align*}
\hpd{\caK}_{l_1+l_2+1}(\bb)\ni& 
\sum_{\alpha_1,\alpha_2=1}^{n_0}\lmk
\hpd X_{\alpha_1,\alpha_1}^{(l_1)}
B_{\mu} X_{\alpha_2,\alpha_2}^{(l_2)}
-\braket{\cnz{\alpha_1}}{\omega_{\mu}\cnz{\alpha_2}} X_{\alpha_1,\alpha_2}^{(l_1+l_2+1)}\rmk\\
&=\lambda_b^{-l_1}
\lmk
-J\omega_{\mu}+\caY_{\mu ,b,\bb}+\lambda_b^{-1}\omega_{\mu}J
\rmk\otimes \il(G_b)\Lambda_{\lal}^{l_1+l_2+1}\\
&+\text{elements of }\mnz\otimes\spa\{\il(G_{b'})\Lambda_{\lal}^{l_1+l_2+1}\}_{b'\neq b,b'=1,\ldots k_L}
\end{align*}
Applying Lemma \ref{lem:yzd} and from the assumption $\check l_b=\infty$,
we get 
\[
\caY_{\mu ,b,\bb}=J\omega_{\mu}-\lambda_b^{-1}\omega_{\mu}J,\quad \mu=1,\ldots,n. 
\]
This means $\bbY_{b,\bb}:=
(\caY_{\mu,b,\bb})_{\mu=1}^n\in \caL(\oo,\lambda_{b}^{-1})$,
which contradict the assumption $\bb\in\Class_1(n,n_0,k_R,k_L)$.
Hence we have $\check l_b<\infty$ for all $b\in\hfd{\bbG}$.

Let $\tilde l:=\max\{\check l_b\mid b\in\hfd{\bbG}\}<\infty$.
Then for each $b\in\hfd{\bbG}$, $\alpha,\beta\in\{1,\ldots,n_0\}$ 
and $l\ge\tilde l$,
there exists $C_{l,\alpha,\beta,b}\in \ghdb{l}$
such that
$\zeij{\alpha,\beta}\otimes
\il(G_b)\Lambda_{\lal}^l+C_{l,\alpha,\beta,b}\in 
\hpd \kl{l}(\bb)$.
For each $i\in\{1,\ldots,k_L\}$, we set
\[
m_i:=
\max 
\left\{
m\in\nan\mid
\text{there exist }
b_1,\ldots, b_m\in \hfd{\bbG}
\text{ such that }
G_i\propto G_{b_1}\ldots G_{b_m}
\right\}.
\]
From the proof of Lemma \ref{lem:bdecom},
we see that $m_i\le k_L$.

Let us consider $i\in \{1,\ldots,k_L\}$,
with a decomposition
$G_i=c_iG_{b_1}\cdots G_{b_{m_i}}$,
$b_1,\ldots, b_{m_i}\in \hfd{\bbG}$.
Then for all
$l_1,\ldots,l_{m_i}\ge\tilde l$
and $\alpha,\beta\in\{1\ldots,n_0\}$,
we have
\begin{align*}
&\lmk \zeij{\alpha,\beta}\otimes
 \il(G_{b_1})\Lambda_{\lal}^{l_1}+C_{l_1,\alpha,\beta,b_1} \rmk
\lmk \zeij{\beta,\beta}\otimes \il(G_{b_2})\Lambda_{\lal}^{l_2}+C_{l_2,\beta,\beta,b_2} \rmk
\cdots
\lmk  \zeij{\beta,\beta}\otimes \il(G_{b_{m_i}})\Lambda_{\lal}^{l_{m_i}}+C_{l_{m_i},\beta,\beta,b_{m_i}} \rmk\\
&\in \hpd\kl{l_1+l_2+\cdots+l_{m_i}}(\bb).
\end{align*}

From this, we see that for all
$i\in \{1,\ldots,k_L\}$,
$l\ge m_i\tilde l$
and $\alpha,\beta\in\{1,\ldots,n_0\}$,
we have
\begin{align}
\zeij{\alpha,\beta}\otimes
\il(G_i)\Lambda_{\lal}^{l}
+\text{an element of}
\mnz\otimes 
\spa\{\il(G_{b'})\Lambda_{\lal}^l\}_{b'\in\{1,\ldots,k_L\},m_{b'}>m_i}
\in \hpd\kl{l}(\bb).
\end{align}
In particular, if there is no $b'$ such that $m_{b'}>m_i$, then 
$\mnz\otimes \il(G_i)\Lambda_\lal^l\subset \hpd\caK_l(\bb)$ for 
$l$ large enough.
By induction in $m_i$ (from large $m_i$ to small ones), we prove the second part of the Lemma.
By this, for any $\alpha,\beta\in\{1,\ldots, n_0\}$, we have 
\begin{align}
\zeij{\alpha,\beta}\otimes\Lambda_{\lal}^l\pd=
\hpd X_{\alpha,\beta}^{(l)}
-\sum_{b=1}^{k_L}
\caW_{\alpha,\beta,b}^{(l)}\otimes \il(G_b)\Lambda_{\lal}^l
\in \hpd\kl{l}(\bb),
\end{align}
for $l$ large enough.
%
\end{proof}
Now we give a proof of Lemma \ref{lem:cc1}
\begin{proofof}[Lemma \ref{lem:cc1}]
Suppose that $\bb$ belongs to 
$\Class_1(n,n_0,k_R,k_L)$ with respect to 
$(\lal,\bbD,\bbG,\oo)$. We would like to show that
$\bb$ belongs to $\Class(n,n_0,k_R,k_L)$
with associated quadraple $(\lal,\bbD,\bbG,\oo)$.
By Lemma \ref{lem:hcc1} and Definition \ref{def:cl1},
there exists $\l_{\bb}'\in\nan$ such that 
\begin{align}\label{eq:ulm}
&\kl{l}(\bb)\hat P^{(n_0,k_R,k_L)}_R=\mnz\otimes\spa\left\{\ir({D_{a})}
{\Lambda_{\lal}}^l\right\}_{a=0}^{k_R},\nonumber\\
&\hat P^{(n_0,k_R,k_L)}_D\kl{l}(\bb)=\mnz\otimes\spa\left\{
\il(G_{b}){\Lambda_{\lal}}^l\right\}_{b=0}^{k_L}.
\end{align}for all $l\ge l_{\bb}'$.
Here we denoted $D_0:=\prr$, $G_0:=\pll$.
Note that this corresponds to the property {\it 1.} of Lemma 7.2 PartII \cite{Ogata2}.
The only difference here is that our $B_\mu$s may have term in $\CN$.
However
the existence of such terms does not affect 
the argument in Section 7 of Part II \cite{Ogata2} and
statements of Lemma 7.3,7.6, 7.7 can be proven.
This implies that $\bbB$ is in $\Class(n,n_0,k_R,k_L)$ with the associated quadraple
$(\lal,\bbD,\bbG,\oo)$.

Next let us prove $\Class(n,n_0,k_R,k_L)\subset\Class_1(n,n_0,k_R,k_L)$.
Assume that $\bb$ belongs to $\Class(n,n_0,k_R,k_L)$
with the associated quadraple $(\lal,\bbD,\bbG,\oo)$.
We would like to show that $\bb$ belongs to 
$\Class_1(n,n_0,k_R,k_L)$ with respect to $(\lal,\bbD,\bbG,\oo)$.
(1) of Definition \ref{def:cl1} corresponds to Lemma \ref{lem:dbb} and Definition \ref{def:ldgob}.
We have to show (2) of Definition
\ref{def:cl1}.
Note that $\hpd B_{\mu}$ is of the form
\begin{align*}
\hpd B_{\mu} 
= \omega_\mu\otimes\Lambda_{\lal}\pd
+\sum_{1\le b'\le k_L}
\caY_{\mu ,b',\bb}\otimes \il(G_{b'})\Lambda_{\lal},
\end{align*}
for any $\mu=1,\ldots,n$.
Suppose that
$\bbY_{b,\bb}=(\caY_{\mu,b,\bbB})_{\mu=1}^n\in \caL(\oo,\lambda_{b}^{-1})$
for some $b\in{\mathfrak H}_{\bbG}^L$, $\mu=1,\ldots,n$.
Then there exists $J_b$
such that 
$\caY_{\mu ,b,\bb}=J_b\omega_{\mu}
-\lambda_b^{-1}\omega_{\mu}J_b$.
We can check inductively
that for any $l\in\nan$ and $\mu^{(l)}\in\{1,\ldots,n\}^{\times l}$,
$\hpd\wb{l}$ is of the form
\[
\hpd \wb{l}=\wo{l}\otimes\Lambda_{\lal}^l\pd
+\sum_{b'=1}^{k_L}
\caZ_{\mu^{(l)},b'}\otimes \il(G_{b'})\Lambda_{\lal}^l,
\]
with $\caZ_{\mu^{(l)},b'}\in\mnz$
and in particular, we have
\[
\caZ_{\mu^{(l)},b}
=J_b\wo{l}
-\lambda_b^{-l}\wo{l}J_b.
\]
Now, as $\bbB\in\Class(n,n_0,k_R,k_L)$,
for $l\in\nan$ large enough,
there exist coefficients 
$\{\alpha_{\mu^{(l)}}\}_{\mu^{(l)}\in\{1,\ldots,n\}^{\times l}}$
such that
\begin{align*}
&\unit\otimes \il(G_b)\Lambda_{\lal}^l=
\hpd \sum_{\mu^{(l)}\in\{1,\ldots,n\}^{\times l}}
\alpha_{\mu^{(l)}}
\wb{l}\\
&=\sum_{\mu^{(l)}\in\{1,\ldots,n\}^{\times l}}
\alpha_{\mu^{(l)}}
\wo{l}\otimes\Lambda_{\lal}^l\pd
+\sum_{b'=1}^{k_L}
\sum_{\mu^{(l)}\in\{1,\ldots,n\}^{\times l}}
\alpha_{\mu^{(l)}}
\caZ_{\mu^{(l)},b'}\otimes \il(G_{b'})\Lambda_{\lal}^l.
\end{align*}
From the linearly independence of $\{G_b\}_{b=1}^{k_L}$,
we obtain
\[
\sum_{\mu^{(l)}\in\{1,\ldots,n\}^{\times l}}
\alpha_{\mu^{(l)}}
\wo{l}=0,\quad
\unit=\sum_{\mu^{(l)}\in\{1,\ldots,n\}^{\times l}}
\alpha_{\mu^{(l)}}
\caZ_{\mu^{(l)},b}
=\sum_{\mu^{(l)}\in\{1,\ldots,n\}^{\times l}}
\alpha_{\mu^{(l)}}\lmk
J_b\wo{l}
-\lambda_b^{-l}\wo{l}J_b\rmk=
0.\]
This is a contradiction.
Therefore, we have
$\bbY_{b,\bb}\notin \caL(\oo,\lambda_{b}^{-1})$
for all $b\in{\mathfrak H}_{\bbG}^L$.
Similarly, we have 
$\bbX_{a,\bb}\notin \caL(\oo,\lambda_{-a})$
for all $a\in {\mathfrak H}_{\bbD}^R$.
Hence the condition (2) of Definition \ref{def:cl1}
holds.
\end{proofof}

\subsection{Proof of Proposition \ref{classthm}}
In this subsection, we prove 
Proposition \ref{classthm}.
For this, we prove the following Lemma.
\begin{lem}\label{lem:ccpath}
Let $n,n_0\in\nan$ with $n\ge 2$ and $k_R,k_L\in\nan\cup\{0\}$.
Then for any $\bb_0,\bb_1\in \Class(n,n_0,k_R,k_L)$,
there exists a continuous and piecewise
$C^\infty$-map
$\bb:[0,1]\to\lmk \mnz\otimes\mkk\rmk^{\times n}$
such that $\bb(0)=\bb_0$, $\bb(1)=\bb_1$
and $\bb(t)\in\Class(n,n_0,k_R,k_L)$
for all $t\in[0,1]$.
\end{lem}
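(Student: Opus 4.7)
By Lemma~\ref{lem:cc1} it suffices to construct a piecewise $C^\infty$ path within $\Class_1(n,n_0,k_R,k_L)$. Each $\bb_i$ is determined by a quadruple $(\lal_i,\bbD_i,\bbG_i,\oo_i)$ satisfying $(\lal_i,\bbD_i,\bbG_i,0)\in\caT(k_R,k_L)$, $\lal_i\in\Wo'(k_R,k_L)$, $\oo_i\in\Prim(n,n_0)$, together with the free matrices $\boldsymbol{\caX}_{\bb_i},\boldsymbol{\caY}_{\bb_i}$ constrained only by the open non-degeneracy conditions of Definition~\ref{def:cl1}(2). My plan is to interpolate the quadruple first and the free matrices afterwards.

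For the quadruple I proceed piece by piece. The set $\Wo'(k_R,k_L)$ is the complement in $\cc^{k_R+k_L+1}$ of a finite union of real-algebraic loci (modulus ties and eigenvalue collisions), hence path-connected, and the standard facts about primitive $n$-tuples give path-connectedness of $\Prim(n,n_0)$. For the generator tuples $\bbD,\bbG$, Lemma~\ref{lem:ad} ties their multiplicative type to which products $\lambda_{-a_1}\lambda_{-a_2}$ (resp.\ $\lambda_{b_1}\lambda_{b_2}$) reappear among $\{\lambda_{-a}\}$ (resp.\ $\{\lambda_b\}$). I would route $\lal(t)$ through a generic open subset of $\Wo'(k_R,k_L)$ on which none of these products reappear, so that one may take $D_{a_1}D_{a_2}=0$ and $G_{b_1}G_{b_2}=0$ throughout that portion of the path. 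In this regime the $D_a,G_b$ are independent strictly upper-triangular matrices constrained only by the top-row normalizations $I_R^{(k_R,k_L)}(D_a)\eij{00}=\eij{-a,0}$ and its analogue for $G_b$, so $\bbD(t),\bbG(t)$ can be obtained by linear interpolation (with small deformations of the lower rows at the endpoints to kill any pre-existing non-zero products). Interpolating $\oo(t)$ along $\Prim(n,n_0)$ simultaneously produces a continuous quadruple path from $(\lal_0,\bbD_0,\bbG_0,\oo_0)$ to $(\lal_1,\bbD_1,\bbG_1,\oo_1)$.

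Once the quadruple path is fixed, the free matrices are interpolated last. By Lemma~\ref{lem:Delta} each $\caL(\oo,\lambda)$ with $\lambda\ne 1$ has complex dimension at most $n_0^2$ inside the ambient space $\oplus_{\mu=1}^n\mnz$ of complex dimension $nn_0^2\ge 2n_0^2$, so its complement has real codimension at least two and is path-connected. A straight-line homotopy with finitely many small detours therefore joins $(\boldsymbol{\caX}_{\bb_0},\boldsymbol{\caY}_{\bb_0})$ to $(\boldsymbol{\caX}_{\bb_1},\boldsymbol{\caY}_{\bb_1})$ while staying outside every $\caL(\oo(t),\lambda_{-a}(t))$ and $\caL(\oo(t),\lambda_b(t)^{-1})$ for $a\in\hfu{\bbD(t)}$, $b\in\hfd{\bbG(t)}$. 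The main obstacle I anticipate is the $\bbD,\bbG$ step: routing $\lal(t)$ into the generic region while continuously killing the pre-existing products in $\bbD_0,\bbG_0$ (and $\bbD_1,\bbG_1$), all while preserving the defining algebraic conditions of $\caT(k_R,k_L)$ and the generator normalization, is the most delicate bookkeeping, though it is purely algebraic in nature.
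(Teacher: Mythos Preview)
Your strategy---reduce to $\Class_1$ via Lemma~\ref{lem:cc1}, then interpolate the quadruple and the free matrices separately---is exactly what the paper does. But there is a genuine gap in your handling of $\bbD,\bbG$. You write that in the generic region ``the $D_a,G_b$ are independent strictly upper-triangular matrices constrained only by the top-row normalizations.'' This overlooks the commutation relation built into $\caT(k_R,k_L)$ (equation~(9) of \cite{Ogata1}, also used in the proofs of Lemma~\ref{lem:ad} and Lemma~\ref{lem:bsb}): one must have $\Lambda_{\lal}\,\ir(D_a)\,\Lambda_{\lal}^{-1}=\lambda_{-a}\,\ir(D_a)$, which forces each nonzero entry $(D_a)_{-i,-j}$ to satisfy $\lambda_{-i}/\lambda_{-j}=\lambda_{-a}$. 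Hence the admissible shape of $D_a$ is rigidly tied to the multiplicative relations among the $\lambda$'s, and you cannot move $\lal$ into a ``generic'' region while keeping $\bbD_0$ fixed, nor can you linearly interpolate $\bbD$ independently of $\lal$. In the truly generic region the commutation already forces $D_a=\eijr{-a,0}$, so there is nothing to interpolate there; the whole content lies in getting from $\bbD_0$ (at $\lal_0$) to this standard form, and that must happen with $\lal$ held fixed.

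The paper resolves this by a clean three-step decomposition. First (Lemma~\ref{lem:c02}) it moves within a fixed quadruple to $\Class_2$, so that the non-degeneracy condition on $\bbX_a,\bbY_b$ holds for \emph{all} $a,b$; this is needed because as $\bbD$ is deformed the set $\hfu{\bbD(t)}$ can only grow, and your ``detours'' argument would have to cope with a jumping family of constraints. Second (Lemma~\ref{lem:bsb}), with $\lal,\oo,\boldsymbol{\caX},\boldsymbol{\caY}$ frozen, it linearly kills the sub-top rows of each $D_a,G_b$ via $D_a(t)=\eijr{-a,0}+(1-t)D_a\qur{-1}$; the key computation (which is the ``purely algebraic bookkeeping'' you anticipate) is that this path preserves both the commutation with $\Lambda_{\lal}$ and the closure under multiplication, landing at the standard form $\bbS_{\bb}$. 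Only then (Lemma~\ref{lem:c2c2}), with $\bbD,\bbG$ in a form compatible with \emph{every} $\lal$, does the paper move $\lal$, $\oo$ (via \cite{bo}), and $\bbX,\bbY$ (via your codimension argument, made precise through Lemma~\ref{lem:cio}). Your outline is salvageable, but it collapses these steps and in doing so misses the commutation obstruction; separating them as the paper does is what makes each step elementary.
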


First we connect each element in 
$\CL$ to an element in $\CLn$, defined as follows.
\begin{defn}
Let $n,n_0\in\nan$ with $n\ge 2$, and $k_R,k_L\in\nan\cup\{0\}$.
We say
$\bb\in\lmk\mnz\otimes\mkk\rmk^{\times n}$
belongs to $\CLn$ if 
$\bb\in\Class_1(n,n_0,k_R,k_L)(=\Class(n,n_0,k_R,k_L))$
with respect to some 
$(\lal,\bbD,\bbG,\oo)$,
and the associated set of matrices 
$\{\caX_{\mu,a,\bb}\}_{a\mu},\{\caY_{\mu,b,\bb}\}_{b,\mu}$
(see Remark \ref{rem:xy})
satisfy
$
\bbX_{a,\bb}=(\caX_{\mu,a,\bb})_{\mu=1}^n\notin \caL(\oo,\lambda_{-a}),
\bbY_{b,\bb}=(\caY_{\mu,b,\bb})_{\mu=1}^n\notin \caL(\oo,\lambda_{b}^{-1})$
for any $1\le a\le k_R$
and $1\le b\le k_L$.
\end{defn}

\begin{lem}\label{lem:c02}
Let $n,n_0\in\nan$ with $n\ge 2$, $k_R,k_L\in\nan\cup\{0\}$, and 
$\bb_0\in\CL$.
Then there exist $\bb_1\in\CLn$ and
a 
$C^\infty$-map
$\bb:[0,1]\to\lmk \mnz\otimes\mkk\rmk^{\times n}$
such that $\bb(0)=\bb_0$, $\bb(1)=\bb_1$
and $\bb(t)\in\Class(n,n_0,k_R,k_L)$
for all $t\in[0,1]$.
\end{lem}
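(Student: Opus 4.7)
The plan is to exploit the constructive characterization of Lemma \ref{lem:cc1}: $\bb_0\in\CL$ lies in $\Class_1(n,n_0,k_R,k_L)$ with respect to the associated quadruple $(\lal,\bbD,\bbG,\oo)$, so each $B_\mu^{(0)}$ admits a unique expansion along the linearly independent family $\{\unit,\ir(D_a),\il(G_b),E^{(k_R,k_L)}_{-a,b}\}$ (Remark \ref{rem:xy}), producing an $\omega_\mu$-piece, the coefficients $\caX_{\mu,a,\bb_0}$ and $\caY_{\mu,b,\bb_0}$, and a corner piece with coefficients $\caZ_{\mu,a,b,\bb_0}$ in front of $E^{(k_R,k_L)}_{-a,b}\Lambda_\lal$.

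The observation that drives the proof is that $\Class_1(n,n_0,k_R,k_L)$ imposes the non-degeneracy $\bbX_{a,\bb}\notin\caL(\oo,\lambda_{-a})$ only for $a\in\hfu{\bbD}$ (and analogously for $\bbY_{b,\bb}$ along $\hfd{\bbG}$), whereas $\CLn$ requires it at every $a\in\{1,\ldots,k_R\}$ and $b\in\{1,\ldots,k_L\}$. I therefore pick targets $\tilde\bbX_a\in(\oplus_{\mu=1}^n\mnz)\setminus\caL(\oo,\lambda_{-a})$ for each $a\in\{1,\ldots,k_R\}\setminus\hfu{\bbD}$, and $\tilde\bbY_b\in(\oplus_{\mu=1}^n\mnz)\setminus\caL(\oo,\lambda_b^{-1})$ for each $b\in\{1,\ldots,k_L\}\setminus\hfd{\bbG}$. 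Such targets exist because $\lal\in\Wo'(k_R,k_L)$ forces $\lambda_{-a}\neq 1\neq\lambda_b^{-1}$, so by Lemma \ref{lem:Delta} the maps $\Delta_{\oo,\lambda_{-a}}$ and $\Delta_{\oo,\lambda_b^{-1}}$ are injective, giving $\dim\caL(\oo,\cdot)=n_0^2$, strictly smaller than the ambient dimension $n\cdot n_0^2$ because $n\ge 2$.

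I define $\bb(t)$ coefficient-wise as the straight-line homotopy that freezes the $\omega_\mu$-piece, the corner coefficients $\caZ_{\mu,a,b,\bb_0}$, the $\caX_{\mu,a,\bb_0}$ for $a\in\hfu{\bbD}$, and the $\caY_{\mu,b,\bb_0}$ for $b\in\hfd{\bbG}$, while interpolating
\[
\caX_{\mu,a,\bb(t)}:=(1-t)\caX_{\mu,a,\bb_0}+t\tilde\caX_{\mu,a},\qquad \caY_{\mu,b,\bb(t)}:=(1-t)\caY_{\mu,b,\bb_0}+t\tilde\caY_{\mu,b}
\]
on the remaining indices. This is manifestly $C^\infty$ on $[0,1]$ and lands in $(\mnz\otimes\mkk)^{\times n}$, with $\bb(0)=\bb_0$ and $\bb(1)=:\bb_1$.

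The only substantive verification is that $\bb(t)\in\Class(n,n_0,k_R,k_L)$ for every $t\in[0,1]$, which by Lemma \ref{lem:cc1} reduces to the $\Class_1$-axioms against the fixed quadruple $(\lal,\bbD,\bbG,\oo)$. Condition (1) of Definition \ref{def:cl1} is preserved because the diagonal $\omega_\mu\otimes\Lambda_\lal$ is untouched and all other basis matrices $\ir(D_a),\il(G_b),E^{(k_R,k_L)}_{-a,b}$ have zero diagonal entries, so they vanish when sandwiched between $\unit\otimes E^{(k_R,k_L)}_{ii}$ on both sides. The structural form (2) persists by linearity of the homotopy, and the frozen non-degeneracy along $\hfu{\bbD}$ and $\hfd{\bbG}$ survives because those coefficients are never moved. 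At $t=1$ the targets force $\bbX_{a,\bb_1}\notin\caL(\oo,\lambda_{-a})$ and $\bbY_{b,\bb_1}\notin\caL(\oo,\lambda_b^{-1})$ for every $a,b$, so $\bb_1\in\CLn$. I do not anticipate a serious obstacle; Lemma \ref{lem:cc1} has already distilled class membership into a short list of open conditions on a prescribed set of coefficients, and the single nontrivial input beyond linearity---nonemptiness of the complement of $\caL(\oo,\lambda)$---follows directly from Lemma \ref{lem:Delta} and the dimension count permitted by $n\ge 2$.
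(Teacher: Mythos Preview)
Your proposal is correct and follows essentially the same approach as the paper: fix the associated quadruple $(\lal,\bbD,\bbG,\oo)$, freeze the diagonal $\omega_\mu$-piece, the corner part, and the coefficients $\caX_{\mu,a,\bb_0}$, $\caY_{\mu,b,\bb_0}$ for $a\in\hfu{\bbD}$, $b\in\hfd{\bbG}$, and linearly interpolate the remaining coefficients toward targets outside $\caL(\oo,\lambda_{-a})$, $\caL(\oo,\lambda_b^{-1})$ chosen via the dimension count $n_0^2<n\cdot n_0^2$. The paper writes the interpolation uniformly over all $a,b$ (with the frozen ones having equal endpoints), but this is only a cosmetic difference.
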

\begin{proof}
Let $(\lal_0,\bbD_0,\bbG_0,\oo_0)$
be the quadruplet associated with $\bb_0
\in \Class(n,n_0,k_R,k_L)$.
Let
$\{\caX_{0, \mu,a}\},\{\caY_{0,\mu,b}\}$,
be the set of matrices associated with $\bb_0$.
Set $\caX_{1,\mu,a}:=\caX_{0,\mu,a}$
and $\caY_{1,\mu,b}:=\caY_{0,\mu,b}$
for each $a\in \hfu{\bbD_0}$ and $b\in \hfd{\bbG_0}$.
As $\bb_0\in\Class_1(n,n_0,k_R,k_L)$,
by Lemma \ref{lem:cc1},
we have $
(\caX_{0,\mu,a})_{\mu}=(\caX_{1,\mu,a})_\mu\notin \caL(\oo_0,\lambda_{-a})
$ for $a\in \hfu{\bbD_0}$, and
$(\caY_{0,\mu,b})_\mu=(\caY_{1,\mu,b})_\mu\notin 
\caL(\oo_0,\lambda_{b}^{-1})$, for 
$b\in \hfd{\bbG_0}$.
Note that for any $\lambda\in\cc$,
$\dim\caL(\oo_0,\lambda)
\le n_0^2<\dim(\bigoplus_{\mu=1}^n\mnz)=n_0^2 n$,
as $n\ge 2$.
Therefore, for each $a\in \{1,\ldots,k_R\}\setminus \hfu{\bbD_0}$
and $b\in \{1,\ldots, k_L\}\setminus \hfd{\bbG_0}$, we can find
$(\caX_{1,\mu,a})_\mu\in \lmk \bigoplus_{\mu=1}^n\mnz\rmk \setminus \caL(\oo_0,\lambda_{-a})$,
$(\caY_{1,\mu,b})_\mu\in \lmk \bigoplus_{\mu=1}^n\mnz\rmk \setminus\caL(\oo_0,\lambda_{b}^{-1})$.

For each $t\in[0,1]$,
define $\bb(t)\in(\mnz\otimes\mkk)^{\times n}$
by
\begin{align*}
B_{\mu}(t)
:=&\omega_{0,\mu}\otimes\Lambda_{\lal_0}
+\sum_{a=1}^{k_R}
\lmk (1-t)\caX_{0,\mu,a}+t \caX_{1,\mu,a})\rmk
\otimes \ir(D_{0,a})\Lambda_{\lal_0}\\
&+\sum_{b=1}^{k_L}
\lmk (1-t)\caY_{0,\mu,b}+t \caY_{1,\mu,b})\rmk
\otimes \il(G_{0,b})\Lambda_{\lal_0}
+\overline{\hpd} B_{0,\mu}\overline{\hpu}
,\quad
\quad \mu=1,\ldots, n.
\end{align*}
Clearly, 
$\bb:[0,1]\to\lmk \mnz\otimes\mkk\rmk^{\times n}$is
a 
$C^\infty$-map
such that $\bb(0)=\bb_0$.
It is easy to check for all $t\in[0,1]$
that
$\bb(t)\in\Class_1(n,n_0,k_R,k_L)=\Class(n,n_0,k_R,k_L)$
such that
\begin{align*}
&\lal_{\bb(t)}=\lal_{0},\;
\bbD_{\bb(t)}=\bbD_{0},\;
\bbG_{\bb(t)}=\bbG_{0},\;
\oo_{\bb(t)}=\oo_{0},\\
&\caX_{\mu,a,\bb(t)}=(1-t)\caX_{0,\mu,a}+t\caX_{1,\mu,a},\;
\caY_{\mu,b,\bb(t)}=(1-t)\caY_{0,\mu,b}+t\caY_{1,\mu,b}.
\end{align*}
As we have 
$(\caX_{1,\mu,a})_\mu\notin \caL(\oo_0,\lambda_{-a})$,
$(\caY_{1,\mu,b})_\mu\notin \caL(\oo_0,\lambda_{b}^{-1})$,
$\bb_1:=\bb(1)$ belongs to $\CLn$.
\end{proof}

Next we consider elements in $\CLn$
given as follows.
\begin{lem}\label{lem:tb}
Let $n,n_0\in\nan$ with $n\ge 2$, $k_R,k_L\in\nan\cup\{0\}$, 
and $\bb\in\CLn$.
Let $(\lal,\bbD,\bbG,\oo)$
be the quadruplet associated with $\bb$.
Let $\boldsymbol{\caX},\boldsymbol{\caY}$
be the set of matrices associated with $\bb$.
Define $\tilde \bb=(\tilde B_1,\ldots,
\tilde B_n)\in(\mnz\otimes\mkk)^{\times n}$
by
\[
\tilde B_{\mu}=
\omega_{\mu}\otimes{\Lambda_{\lal}}
+\sum_{a=1}^{k_R}
 \caX_{\mu,a}
\otimes \eij{-a,0}{\Lambda_{\lal}}
+\sum_{b=1}^{k_L}
\caY_{\mu,b}
\otimes \eij{0,b}{\Lambda_{\lal}}
,\quad
\mu=1,\ldots,n.
\]
Then $\tilde \bb\in\CLn$, with
\begin{align*}
&\lal_{\tilde \bb}=\lal,\;
\bbD_{\tilde \bb}=(\eijr{-1,0},\ldots,\eijr{-k_R,0}),\;
\bbG_{\tilde \bb}=(\eijl{0,1},\ldots,\eijl{0,k_L}),\;
\oo_{\tilde \bb}=\oo,\\
&\boldsymbol{\caX}_{\tilde \bb}=
\boldsymbol{\caX},\;
\boldsymbol{\caY}_{\tilde \bb}=
\boldsymbol{\caY}.
\end{align*}
\end{lem}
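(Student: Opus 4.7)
The plan is to verify that $\tilde\bb$ satisfies the two conditions of Definition \ref{def:cl1} with respect to the quadruplet $(\lal, \tilde\bbD, \tilde\bbG, \oo)$, where $\tilde\bbD := (\eijr{-1,0}, \ldots, \eijr{-k_R, 0})$ and $\tilde\bbG := (\eijl{0,1}, \ldots, \eijl{0, k_L})$, and then to invoke Lemma \ref{lem:cc1} together with the uniqueness statement of Lemma \ref{lem:dbb} to identify the associated data.

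The key preparatory observation is that the matrix units making up $\tilde\bbD$ and $\tilde\bbG$ have trivial multiplicative structure: the products $\eij{-a_1, 0}\eij{-a_2, 0}$ and $\eij{0, b_1}\eij{0, b_2}$ vanish for all $a_1, a_2 \in \{1,\ldots,k_R\}$ and $b_1, b_2 \in \{1,\ldots,k_L\}$, because the inner index pairs never match. This immediately yields $(\lal, \tilde\bbD, \tilde\bbG, 0) \in \caT(k_R,k_L)$ (the algebraic conditions of Lemma \ref{lem:ad} hold in the vanishing branch), and moreover $\hfu{\tilde\bbD} = \{1,\ldots,k_R\}$ and $\hfd{\tilde\bbG} = \{1,\ldots,k_L\}$, since no $\eijr{-a,0}$ is a scalar multiple of a product of two others. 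Condition (1) of Definition \ref{def:cl1} is then a direct computation: sandwiching $\tilde B_\mu$ by $\unit \otimes \eij{ii}$ annihilates each $\eij{-a,0}\Lambda_{\lal}$ and $\eij{0,b}\Lambda_{\lal}$ term because $a,b \ne 0$, leaving precisely $\lambda_i \omega_\mu \otimes \eij{ii}$. Condition (2) is just as direct: multiplying $\tilde B_\mu$ on the right by $\hpu$ preserves the $\omega_\mu \otimes \Lambda_\lal$ term and each $\eij{-a,0}$ term while killing the $\eij{0,b}$ terms, giving
\[
\tilde B_\mu \hpu = \omega_\mu \otimes \Lambda_\lal \pu + \sum_{a=1}^{k_R} \caX_{\mu,a}\otimes \ir(\eijr{-a,0})\Lambda_\lal,
\]
and the mirror computation handles $\hpd \tilde B_\mu$.

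The non-degeneracy requirements $(\caX_{\mu,a})_\mu \notin \caL(\oo,\lambda_{-a})$ and $(\caY_{\mu,b})_\mu \notin \caL(\oo,\lambda_b^{-1})$ are inherited verbatim from $\bb \in \CLn$, and crucially they now hold for \emph{every} $1 \le a \le k_R$ and $1 \le b \le k_L$, not merely on $\hfu{}$ and $\hfd{}$. Hence $\tilde\bb \in \Class_1(n,n_0,k_R,k_L)$, so by Lemma \ref{lem:cc1} we conclude $\tilde\bb \in \Class(n,n_0,k_R,k_L)$. The uniqueness of the associated quadruplet (Lemma \ref{lem:dbb}) forces the identifications $\lal_{\tilde\bb}=\lal$, $\bbD_{\tilde\bb}=\tilde\bbD$, $\bbG_{\tilde\bb}=\tilde\bbG$, $\oo_{\tilde\bb}=\oo$; the independence noted in Remark \ref{rem:xy} then lets one read $\boldsymbol{\caX}_{\tilde\bb}, \boldsymbol{\caY}_{\tilde\bb}$ off the defining formula, giving $\boldsymbol{\caX}_{\tilde\bb}=\boldsymbol{\caX}$ and $\boldsymbol{\caY}_{\tilde\bb}=\boldsymbol{\caY}$. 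Combined with the non-degeneracy just observed, this places $\tilde\bb$ in $\CLn$.

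I do not foresee a substantial obstacle: the whole argument rests on the fact that the matrix-unit choices for $\tilde\bbD, \tilde\bbG$ collapse every algebraic condition into its simplest branch, so the content of the lemma is essentially that the $\Class_1$-data depend only on the head information $\boldsymbol{\caX}, \boldsymbol{\caY}$, which transfers unchanged to the matrix-unit presentation. The only step meriting a brief check is the verification that $\hfu{\tilde\bbD}, \hfd{\tilde\bbG}$ are the full index sets, which is what upgrades the non-degeneracy from the $\Class_1$-sense to the $\CLn$-sense.
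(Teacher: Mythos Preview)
Your proposal is correct and follows the same approach as the paper, which simply observes that the non-degeneracy conditions $\bbX_{a,\bb}\notin\caL(\oo,\lambda_{-a})$ and $\bbY_{b,\bb}\notin\caL(\oo,\lambda_b^{-1})$ are inherited from $\bb\in\CLn$ and then says the rest is easy to check; you have supplied exactly those checks. One minor remark: your parenthetical reference to Lemma~\ref{lem:ad} is slightly misplaced, since that lemma derives consequences of membership in $\caT(k_R,k_L)$ rather than stating its defining conditions; what you actually need (and what your vanishing-products observation correctly verifies) is that $\tilde\bbD\in\caC^R(k_R)$ and $\tilde\bbG\in\caC^L(k_L)$ in the sense of Part~I.
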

\begin{proof}
As $\bb\in\CLn$, we have
$\caX_{\mu,a}\notin \caL(\oo,\lambda_{-a})$,
$\caY_{\mu,b}\notin \caL(\oo,\lambda_{b}^{-1})$.
With this observation, it is easy to check 
that $\tilde \bb\in\CLn$.
\end{proof}
\begin{defn}
For $\bb\in\CLn$, 
we denote the $\tilde \bb$
given in Lemma \ref{lem:tb}
by $\bbS_{\bb}:=\tilde \bb$.
\end{defn}

Next we connect $\bb_0\in\CLn$ to $\bbS_{\bb_0}$.
\begin{lem}\label{lem:bsb}
Let $n,n_0\in\nan$ with $n\ge 2$, $k_R,k_L\in\nan\cup\{0\}$, and 
$\bb_0\in\CLn$.
Set $\bb_1:=\bbS_{\bb_0}$.
Then there exists 
a 
$C^\infty$-map
$\bb:[0,1]\to\lmk \mnz\otimes\mkk\rmk^{\times n}$
such that $\bb(0)=\bb_0$, $\bb(1)=\bb_1$
and $\bb(t)\in\Class_2(n,n_0,k_R,k_L)$
for all $t\in[0,1]$.
\end{lem}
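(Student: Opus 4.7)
The plan is to interpolate affinely, keeping the quadruple data $(\lal,\oo)$ and the matrices $\boldsymbol{\caX}_{\bb_0},\boldsymbol{\caY}_{\bb_0}$ fixed, while simultaneously contracting $\bbD,\bbG$ to their matrix-unit targets and killing the central part $\overline{\hpd}B_{0,\mu}\overline{\hpu}$. Letting $(\lal,\bbD,\bbG,\oo)$ be the quadruple associated with $\bb_0$, I define for $t\in[0,1]$
\begin{align*}
D_a(t)&:=\eijr{-a,0}+(1-t)\lmk D_a-\eijr{-a,0}\rmk,\qquad a=1,\ldots,k_R,\\
G_b(t)&:=\eijl{0,b}+(1-t)\lmk G_b-\eijl{0,b}\rmk,\qquad b=1,\ldots,k_L,\\
B_\mu(t)&:=\omega_\mu\otimes\Lambda_{\lal}+\sum_{a=1}^{k_R}\caX_{\mu,a,\bb_0}\otimes \ir(D_a(t))\Lambda_{\lal}\\
&\qquad+\sum_{b=1}^{k_L}\caY_{\mu,b,\bb_0}\otimes \il(G_b(t))\Lambda_{\lal}+(1-t)\,\overline{\hpd}\,B_{0,\mu}\,\overline{\hpu}.
\end{align*}
Each $B_\mu(t)$ is polynomial in $t$, hence $C^\infty$. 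Splitting $B_{0,\mu}=B_{0,\mu}\hpu+\hpd B_{0,\mu}\overline{\hpu}+\overline{\hpd}B_{0,\mu}\overline{\hpu}$ and using (2) of Definition \ref{def:cl1} for $\bb_0$, one checks $\bb(0)=\bb_0$; the equality $\bb(1)=\bbS_{\bb_0}$ is immediate from the explicit formula in Lemma \ref{lem:tb}.

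Next I would verify $\bb(t)\in\CLn$ for each $t\in[0,1]$, using the characterization in Lemma \ref{lem:cc1}. The spectrum $\lal\in\Wo'(k_R,k_L)$ and the tuple $\oo\in\Primz(n,n_0)$ are constant along the path. Each $D_a(t)\in\UT_{0,k_R+1}$ is a $\lambda_{-a}$-eigenvector of $\Ad(\Lambda_{\lal})$, being an affine combination of two such eigenvectors, and satisfies the normalisation $(D_a(t))_{i,0}=\delta_{i,-a}$, so in particular $\{D_a(t)\}_{a=1}^{k_R}$ is linearly independent; the analogous facts hold for $\{G_b(t)\}_{b=1}^{k_L}$. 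A direct computation of $B_\mu(t)\hpu$ and $\hpd B_\mu(t)$ (noting that $\il(G_b(t))\Lambda_{\lal}\hpu=0$ and $\overline{\hpu}\cdot\hpu=0$) shows that the matrices associated to $\bb(t)$ coincide with those of $\bb_0$, i.e., $\caX_{\mu,a,\bb(t)}=\caX_{\mu,a,\bb_0}$ and $\caY_{\mu,b,\bb(t)}=\caY_{\mu,b,\bb_0}$. Since $\bb_0\in\CLn$, the strict non-degeneracy $\bbX_{a,\bb_0}\notin\caL(\oo,\lambda_{-a})$ and $\bbY_{b,\bb_0}\notin\caL(\oo,\lambda_b^{-1})$ for every $a,b$ then transfers to $\bb(t)$.

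The main obstacle is to show that $\spa\{D_a(t)\}_{a=1}^{k_R}$ remains a subalgebra of $\UT_{0,k_R+1}$ at each $t$, so that $(\lal,\bbD(t),\bbG(t),0)\in\caT(k_R,k_L)$. Two cancellations do the work. The normalisation $(D_a)_{i,0}=\delta_{i,-a}$ forces the $0$-column of $E_a:=D_a-\eijr{-a,0}$ to vanish, while strict-upper-triangularity of $\UT_{0,k_R+1}$ forces the $0$-row of any such matrix to vanish; together these yield $\eijr{-a_1,0}\cdot D_{a_2}=\eijr{-a_1,0}\cdot E_{a_2}=\eijr{-a_1,0}\cdot\eijr{-a_2,0}=0$. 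Expanding $D_{a_1}(t)D_{a_2}(t)$ and invoking $D_{a_1}D_{a_2}=\kappa(a_1,a_2)D_{\sigma(a_1,a_2)}$ (or $=0$) from Lemma \ref{lem:ad}, a short calculation gives
\[
D_{a_1}(t)D_{a_2}(t)=\kappa(a_1,a_2)(1-t)\,D_{\sigma(a_1,a_2)}(t)\qquad\text{(respectively $0$),}
\]
which lies in $\spa\{D_a(t)\}_{a=1}^{k_R}$; the analogous statement holds for $\{G_b(t)\}$. Combined with Lemma \ref{lem:cc1} this gives $\bb(t)\in\Class(n,n_0,k_R,k_L)$, and together with the non-degeneracies above, $\bb(t)\in\CLn$, as required.
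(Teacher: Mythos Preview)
Your proof is correct and is essentially the paper's own argument. In fact your path is literally the same as the paper's: since $D_a\eijr{00}=\eijr{-a,0}$ one has $D_a-\eijr{-a,0}=D_a\qur{-1}$, and similarly $G_b-\eijl{0,b}=\qdl{1}G_b$, so your $D_a(t),G_b(t)$ agree with the paper's $\eijr{-a,0}+(1-t)D_a\qur{-1}$ and $\eijl{0,b}+(1-t)\qdl{1}G_b$. Your verification of the subalgebra property via the column/row decomposition is exactly the mechanism the paper uses (Lemma~3.3 of \cite{Ogata1} combined with Lemma~\ref{lem:ad}), only written a bit more compactly.
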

\begin{proof}
Let $(\lal,\bbD,\bbG,\oo)$
be the quadruplet associated with $\bb_0$.
Let $\boldsymbol{\caX},\boldsymbol{\caY}$
be the set of matrices associated with 
$\bb_0$.

Define for each $t\in[0,1]$,
$1\le a\le k_R$, and $1\le b\le k_L$,
\begin{align*}
D_a(t):=\eijr{-a,0}+(1-t)D_{a}\qur{-1},\quad
G_b(t):=\eijl{0,b}+(1-t)\qdl{1}G_b.
\end{align*}
We consider $\bbD(t):=(D_1(t),\ldots,D_{k_R}(t))$ and
$\bbG(t):=(G_1(t),\ldots,G_{k_L}(t))$.

We have
$\bbD(0)=\bbD$ and $\bbG(0)=\bbG$.
We also have
$\bbD(1)=(
\eijr{-1,0},\eijr{-2,0},\cdots,\eijr{-{k_R},0})$ 
and $\bbG(1)=(
\eijl{0,1},\eijl{0,2},\cdots,\eijl{0,k_L})$.
Furthermore, we have
\begin{align*}
&\Lambda_\lal \il\lmk G_{b}(t)\rmk
=\Lambda_\lal \lmk
\eij{0,b}+(1-t)\qd{1}\il\lmk G_b\rmk
\rmk\\
&=\lambda_{b}^{-1}
\lmk
\eij{0,b}+(1-t)\qd{1}\il\lmk G_b\rmk
\rmk{\Lambda_{\lal}}
=\lambda_{b}^{-1}
\il\lmk G_b(t)\rmk \Lambda_\lal,
\end{align*}
for $1\le b\le k_L$ and $t\in[0,1]$.
Similarly, we have
$\Lambda_\lal \ir\lmk D_a(t)\rmk=\lambda_{-a}\ir\lmk D_a(t)\rmk \Lambda_\lal$
for $1\le a\le k_R$ and $t\in[0,1]$.

We claim
$\bbD(t):=(D_1(t),\ldots,D_{k_R}(t))\in
\caC^R(k_R)$, and
$\bbG(t):=(G_1(t),\ldots,G_{k_L}(t))\in
\caC^L(k_L)$, for each $t\in[0,1]$.
We check $\bbG(t):=(G_1(t),\ldots,G_{k_L}(t))\in
\caC^L(k_L)$. That $\bbD(t)\in\caC^R(k_R)$ 
can be checked in the same way.
It is clear from the definition
that $\eijl{00}G_b(t)=\eijl{0b}$
and $G_b(t)\in \UT_{0,k_L+1}$.

We have to check the condition 2.
of Definition 1.7 Part I \cite{Ogata1}.
Take arbitrary $1\le b_1,b_2\le k_L$.
Note from Lemma 3.3 \cite{Ogata1} that
\begin{align}\label{eq:ggen}
&G_{b_1}G_{b_2}
=\lmk \eijl{0,b_1}+\qdl{1}G_{b_1}\rmk
\lmk \eijl{0,b_2}+\qdl{1}G_{b_2}\rmk\nonumber\\
&=\eijl{0,b_1}G_{b_2}
+\qdl{1}G_{b_1}G_{b_2}.
\end{align}
Recall Lemma \ref{lem:ad}.
Suppose that there exist $\sigma(b_1, b_2)\in\{1,\ldots,k_L\}$
and nonzero $\kappa(b_1,b_2)\in\cc$ such that
\begin{align}\label{eq:gg}
G_{b_1}G_{b_2}=\kappa(b_1,b_2)G_{\sigma(b_1,b_2)}.
\end{align}
The equations(\ref{eq:ggen}), (\ref{eq:gg}), 
imply
\begin{align}\label{eq:zgg}
\kappa(b_1,b_2)\eijl{0,\sigma(b_1,b_2)}=
\kappa(b_1,b_2)\eijl{0,0}G_{\sigma(b_1,b_2)}
=\eijl{0,0}G_{b_1}G_{b_2}
=\eijl{0,b_1}G_{b_2},
\end{align}
and
\begin{align}\label{eq:igg}
\kappa(b_1,b_2)\qdl{1}G_{\sigma(b_1,b_2)}
=\qdl{1}G_{b_1}G_{b_2}.
\end{align}
Substituting (\ref{eq:zgg}) and (\ref{eq:igg}),
we have
\begin{align*}
&G_{b_1}(t)G_{b_2}(t)
=\lmk\eijl{0,b_1}+(1-t)\qdl{1}G_{b_1}\rmk
\lmk\eijl{0,b_2}+(1-t)\qdl{1}G_{b_2}\rmk\\
&=(1-t)\eijl{0,b_1}G_{b_2}
+(1-t)^2\qdl{1}G_{b_1}G_{b_2}
=(1-t)\lmk
\kappa(b_1,b_2)\eijl{0\sigma(b_1,b_2)}
+(1-t)\kappa(b_1,b_2)\qdl{1}G_{\sigma(b_1,b_2)}
\rmk\\
&=(1-t)\kappa(b_1,b_2)
G_{\sigma(b_1,b_1)}(t).
\end{align*}
Hence we have $G_{b_1}(t)G_{b_2}(t)\in\spa\{G_b(t)\}_{b=1}^{k_L}$, $t\in [0,1]$, when $G_{b_1}G_{b_2}\neq 0$.

Suppose $G_{b_1}G_{b_2}=0$.
Then 
from (\ref{eq:ggen}),
we have
\begin{align*}
\eijl{0,b_1}G_{b_2}=0,\quad
\qdl{1}G_{b_1}G_{b_2}=0.
\end{align*}
Substituting this, we have
\begin{align*}
G_{b_1}(t)G_{b_2}(t)
=\lmk\eijl{0,b_1}+(1-t)\qdl{1}G_{b_1}\rmk
\lmk\eijl{0,b_2}+(1-t)\qdl{1}G_{b_2}\rmk
=
0.
\end{align*}
This proves $G_{b_1}(t)G_{b_2}(t)\in\spa\{G_b(t)\}_{b=1}^{k_L}$, $t\in [0,1]$, when $G_{b_1}G_{b_2}=0$.
Hence we obtain $\bbG(t)\in
\caC^L(k_L)$.

Now we define the path $\bb(t)$.
We set $\bb(t):=(B_1(t),\ldots, B_{n}(t))$
with
\begin{align*}
B_{\mu}(t)
:=\omega_{\mu}\otimes{\Lambda_{\lal}}
+\sum_{a=1}^{k_R}
\caX_{\mu,a}
\otimes \ir\lmk D_{a}(t)\rmk {\Lambda_{\lal}}+\sum_{b=1}^{k_L}
\caY_{\mu,b}
\otimes \il\lmk G_{b}(t)\rmk {\Lambda_{\lal}}
+(1-t)\overline{\hpd} B_{\mu}\overline{\hpu},
\end{align*}
for $\mu=1,\ldots, n$ and $t\in[0,1]$.
It is trivial that $\bb:[0,1]\to (\mnz\otimes\mkk)^{\times n}$
is $C^\infty$.
Furthermore,
we have
$\bb(0)=\bb_0$, and $\bb(1)=\bbS(\bb)=\bb_1$.

For all $t\in[0,1]$, it is easy to check
$\bb(t)\in\Class_2(n,n_0,k_R,k_L)$
with
$\lal_{\bb(t)}=\lal$,
$\bbD_{\bb(t)}=\bbD(t)$,
$\bbG_{\bb(t)}=\bbG(t)$, $\oo_{\bb(t)}=\oo$,
$\boldsymbol{\caX_{\bb(t)}}=\boldsymbol{\caX}$,
and $\boldsymbol{\caY_{\bb(t)}}=\boldsymbol{\caY}$.
\end{proof}

Now we connect elements of $\CLn$.
\begin{lem}\label{lem:c2c2}
Let $n,n_0\in\nan$ with $n\ge 2$ and $k_R,k_L\in\nan\cup\{0\}$. 
Then for any $\bb_0,\bb_1\in \Class_2(n,n_0,k_R,k_L)$,
there exists a continuous and piecewise
$C^\infty$-map
$\bb:[0,1]\to\lmk \mnz\otimes\mkk\rmk^{\times n}$
such that $\bb(0)=\bbS_{\bb_0}$, $\bb(1)=\bbS_{\bb_1}$
and $\bb(t)\in\Class_2(n,n_0,k_R,k_L)$
for all $t\in[0,1]$.
\end{lem}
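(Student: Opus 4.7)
The plan is to reduce the construction to independent deformations of the data $(\lal, \oo, \boldsymbol{\caX}, \boldsymbol{\caY})$ that parametrize $\bbS$-form elements of $\CLn$ via Lemma \ref{lem:tb}, and then paste these together through the explicit formula of that lemma. First I would unpack $\bbS_{\bb_i}$ for $i=0,1$ using Lemma \ref{lem:tb}: the $\bbD$- and $\bbG$-entries are frozen to the canonical matrix units $(\eij{-a,0})_{a=1}^{k_R}$ and $(\eij{0,b})_{b=1}^{k_L}$, so $\bbS_{\bb_i}$ is fully determined by its associated quadruple $(\lal_i, \oo_i, \boldsymbol{\caX}_i, \boldsymbol{\caY}_i)$. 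The task therefore reduces to producing continuous piecewise $C^\infty$ paths $\lal(t) \in \Wo'(k_R,k_L)$, $\oo(t) \in \Prim(n,n_0)$, and $\boldsymbol{\caX}(t), \boldsymbol{\caY}(t)$ between the respective endpoints, subject to the non-degeneracy conditions $(\caX_{\mu,a}(t))_\mu \notin \caL(\oo(t), \lambda_{-a}(t))$ and $(\caY_{\mu,b}(t))_\mu \notin \caL(\oo(t), \lambda_b(t)^{-1})$ for every $t \in [0,1]$, $1 \le a \le k_R$, and $1 \le b \le k_L$.

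Second, I would connect $\lal_0$ to $\lal_1$ inside $\Wo'(k_R,k_L)$ and $\oo_0$ to $\oo_1$ inside $\Prim(n,n_0)$. The set $\Wo'(k_R,k_L)$ is open in $\cc^{k_R+k_L+1}$ and is cut out of the open ordered-moduli region by the distinctness condition $\lambda_i \ne \lambda_j$; the latter removes only a complex hypersurface, of real codimension two, leaving a path connected set. Path connectedness of $\Prim(n,n_0)$ is obtained similarly: non-primitivity is cut out by algebraic equations on the transfer operator $T_\oo$, so the non-primitive locus is a proper complex analytic subset of $\mnz^{\times n}$ and its complement is path connected.

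Third, with $\lal(t), \oo(t)$ in hand, for each $a$ and each $b$ I would interpolate $\boldsymbol{\caX}(t)$ and $\boldsymbol{\caY}(t)$ so as to avoid the moving forbidden subspaces $\caL(\oo(t), \lambda_{-a}(t))$ and $\caL(\oo(t), \lambda_b(t)^{-1})$. Because $n \ge 2$, each such subspace has complex dimension at most $n_0^2$, while $\bigoplus_{\mu=1}^n \mnz$ has complex dimension $n\, n_0^2 \ge 2 n_0^2$; hence the bad set inside $[0,1] \times \bigoplus_\mu \mnz$ has real codimension at least two, and a generic $C^\infty$ perturbation of the straight-line homotopy, obtained by a standard transversality argument, avoids it at every $t$. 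Finally, I would assemble $\bb(t)$ by the formula of Lemma \ref{lem:tb} applied to $(\lal(t),\oo(t),\boldsymbol{\caX}(t),\boldsymbol{\caY}(t))$; the lemma then directly identifies $\bb(t) \in \CLn$ for every $t$ with the required endpoints $\bb(0)=\bbS_{\bb_0}$, $\bb(1)=\bbS_{\bb_1}$.

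The main obstacle is the path-connectedness of $\Prim(n,n_0)$: this is the only step relying on a genuine structural fact about the primitivity locus rather than a clean codimension/transversality count, and one must verify carefully that the algebraic conditions defining non-primitivity cut out a set of complex codimension at least one in $\mnz^{\times n}$. Once that is granted, the remaining steps are routine applications of Lemma \ref{lem:tb} combined with standard general position arguments, since the ambient complex dimension is strictly larger than that of every forbidden subspace.
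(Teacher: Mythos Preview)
Your approach is the same as the paper's: reduce to independent paths of $(\lal,\oo,\boldsymbol{\caX},\boldsymbol{\caY})$ and reassemble via the formula of Lemma~\ref{lem:tb}. The paper, however, supplies concrete constructions at each step where you invoke general position: an explicit three-segment polar-coordinate path for $\lal(t)$ in $\Wo'(k_R,k_L)$; Theorem~5.1 of \cite{bo} for path-connectedness of $\Primz(n,n_0)$ (confirming your identification of this as the hard step); and Lemma~\ref{lem:cio} from Appendix~A, which tracks the moving orthogonal complement of $\caL(\oo(t),\lambda(t))$ by a unitary and then applies the avoidance Lemma~\ref{ap}, in place of your transversality argument for $\boldsymbol{\caX},\boldsymbol{\caY}$.

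One point you glossed over: $\Prim(n,n_0)$ carries the real-codimension-one constraint $r_{T_{\oo}}=1$, so it is not open in $\mnz^{\times n}$ and your Zariski-complement argument does not apply to it directly. The paper handles this by first connecting inside $\Primz(n,n_0)$ via \cite{bo}, observing (Lemma~\ref{lem:cinf}) that $t\mapsto r_{T_{\tilde\oo(t)}}$ is $C^\infty$ and strictly positive along such a path, and then normalizing $\oo(t):=r_{T_{\tilde\oo(t)}}^{-1/2}\tilde\oo(t)$ to land back in $\Prim(n,n_0)$. Your sketch for $\Wo'(k_R,k_L)$ also needs a bit more care, since the defining modulus inequalities are non-strict and the region is not open; the paper sidesteps this by the explicit path.
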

\begin{proof}
Let $(\lal_i,\bbD_i,\bbG_i,\oo_i)$
be the quadruplet associated with $\bb_i$,
and  $\boldsymbol{\caX}_i,\boldsymbol{\caY}_i$
the set of matrices associated with $\bb_i$,
for $i=0,1$.

From \cite{bo}, Theorem 5.1,
there exists a continuous and piecewise
$C^{\infty}$-map
$\tilde \oo:[0,1]\to\lmk \mnz\rmk^{\times n}$
such that $\tilde \oo(0)=\oo_0$, $\tilde
\oo(1)=\oo_1$
and $\tilde \oo(t)\in\Primz(n,n_0)$.
(Note that the proof there provides continuous and
 piecewise $C^\infty$-path.)
By Appendix D of \cite{bo}, we see that
$T_{\tilde \oo(t)}$ satisfies the conditions 
of Lemma \ref{lem:cinf}.
In particular, we have $r_{T_{\tilde \oo(t)}}>0$ and
$\oo(t):=r_{T_{\tilde \oo(t)}}^{-\frac 12}\tilde\oo(t)$
belongs to $\Prim(n,n_0)$. By Lemma \ref{lem:cinf}, the path
$ \oo:[0,1]\to\lmk \mnz\rmk^{\times n}$
is continuous and piecewise
$C^\infty$, and $\oo(0)=\oo_0$, $
\oo(1)=\oo_1$.

Furthermore, there exists a
continuous and piecewise $C^\infty$-path
$\lal:[0,1]\to\bbC^{k_R+k_L+1}$
with  $\lal(0)=\lal_0$, $\lal(1)=\lal_1$
such that
$\lal(t)\in\Wo'(k_R,k_L)$
for all $t\in[0,1]$.
To see this, choose ${{\boldsymbol r}}:=(r_j)_{j=-k_R}^{k_L}\in\bbC^{k_R+k_L+1}$ such that
$r_0=1$, $0<r_{-k_R}<\cdots<r_{-1}<1$, and $0<r_{k_L}<\cdots<r_1<1$.
Decompose each $\lambda_{j,i}\in\cc$ as 
$\lambda_{j,i}=e^{i\theta_{j,i}}\lv \lambda_{j,i}\rv$, $\theta_{j,i}\in[0,2\pi)$,
for each $i=0,1$ and $j=-k_R,\ldots, k_L$.
Define $\lal(t):=\lmk\lambda_j(t)\rmk_{j=-k_R}^{k_L}$, $t\in[0,1]$ by
\begin{align}
\lambda_{j}(t)
:=\left\{
\begin{gathered}
e^{i\theta_{j,0}}\lmk (1-3t)\lv \lambda_{j,0}\rv
+3t r_j\rmk
,\quad t\in\left[0,\frac13\right],\\
e^{i\lmk
(-3t+2)\theta_{j,0}+(3t-1)\theta_{j,1}
\rmk}r_j,\quad t\in\left[\frac13,\frac 23\right],\\
e^{i\theta_{j,1}}\lmk (3t-2)\lv \lambda_{j,1}\rv
+3(1-t) r_j\rmk
,\quad t\in\left[\frac 23,1\right].
\end{gathered}
\right.
\end{align}
It is straight forward to check that $\lal(t):=(\lambda_{-k_R}(t),\ldots,\lambda_{k_L}(t))$
satisfies the claimed properties.

Next we consider paths of vectors in $\bigoplus_{\mu=1}^n\mnz$.
For each $\alpha,\beta\in n_0$,
$a=1,\ldots, k_R$, $b=1,\ldots,k_L$ and $t\in[0,1]$,
define
\begin{align*}
\zeta^R_{\alpha,\beta,a}(t):=\Delta_{\oo(t),\lambda_{-a}(t)}\lmk\zeij{\alpha,\beta}\rmk,\quad
\zeta^L_{\alpha,\beta,b}(t):=\Delta_{\oo(t),\lambda_{b}(t)^{-1}}\lmk\zeij{\alpha,\beta}\rmk.
\end{align*}
These define continuous and piecewise $C^\infty$-paths $\zeta^R_{\alpha,\beta,a}$, $\zeta^L_{\alpha,\beta,b}$
in $\bigoplus_{\mu=1}^n\mnz$.
By Lemma \ref{lem:Delta},  for each $a=1,\ldots, k_R$(resp.  $b=1,\ldots,k_L$) and $t\in[0,1]$,
$\{\zeta^R_{\alpha,\beta,a}(t)\}_{\alpha,\beta}$ (resp. $\{\zeta^L_{\alpha,\beta,b}(t)\}_{\alpha,\beta}$ )
are linearly independent.
Regarding $\bigoplus_{\mu=1}^n\mnz$ a Hilbert space with inner product
$\braket{\bigoplus_{\mu} a_{\mu}}{\bigoplus_{\mu} b_{\mu}}=\sum_{\mu}\Tr a_{\mu}^*b_{\mu}$,
let $P_a(t)$ be the orthogonal projection onto $\spa\{\zeta^R_{\alpha,\beta,a}(t)\}_{\alpha,\beta}$ for each $a=1,\ldots,k_R$.

We have $(1-P_a(0))(\caX_{0,\mu,a})_{\mu=1}^n\neq 0$ and 
$(1-P_a(1))(\caX_{1,\mu,a})_{\mu=1}^n\neq 0$ for all $a=1,\ldots,k_R$ because $\bb_0,\bb_1\in
\Class_2(n,n_0,k_R,k_L)$.
Applying Lemma \ref{lem:cio} to $\spa\{\zeta^R_{\alpha,\beta,a}(t)\}_{\alpha,\beta}$,
we obtain
a continuous and piecewise $C^\infty$-path $\bbX_a:[0,1]\to \bigoplus_{\mu=1}^n\mnz$
such that $(1-P_a(t))\bbX_a(t)\neq 0$ for all $t\in[0,1]$ and $\bbX_a(0)=(\caX_{0,\mu,a})_{\mu=1}^n$, $\bbX_a(1)=(\caX_{1,\mu,a})_{\mu=1}^n$ for each $a=1,\ldots,k_R$.
In other words, we obtain a continuous and piecewise $C^\infty$-path $[0,1]\ni t \mapsto \bbX_a(t)=(\caX_{\mu,a}(t))_{\mu=1}^n\in\lmk \bigoplus_{\mu=1}^n\mnz\rmk \setminus \caL(\oo(t),\lambda_{a}(t))$
such that $\bbX_a(0)=\bbX_{0,a}$ and $\bbX_a(1)=\bbX_{1,a}$ for each $a=1,\ldots,k_R$.
Similarly we obtain a continuous and piecewise $C^\infty$-path $[0,1]\ni t \mapsto \bbY_a(t)=(\caY_{\mu,b}(t))_{\mu=1}^n\in \lmk \bigoplus_{\mu=1}^n\mnz\rmk \setminus\caL(\oo(t),\lambda_{b}^{-1}(t))$
such that $\bbY_b(0)=\bbY_{0,b}$ and $\bbY_b(1)=\bbY_{1,b}$ for each $b=1,\ldots,k_L$.

Now we define the path $\bbB(t)$.
We set $\bb(t):=(B_1(t),\ldots, B_{n}(t))$
with
\begin{align*}
B_{\mu}(t)
:=\omega_{\mu}(t)\otimes{\Lambda_{\lal(t)}^{(k_R,k_L)}}
+\sum_{a=1}^{k_R}
\caX_{\mu,a}(t)
\otimes \eij{-a,0}{\Lambda_{\lal(t)}^{(k_R,k_L)}}+\sum_{b=1}^{k_L}
\caY_{\mu,b}(t)
\otimes \eij{0,b}{\Lambda_{\lal(t)}^{(k_R,k_L)}}
\end{align*}
for $\mu=1,\ldots, n$ and $t\in[0,1]$.

It is trivial that $\bb:[0,1]\to (\mnz\otimes\mkk)^{\times n}$
is continuous and piecewise $C^\infty$.
Furthermore,
we have
$\bb(0)=\bbS_{\bb_0}$ and $\bb(1)=\bbS_{\bb_1}$.
For all $t\in[0,1]$, it is easy to check
$\bb(t)\in\Class_2(n,n_0,k_R,k_L)$
with
$\lal_{\bb(t)}=\lal(t)$,
$\oo_{\bb(t)}=\oo(t)$,
$\bbD_{\bb(t)}=(\eijr{-a,0})_{a=1}^{k_R}$,
$\bbG_{\bb(t)}=(\eijl{0,b})_{b=1}^{k_L}$,
$\boldsymbol{\caX_{\bb(t)}}=\boldsymbol{\caX(t)}$,
and $\boldsymbol{\caY_{\bb(t)}}=\boldsymbol{\caY(t)}$.
\end{proof}
\begin{proofof}[Lemma \ref{lem:ccpath}]
For $\bbA_0,\bbA_1\in\Class(n,n_0,k_R,k_L)$, we denote
$\bbA_0\thickapprox\bbA_1$ if there exists a continuous and
piecewise $C^\infty$-map $\bbA:[0,1]\to(\mnz\otimes\mkk)^{\times n}$
such that $\bbA(0)=\bbA_0$, $\bbA(1)=\bbA_1$
and $\bbA(t)\in\Class(n,n_0,k_R,k_L)$
for all $t\in[0,1]$.
Let $\bb_0,\bb_1\in\Class(n,n_0,k_R,k_L)$.
By Lemma \ref{lem:c02}, there exists $\bb_0',\bb_1'\in\CLn$
such that $\bb_0\thickapprox \bb_0'$,
$\bb_1\thickapprox \bb_1'$.
Lemma \ref{lem:bsb} implies 
$\bb_0'\thickapprox\bbS_{\bb_0'}$,
$\bb_1'\thickapprox\bbS_{\bb_1'}$.
Finally, Lemma \ref{lem:c2c2} implies
$\bbS_{\bb_0'}\thickapprox \bbS_{\bb_1'}$,
proving Lemma \ref{lem:ccpath}.
\end{proofof}
\begin{proofof}[Proposition \ref{classthm}]
Let $\bb_0,\bb_1\in\Class(n,n_0,k_R,k_L)$.
Let $2n_0^6(k_R+1)(k_L+1)\le m_0,m_1\in\nan$.
By Lemma \ref{lem:ccpath},
there exists a continuous and piecewise $C^{\infty}$-path 
$\bb:[0,1]\to \lmk\mnz\otimes\mkk\rmk^{\times n}$
such that $\bb(0)=\bb_0$, $\bb(1)=\bb_1$
and $\bb(t)\in\Class(n,n_0,k_R,k_L)$, $t\in[0,1]$.
From Lemma 3.6 Part I \cite{Ogata1}, 
$(n,n_0(k_L+k_R+1),\hpu,\hpd,\bbB(t))$ satisfies {\it Condition 2}.
By Lemma 3.7 Part I \cite{Ogata1},
$(n,n_0(k_L+k_R+1),\hpu,\hpd,\bbB(t))$ satisfies {\it Condition 3}
for $l_{\bb(t)}=l_{\bb(t)}(n,n_0,k_R,k_L,\lal(t),\bbD(t), \bbG(t),0)$.
By Lemma 3.8 of \cite{Ogata1}, $(n,n_0(k_L+k_R+1),\bbB(t))$
satisfies {\it Condition 4} for $(l_{\bb(t)}, l_{\bb(t)})$.
These facts and the estimate $l_{\bb(t)}\le n_0^6(k_R+1)(k_L+1)$ from
 Lemma \ref{lem:klmain} gurantee the condition of Proposition \ref{prop:maingen} with $m_1=m_2=m_3=n_0^6(k_R+1)(k_L+1)$.
From Proposition \ref{prop:maingen}, the paths 
$[0,1]\ni t\to \Gamma_{m,\bbB(t)}^{(R)}\lmk \zeij{\alpha,\beta}\otimes \eij{-a,b}\rmk$
satisfies {\it Condition 5}.
From Lemma \ref{lem:vpath}, this implies Proposition \ref{classthm}. 
\end{proofof}

\section{Classification of $\caH(n)$}\label{sec:first}
In this section, we prove Theorem \ref{singthm}.
Given Proposition \ref{classthm}, it suffices to find a concrete $C^1$-path of gapped Hamiltonians connecting 
$H_{\Phi_{m_1,\bbB_1}}$ with
$\bbB_1\in\Class(n,n_0,k_R,k_L)$  and $H_{\Phi_{m_2,\bbB_2}}$ with
$\bbB_2\in\Class(n,1,n_0(k_R+1)-1,n_0(k_L+1)-1)$. 

\subsection{The path with singularity}
We introduce several notations.
Throughout Section \ref{sec:first}, we fix
numbers $0<\kappa<1$ and $\theta\in 2\pi \lmk \bbR\setminus \bbQ\rmk$. 
Let $2\le n_0\in\nan$, $k_R,k_L\in\nan\cup\{0\}$.
Define vectors
${\boldsymbol r}=(r_\alpha)_{\alpha=1}^{n_0}\in\cc^{n_0}$,
$\lal_L=(\lambda_{L,\alpha})_{\alpha=-(n_0-1)}^{n_0-1}\in\cc^{2n_0-1}$, 
and $\lal_{R}=(\lambda_{R,i})_{i=-k_R}^{k_L}
\in\cc^{k_R+k_L+1}$
 by 
\begin{align}\label{eq:rd}
&r_{\alpha}:=\lmk \kappa e^{i\theta}\rmk^{\alpha-1},\quad
\text{for }\alpha=1,\ldots,n_0,\notag\\
&\lambda_{L,\alpha}=\lmk \kappa e^{i\theta}\rmk^{|\alpha|},\quad
\text{for }\alpha=-(n_0-1),\ldots, -1,0,1,\ldots,(n_0-1),\notag\\
&\lambda_{R,j}:=\lmk \kappa e^{i\theta}\rmk^{|j|n_0},\quad
\text{for }j=-k_R,\ldots, -1,0,1,\ldots,k_L.
\end{align}
Note that we have
$\lal_L\in\Wo'(n_0-1,n_0-1)$
and $\lal_R\in\Wo'(k_R,k_L)$.
We also define diagonal matrices
\begin{align*}
R:=\sum_{\alpha=1}^{n_0}
r_{\alpha}\zeij{\alpha,\alpha}\in\mnz,\quad
\Lambda_{\lal_L}:=
\sum_{\alpha=-(n_0-1)}^{n_0-1}\lambda_{L,\alpha}
\eijz{\alpha,\alpha}\in \Mat_{2n_0-1},\quad
\Lambda_{\lal_R}:
=\sum_{i=-k_R}^{k_L}\lambda_{R,i}\eij{ii}\in \mkk.
\end{align*}
We define vectors 
\begin{align*}
&\eta_0^{(n_0)}:=\sum_{\alpha=2}^{n_0}
\cnz{\alpha}\in \bbC^{n_0},\\
&\ezu:
=\sum_{\alpha=-(n_0-1)}^{-1}{\fiz{\alpha}},\quad
\ezd:
=\sum_{\beta=1}^{n_0-1}{\fiz{\beta}}\in \bbC^{2n_0-1},\\
&\eu:=\sum_{i=-k_R}^{-1}{\fii{i}},\quad
\ed:=\sum_{i=1}^{k_L}{\fii{i}}\in \bbC^{k_L+k_R+1}.
\end{align*}(Recall the definition of $\fii{i}$ from Appendix A of \cite{Ogata1}.)

From these vectors, we define
\begin{align}\label{eq:vd}
V:=\ket{\eu}\bra{\fii{0}}+\ket{\fii{0}}\bra{\ed}\in \Mat_{k_R+k_L+1},
\end{align}
and
\begin{alignat*}{3}
K_1^{(L)}&:=\ket{\cnz{1}}\bra{\eta_0^{(n_0)}},&\quad
K_1^{(R)}&:=\Lambda_{\lal_R},&\quad
\tilde K_1^{(L)}&:=\ket{\fiz{0}}\bra{\ezd},\\
K_2^{(L)}&:=R,&\quad
K_2^{(R)}&:=V \Lambda_{\lal_R},&\quad
\tilde K_2^{(L)}&:=\Lambda_{\lal_L},\\
K_3^{(L)}&:=\ket{\eta_0^{(n_0)}}\bra{\cnz{1}},&\quad
K_3^{(R)}&:=\Lambda_{\lal_R},&\quad 
\tilde K_3^{(L)}&:=\ket{\ezu}\bra{\fiz{0}}.
\end{alignat*}
We then define matrices
\begin{align*}
K_i=K_i^{(L)}\otimes K_i^{(R)}\in\mnz\otimes \mkk,\quad
\tilde K_i=\tilde K_i^{(L)}\otimes  K_i^{(R)}
\in\Mat_{2n_0-1}\otimes \mkk,
\end{align*}
for $i=1,2,3$.
We set for each $t\in[0,1]$, matrices $\tilde \omega_\mu(t)$, $\mu=1,\ldots,n$ by
\begin{align*}
\tilde \omega_1(t)
:=R,\quad
\tilde \omega_2(t)
:=\ket{\cnz{1}}\bra{\eta_0^{(n_0)}}+t\ket{\eta_0^{(n_0)}}\bra{\cnz{1}},\quad
\tilde\omega_{\mu}(t)=0,\quad \mu\ge 3.
\end{align*}
We claim that $\tilde\oo(t)$ is primitive, for $t\in(0,1]$.
To prove this, we use the same argument as in \cite{bo}.
Applying Lemma C.7 of Part I \cite{Ogata1} to the distinct numbers
to 
$\{r_{\alpha}\}_{\alpha=2}^{n_0}\cup \{r_{\alpha}^{-1}\}_{\alpha=2}^{n_0}$,
we obtain 
$\varsigma_{\alpha}
=(\varsigma_{\alpha}(j))_{j=0}^{2n_0-3}
\in\cc^{2(n_0-1)}$
$\alpha=2,\ldots, n_0$
such that
\begin{align}\label{eq:klo}
\kl{l}(\tilde \oo(t))\ni \sum_{j=0}^{2n_0-3}
\varsigma_{\alpha}(j)
\tilde\omega_1(t)^{l-1-j}
\tilde \omega_2(t)\tilde \omega_1(t)^{j}
=\zeij{1,\alpha},\\
\kl{l}(\tilde \oo(t))\ni \sum_{j=0}^{2n_0-3}
\varsigma_{\alpha}(j)\tilde \omega_1(t)^{j}
\tilde \omega_2(t)\tilde\omega_1(t)^{l-1-j}
=t \zeij{\alpha,1}\nonumber,
\end{align}
for all $t\in[0,1]$, $\alpha=2,\ldots,n_0$,
and $l\ge 2(n_0-1)$.
Hence for $t\in(0,1]$ and $l\ge 4(n_0-1)$,
we have
$\kl{l}{(\tilde\oo(t))}=\mnz$, i.e., 
$\tilde \oo(t)$ is primitive.
By Lemma C.6 of Part I \cite{Ogata1}, for all $t\in(0,1]$,
$T_{\tilde\oo(t)}$ satisfies all the conditions (1),(2),(3)
in Lemma \ref{lem:cinf}.

For $t=0$, the
pentad $(n,n_0, \zeij{11},
\unit, \tilde\oo(0))$
satisfies the {\it Condition 2} (Definition 2.3 \cite{Ogata1}).
This can be checked in the same way as the proof of
Lemma 3.6 \cite{Ogata1}, using (\ref{eq:klo}).
Then, by Lemma 2.9 \cite{Ogata1},
$T_{\tilde\oo(0)}$ satisfies all the conditions (1), (2), (3)
in Lemma \ref{lem:cinf}.
In particular, we have $r_{T_{\tilde \oo(0)}}=1>0$.

Hence we can apply Lemma \ref{lem:cinf} to $T_{\tilde \oo(t)}$
and from the latter Lemma, $[0,1]\ni t\mapsto {r_{{T_{\tilde \oo(t)}}}}\in\cc$ 
is $C^\infty$.
We define a $C^\infty$-path $[0,1]\ni t\mapsto \oo(t)\in\Mat_{n_0}^{\times n}$ by
\[
\omega_{\mu}(t):=
r_{T_{\tilde \oo(t)}}^{-\frac 12}
\tilde\omega_{\mu}(t),\quad
t\in[0,1],\quad \mu=1,\ldots,n.
\]
We have $r_{T_{ \oo(t)}}=1$, for $t\in [0,1]$.
We define a path of $n$-tuple of elements in 
$\mnz\otimes \mkk$,
$\bb(t)$ by
\begin{align}\label{eq:pbp}
B_{1}(t):=\omega_1(t)\otimes \Lambda_{\lal_R},\quad
B_{2}(t):=
\omega_2(t)\otimes \Lambda_{\lal_R}
+\lmk r_{{T_{\tilde \oo(t)}}}\rmk^{-\frac12} R\otimes V\Lambda_{\lal_R},\quad
B_{\mu}(t):=0,\quad 3\le \mu\le n,
\end{align}
for each $t\in[0,1]$.

\begin{lem}\label{lem:tspec}Let $2\le n_0\in\nan$, $k_R,k_L\in\nan\cup\{0\}$, and $\bb(t)$ be defined by (\ref{eq:pbp}).
For any $t\in(0,1]$, $\bb(t)$ belongs to 
$\Class(n,n_0,k_R,k_L)$ with
$\lal_{\bb(t)}=\lal_R$, $\oo_{\bb(t)}=\oo(t)$, 
$D_{a,\bb(t)}=\eijr{-a,0}$,and
$G_{b,\bb(t)}=\eijl{0,b}$.
In particular, by Proposition 3.1 of Part I \cite{Ogata1}, $T_{\bb(t)}$ satisfies the Spectral Property II
with respect to a triple $(s_{\bb(t)},e_{\bb(t)},\varphi_{\bb(t)})$
and $0<s_{\bb(t)}<1$,
$s(e_{\bb(t)})=\hpu$, $s(\varphi_{\bb(t)})=\hpd$.
\end{lem}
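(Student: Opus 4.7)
The plan is to invoke the characterization of $\Class(n,n_0,k_R,k_L)$ in terms of $\Class_1(n,n_0,k_R,k_L)$ provided by Lemma~\ref{lem:cc1}, and thereby reduce the statement to checking the two easily verified conditions of Definition~\ref{def:cl1} for $\bb(t)$, together with $(\lal_R,\bbD,\bbG,0)\in\caT(k_R,k_L)$, $\lal_R\in\Wo'(k_R,k_L)$, and $\oo(t)\in\Prim(n,n_0)$. The last three ingredients are already established in the text preceding the lemma: $\lal_R\in\Wo'(k_R,k_L)$ was noted after the definition of $\lal_R$, primitivity of $\oo(t)$ for $t\in(0,1]$ was verified using the distinct values $\{r_\alpha\}_{\alpha=2}^{n_0}$ together with Lemma~C.7 of Part~I, and $(\lal_R,\bbD,\bbG,0)\in\caT(k_R,k_L)$ with $\bbD=(\eijr{-a,0})_{a=1}^{k_R}$, $\bbG=(\eijl{0,b})_{b=1}^{k_L}$ is a direct check from Definition of $\caT$, since these particular $D_a$, $G_b$ satisfy $D_{a_1}D_{a_2}=0$ and $G_{b_1}G_{b_2}=0$ for $a_1,a_2\ge 1$ and $b_1,b_2\ge 1$, and the requisite commutation with $\Lambda_{\lal_R}$ is automatic by the choice of $\lal_R$.

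Next I would verify condition~(1) of Definition~\ref{def:cl1}. Since $\Lambda_{\lal_R}$ is diagonal with entries $\lambda_{R,i}$, and since $V$ has no diagonal entries with respect to the basis $\{\fii{i}\}$, the compression $\lmk \unit\otimes \eij{ii}\rmk V\Lambda_{\lal_R}\lmk \unit\otimes \eij{ii}\rmk=0$, so the only contribution to $\lmk\unit\otimes\eij{ii}\rmk B_\mu(t)\lmk\unit\otimes\eij{ii}\rmk$ comes from the $\omega_\mu(t)\otimes\Lambda_{\lal_R}$ term, yielding exactly $\lambda_{R,i}\,\omega_\mu(t)\otimes\eij{ii}$. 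To verify condition~(2), I would compute $B_\mu(t)\hpu$ and $\hpd B_\mu(t)$ explicitly. Because $V\pu=\ket{\eu}\bra{\fii{0}}=\sum_{a=1}^{k_R}\eij{-a,0}=\sum_{a=1}^{k_R}\ir(D_a)\Lambda_{\lal_R}$ and analogously $\pd V=\sum_{b=1}^{k_L}\il(G_b)\Lambda_{\lal_R}$, I read off
\begin{align*}
\caX_{\mu,a,\bb(t)}=\delta_{\mu,2}\, r_{T_{\tilde\oo(t)}}^{-1/2}\,R,\qquad
\caY_{\mu,b,\bb(t)}=\delta_{\mu,2}\, r_{T_{\tilde\oo(t)}}^{-1/2}\,R,
\end{align*}
for all $a=1,\ldots,k_R$ and $b=1,\ldots,k_L$.

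The main technical step is to show $\bbX_{a,\bb(t)}\notin\caL(\oo(t),\lambda_{R,-a})$ and $\bbY_{b,\bb(t)}\notin\caL(\oo(t),\lambda_{R,b}^{-1})$. If $\bbX_{a,\bb(t)}=\Delta_{\oo(t),\lambda_{R,-a}}(J)$ for some $J\in\mnz$, then since $\omega_\mu(t)=0$ for $\mu\ge 3$ this would force $J\omega_1(t)=\lambda_{R,-a}\omega_1(t)J$, i.e.\ $JR=\lambda_{R,-a}RJ$ (up to the common positive scalar), together with $J\omega_2(t)-\lambda_{R,-a}\omega_2(t)J$ equal to a nonzero multiple of $R$. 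In the basis diagonalizing $R$, the relation $JR=\lambda_{R,-a}RJ$ gives $J_{\alpha\beta}(r_\beta-\lambda_{R,-a}r_\alpha)=0$; by the explicit choice $r_\alpha=(\kappa e^{i\theta})^{\alpha-1}$ and $\lambda_{R,-a}=(\kappa e^{i\theta})^{a n_0}$, the equation $r_\beta=\lambda_{R,-a}r_\alpha$ would require $\beta-\alpha=a n_0\ge n_0$, impossible for $1\le\alpha,\beta\le n_0$. Hence $J=0$, contradicting the second equation. The argument for $\bbY_{b,\bb(t)}$ is symmetric, using $\lambda_{R,b}^{-1}=(\kappa e^{i\theta})^{-b n_0}$ and the relation $r_\beta=\lambda_{R,b}^{-1}r_\alpha$, which would force $\alpha-\beta=b n_0\ge n_0$, again impossible.

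With all conditions of Definition~\ref{def:cl1} verified, Lemma~\ref{lem:cc1} yields $\bb(t)\in\Class(n,n_0,k_R,k_L)$ with the quadruplet $(\lal_R,\bbD,\bbG,\oo(t))$ as claimed. The spectral conclusion then follows directly from Proposition~3.1 of Part~I~\cite{Ogata1}. The main obstacle is the nondegeneracy step in condition~(2); it is precisely for this reason that the $\lambda_{R,\pm j}$ were chosen with exponents divisible by $n_0$, so that they cannot coincide with any ratio $r_\beta/r_\alpha$.
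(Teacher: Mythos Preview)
Your proof is correct but takes a genuinely different route from the paper. The paper verifies membership in $\Class(n,n_0,k_R,k_L)$ \emph{directly}: it checks that $\bb(t)\in\lmk\mnz\otimes\caD(k_R,k_L,\bbD,\bbG)\Lambda_{\lal_R}\rmk^{\times n}$ and then establishes $\caK_l(\bb(t))=\mnz\otimes\caD(k_R,k_L,\bbD,\bbG)\Lambda_{\lal_R}^l$ for large $l$ by the usual Vandermonde-type argument (Lemma~C.7 of Part~I) applied to the distinct numbers $\{r_\alpha\}_{\alpha=2}^{n_0}\cup\{\lambda_{R,i}^{-1}\}_{i=-k_R}^{-1}\cup\{r_\alpha^{-1}\}_{\alpha=2}^{n_0}\cup\{\lambda_{R,j}\}_{j=1}^{k_L}$, extracting generators of the span and then multiplying them. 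Your approach instead invokes the structural equivalence $\Class=\Class_1$ of Lemma~\ref{lem:cc1} and verifies the conditions of Definition~\ref{def:cl1}. This is conceptually cleaner and makes transparent exactly why the choice $\lambda_{R,\pm j}=(\kappa e^{i\theta})^{jn_0}$ matters: the nondegeneracy condition $\bbX_{a}\notin\caL(\oo(t),\lambda_{R,-a})$ reduces to the impossibility of $r_\beta/r_\alpha=(\kappa e^{i\theta})^{an_0}$ for $1\le\alpha,\beta\le n_0$. The price is that you lean on the nontrivial machinery behind Lemma~\ref{lem:cc1} (in particular Lemma~\ref{lem:hcc1}), whereas the paper's computation is self-contained relative to Part~I. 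One minor slip: your identity ``$\pd V=\sum_{b=1}^{k_L}\il(G_b)\Lambda_{\lal_R}$'' should read $\pd V\Lambda_{\lal_R}=\sum_{b}\il(G_b)\Lambda_{\lal_R}$ (the left-hand side as written equals $\sum_b\eij{0,b}$, not $\sum_b\lambda_{R,b}\eij{0,b}$); the conclusion $\caY_{\mu,b,\bb(t)}=\delta_{\mu,2}\,r_{T_{\tilde\oo(t)}}^{-1/2}R$ is unaffected.
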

\begin{proof}
We check that $\bb(t)$ satisfies the condition of 
$\Class(n,n_0,k_R,k_L)$
with respect to $(\lal_R, \bbD:=(\eijr{-a,0})_{a=1}^{k_R},
\bbG:=(\eijl{0,b})_{b=1}^{k_L}, 0)\in\caT(k_R,k_L)$.
It is trivial that  $(\lal_R, \bbD:=(\eijr{-a,0})_{a=1}^{k_R},
\bbG:=(\eijl{0,b})_{b=1}^{k_L}, 0)$ belongs to $\caT(k_R,k_L)$,
with $\lal_R\in\Wo'(k_R,k_L)$.
Furthermore, we have
$\bb(t)\in \lmk\mnz\otimes \caD(k_R,k_L,\bbD,\bbG)\Lambda_{\lal_R}\rmk^{\times n}$,
hence
$\caK_l(\bb(t))\subset \mnz\otimes \caD(k_R,k_L,\bbD,\bbG)\Lambda_{\lal_R}^l$ for all $l\in\nan$.
We claim  
$\caK_l(\bb(t))=\mnz\otimes \caD(k_R,k_L,\bbD,\bbG)\Lambda_{\lal_R}^l$
for $l$ large enough.
As $\{r_\alpha\}_{\alpha=2}^{n_0}\cup\{\lambda_{R i}^{-1}\}_{i=-k_R}^{-1}\cup
\{ r_\alpha^{-1}\}_{\alpha=2}^{n_0}\cup\{\lambda_{R j}\}_{j=1}^{k_L}$
are distinct, by the routine argument using Lemma C.7 of Part I \cite{Ogata1},
we obtain
\[
\zeij{1\alpha}\otimes\Lambda_{\lal_R}^l,\;
\zeij{\beta,1}\otimes\Lambda_{\lal_R}^l\;,
R^l\otimes\eij{-a,0},\;
R^l\otimes\eij{0b}
\in\caK_l(\bbB(t)),
\]
for $2\le \alpha,\beta\le n_0$, $a=1,\ldots,k_R$, $b=1,\ldots,k_L$
when 
$l$ is large enough.
By multiplying these elements,
we obtain
\[
\mnz\otimes \caD(k_R,k_L,\bbD,\bbG)\Lambda_{\lal_R}^l
\subset\caK_l(\bb(t)),
\]
for $l$ large enough. This proves the claim.
\end{proof}
Therefore, for $t\in(0,1]$, the Hamiltonians $H_{m,\Phi(t)}$
are gapped by Theorem 1.18 \cite{Ogata1}.
By an analogous argument of Proposition 3.1 \cite{Ogata1}, we obtain the following for $t=0$. 
 \begin{lem}\label{lem:0spec}Let $2\le n_0\in\nan$, $k_R,k_L\in\nan\cup\{0\}$, and $\bb(t)$ be defined by (\ref{eq:pbp}).
The quadraplet $(n,n_0(k_R+k_L+1), \zeij{11}\otimes \pu, \unit_{n_0}\otimes \pd,\bb(0))$ satisfies
the {\it Condition 2} of  Definition 2.3 \cite{Ogata1}.
In particular, by Lemma 2.9 of Part I \cite{Ogata1} , $T_{\bb(0)}$ satisfies the Spectral Property II
with respect to a triple $(s_{\bb(0)},e_{\bb(0)},\varphi_{\bb(0)})$
and $0<s_{\bb(0)}<1$,
$s(e_{\bb(0)})=\zeij{00}\otimes \pu$, $s(\varphi_{\bb(0)})=\hpd$.
\end{lem}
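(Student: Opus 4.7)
The plan is to imitate the proof of Proposition 3.1 of Part I \cite{Ogata1}, tracking carefully the fact that at $t=0$ the ``back-jumping'' term $t\ket{\eta_0^{(n_0)}}\bra{\cnz{1}}$ in $\tilde\omega_2(t)$ is absent. First I would spell out $\bb(0)$ explicitly: with $c_0:=r_{T_{\tilde\oo(0)}}^{-\frac12}$ one has
\[
B_1(0)=c_0\, R\otimes\Lambda_{\lal_R},\qquad
B_2(0)=c_0\,\ket{\cnz{1}}\bra{\eta_0^{(n_0)}}\otimes\Lambda_{\lal_R}
+c_0\,R\otimes V\Lambda_{\lal_R},\qquad B_\mu(0)=0\;(\mu\ge 3).
\]
Because $\braket{\eta_0^{(n_0)}}{\cnz{1}}=0$, any word in $B_1(0),B_2(0)$ that contains the factor $\ket{\cnz 1}\bra{\eta_0^{(n_0)}}\otimes\Lambda_{\lal_R}$ more than once vanishes, so its $\mnz$-component is either a polynomial in $R$ or of the form $\zeij{1,\alpha}$ (with the latter occurring only when the ``$\bra{\eta_0^{(n_0)}}$'' factor sits at the leftmost slot).

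Second, I would exploit the distinctness of the parameters. By construction $\{r_\alpha\}_{\alpha=1}^{n_0}$, $\{\lambda_{R,i}\}_{i=-k_R}^{k_L}$, and their various products are distinct nonzero complex numbers (this is the whole point of choosing $\theta\in 2\pi(\br\setminus\bbQ)$ and using exponents of the form $|i|n_0$ that do not collide with $\alpha-1$). As in the derivation of \eqref{eq:klo} via Lemma C.7 of Part I, I can form suitable linear combinations
\[
\sum_{j}\varsigma(j)\, B_1(0)^{l_1-j} X B_1(0)^{l_2+j}\in\caK_{l_1+l_2+1}(\bb(0)),
\qquad X\in\{B_2(0)\},
\]
to separate the two summands of $B_2(0)$ and further to extract individual matrix units. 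Concretely, this produces for all sufficiently large $l$ the elements
\[
\zeij{1,\alpha}\otimes\Lambda_{\lal_R}^{l},\qquad
\zeij{\alpha,\alpha}\otimes \ir(D_a)\Lambda_{\lal_R}^{l},\qquad
\zeij{\alpha,\alpha}\otimes \il(G_b)\Lambda_{\lal_R}^{l},
\]
for $1\le\alpha\le n_0$, $1\le a\le k_R$, $1\le b\le k_L$. Multiplying these with $B_1(0)$ on either side and iterating, one obtains that
\[
(\zeij{11}\otimes\pu)\cdot\caK_l(\bb(0))\cdot(\unit_{n_0}\otimes\pd)
=\spa\{\zeij{1,\alpha}\}_{\alpha=1}^{n_0}\otimes \caD(k_R,k_L,\bbD,\bbG)\Lambda_{\lal_R}^l
\]
for $l$ sufficiently large, which is exactly the generating identity needed for the quadruplet $(n,n_0(k_R+k_L+1),\zeij{11}\otimes\pu,\unit_{n_0}\otimes\pd,\bb(0))$ to satisfy \emph{Condition 2} in the formulation used in Lemma 3.6 of \cite{Ogata1}.

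Finally, the spectral information claimed follows from Lemma 2.9 of Part I applied to this verified \emph{Condition 2}. I expect the main obstacle to be the asymmetry of $\bb(0)$: unlike the $t>0$ case, $\oo(0)$ is no longer primitive, so the ``left'' support must be shrunk from $\hpu=\unit_{n_0}\otimes\pu$ to $\zeij{11}\otimes\pu$, and the verification must confirm that $\caK_l(\bb(0))$ really only produces $\zeij{1,\alpha}$-type elements on the left (no hidden $\zeij{\alpha,1}$-type terms are generated through the cross-coupling with $R\otimes V\Lambda_{\lal_R}$). This amounts to a careful bookkeeping of which diagonal entries of $R$ get selected by each polynomial combination, which is made manageable by the Vandermonde-type argument of Lemma C.7 of Part I.
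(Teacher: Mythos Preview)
Your approach is correct and is exactly what the paper means by ``an analogous argument of Proposition 3.1 of \cite{Ogata1}''; the key observation that at $t=0$ only $\zeij{1,\alpha}$-type elements (and never $\zeij{\alpha,1}$) are generated, forcing the left support to shrink from $\hpu$ to $\zeij{11}\otimes\pu$, is precisely the essential point. One small slip: the Vandermonde separation of $B_1(0)^{j}B_2(0)B_1(0)^{l-1-j}$ yields $R^l\otimes\eij{-a,0}$ and $R^l\otimes\eij{0,b}$ (compare the proof of Lemma~\ref{lem:tspec}) rather than the individual $\zeij{\alpha,\alpha}$ tensors you wrote, but since $\zeij{11}R^l=\zeij{11}$ this is harmless for the sandwiched span needed in Condition~2.
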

From the last two Lemmas, we can apply Lemma \ref{lem:cinf} and obtain the following.
\begin{lem}\label{lem:s1}
There exists an $0<s_1<1$
such that 
$\sigma\lmk T_{\bbB(t)}\rmk\setminus \{1\}\subset \caB_{s_1}(0)$, $t\in[0,1]$.
\end{lem}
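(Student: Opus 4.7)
The plan is to invoke Lemma \ref{lem:cinf} in essentially the same way as was done at the beginning of the proof of Proposition \ref{prop:maingen}. The hypotheses to verify are that $t \mapsto T_{\bbB(t)}$ is a $C^\infty$-map on $[0,1]$ whose values satisfy conditions (1), (2), (3) of Lemma \ref{lem:cinf} pointwise, with $r_{T_{\bbB(t)}}$ uniformly positive.

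First I would note that the preceding discussion has already established that $t \mapsto r_{T_{\tilde\oo(t)}}$ is $C^\infty$ and, by Lemma \ref{lem:0spec} together with the primitivity on $(0,1]$, strictly positive on $[0,1]$. Hence $r_{T_{\tilde\oo(t)}}^{-1/2}$ is smooth, so the path $t \mapsto \bbB(t) \in (\mnz \otimes \mkk)^{\times n}$ defined by (\ref{eq:pbp}) is $C^\infty$; consequently so is the induced path $t \mapsto T_{\bbB(t)}$.

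Next I would collect the pointwise spectral information. For each $t \in (0,1]$, Lemma \ref{lem:tspec} places $\bbB(t)$ in $\Class(n,n_0,k_R,k_L)$, so by Proposition 3.1 of Part I \cite{Ogata1} the transfer operator $T_{\bbB(t)}$ satisfies Spectral Property II with some $0 < s_{\bbB(t)} < 1$ and normalization $r_{T_{\bbB(t)}} = 1$. For the endpoint $t = 0$, Lemma \ref{lem:0spec} verifies Condition 2 for the pentad $(n, n_0(k_R+k_L+1), \zeij{11} \otimes \pu, \unit_{n_0} \otimes \pd, \bbB(0))$, and then Lemma 2.9 of Part I \cite{Ogata1} yields Spectral Property II for $T_{\bbB(0)}$, again with $0 < s_{\bbB(0)} < 1$ and $r_{T_{\bbB(0)}} = 1$. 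Thus for every $t \in [0,1]$, $1$ is a simple eigenvalue of $T_{\bbB(t)}$ and the rest of the spectrum sits strictly inside the open unit disk.

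These pointwise facts are exactly the conditions (1), (2), (3) of Lemma \ref{lem:cinf} along the smooth family $t \mapsto T_{\bbB(t)}$ on the compact interval $[0,1]$. Applying Lemma \ref{lem:cinf} then produces the desired uniform gap: there exists $0 < s_1 < 1$ with $\sigma(T_{\bbB(t)}) \setminus \{1\} \subset \caB_{s_1}(0)$ for all $t \in [0,1]$. The only point requiring care is the endpoint $t = 0$, where $\oo(0)$ fails to be primitive and one cannot simply quote Proposition 3.1 of Part I; Lemma \ref{lem:0spec} was designed precisely to bypass this degeneracy by verifying Condition 2 directly, so this is not a genuine obstacle once that lemma is in hand.
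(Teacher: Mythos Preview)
Your proposal is correct and follows essentially the same approach as the paper: the paper's proof is the single sentence ``From the last two Lemmas, we can apply Lemma \ref{lem:cinf} and obtain the following,'' and you have simply unpacked this by spelling out why Lemmas \ref{lem:tspec} and \ref{lem:0spec} supply the pointwise conditions (1)--(3) of Lemma \ref{lem:cinf} along the smooth path $t\mapsto T_{\bbB(t)}$. Your added remarks about the smoothness of $r_{T_{\tilde\oo(t)}}^{-1/2}$ and the special handling of the endpoint $t=0$ are accurate elaborations of what the paper leaves implicit.
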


\subsection{The $t\downarrow 0$ limit}Throughout this subsection, let $2\le n_0\in\nan$, $k_R,k_L\in\nan\cup\{0\}$, and $\bb(t)$ be defined by (\ref{eq:pbp}).
The path $\bb(t)$ has a singularity, i.e., 
although $\bb(t)$ belongs to $\Class(n,n_0,k_R,k_L)$ for
$t\in (0,1]$, its $t\downarrow 0$ limit $\bbB(0)$
does not.
In this subsection, we consider the $t\downarrow 0$ limit of $\caG_{l,\bbB(t)}$.
To describe the limit, we need some preparation.
We define $n$-tuple $\bbA$ of elements 
in $\Mat_{2n_0-1}\otimes \mkk$ by
\begin{align}\label{eq:defa}
&A_{1}:=\Lambda_{\lal_L}\otimes \Lambda_{\lal_R},\quad
A_{2}:=\tilde K_{1}+\tilde K_{2}+\tilde K_3,\quad
A_{\mu}:=0,\quad 3\le \mu\le n.
\end{align}
We set projections
\begin{align}\label{eq:pa}
{\puuz}:=P_{R}^{(n_0-1,n_0-1)}\otimes P^{(k_R,k_L)}_R,\quad
{\pddz}:={P_{L}^{(n_0-1,n_0-1)}\otimes P^{(k_R,k_L)}_L}.
\end{align}
Furthermore, for $\alpha=-(n_0-1),\ldots,n_0-1$ and $i=-k_R,\ldots,k_L$, we denote
\begin{align}\label{eq:gaa}
\gaa{\alpha}{i}:=\fiz{\alpha}\otimes \fii{i}.
\end{align}

For each $l,k\in\nan$ with $k\le l$ and
$i_1,i_2,\ldots,i_k\in \{1,2,3\}$,
define $\zeta_{i_1\cdots i_k}^{(k)l}
:\mnz\otimes
\mkk\to \bigotimes_{i=0}^{l-1}\cc^n$
by
\begin{align}\label{eq:zf}
&\zeta_{i_1\cdots i_k}^{(k)l}(X)
:=
\sum_{1\le m_1<\cdots<m_k\le l}
\Tr\lmk
X\lmk B_1(0)^{m_1-1}K_{i_1}B_1(0)^{m_2-m_1-1}K_{i_2}
\cdots
B_1(0)^{m_k-m_{k-1}-1}K_{i_k} B_1(0)^{l-m_k}\rmk^*
\rmk\nonumber\\
&\quad\quad\quad\quad\quad\quad \psi_1^{\otimes m_1-1}\otimes
\psi_{2}\otimes\psi_{1}^{\otimes m_2-m_1-1}
\otimes \psi_{2}\otimes 
\cdots \psi_1^{\otimes m_k-m_{k-1}-1}\otimes \psi_{2}
\otimes \psi_1^{\otimes l-m_k},
\end{align}
for $X\in \mnz\otimes\mkk$.
For each $4\le l\in\nan$, define 
$\Upsilon_l:\mkk\to\otimes_{i=0}^{l-1}\cc^n$
by
\begin{align}
\Upsilon_l(x):=&
\lmk
\zeta_{13}^{(2)(l)}
+\zeta_{213}^{(3)(l)}+\zeta_{123}^{(3)(l)}+\zeta_{132}^{(3)(l)}
+
\zeta_{2213}^{(4)(l)}+\zeta_{2123}^{(4)(l)}+\zeta_{2132}^{(4)(l)}+
\zeta_{1223}^{(4)(l)}+\zeta_{1232}^{(4)(l)}+\zeta_{1322}^{(4)(l)}
\rmk
\lmk\zeij{11}\otimes x\rmk,\nonumber\\
& \quad\quad\quad\quad\quad \quad\quad\quad\quad\quad \quad\quad\quad\quad\quad \quad\quad\quad\quad\quad \quad\quad\quad\quad\quad x\in\mkk \label{eq:el1}.
\end{align}

For $l\in\nan$,
$\alpha,\beta\in\{1,\ldots,n_0\}$,
$a=0,1,\ldots, k_R$, and $b=0,1,\ldots,k_L$, we define
\begin{align}\label{eq:sndp}
\xi_{\alpha,\beta,a,b}^{(l)}
:=
\Gamma_{l,\bbA}^{(R)}\lmk\eijz{-(\alpha-1),\beta-1} 
\otimes \eij{-a,b}\rmk
-\delta_{\alpha,\beta}(1-\delta_{\alpha,1})
 \bar r_{\alpha}^l
\Upsilon_l(\eij{-a,b}).
\end{align}
For each $l\in\nan$, we denote
 the linear subspace of $\bigotimes_{i=0}^{l-1}\bbC^n$ spanned by $\xi_{\alpha,\beta,a,b}^{(l)}$, 
$\alpha,\beta\in\{1,\ldots,n_0\}$,
$a=0,1,\ldots, k_R$, and $b=0,1,\ldots,k_L$,
by $\caD_l$.
Furthermore, we denote the orthogonal  projection onto 
$\caD_l$ by $G_l$.
, i.e.,
\begin{align}\label{eq:gldef}
G_l:=\text{orthogonal projection onto} \spa\{\xi_{\alpha,\beta,a,b}^{(l)}\}_{\alpha,\beta=1,\ldots,n_0, a=0,1,\ldots, k_R,b=0,1,\ldots,k_L}.
\end{align}


Set $\hee:=\zeij{11}\otimes \unit_{k_R+k_L+1}$ and 
$
a_t:=\hee+\frac 1t\overline{\hee},
$
for each $t\in(0,1]$.
For each $l\in \nan$,  and $t\in(0,1]$, we define maps
$\varpi_l,\Xi_l,\Theta_{l,t}:\mnz\otimes \pu\Mat_{k_R+k_L+1}\pd\to \mnz\otimes \pu\Mat_{k_R+k_L+1}\pd$,
by
\begin{align}
&\varpi_l\lmk X\rmk:=\zeij{11}\otimes \Tr_{n_0}\lmk\lmk \overline{ \zeij{11}}{R^*}^l\otimes \unit_{k_R+k_L+1}\rmk X\rmk,\quad \Xi_l\lmk X\rmk:= \overline{\hee} X-\varpi_l(X),
&\Theta_{l,t}\lmk X\rmk:={ \hee}X+\frac{1}t \Xi_l (X),
\end{align}
for $X\in \mnz\otimes \pu\Mat_{k_R+k_L+1}\pd$.
Here, $\Tr_{n_0}:\mnz\otimes\mkk\to\mkk$ indicates the partial trace.
Note that $\Theta_{l,t}$ is a bijection from $ \mnz\otimes \pu\Mat_{k_R+k_L+1}\pd$ onto itself.
We have
\begin{align}
\lim_{l\to\infty}\Theta_{l,t}\lmk X\rmk=a_t X,\quad X\in \mnz\otimes \pu\Mat_{k_R+k_L+1}\pd,\quad t\in(0,1].
\end{align}

Now we consider the $t\downarrow 0$ limit.
\begin{lem}\label{lem:flim}
For each $l\in\nan$, $\alpha,\beta\in\{1,\ldots,n_0\}$,
$a=0,1,\ldots, k_R$, and $b=0,1,\ldots,k_L$, we have
the limit 
\begin{align}
\lim_{t\downarrow 0} \gbtr{l}\lmk
\Theta_{l,t}\lmk \zeij{\alpha\beta}\otimes \eij{-a,b}
\rmk\rmk
=
\xi_{\alpha,\beta,a,b}^{(l)}.
\end{align}
\end{lem}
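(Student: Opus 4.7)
The plan is to expand $\gbtr{l}$ as a polynomial in $t$ indexed by patterns in $\{1,2,3\}^{\le l}$, and then match the $t\downarrow 0$ limit of $\gbtr{l}(\Theta_{l,t}(X))$ with the definition of $\xi^{(l)}_{\alpha,\beta,a,b}$ term by term. Writing $r_t:=r_{T_{\tilde\oo(t)}}$, so that $B_1(t)=r_t^{-1/2}B_1(0)$ and $B_2(t)=r_t^{-1/2}(K_1+K_2+tK_3)$, expansion of each $B_2(t)$-factor in the monomials $B_{\mu^{(l)}}(t)$ yields
\[
\gbtr{l}(Y)=r_t^{-l/2}\sum_{k=0}^{l}\sum_{i_1,\ldots,i_k\in\{1,2,3\}}t^{\#\{j:\,i_j=3\}}\,\zeta^{(k)(l)}_{i_1\cdots i_k}(Y),\qquad Y\in\mnz\otimes\mkk,
\]
with $r_t^{-l/2}\to 1$. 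Using the definitions of $\Theta_{l,t}$, $\Xi_l$, $\varpi_l$ together with the identity $\Tr(\overline{\zeij{11}}(R^*)^l\zeij{\alpha\beta})=\delta_{\alpha,\beta}(1-\delta_{\alpha,1})\bar r_\alpha^l$, one computes directly $\Theta_{l,t}(\zeij{\alpha\beta}\otimes\eij{-a,b})=\zeij{1\beta}\otimes\eij{-a,b}$ for $\alpha=1$, and $\Theta_{l,t}(\zeij{\alpha\beta}\otimes\eij{-a,b})=t^{-1}\bigl(\zeij{\alpha\beta}\otimes\eij{-a,b}-\delta_{\alpha\beta}\bar r_\alpha^l\zeij{11}\otimes\eij{-a,b}\bigr)$ for $\alpha\neq 1$.

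For $\alpha=1$, only the $t^0$ (no-$K_3$) patterns survive the limit, yielding $\sum_{k,(i_1,\ldots,i_k)\in\{1,2\}^{\le l}}\zeta^{(k)(l)}_{i_1\cdots i_k}(\zeij{1\beta}\otimes\eij{-a,b})$. I identify this with $\Gamma_{l,\bbA}^{(R)}(\eijz{0,\beta-1}\otimes\eij{-a,b})$ via the dictionary $\cnz{\alpha'}\leftrightarrow\fiz{\alpha'-1}$ on the positive-index side, using the eigenvalue identity $\lambda_{L,\alpha'-1}=r_{\alpha'}$ (so $R$ and $\Lambda_{\lal_L}$ act identically under the dictionary), the pairings $K_1^{(L)}\leftrightarrow\tilde K_1^{(L)}$ and $K_2^{(L)}=R\leftrightarrow\tilde K_2^{(L)}=\Lambda_{\lal_L}$, together with the observation that every $\tilde K_3^{(L)}$-containing pattern in $\Gamma_{l,\bbA}^{(R)}(\eijz{0,\beta-1}\otimes\eij{-a,b})$ vanishes because $\tilde K_3^{(L)}$ maps into the negative-index block $\spn\{\fiz{-1},\ldots,\fiz{-(n_0-1)}\}$ and no subsequent $\tilde K_i^{(L)}$ returns the state to $\fiz 0$ for the bra-pickup; hence only $\{1,2\}$-patterns remain on both sides and the two expressions coincide.

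For $\alpha\neq 1$, the $t^{-1}$ prefactor demands that the $t^0$ contribution to $\gbtr{l}(\zeij{\alpha\beta}\otimes\eij{-a,b})-\delta_{\alpha\beta}\bar r_\alpha^l\gbtr{l}(\zeij{11}\otimes\eij{-a,b})$ vanish; a rank-one calculation using the orthogonality of $\eta_0^{(n_0)}$ and $\cnz 1$, together with $K_2^{(L)}=R$, shows that $\Tr(\zeij{\alpha\beta}M_L^*)=\delta_{\alpha\beta}\bar r_\alpha^l\Tr(\zeij{11}M_L^*)$ for every pattern with $i_j\in\{1,2\}$, securing this cancellation. The surviving $t^1$ contribution is the sum of one-$K_3$ patterns; here $V^3=0$ (a direct computation from $V=\ket{\eu}\bra{\fii 0}+\ket{\fii 0}\bra{\ed}$) restricts admissible patterns to length $\le 4$, and a second rank-one analysis on the left factor isolates: for $\zeij{11}$ exactly the ten ``1-before-3'' patterns listed in the definition of $\Upsilon_l$, whereas for $\zeij{\alpha\beta}$ with $\alpha\neq 1$ either the ``3-before-1'' patterns (when $\beta\ge 2$) or the ``only-2's-and-one-3'' patterns (when $\beta=1$). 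The same analysis applied to $\Gamma_{l,\bbA}^{(R)}(\eijz{-(\alpha-1),\beta-1}\otimes\eij{-a,b})$ in $\Mat_{2n_0-1}$ uses the block-transition dictionary $\cnz{\alpha'}\leftrightarrow\fiz{\alpha'-1}$ before the unique $\tilde K_3^{(L)}$-action and $\cnz{\alpha'}\leftrightarrow\fiz{-(\alpha'-1)}$ after it, and the identity $\lambda_{L,\pm(\alpha'-1)}=r_{\alpha'}$ ensures that the scalar factors collected on both sides agree pattern by pattern. The main obstacle will be this final pattern-by-pattern bookkeeping, in particular verifying that the subtraction of $\delta_{\alpha\beta}\bar r_\alpha^l\Upsilon_l(\eij{-a,b})$ on the right-hand side precisely removes the $\bar r_\alpha^l$-scaled ``1-before-3'' patterns that survive on the left, and that the eigenvalue accounting is consistent across the four sub-cases $\beta=1$ vs.\ $\beta\ge 2$ and $\alpha=\beta$ vs.\ $\alpha\neq\beta$.
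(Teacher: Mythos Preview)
Your proposal is correct and follows essentially the same approach as the paper: both expand $\Gamma_{l,\bbB(t)}^{(R)}$ into monomials in $B_1(0),K_1,K_2,K_3$ graded by the number of $K_3$'s (equivalently, powers of $t$), compute $\Theta_{l,t}$ explicitly on matrix units, and then identify the surviving patterns with the corresponding $\bbA$-side expansion. Your organization differs only cosmetically---you package the identification with $\Gamma_{l,\bbA}^{(R)}$ via the dictionary $\cnz{\alpha'}\leftrightarrow\fiz{\pm(\alpha'-1)}$ and split the $\alpha\neq 1$ case by $\beta=1$ versus $\beta\ge 2$, whereas the paper splits by $\alpha\neq\beta$ versus $\alpha=\beta$ and verifies the $\bbA$-identification by direct inspection; one small caveat is that the restriction to pattern length $\le 4$ requires not only $V\Lambda^aV\Lambda^bV=0$ (limiting $K_2$'s to two) but also the $K_1$-constraints you use elsewhere, so your phrase ``$V^3=0$ restricts admissible patterns to length $\le 4$'' slightly understates what is needed.
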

\begin{proof}
First, we have
$\lim_{t\downarrow 0} \gbtr{l}\lmk
\zeij{1\beta}\otimes \eij{-a,b}
\rmk
=\Gamma_{l,(B_1(0),B_2(0),0,\cdots,0)}^{(R)}
\lmk
\zeij{1\beta}\otimes \eij{-a,b}\rmk$.
Let us consider the monomials $B_{\mu^{(l)}}(0)$, $\mu^{(l)}\in \{1,\ldots,n\}^{\times l}$
 which appears in the representation of $\Gamma_{l,\bbB(0)}^{(R)}$.
Note that $B_1(0)=R\otimes \Lambda_{\lal_R}$ is diagonal
and $B_2(0)=K_1+K_2$.
Therefore, monomials of $B_1(0)$ and $B_2(0)$
are sum of monomials of
 $B_1(0)$, $K_1$, and $K_2$.
A monomial of $B_1(0)$, $K_1$, and $K_2$
vanishes if there are more than one $K_1$.
(Note $K_1^{(L)}R^{l-1}K_1^{(L)}=0$, $l\in\nan$.)
It also vanishes if there are more than two
$K_2$.(Note $V\Lambda_{\lal_R}^{l_1}V\Lambda_{\lal_R}^{l_2}V=0=0$, $l_1,l_2\in\nan$.)
Furthermore, if $\beta\neq 1$, in order to give a nontrivial contribution, the monomial
has to own one $K_1$.
On the other hand, if $\beta=1$, the number of $K_1$ should be zero.
From this kind of  observations we obtain
\begin{align*}
&\lim_{t\downarrow 0} \gbtr{l}\lmk
\zeij{1\beta}\otimes \eij{-a,b}
\rmk\\
&=\left\{
\begin{gathered}
\lmk \Tr \eij{-a,b} \lmk \Lambda_{\lal_R}^*\rmk^l\rmk
\psi_1^{\otimes l}
+
\lmk
\zeta_2^{(1)(l)}+\zeta_{2,2}^{(2)(l)}
\rmk
\lmk\zeij{11}\otimes \eij{-a,b}\rmk,\quad
\text{if}\;\; 
\beta=1,\\
\lmk
\zeta_1^{(1)(l)}+\zeta_{21}^{(2)(l)}+\zeta_{12}^{(2)(l)}+
\zeta_{221}^{(3)(l)}+\zeta_{212}^{(3)(l)}+\zeta_{122}^{(3)(l)}
\rmk
\lmk\zeij{1\beta}\otimes \eij{-a,b}\rmk,\quad 
\text{if }\beta\in\{2,\ldots,n_0\}.
\end{gathered}
\right.
\end{align*}
Similar kind of observation about $\bbA$
shows that the right hand side
coincides with $\Gamma_{l,\bbA}^{(R)}\lmk\eijz{0,\beta-1} 
\otimes \eij{-a,b}\rmk$.

Next let us consider the case
with $\alpha\neq 1$ and $\alpha\neq \beta$.
Note that in the representation of $\frac 1t \gbtr{l}\lmk
\zeij{\alpha\beta}\otimes \eij{-a,b}
\rmk$,
terms with more than two 
$r_{T_{\tilde \oo(t)}}^{-\frac 12}tK_3$ vanish
in the $t\downarrow 0$ limit.
On the other hand, all the
monomials of $B_1(0)$, $K_1$, and $K_2$ 
are either of the form ''$\text{diagonal matrix}\otimes \mkk$" or in $(\zeij{11}\mnz)\otimes \mkk$.
Therefore,  we have 
\[
\Gamma_{l,(B_1(0),B_2(0),0,\cdots)}^{(R)}\lmk
\zeij{\alpha,\beta}\otimes \eij{-a,b}
\rmk=0,
\]
because $\alpha\neq\beta$ and $\alpha\neq 1$.
Hence, what is left is only the monomials of
$B_1(0)$, $K_1$, $K_2$ and $K_3$ with 
exactly one $K_3$.
The contribution of such a monomial is zero if there exists
$K_1$ left to $K_3$.
This is because such a term belongs to 
$(\zeij{11}\mnz)\otimes\mkk$. 
Terms with two $K_1$ without having $K_3$ between vanish
by the same argument as $\alpha=1$ case.
Also, terms with more than two $K_2$ also disappear.
From these observation we obtain
for $\alpha\neq \beta$,
\begin{align*}
&\lim_{t\downarrow 0} 
\frac 1t \gbtr{l}\lmk
\zeij{\alpha\beta}\otimes \eij{-a,b}
\rmk
\\
&=
\lmk
\zeta_3^{(1)(l)}+\zeta_{23}^{(2)(l)}+\zeta_{32}^{(2)(l)}
+\zeta_{223}^{(3)(l)}+\zeta_{232}^{(3)(l)}+\zeta_{322}^{(3)(l)}+
\zeta_{31}^{(2)(l)}+\zeta_{231}^{(3)(l)}+\zeta_{321}^{(3)(l)}+\zeta_{312}^{(3)(l)}
\rmk
\lmk\zeij{\alpha\beta}\otimes \eij{-a,b}\rmk\\
&+
\lmk
\zeta_{2231}^{(4)(l)}+\zeta_{2321}^{(4)(l)}+\zeta_{2312}^{(4)(l)}+
\zeta_{3221}^{(4)(l)}+\zeta_{3212}^{(4)(l)}+\zeta_{3122}^{(4)(l)}
\rmk\lmk\zeij{\alpha\beta}\otimes \eij{-a,b}\rmk.
\end{align*}
Similar observation about $\bbA$
shows that the right hand side
 coincides with 
$\Gamma_{l,\bbA}^{(R)}\lmk\eijz{-(\alpha-1),\beta-1} 
\otimes \eij{-a,b}\rmk$.

Finally, we consider the case
with $\alpha\neq 1$ and $\alpha=\beta$.
Note that 
terms in $\frac 1t \gbtr{l}\lmk
\lmk \zeij{\alpha\alpha}-\bar r_\alpha^l\zeij{11}\rmk\otimes \eij{-a,b}
\rmk$
with more than two 
$r_{T_{\tilde \oo(t)}}^{-\frac 12}tK_3$ vanish
in the $t\downarrow 0$ limit.
Now let us see that the terms given by
monomials of $B_1(0)$, $K_1$, and $K_2$ 
vanish.
There can be at most one $K_1$ for
nonzero monomial of $B_1(0)$, $K_1$, and $K_2$ as above.
However, if there is one, then
the monomial belongs to 
$\lmk\zeij{11}\mnz\overline{\zeij{11}}\rmk\otimes \mkk$.
Therefore, these terms are orthogonal to
$\lmk \zeij{\alpha,\alpha}-\bar r_{\alpha}^l\zeij{11}\rmk\otimes \mkk$
with respect to the inner product 
$\braket{}{}_{\Tr}$ given by $\braket{X}{Y}_{\Tr}=\Tr X^*Y$.
Hence we conclude that the number of $K_1$ should be zero.
In case there is no $K_1$, the monomial belongs to
$R^l\otimes \mkk$.
But such an element is again orthogonal to 
$\lmk \zeij{\alpha,\alpha}-\bar r_{\alpha}^l\zeij{11}\rmk\otimes \mkk$
with respect to $\braket{}{}_{\Tr}$.
Hence any term given by a monomial of $B_1(0)$, $K_1$, and $K_2$ vanishes.

For the reasons above,
terms which are left in $t\downarrow 0$ limit
are only the monomials of
$B_1(0)$, $K_1$, $K_2$ and $K_3$ with 
exactly one $K_3$.
Terms with two $K_1$ without having $K_3$ in between vanish by observation in $\alpha\neq 1$, $\alpha\neq \beta$ case.
Therefore, there can be
at most two $K_1$.
Under these conditions, if furthermore
there is no  $K_1$ in the monomial,
it belongs to $\overline{\zeij{11}}\mnz\zeij{11}\otimes \mkk$.
If there are two  $K_1$ in the monomial, then from the above observation, they have to be ordered as $K_1,K_3,K_1$, and
the monomial belongs to $\zeij{11}\mnz
\overline{\zeij{11}}\otimes \mkk$.
In both cases, the monomial is orthogonal to 
$\lmk \zeij{\alpha,\alpha}-\bar r_{\alpha}^l\zeij{11}\rmk\otimes\mkk$
with respect to $\braket{}{}_{\Tr}$.
Therefore, to have a non-zero contribution, the number of $K_1$ has to be one.
When there is one $K_1$, the monomial belongs to
$\lmk
\zeij{11}\mnz{\zeij{11}}\rmk
\otimes \mkk$
if $K_1$ is left to $K_3$,
and it  belongs to
$\lmk
\overline{\zeij{11}}\mnz\overline{\zeij{11}}\rmk
\otimes \mkk$
if $K_3$ is left to $K_1$.Terms with more than two $K_2$s disappear as usual.
From these observations, we obtain 
for $\alpha\neq \beta$,
\begin{align*}
&\lim_{t\downarrow 0} \frac 1t \gbtr{l}\lmk
\lmk \zeij{\alpha\alpha}-\bar r_\alpha^l\zeij{11}\rmk\otimes \eij{-a,b}
\rmk
\\
&=
\lmk
\zeta_{31}^{(2)(l)}
+\zeta_{231}^{(3)(l)}+\zeta_{321}^{(3)(l)}+\zeta_{312}^{(3)(l)}
\rmk\lmk\zeij{\alpha\alpha}\otimes \eij{-a,b}\rmk\\
&+\lmk
\zeta_{2231}^{(4)(l)}+\zeta_{2321}^{(4)(l)}+\zeta_{2312}^{(4)(l)}+
\zeta_{3221}^{(4)(l)}+\zeta_{3212}^{(4)(l)}+\zeta_{3122}^{(4)(l)}
\rmk
\lmk\zeij{\alpha\alpha}\otimes \eij{-a,b}\rmk\\
&-\bar r_\alpha^l\lmk
\zeta_{13}^{(2)(l)}
+\zeta_{213}^{(3)(l)}+\zeta_{123}^{(3)(l)}+\zeta_{132}^{(3)(l)}
\rmk\lmk\zeij{11}\otimes \eij{-a,b}\rmk\\
&-\bar r_\alpha^l
\lmk
\zeta_{2213}^{(4)(l)}+\zeta_{2123}^{(4)(l)}+\zeta_{2132}^{(4)(l)}+
\zeta_{1223}^{(4)(l)}+\zeta_{1232}^{(4)(l)}+\zeta_{1322}^{(4)(l)}
\rmk
\lmk\zeij{11}\otimes \eij{-a,b}\rmk.
\end{align*}
It is straight forward to check that
the right hand side coincides with 
$\Gamma_{l,\bbA}^{(R)}\lmk\eijz{-(\alpha-1),\alpha-1} 
\otimes \eij{-a,b}\rmk
-\delta_{\alpha,\alpha}(1-\delta_{\alpha,1})
\bar r_{\alpha}^l
\Upsilon_l(\eij{-a,b})$.
\end{proof}

\subsection{The properties of $\bbA$}
Throughout this subsection, let $2\le n_0\in\nan$, $k_R,k_L\in\nan\cup\{0\}$, and $\bbA$ be given by (\ref{eq:defa}). 
By a routine argument, using Lemma C.7 of \cite{Ogata1}, we obtain the following Lemma.
\begin{lem}\label{lem:asp}
Set
\[
\caD_{\bbA}:=
\spa\lmk \left\{
\unit, \eijz{-\alpha,0}, \eijz{0,\beta}, \eij{-\alpha,\beta}
\right\}_{\alpha,\beta=1,\ldots, n_0-1}
\otimes \left\{
\unit,\eij{-a,0}, \eij{0,b},\eij{-a,b}
\right\}_{a=1,\ldots,k_R,b=1,\ldots,k_L}\rmk.
\]
There exists an $l_\bbA\in\nan$ such that
\begin{align*}
\caK_l(\bbA)=\caD_{\bbA}\lmk
\Lambda_{\lal_L}^l\otimes \Lambda_{\lal_R}^l\rmk,
\end{align*}
for all $l\ge l_\bbA$.
\end{lem}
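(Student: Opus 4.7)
The plan is to establish both inclusions between $\caK_l(\bbA)$ and $\caD_{\bbA}\cdot(\Lambda_{\lal_L}^l\otimes\Lambda_{\lal_R}^l)$ by a routine argument paralleling the proof of the claim $\caK_l(\bb(t))=\mnz\otimes \caD(k_R,k_L,\bbD,\bbG)\Lambda_{\lal_R}^l$ inside Lemma~\ref{lem:tspec}. The main tools are the nondegeneracy conditions $\lal_L\in\Wo'(n_0-1,n_0-1)$ and $\lal_R\in\Wo'(k_R,k_L)$ together with Lemma~C.7 of \cite{Ogata1}. Since $A_\mu=0$ for $\mu\ge 3$, every monomial $A_{\mu^{(l)}}$ is a product of $A_1$'s and $A_2$'s, and expanding $A_2=\tilde K_1+\tilde K_2+\tilde K_3$ rewrites it as a sum of tensor products of L-factors and R-factors to be analyzed separately.

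For the inclusion $\caK_l(\bbA)\subset\caD_{\bbA}(\Lambda_{\lal_L}^l\otimes\Lambda_{\lal_R}^l)$, the orthogonality relations $\braket{\ezd}{\fiz{0}}=\braket{\fiz{0}}{\ezu}=\braket{\ezd}{\ezu}=0$, combined with $\Lambda_{\lal_L}\fiz{0}=\fiz{0}$ and the preservation of $\spa\{\fiz{-\alpha}\}_{\alpha>0}$ and $\spa\{\fiz{\beta}\}_{\beta>0}$ under $\Lambda_{\lal_L}$, force the nonzero L-words to fall into one of the four patterns $\Lambda_{\lal_L}^{k_0}$, $\Lambda_{\lal_L}^{k_0}\ket{\fiz 0}\bra{\ezd}\Lambda_{\lal_L}^{k_1}$, $\Lambda_{\lal_L}^{k_0}\ket{\ezu}\bra{\fiz 0}\Lambda_{\lal_L}^{k_1}$, or $\Lambda_{\lal_L}^{k_0}\ket{\ezu}\bra{\ezd}\Lambda_{\lal_L}^{k_2}$ (the last arising only via a $\tilde K_3\cdots\tilde K_1$ subpattern), each landing in $\spa\{\unit,\eijz{-\alpha,0},\eijz{0,\beta},\eijz{-\alpha,\beta}\}\cdot\Lambda_{\lal_L}^l$. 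An analogous argument on the R side, based on $V^2=\ket{\eu}\bra{\ed}$ (hence $V^3=0$) obtained from $\braket{\ed}{\eu}=0$, places the R-words inside $\spa\{\unit,\eij{-a,0},\eij{0,b},\eij{-a,b}\}\cdot\Lambda_{\lal_R}^l$.

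For the reverse inclusion I would exhibit each basis element of $\caD_{\bbA}(\Lambda_{\lal_L}^l\otimes\Lambda_{\lal_R}^l)$ inside $\caK_l(\bbA)$ for $l$ sufficiently large. Direct computation gives $A_1^l=\Lambda_{\lal_L}^l\otimes\Lambda_{\lal_R}^l$, $A_1^{k_0}\tilde K_3A_1^{l-k_0-1}=\sum_{\alpha=1}^{n_0-1}\lambda_{L,-\alpha}^{k_0}\eijz{-\alpha,0}\otimes\Lambda_{\lal_R}^l$, and $A_1^{k_0}\tilde K_1A_1^{l-k_0-1}=\sum_{\beta=1}^{n_0-1}\lambda_{L,\beta}^{l-k_0-1}\eijz{0,\beta}\otimes\Lambda_{\lal_R}^l$; Vandermonde inversion over the distinct families $\{\lambda_{L,-\alpha}\}$ and $\{\lambda_{L,\beta}\}$ then extracts the individual $\eijz{-\alpha,0}\otimes\Lambda_{\lal_R}^l$ and $\eijz{0,\beta}\otimes\Lambda_{\lal_R}^l$. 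The same idea applied to $A_1^{k_0}\tilde K_3A_1^{k_1}\tilde K_1A_1^{k_2}$ yields $\eijz{-\alpha,\beta}\otimes\Lambda_{\lal_R}^l$; a single $\tilde K_2$ insertion produces $\Lambda_{\lal_L}^l\otimes\eij{-a,0}$ and $\Lambda_{\lal_L}^l\otimes\eij{0,b}$; two $\tilde K_2$ insertions (using $V\Lambda_{\lal_R}^kV=\ket{\eu}\bra{\ed}$) produce $\Lambda_{\lal_L}^l\otimes\eij{-a,b}$; and the fully mixed elements $\eijz{-\alpha,\beta}\otimes\eij{-a,b}$ arise from length-four patterns $\tilde K_3,\tilde K_2,\tilde K_2,\tilde K_1$ with Vandermonde in the outermost powers. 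The separability of these extractions relies on the observation that the expansion of a product $A_1^{j_0}A_2\cdots A_2A_1^{j_m}$ into $\tilde K_i$-monomials decomposes into summands living in linearly independent L$\otimes$R-subspaces, so that each summand is individually in $\caK_l(\bbA)$.

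The hard part will be the combinatorial bookkeeping for the longer products used to produce the mixed basis elements, and the verification that every Vandermonde matrix appearing in the extraction is nonsingular. The latter reduces to distinctness of eigenvalue products such as $\lambda_{L,-\alpha}^{k}\lambda_{R,-a}^{k}=(\kappa e^{i\theta})^{k(|\alpha|+n_0|a|)}$, which is secured by the specific choice $\lambda_{R,j}=(\kappa e^{i\theta})^{|j|n_0}$: the factor of $n_0$ in the exponent forces the integers $|\alpha|+n_0|a|$ to be pairwise distinct for all $(\alpha,a)$ with $|\alpha|\le n_0-1<n_0$, and analogously in the $(\beta,b)$-direction. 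Given these separations, Lemma~C.7 of \cite{Ogata1} supplies the required inversions, and the equality $\caK_l(\bbA)=\caD_{\bbA}(\Lambda_{\lal_L}^l\otimes\Lambda_{\lal_R}^l)$ holds for every $l$ larger than some $l_\bbA\in\nan$ depending only on $n_0,k_R,k_L$.
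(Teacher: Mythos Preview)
Your approach is correct and matches the paper's, which simply records that the lemma follows by ``a routine argument, using Lemma~C.7 of \cite{Ogata1}'' without further detail; your two-inclusion structure, the analysis of the admissible $\tilde K_i$-words for the $\subset$ direction, and the Vandermonde extractions for the $\supset$ direction are exactly the intended routine.

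One point deserves tightening. The sentence ``the expansion \ldots decomposes into summands living in linearly independent L$\otimes$R-subspaces, so that each summand is individually in $\caK_l(\bbA)$'' is not valid as stated: knowing that $v=v_1+v_2+v_3\in\caK_l(\bbA)$ with the $v_i$ in complementary subspaces does not force $v_i\in\caK_l(\bbA)$. Consequently, when you write ``$A_1^{k_0}\tilde K_3A_1^{l-k_0-1}=\sum_\alpha\lambda_{L,-\alpha}^{k_0}\eijz{-\alpha,0}\otimes\Lambda_{\lal_R}^l$'' and then invert, you are implicitly treating a single $\tilde K_3$-summand as already lying in $\caK_l(\bbA)$, which it does not a priori. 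The fix is the one you essentially supply in your final paragraph: apply Lemma~C.7 directly to the full element $A_1^{k_0}A_2A_1^{l-k_0-1}\in\caK_l(\bbA)$, whose expansion carries coefficients $\lambda_{L,-\alpha}^{k_0}$, $\lambda_{L,\beta}^{-k_0}$, $\lambda_{R,-a}^{k_0}$, $\lambda_{R,b}^{-k_0}$ (up to $k_0$-independent factors). Your observation that the exponents $\pm|\alpha|$ and $\pm n_0|a|$ are pairwise distinct (since $|\alpha|\le n_0-1<n_0$) makes all these base ratios distinct, so a single Vandermonde inversion over $k_0$ separates \emph{all} terms simultaneously---no prior subspace decomposition is needed. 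The same remark applies to the longer products; vary the outer exponents and invoke the joint distinctness you already checked.
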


Using this, we obtain the following Lemma. The proof is the same as that of Proposition 3.1 \cite{Ogata1}.
\begin{lem}\label{lem:aprop}
The pentad $(n, (2n_0-1)(k_L+k_R+1), \puuz, \pddz,\bbA)$ satisfies {\it Condition 2}, defined in Definition 2.3 \cite{Ogata1}.
Therefore, by Lemma 2.9 \cite{Ogata1}, 
there exist a number $0<s_\bbA<1$, a state $\varphi_{\bbA}$ on $\Mat_{2n_0-1}\otimes\mkk$
and a positive element $e_{\bbA}\in\lmk \Mat_{2n_0-1}\otimes\mkk\rmk_+$,
such that $T_\bbA$ satisfies the Spectral Property II with respect to 
$(s_\bbA, e_\bbA,\varphi_\bbA)$ (see Definition 2.7 \cite{Ogata1}) and 
$s(e_{\bbA})=\puuz$, and $s(\varphi_{\bbA})=\pddz$.
Furthermore, the pentad $(n, (2n_0-1)(k_L+k_R+1), \puuz, \pddz,\bbA)$ satisfies {\it Condition 3} for $l_\bbA$, and 
the triple  $(n, (2n_0-1)(k_L+k_R+1), \bbA)$ satisfies {\it Condition 4}, for $(l_\bbA,l_\bbA)$.
Therefore, by Lemma 2.14, Proposition 2.6 of \cite{Ogata1},  $m_\bbA\le 2l_\bbA<\infty$ and
$M_{\bbA,\puuz,\pddz}<\infty$.  (Here $m_\bbA$, $M_{\bbA,\puuz,\pddz}$ defined in Definition 1.4 and (8) of \cite{Ogata1}.) 
From  Proposition 2.6 \cite{Ogata1}, $\caG_{l,\bbA}$ satisfies {\it Condition 1}(see Definition2.1 of \cite{Ogata1}).
\end{lem}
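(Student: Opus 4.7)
The strategy is to mirror the proof of Proposition 3.1 of \cite{Ogata1}, with Lemma \ref{lem:asp} providing the algebraic backbone. First I would use Lemma \ref{lem:asp} to record the explicit identification
\[
\caK_l(\bbA)=\caD_{\bbA}\lmk \Lambda_{\lal_L}^l\otimes \Lambda_{\lal_R}^l\rmk,\qquad l\ge l_{\bbA},
\]
so that the algebra generated by $\bbA$ is under complete control; in particular $\caD_{\bbA}$ contains the unit and all of the matrix units $\eijz{-\alpha,0}\otimes\eij{-a,0}$, $\eijz{0,\beta}\otimes\eij{0,b}$, $\eijz{-\alpha,\beta}\otimes\eij{-a,b}$ for $\alpha,\beta\in\{1,\ldots,n_0-1\}$, $a\in\{1,\ldots,k_R\}$, $b\in\{1,\ldots,k_L\}$.

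Next I would verify \emph{Condition 2}. Because $\Lambda_{\lal_L}$ (resp.\ $\Lambda_{\lal_R}$) restricted to $P^{(n_0-1,n_0-1)}_R$ (resp.\ $P^{(k_R,k_L)}_R$) is invertible, and symmetrically for the left projections, the left and right supports picked out of $\caK_l(\bbA)$ by the transfer map $T_\bbA$ are exactly $\puuz$ and $\pddz$. The remaining primitivity/cyclicity conditions on the corners $\puuz \caK_l(\bbA)\puuz$ and $\pddz\caK_l(\bbA)\pddz$ are verified by producing, inside $\caD_{\bbA}$, sufficiently many matrix units to generate the full corner algebras; this is exactly the computation carried out in Lemma~3.6 of \cite{Ogata1} and transposes verbatim to the present context using the explicit form of $\tilde K_1,\tilde K_2,\tilde K_3$ and the distinctness of the relevant $\lambda_{L,\alpha}$'s and $\lambda_{R,j}$'s.

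Once Condition 2 is in hand, Lemma~2.9 of \cite{Ogata1} immediately furnishes $0<s_\bbA<1$, the state $\varphi_\bbA$, and the positive operator $e_\bbA$, with the Spectral Property II and with the claimed supports $s(e_\bbA)=\puuz$, $s(\varphi_\bbA)=\pddz$. Conditions 3 and 4 are then extracted from the very same identification of $\caK_l(\bbA)$ in Lemma \ref{lem:asp}: Condition~3 for $l_\bbA$ is the stabilization statement of that Lemma, and Condition~4 for $(l_\bbA,l_\bbA)$ follows since $\caD_{\bbA}$ is a matrix-unit span (the analogues of Lemmas~3.7 and 3.8 of \cite{Ogata1}). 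Finally, Lemma~2.14 and Proposition~2.6 of \cite{Ogata1} give $m_\bbA\le 2l_\bbA<\infty$, $M_{\bbA,\puuz,\pddz}<\infty$, and that $\caG_{l,\bbA}$ satisfies \emph{Condition 1}.

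The main obstacle is the verification of \emph{Condition 2}, since one must simultaneously identify the correct support projections and argue primitivity on the corresponding corners. This is precisely where Lemma~\ref{lem:asp} is essential: because it exhibits the matrix-unit content of $\caD_\bbA$, the primitivity check reduces to a direct combinatorial inspection rather than a dynamical argument on the transfer operator. All other parts of the statement are mechanical consequences of the structural lemmas from \cite{Ogata1} once Condition 2 is established.
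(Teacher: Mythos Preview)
Your proposal is correct and matches the paper's approach exactly: the paper's proof consists of the single sentence ``The proof is the same as that of Proposition 3.1 \cite{Ogata1},'' and you have spelled out precisely what that entails, with Lemma~\ref{lem:asp} playing the role that the analogous structural computation plays in \cite{Ogata1}. There is nothing to add.
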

By the argument parallel to  Lemma 3.11-13 and Lemma 3.15, 3.16 of \cite{Ogata1}, we obtain the following.
\begin{lem}
There exists a completely positive map from the half-infinite chain ${\mathcal A}_{(-\infty,-1]}$
 to $\Mat_{2n_0-1}\otimes \mkk$, such that 
\begin{align*}
\bbL_\bbA(B):=\sum_{\mu^{(l)},\nu^{(l)}\in \{1,\ldots,n\}^{\times l}}
\braket{\ws{l}}{B\wsn{l}}
\lmk \wan{l}\rmk^{*}\rho_{\bbA} \wa{l},
\end{align*} 
if $B\in {\mathcal A}_{[-l,-1]}\simeq \otimes_{i=0}^{l-1}\mn$ for $l\in\nan$.
We have
\begin{align}\label{eq:lbran}
\Ran \bbL_\bbA=\pddz \lmk\Mat_{2n_0-1}\otimes \mkk\rmk\pddz.
\end{align}
For each $\sigma_L\in {\mathfrak E}_{n_0(k_L+1)}$,
under the identification $\Mat_{n_0(k_L+1)}\simeq \pddz\lmk \Mat_{2n_0-1}\otimes \mkk\rmk\pddz$,
define
$\Xi_{L,\bbA}(\sigma_L):\caA_{(-\infty,-1]}\to \cc$
by
\[
\Xi_{L,\bbA}(\sigma_L)(B)
:=\sigma_{L}(y_{\bbA}^{\frac 12} \bbL_\bbA(B)y_{\bbA}^{\frac 12}),\quad
B\in\caA_{(-\infty,-1]}.
\]
Then for $m\ge {\bbm}_{\bbA}$, 
the map
$\Xi_{L,\bbA}:{\mathfrak E}_{n_0(k_L+1)}\to 
\caS_{(-\infty,-1]}(H_{\Phi_{m,\bbA}})$ defines an
affine bijection.
An analogous statement holds for the right half infinite chain and defines
$\Xi_{R,\bbA}$, $\bbR_\bbA$.
The set $\caS_{\bbZ}(H_{\Phi_{m,\bbA}})$ consists of a unique state 
\begin{align}\label{eq:ainf}
\omega_{\bbA,\infty}=\bigotimes_{\bbZ}\rho_0,
\end{align}
where $\rho_0$ is a state on $\mn$ such that
$\rho_0=\braket{\chi^{(n)}_1}{\lmk\cdot\rmk \chi_1^{(n)}}$. 
\end{lem}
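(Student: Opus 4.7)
The plan is to run the arguments of Lemmas 3.11--3.13 and 3.15--3.16 of Part I \cite{Ogata1} with $\bbA$ in place of a generic element of $\ClassA$. The essential input has already been assembled in Lemma \ref{lem:aprop}: the pentad $(n,(2n_0-1)(k_L+k_R+1),\puuz,\pddz,\bbA)$ satisfies Conditions 2, 3, 4; $T_\bbA$ satisfies Spectral Property II with triple $(s_\bbA,e_\bbA,\varphi_\bbA)$ with $s(e_\bbA)=\puuz$, $s(\varphi_\bbA)=\pddz$; $\bbm_\bbA\le 2l_\bbA<\infty$; and $\caG_{l,\bbA}$ satisfies Condition 1. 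This is precisely the package of hypotheses on which the Part I machinery relies.

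First I would verify that $\bbL_\bbA$ is well-defined and completely positive, mirroring Lemma 3.11 of Part I. Consistency under the embedding $\caA_{[-l,-1]}\hookrightarrow\caA_{[-l',-1]}$ by tensoring with $\unit$ reduces to the identity $T_\bbA(\rho_\bbA)=\rho_\bbA$ built into Spectral Property II. Complete positivity is then immediate from the Kraus-like form of $\bbL_\bbA$. Next, the identification $\Ran\bbL_\bbA=\pddz\lmk\Mat_{2n_0-1}\otimes\mkk\rmk\pddz$ (Lemma 3.12 of Part I) splits: the inclusion $\subseteq$ follows from $s(\varphi_\bbA)=\pddz$, so that $\rho_\bbA=\pddz\rho_\bbA\pddz$ and thus $(\wan{l})^*\rho_\bbA\wa{l}$ lies in the stated corner; the reverse inclusion uses Lemma \ref{lem:asp} to write $\caK_l(\bbA)=\caD_\bbA\lmk\Lambda_{\lal_L}\otimes\Lambda_{\lal_R}\rmk^l$ for $l\ge l_\bbA$, whence these matrix elements span all of $\pddz\lmk\Mat_{2n_0-1}\otimes\mkk\rmk\pddz$.

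Second I would establish the affine bijection $\Xi_{L,\bbA}:\mathfrak{E}_{n_0(k_L+1)}\to\caS_{(-\infty,-1]}(H_{\Phi_{m,\bbA}})$ for $m\ge\bbm_\bbA$, following Lemma 3.15 of Part I. The scaling element $y_\bbA\in\pddz\lmk\Mat_{2n_0-1}\otimes\mkk\rmk\pddz$, whose existence is extracted from the spectral data as in Part I, guarantees that $\Xi_{L,\bbA}(\sigma_L)$ is a normalized state; frustration-freeness (Condition 1 for $\caG_{l,\bbA}$) gives that it annihilates $\Phi_{m,\bbA}(X)$ for every $X\subset[-l,-1]$ and is therefore a ground state; injectivity comes from surjectivity of $\bbL_\bbA$ onto the corner; and surjectivity is the MPS half-chain parametrization proved in Part I, whose proof uses only the Spectral Property II and Conditions 2--4 now in hand. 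The symmetric construction produces $\Xi_{R,\bbA}$ and $\bbR_\bbA$. Finally, for uniqueness of the bulk state (Lemma 3.16 of Part I), any $\omega\in\caS_\bbZ(H_{\Phi_{m,\bbA}})$ restricts to half-chain ground states on both sides and is therefore of the form $\Xi_{L,\bbA}(\sigma_L)$ and $\Xi_{R,\bbA}(\sigma_R)$; compatibility on overlapping finite intervals, together with the one-dimensional fixed-point space of $T_\bbA$ at eigenvalue $1$, pins down $\sigma_L,\sigma_R$ uniquely. A direct evaluation of one-site marginals against $A_1=\Lambda_{\lal_L}\otimes\Lambda_{\lal_R}$ identifies $\omega$ as the product state $\bigotimes_{\bbZ}\rho_0$ with $\rho_0=\la\chi_1^{(n)},(\cdot)\chi_1^{(n)}\ra$.

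The main obstacle I anticipate is the range identification and half-chain parametrization step, because the projections $\puuz,\pddz$ have nontrivial block structure inside the larger ambient algebra $\Mat_{2n_0-1}\otimes\mkk$ (rather than the $\mnz\otimes\mkk$ of Part I), so one has to carefully track how the monomials in $\bbA$ sit relative to these corners, using the explicit description of $\caD_\bbA$ in Lemma \ref{lem:asp}. Once the linear-algebraic bookkeeping is done, the analytic content of the Part I proofs transcribes mechanically.
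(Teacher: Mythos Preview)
Your proposal is correct and matches the paper's approach exactly: the paper itself offers no independent proof, merely stating that the lemma follows ``by the argument parallel to Lemma 3.11--13 and Lemma 3.15, 3.16 of \cite{Ogata1}.'' You have in fact supplied more detail than the paper does, correctly identifying Lemma \ref{lem:aprop} as the source of the required hypotheses (Conditions 2--4, Spectral Property II, Condition 1) and flagging the only genuine bookkeeping difference, namely that the ambient algebra is $\Mat_{2n_0-1}\otimes\mkk$ with corner projections $\puuz,\pddz$ rather than $\mnz\otimes\mkk$ with $\hpu,\hpd$.
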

The statement of Lemma 3.25 of \cite{Ogata1} also holds for $\bbA$.
\begin{lem}\label{lem:mme}
Let $m\ge \bbm_{\bbA}$. There exists a constant $C_\bbA^{'''}>0$ satisfying the following. : Let $M\in\nan$ 
and $\varphi$ be a state on $\caA_{\bbZ}$. Assume that we have $\varphi(\tau_{i}(1-G_{m,\bbA}))=0$ for all $i\in\bbZ$ with
$[i,i+m-1]\subset [-M,M]^c$.
Then for any $L\in\nan$ with $M+1\le L$ and $A\in\caA_{[-L+1,L-1]^c}$, we have
\begin{align}
\lv \varphi \lmk A\rmk-\lmk \left. \omega_{\bbA,\infty}\rv_{\caA_{(-\infty,-1]}}\otimes \left. \omega_{\bbA,\infty}\rv_{\caA_{[0,\infty)}}\rmk\lmk A\rmk\rv
\le C_\bbA^{'''} s_\bbA^{L-M}\lV A\rV.
\end{align}
\end{lem}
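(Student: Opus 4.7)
The plan is to mirror the proof of Lemma 3.25 of \cite{Ogata1} by replacing $\bbB$ with $\bbA$ throughout, using the properties of the pentad $(n,(2n_0-1)(k_L+k_R+1),\puuz,\pddz,\bbA)$ established in Lemma \ref{lem:aprop} and the characterization of half-chain ground states via $\Xi_{L,\bbA},\Xi_{R,\bbA}$ from the preceding Lemma. The three essential ingredients are: (i) the Spectral Property II estimate $\lV T_\bbA^N(1-P_{\{1\}}^{T_\bbA})\rV\le C\, s_\bbA^N$; (ii) the affine bijections $\Xi_{L,\bbA},\Xi_{R,\bbA}$ between ground states on half-infinite chains and states on finite matrix algebras; and (iii) the fact that $\caS_{\bbZ}(H_{\Phi_{m,\bbA}})$ consists of the single product state $\omega_{\bbA,\infty}=\bigotimes_{\bbZ}\rho_0$.

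The key steps, in order, are as follows. First I would reduce to the case $A=A_LA_R$ with $A_L\in\caA_{(-\infty,-L]}$ and $A_R\in\caA_{[L,\infty)}$ by splitting $\caA_{[-L+1,L-1]^c}=\caA_{(-\infty,-L]}\otimes\caA_{[L,\infty)}$; the product form of $\omega_{\bbA,\infty}$ ensures the target factorizes accordingly. Second, the hypothesis $\varphi(\tau_i(1-G_{m,\bbA}))=0$ for $[i,i+m-1]\subset[-M,M]^c$ implies, via the standard frustration-free argument, that $\varphi|_{\caA_{[M+1,\infty)}}$ lies in $\caS_{[M+1,\infty)}(H_{\Phi_{m,\bbA}})$, with the analogous statement on the left. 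After translation by $-(M+1)$, the bijection $\Xi_{R,\bbA}$ represents this restriction as $\sigma_R(y_\bbA^{1/2}\bbR_\bbA(\tau_{-M-1}(\cdot))y_\bbA^{1/2})$ for some state $\sigma_R$ on $\puuz(\Mat_{2n_0-1}\otimes\mkk)\puuz$, and similarly for the left half with some $\sigma_L$.

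Third, I would propagate the expression for $\bbR_\bbA$ across the $(L-M-1)$ spin sites separating the defect region from the support of $A_R$; this introduces a factor $T_\bbA^{L-M-1}$ acting on a bounded operator built from $A_R$. Decomposing $T_\bbA^{L-M-1}=P_{\{1\}}^{T_\bbA}+T_\bbA^{L-M-1}(1-P_{\{1\}}^{T_\bbA})$, the off-spectrum piece is bounded by $C\, s_\bbA^{L-M-1}\lV A_R\rV$ by Spectral Property II, while the rank-one piece $P_{\{1\}}^{T_\bbA}(X)=e_\bbA\varphi_\bbA(X)$ decouples from $\sigma_R$ and produces a value depending only on $A_R$. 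Running the identical manipulation on $\omega_{\bbA,\infty}$, which also satisfies the hypothesis, shows that the leading contributions coincide, so their difference is bounded by $C\, s_\bbA^{L-M-1}\lV A_R\rV$. An identical argument on the left yields a comparable bound in $\lV A_L\rV$, and combining the two (using the product form of the target state and $\lV A\rV=\lV A_L\rV\,\lV A_R\rV$, with a routine bound on cross terms) produces the claimed inequality.

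The main obstacle I anticipate is the bookkeeping required in the third step: one must carefully track, through the definition of $\bbR_\bbA$, which iterations of $T_\bbA$ are absorbed into the boundary state $\sigma_R$ via $y_\bbA^{1/2}(\cdots) y_\bbA^{1/2}$ and which remain free to feed the spectral decomposition of $T_\bbA$. Since the parallel bookkeeping is carried out in the proof of Lemma 3.25 of \cite{Ogata1} for $\bbB$, I expect it to transfer verbatim once the data $(s_\bbA,e_\bbA,\varphi_\bbA,\Xi_{L,\bbA},\Xi_{R,\bbA},\omega_{\bbA,\infty})$ replaces its $\bbB$-counterparts. The only genuinely new verification is that applying the rank-one piece $e_\bbA\varphi_\bbA$ recovers precisely the product-state value $\omega_{\bbA,\infty}(A_R)$; this follows from \eqref{eq:ainf} together with the support properties $s(e_\bbA)=\puuz$ and $s(\varphi_\bbA)=\pddz$, which isolate the $\cnz{1}\otimes\fii{0}$-channel that carries the product state $\bigotimes_{\bbZ}\rho_0$.
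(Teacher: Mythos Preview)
Your proposal is correct and matches the paper's approach exactly: the paper does not give a separate proof but simply asserts that ``The statement of Lemma 3.25 of \cite{Ogata1} also holds for $\bbA$,'' relying on the same argument with $\bbB$ replaced by $\bbA$ and the data $(s_\bbA,e_\bbA,\varphi_\bbA,\puuz,\pddz,\omega_{\bbA,\infty})$ supplied by Lemma~\ref{lem:aprop} and the preceding lemma. Your outline of that transfer is accurate.
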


Furthermore, as in Lemma 3.15-22 of \cite{Ogata1}, we obtain the followings.
\begin{lem}\label{lem:a1a5a}
 [A1]-[A5] in \cite{Ogata2} hold for $H_{\Phi_{m, \bbA}}$ for $m\ge 2l_\bbA$.
\end{lem}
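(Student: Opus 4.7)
The plan is to reuse the arguments of Lemmas 3.15--3.22 of Part I \cite{Ogata1}, which establish [A1]--[A5] for MPS Hamiltonians generated by elements of $\ClassA$. Inspection of those proofs shows that the only inputs they require are: (i) {\it Conditions 1--4} for the relevant pentad, (ii) finite upper bounds on $m_\bbA$, $l_\bbA$, and $M_{\bbA,\puuz,\pddz}$, (iii) a description of the sets $\caS_{(-\infty,-1]}(H_{\Phi_{m,\bbA}})$ and $\caS_{[0,\infty)}(H_{\Phi_{m,\bbA}})$ of half-infinite ground states as images of density matrices on $\pddz(\Mat_{2n_0-1}\otimes\mkk)\pddz$ and $\puuz(\Mat_{2n_0-1}\otimes\mkk)\puuz$ under the completely positive maps $\bbL_\bbA$ and $\bbR_\bbA$, (iv) uniqueness of the translation-invariant bulk ground state together with its product form $\bigotimes_\bbZ\rho_0$, and (v) an exponential clustering bound for states with locally vanishing expectation of the interaction terms. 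All five inputs have been verified for $\bbA$ in the preceding lemmas of this subsection.

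Concretely, I would argue as follows. [A1] (a uniform bulk gap for $m\ge 2l_\bbA$) is obtained from Nachtergaele's martingale method as in item {\it 5} of Lemma \ref{lem:vpath}, applied with the {\it Condition 1} verification of Lemma \ref{lem:aprop}. [A2] (frustration-freeness and uniformly bounded local degeneracy) is immediate from $\ker (H_{\Phi_{m,\bbA}})_{[0,N-1]}=\caG_{N,\bbA}$ for $m\ge m_\bbA$, combined with the dimension bound $\dim\caG_{N,\bbA}\le (2n_0-1)^2(k_R+1)(k_L+1)$ that follows from Lemma \ref{lem:asp}. [A3] and [A4] (the structure and parameterization of edge ground states on the half-infinite chains) reduce at once to the affine bijectivity of $\Xi_{L,\bbA}$ and $\Xi_{R,\bbA}$ onto the corresponding ground state spaces, established in the preceding lemma. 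Finally, [A5] (the decoupling of bulk-frustrated states into a tensor product of half-infinite ground states) is precisely the content of Lemma \ref{lem:mme}, together with the product form (\ref{eq:ainf}) of $\omega_{\bbA,\infty}$.

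The main obstacle is purely organizational: one has to check that the proofs of Lemmas 3.15--3.22 of \cite{Ogata1} never exploit the hypothesis $\bbB\in\ClassA$ beyond the five structural consequences listed above, and in particular do not use the specific form of the associated quadruplet $(\lal_\bbB,\bbD_\bbB,\bbG_\bbB,\oo_\bbB)$. Since this is the case, no new computation is needed; the substitutions $\bbB\mapsto\bbA$, $\hpu\mapsto\puuz$, $\hpd\mapsto\pddz$ turn those proofs into proofs of the present lemma verbatim.
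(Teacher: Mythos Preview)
Your proposal is correct and matches the paper's own proof, which consists of the single sentence ``The proof is a straight forward application of the argument of Part I \cite{Ogata1}.'' You have simply spelled out which structural inputs from the preceding lemmas (Lemmas \ref{lem:asp}--\ref{lem:mme}) feed into the arguments of Lemmas 3.15--3.22 of \cite{Ogata1}, which is exactly the intended reasoning.
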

\begin{proof}
The proof is a straight forward application of 
the argument of Part I \cite{Ogata1}.
\end{proof}
\begin{notation}\label{nota:bbk}
From Theorem 1.2 of \cite{Ogata2}, for each $m\ge 2l_\bbA$, there exists a $\bbV\in\ClassA$ with respect to
some $(n_{0,\bbV},k_{R,\bbV},k_{L,\bbV},\lal_\bbV,\bbD_{\bbV},\bbG_\bbV,Y_\bbV)$ 
 such that 
\begin{align}\label{eq:aakk}
\caS_{(-\infty,-1]}(H_{\Phi_{m, \bbA}})=\caS_{(-\infty,-1]}(H_{\Phi_{m_1,\bbV}}),\quad
\caS_{[0,\infty)}(H_{\Phi_{m, \bbA}})=\caS_{[0,\infty)}(H_{\Phi_{m_1,\bbV}}),\quad
\caS_{\bbZ}(H_{\Phi_{m, \bbA}})=\caS_{\bbZ}(H_{\Phi_{m_1,\bbV}})=\{\omega_{\bbV,\infty}\}=\{\omega_{\bbA,\infty}\},
\end{align}
for all $m_1\ge 2l_\bbV$.
Furthermore, $H_{\Phi_{m, \bbA}}$ and $H_{\Phi_{m_1,\bbV}}$
are type II-$C^1$-equivalent, by Corollary 1.4 of \cite{Ogata2}.
\end{notation}
\begin{lem}
This $\bbV$ in Notation \ref{nota:bbk} belongs to $\Class(n,1,n_0(k_R+1)-1, n_0(k_L+1)-1)$.
\end{lem}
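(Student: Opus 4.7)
The plan is to verify three invariants of $\bbV$: $n_{0,\bbV} = 1$, $(k_{R,\bbV},k_{L,\bbV}) = (n_0(k_R+1)-1,\, n_0(k_L+1)-1)$, and the non-degeneracy $\lal_\bbV \in \Wo'(k_{R,\bbV},k_{L,\bbV})$. Once these are in hand, Lemma~\ref{lem:ad} automatically forces $Y_\bbV = 0$, and $\bbV$ then lies in $\Class(n,1,n_0(k_R+1)-1,n_0(k_L+1)-1)$ as claimed.

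To pin down $n_{0,\bbV}$, I would appeal to the bulk ground state. By Notation~\ref{nota:bbk} and (\ref{eq:ainf}), the unique translation-invariant ground state of $H_{\Phi_{m_1,\bbV}}$ equals $\omega_{\bbA,\infty} = \bigotimes_{\bbZ}\rho_0$, a pure product state. On the other hand, for $\bbV \in \ClassA$ the bulk state is realized as a quantum-channel trace against the rank-$n_{0,\bbV}$ Perron--Frobenius eigenvector of $T_{\oo_\bbV}$, with $\oo_\bbV \in \Prim(n,n_{0,\bbV})$. Being a pure product state is incompatible with $n_{0,\bbV} \ge 2$, forcing $n_{0,\bbV} = 1$. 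The half-chain degeneracies then determine $k_{R,\bbV}$ and $k_{L,\bbV}$: the affine bijections $\Xi_{L,\bbA},\Xi_{R,\bbA}$ identify $\caS_{(-\infty,-1]}(H_{\Phi_{m,\bbA}})$ and $\caS_{[0,\infty)}(H_{\Phi_{m,\bbA}})$ with the state spaces on $\Mat_{n_0(k_L+1)}$ and $\Mat_{n_0(k_R+1)}$, while Theorem 1.18 (vi) of \cite{Ogata1} identifies the corresponding spaces for $\bbV$ with state spaces on $\Mat_{n_{0,\bbV}(k_{L,\bbV}+1)}$ and $\Mat_{n_{0,\bbV}(k_{R,\bbV}+1)}$. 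Combining these via (\ref{eq:aakk}) with $n_{0,\bbV} = 1$ yields the desired values of $k_{R,\bbV}$ and $k_{L,\bbV}$.

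The substantive part is verifying $\lal_\bbV \in \Wo'$. For $\bbV \in \ClassA$ with $n_{0,\bbV} = 1$, Definition~\ref{def:cl1} together with Lemma~\ref{lem:dbb} encodes $\lal_\bbV$ as the spectrum of the diagonal factor $\Lambda_{\lal_\bbV}$ appearing in the edge decomposition~(\ref{eq:cl1}) of the matrices of $\bbV$, and these values are in turn read off from the spectrum of $T_\bbV$ restricted to the appropriate edge subspace. By the matching of half-chain ground state spaces in (\ref{eq:aakk}), this spectral data for $\bbV$ must coincide with the corresponding data for $\bbA$. For $\bbA$, Lemma~\ref{lem:asp} gives $\caK_l(\bbA) = \caD_\bbA\,(\Lambda_{\lal_L}^l \otimes \Lambda_{\lal_R}^l)$, so the relevant right-edge eigenvalues are exactly the products $\lambda_{L,-\alpha}\lambda_{R,-j} = (\kappa e^{i\theta})^{\alpha + j n_0}$ with $\alpha = 0,\ldots,n_0-1$ and $j = 0,\ldots,k_R$, which parametrize $\{(\kappa e^{i\theta})^k\}_{k=0}^{n_0(k_R+1)-1}$ bijectively; analogously on the left. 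Since $0<\kappa<1$ and $\theta/(2\pi) \notin \bbQ$, these $n_0(k_R+1)$ (resp.~$n_0(k_L+1)$) values are pairwise distinct, with strictly decreasing moduli $\kappa^k$, meeting exactly the hypotheses of Definition~\ref{def:class}.

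The principal obstacle is this last step: reading $\lal_\bbV$ intrinsically off the matching half-chain data. One must retrace the construction of $\bbV$ in \cite{Ogata2} and verify that $\lal_\bbV$ really is determined by the edge spectrum of $T_\bbV$; this is also where the particular scaling $\lambda_{R,j} = (\kappa e^{i\theta})^{|j| n_0}$ in (\ref{eq:rd}), with the factor $n_0$ synchronizing the scales of $\lal_L$ and $\lal_R$, proves essential, since it guarantees that the $n_0(k_R+1)$ and $n_0(k_L+1)$ product eigenvalues remain pairwise distinct after multiplication.
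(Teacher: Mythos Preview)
Your first two steps are sound and match the paper: $n_{0,\bbV}=1$ via the minimal standard triple for the product state $\omega_{\bbA,\infty}$ and uniqueness from \cite{fnwpure}, and the edge dimensions via the affine bijections $\Xi_{L,\bbA}^{-1}\circ\Xi_{L,\bbV}$, $\Xi_{R,\bbA}^{-1}\circ\Xi_{R,\bbV}$.

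The gap is in the third step, and it is exactly the one you flag. Your plan is to extract $\lal_\bbV$ as an invariant of the half-chain ground state spaces, match it against the product spectrum $\{\lambda_{L,\alpha}\lambda_{R,j}\}$ of $\bbA$, and then invoke Lemma~\ref{lem:ad} to get $Y_\bbV=0$. But nothing in the paper (or in \cite{Ogata1},\cite{Ogata2}) establishes that $\lal_\bbV$ is determined by $\caS_{(-\infty,-1]}$ and $\caS_{[0,\infty)}$ alone, and your appeal to ``the spectrum of $T_\bbV$ restricted to the appropriate edge subspace'' does not pin down $\lal_\bbV$ in any direct way. In fact the paper never computes $\lal_\bbV$; it only shows distinctness, and even that only after first disposing of $Y_\bbV$.

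The paper's route reverses your order. Using that $\Xi_{L,\bbA}^{-1}\circ\Xi_{L,\bbV}$ intertwines the half-chain shift, one obtains the identity
\[
\sigma_\xi\bigl(y_\bbV^{1/2}(\Ad V_1^*)^N(\bbL_\bbV(A_X))\,y_\bbV^{1/2}\bigr)
=\sum_{\beta,\beta',b,b'}(\overline{\lambda_{L\beta}}\lambda_{L\beta'}\overline{\lambda_{Rb}}\lambda_{Rb'})^N\,c_{\beta\beta'bb'}(\xi,X),
\]
a finite linear combination of \emph{distinct} geometric sequences in $N$ (this is where the irrational choice of $\theta$ enters). For $\xi$ and $X$ supported on a single $\lambda$-block of $\Lambda_{\lal_\bbV}$, the left side equals $|\lambda|^{2N}$ times a polynomial in $N$ coming from $(\Ad(1+Y_\bbV^*))^N$. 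Matching against the right side via Lemma~C.7 of \cite{Ogata1} kills the polynomial part, giving $Y_\bbV=0$ directly. Only then, with $Y_\bbV=0$ in hand, does one rerun the comparison to show that two coincident $\lambda_{i,\bbV}=\lambda_{j,\bbV}$ would force incompatible vanishing/non-vanishing of the coefficients $c_{\beta\beta bb}$, a contradiction. So $Y_\bbV=0$ is proved \emph{before} non-degeneracy, not deduced from it via Lemma~\ref{lem:ad}, and the argument never identifies $\lal_\bbV$ with the products $\lambda_{L,\alpha}\lambda_{R,j}$ --- only $|\lambda_{i,\bbV}|$ is matched to some $|\lambda_{L\beta}\lambda_{Rb}|$.
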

\begin{proof}
We already know that $\bbV\in\Class A$.
Therefore, there exist
$n_{0,\bbV}\in\nan$, $k_{R\bbV},k_{L\bbV}\in\nan\cup\{0\}$, $\;(\lal_\bbV,\bbD_\bbV,\bbG_\bbV,Y_\bbV)\in\caT(k_{R\bbV},k_{L\bbV})$
such that 
$\bbV\in {\mathfrak B}(n,n_{0,\bbV},k_{R\bbV},k_{L\bbV},\lal_\bbV,\bbD_\bbV,\bbG_\bbV,Y_\bbV)$.
We have to show that $n_{0,\bbV}=1$, $k_{R\bbV}=n_0(k_R+1)-1$, $k_{L\bbV}=n_0(k_L+1)-1$, $Y_\bbV=0$, and $\lal_\bbV\in\Wo'(k_{R\bbV},k_{L\bbV})$.

First we show $n_{0,\bbV}=1$.
Recall from Part I \cite{Ogata1} Lemma 3.2, we obtain $\oo_{\bbV}\in\Prim(n,n_{0,\bbV})$ by
\[
\omega_{\mu,\bbV}\otimes \eij{00}=\lmk\unit\otimes \eij{00}\rmk
V_{\mu}\lmk\unit\otimes \eij{00}\rmk,\quad \mu=1,\ldots,n.
\]
By the primitivity, from C.6 of \cite{Ogata1}, there exist a faithful positive linear functional $\varphi_{\oo_\bbV}$ on $\Mat_{n_{0,\bbV}}$ and 
a strictly positive element $e_{\oo_\bbV}\in \Mat_{n_{0,\bbV}}$ such that 
$\lim_{N\to\infty} T_{\oo_\bbV}^N(A)=\varphi_{\oo_\bbV}(A)e_{\oo_\bbV}$ for all $A\in \Mat_{n_{0,\bbV}}$.
In particular, we have $\varphi_{\oo_\bbV}(e_{\oo_\bbV})=1$.

On the other hand, by Lemma 2.9 \cite{Ogata1}, we have
$\lim_{N\to\infty} T_{\bbV}^N(A)=\varphi_{\bbV}(A)e_\bbV$ for all $A\in \Mat_{n_{0,\bbV}}\otimes \Mat_{k_{R\bbV}+k_{L\bbV}+1}$.
Sandwiching this by $\widehat{E_{00}^{(k_{R\bbV},k_{L\bbV})}} =\unit\otimes E_{00}^{(k_{R\bbV},k_{L\bbV})}$,
we obtain
\begin{align}\label{eq:ov}
&\varphi_\bbV\lmk X\otimes  E_{00}^{(k_{R\bbV},k_{L\bbV})}\rmk\cdot
\widehat{E_{00}^{(k_{R\bbV},k_{L\bbV})}} 
e_\bbV
\widehat{E_{00}^{(k_{R\bbV},k_{L\bbV})}} \nonumber\\
&=
\lim_{N\to\infty} 
\widehat{E_{00}^{(k_{R\bbV},k_{L\bbV})}} T_{\bbV}^N\lmk X\otimes  E_{00}^{(k_{R\bbV},k_{L\bbV})}\rmk\widehat{E_{00}^{(k_{R\bbV},k_{L\bbV})}} 
=
\lim_{N\to\infty} T_{\oo_\bbV}^N\lmk X\rmk \otimes E_{00}^{(k_{R\bbV},k_{L\bbV})}\nonumber\\
&=\varphi_{\oo_\bbV}\lmk X\rmk e_{\oo_\bbV}\otimes E_{00}^{(k_{R\bbV},k_{L\bbV})},
\end{align}
for any $X\in\Mat_{n_{0,\bbV}}$.
In the second equality, we used the property $\widehat{P_L^{(n_{0,\bbV}k_{R\bbV},k_{L\bbV})}}\widehat{V_{\mu^{(l)}}} \widehat{P_R^{(n_{0,\bbV},k_{R\bbV},k_{L\bbV})}} 
=\widehat{\omega_{\mu^{(l)},\bbV}}\otimes E_{00}^{(n_{0},k_{R\bbV},k_{L\bbV})}$, for $\mu^{(l)}\in\{1,\ldots,n\}^{\times l}$.
Substituting $X=e_{\oo_\bbV}$ to this equality, we obtain
\begin{align}\label{eq:ove}
 e_{\oo_\bbV}\otimes E_{00}^{(k_{R\bbV},k_{L\bbV})}
=\varphi_\bbV\lmk e_{\oo_\bbV} \otimes  E_{00}^{(k_{R\bbV},k_{L\bbV})}\rmk\cdot
\widehat{E_{00}^{(n_{0,\bbV},k_{R\bbV},k_{L\bbV})}} 
e_\bbV
\widehat{E_{00}^{(n_{0,\bbV},k_{R\bbV},k_{L\bbV})}}.
\end{align}
We used $\varphi_{\oo_\bbV}(e_{\oo_\bbV})=1$ on the left hand side.
Substituting (\ref{eq:ove}) to (\ref{eq:ov}), we obtain 
\begin{align}\label{eq:ovp}
\varphi_{\oo_\bbV}\lmk X\rmk
=\frac{\varphi_\bbV\lmk X\otimes  E_{00}^{(k_{R\bbV},k_{L\bbV})}\rmk}{\varphi_\bbV\lmk e_{\oo_\bbV}\otimes  E_{00}^{(k_{R\bbV},k_{L\bbV})}\rmk},\quad X\in\Mat_{n_{0,\bbV}}.
\end{align}
We set $\uu\in \Mat_{n_{0,\bbV}}^{\times n}$ by $u_\mu:=e_{\oo_\bbV}^{-\frac 12}\lmk \omega_{\mu,\bbV}\rmk  e_{\oo_\bbV}^{\frac 12}$, $\mu=1,\ldots,n$.
Then $T_\uu$ is a unital CP map on $\Mat_{n_{0,\bbV}}$ and 
 $\varphi_\uu:=\varphi_{\oo_\bbV}\circ \Ad e_{\oo_\bbV}^{\frac 12}$ is a faithful $T_\uu$-invariant state on $\Mat_{n_{0,\bbV}}$.
From Lemma 3.16 of \cite{Ogata1} and (\ref{eq:ove}), (\ref{eq:ovp}), the unique element $\omega_{\bbV,\infty}$ in 
$\caS_{\bbZ}(H_{\Phi_{m, \bbA}})=\caS_{\bbZ}(H_{\Phi_{m_1,\bbV}})$ is
\begin{align*}
&\omega_{\bbV,\infty}\lmk
A
\rmk=
\sum_{\mu^{(l)},\nu^{(l)}\in \{1,\ldots,n\}^{\times l}}\braket{\ws{l}}{A\wsn{l}}
\varphi_{\bbV}\lmk \widehat{V_{\mu^{(l)}}}
e_{\bbV}
\lmk \widehat{V_{\nu^{(l)}}}\rmk^*\rmk\nonumber\\
&=
\sum_{\mu^{(l)},\nu^{(l)}\in \{1,\ldots,n\}^{\times l}}\braket{\ws{l}}{A\wsn{l}}
\varphi_{\bbV}\lmk \widehat{P_L^{(n_{0,\bbV},k_{R\bbV},k_{L\bbV})}} \widehat{V_{\mu^{(l)}}}\widehat{P_R^{(n_{0,\bbV},k_{R\bbV},k_{L\bbV})}} 
e_{\bbV}\widehat{P_R^{(n_{0,\bbV},k_{R\bbV},k_{L\bbV})}} 
\lmk \widehat{V_{\nu^{(l)}}}\rmk^*
\widehat{P_L^{(n_{0,\bbV},k_{R\bbV},k_{L\bbV})}} \rmk\\
&=
\sum_{\mu^{(l)},\nu^{(l)}\in \{1,\ldots,n\}^{\times l}}\braket{\ws{l}}{A\wsn{l}}
\varphi_{\bbV}\lmk \lmk\widehat{\omega_{\mu^{(l)},\bbV}}\otimes  E_{00}^{(k_{R\bbV},k_{L\bbV})}\rmk
e_{\bbV}
\lmk\widehat{\omega_{\nu^{(l)},\bbV}}\otimes  E_{00}^{(k_{R\bbV},k_{L\bbV})}\rmk^*\rmk\nonumber\\
&=
\sum_{\mu^{(l)},\nu^{(l)}\in \{1,\ldots,n\}^{\times l}}\braket{\ws{l}}{A\wsn{l}}
\varphi_{\oo_\bbV}\lmk \widehat{\omega_{\mu^{(l)},\bbV}}
e_{\oo_\bbV}
\lmk\widehat{\omega_{\nu^{(l)},\bbV}}\rmk^*\rmk\nonumber\\
&=
\sum_{\mu^{(l)},\nu^{(l)}\in \{1,\ldots,n\}^{\times l}}\braket{\ws{l}}{A\wsn{l}}
\varphi_{\uu}\lmk \widehat{u_{\mu^{(l)}}}
\lmk\widehat{u_{\nu^{(l)}}}\rmk^*\rmk,\nonumber\\
&A\in\caA_{[i,i+l-1]},\;i\in \bbZ,\;l\in \nan .
\end{align*}
In the second equality we used the fact that the supports of $e_\bbV$, $\varphi_\bbV$ are
$\widehat{P_R^{(n_{0,\bbV},k_{R\bbV},k_{L\bbV})}} $, $\widehat{P_L^{(n_{0,\bbV},k_{R\bbV},k_{L\bbV})}}$.
In the third equality, we used the property $\widehat{P_L^{(n_{0,\bbV},k_{R\bbV},k_{L\bbV})}}\widehat{V_{\mu^{(l)}}} \widehat{P_R^{(n_{0,\bbV},k_{R\bbV},k_{L\bbV})}} 
=\widehat{\omega_{\mu^{(l)},\bbV}}\otimes E_{00}^{(k_{R\bbV},k_{L\bbV})}$.
The eigenspace of $T_\uu$ for the eigenvalue $1$ is $\bbC\unit$ and $(\Mat_{n_{0,\bbV}}, \uu,\varphi_\uu)$ is a minimal standard triple which right generates $\omega_{\bbV,\infty}$.
(Recall Appendix C of \cite{Ogata2}.)
By (\ref{eq:ainf}), $\omega_{\bbV,\infty}=\omega_{\bbA,\infty}$ is right-generated by 
a minimal standard triple $(\Mat_1=\bbC,\uup,\id_{\bbC})$, where
$\uup=(\upsilon_{\mu})\in\bbC^{\times n}$ is given by 
$\upsilon_{1}=1$ and $\upsilon_{\mu}=0$ for $\mu\ge 2$.
From this and the uniqueness of the minimal representation proven in \cite{fnwpure} (see Theorem C.3 of \cite{Ogata2}), we conclude that there exists a $C^*$-isomorphism $\Theta: \Mat_{n_{0,\bbV}}\to \bbC$
such that $\upsilon_{\mu}\upsilon_{\mu}^*=\Theta(u_{\mu}u_{\mu}^*)$ for all $\mu=1,\ldots, n$.
Therefore, $n_{0,\bbV}=1$ and $\Mat_{n_{0,\bbV}}=\bbC$.

Second, note that from the first and second equality of  (\ref{eq:aakk}) and Lemma 3.13 of \cite{Ogata1},
the map
$\Xi_{L,\bbA}^{-1}\circ \Xi_{L\bbV}$ (resp. $\Xi_{R,\bbA}^{-1}\circ \Xi_{R\bbV}$) is a well-defined affine bijection from
$\caE_{k_{L\bbV}+1}$ (resp. $\caE_{k_{R\bbV}+1}$) to $\caE_{n_0(k_L+1)}$ (resp. $\caE_{n_0(k_R+1)}$). Recall that
 $\caE_k$ denotes the set of all states on $\mk$.
In particular, we have $k_{R\bbV}+1=n_0(k_R+1)$ and $k_{L\bbV}+1=n_0(k_L+1)$.

Third, we show $Y_\bbV=0$. 
We note that from the choice of $\lal_L$, $\lal_R$,
the numbers 
\begin{align}\label{eq:llll}
\{\overline{\lambda_{L\beta}}\lambda_{L\beta'}
\overline{\lambda_{Rb}}\lambda_{Rb'}\mid 
\beta,\beta'=0,\ldots, n_0-1,\; b,b'=0,\ldots,k_L
\}
\end{align}
are distinct. Furthermore, $\overline{\lambda_{L\beta}}\lambda_{L\beta'}
\overline{\lambda_{Rb}}\lambda_{Rb'}$ is a positive real number if and only if
$\beta=\beta'$ and $b=b'$.

From $\Mat_{n_{0,\bbV}}=\bbC$, we have $\omega_{1,\bbV}=u_{1}$ and it is an element in $\bbC$ with absolute value $1$.
For $\mu\ge 2$, we have $\omega_{\mu,\bbV}=0$.
Note, in particular, that
$V_1$ is of the form
\begin{align}\label{eq:k1}
V_1=\omega_1\Lambda_{\lal_\bbV}\lmk \unit+Y_\bbV\rmk +W_1,
\end{align}
with some element $W_1$ , such that 
\begin{align}\label{eq:W}
P_L^{(k_{R\bbV}, k_{L\bbV})}W_1=\sum_{\stackrel{i,j=0,\ldots k_{L\bbV}}{\lv\lambda_{i,\bbV}\rv>\lv \lambda_{j,\bbV}\rv}}
E_{ii}^{(k_{R\bbV}, k_{L\bbV})}W_1E_{jj}^{(k_{R\bbV}, k_{L\bbV})},
\end{align}
(Recall that $\bbV\in \Mat_{n_{0\bbV}}\otimes \caD\lmk k_{R\bbV},k_{L\bbV},\bbD_\bbV,\bbG_\bbV\rmk\Lambda_{\lal_\bbV}\lmk 1+Y_\bbV\rmk$ from Definition 1.13 of \cite{Ogata1}.)
From (\ref{eq:k1}) and (\ref{eq:W}),
we have 
\begin{align}\label{eq:k1ll}
\lmk\sum_{i: i\ge 0, \lambda_{i\bbV}=\lambda}E_{ii}^{(k_{R\bbV},k_{L\bbV})}\rmk
 V_1^N\lmk\sum_{i: i\ge 0, \lambda_{i\bbV}=\lambda}E_{ii}^{(k_{R\bbV},k_{L\bbV})}\rmk
=
\lmk \omega_1\lambda\lmk \unit+Y_\bbV\rmk\rmk^N
\lmk\sum_{i: i\ge 0,\lambda_{i\bbV}=
\lambda}E_{ii}^{(k_{R\bbV},k_{L\bbV})}\rmk,
\end{align}
for any $N\in\nan$ and $\lambda\in\{\lambda_{i\bbV}\}_{i=1}^{k_{L\bbV}}$.

For each nonzero $\xi\in \bbC^{k_{L\bbV}+1}$, we define a state $\sigma_\xi$ on  $\Mat_{k_{L\bbV}+1}$
given by a density matrix 
\[
C_{\xi}^{-1}{y_{\bbV}^{-\frac 12}\ket{\xi}\bra{\xi}y_{\bbV}^{-\frac 12}}
\]
where $C_\xi={\braket{\xi}{y_{\bbV}^{-1}\xi}}$.
(Recall the definition of  $y_\bbB$ in \cite{Ogata1}, after Remark 2.10.)
Furthermore, we denote by $\tilde \sigma_\xi$ the image of $\sigma_\xi$ by $\Xi_{L,\bbA}^{-1}\circ \Xi_{L\bbV}$ ,
i.e., $\tilde \sigma_\xi=\Xi_{L,\bbA}^{-1}\circ \Xi_{L\bbV}\lmk \sigma_\xi\rmk$.
Let $\tilde D_\xi$ be the density matrix of $\tilde \sigma_\xi$.
We set $D_\xi:=y_{\bbA}^{\frac 12} \tilde D_\xi y_{\bbA}^{\frac 12}\in \pddz\lmk \Mat_{2n_0-1}\otimes \Mat_{K_R+k_L+1}
\rmk \pddz$. (Recall (\ref{eq:pa}) for the definition of $\pddz$.)
For each 
$X\in \Mat_{k_{L\bbV}+1}$, 
there exists an $A_X\in \caA_{(-\infty,-1]}^{\rm loc}$ such that $\bbL_\bbV(A_X)=X$.
This can be seen from the proof of Lemma 3.11 of \cite{Ogata1}.
From the definitions, we have
\begin{align}\label{eq:genak}
& \sigma_\xi\lmk y_\bbV^{\frac 12}\lmk \Ad V_1^*\rmk^N \lmk \bbL_\bbV\lmk A_X\rmk\rmk y_\bbV^{\frac 12}\rmk
=\Xi_{L\bbV}\lmk \sigma_\xi\rmk\lmk A_X\otimes \lmk e_{11}^{(n)}\rmk^{\otimes N}\rmk
\nonumber\\
&=\Xi_{L,\bbA}\lmk \tilde \sigma_\xi\rmk\lmk A_X\otimes \lmk e_{11}^{(n)}\rmk^{\otimes N}\rmk= 
\tilde \sigma_\xi\lmk 
 y_\bbA^{\frac 12}\lmk \Ad A_1^*\rmk^N \lmk \bbL_\bbA\lmk A_X\rmk\rmk y_\bbA^{\frac 12}\rmk
\nonumber\\
&=\sum_{\beta,\beta'=0}^{n_0-1}\sum_{b,b'=0}^{k_{L}}
\lmk \overline{\lambda_{L\beta}}\lambda_{L\beta'}
\overline{\lambda_{Rb}}\lambda_{Rb'}\rmk^{N}
\braket{\gaa{\beta}{b}}{ \bbL_\bbA\lmk A_X\rmk \gaa{\beta'}{b'}}
\braket{\gaa{\beta'}{b'}}{D_\xi \gaa{\beta}{b}},
\end{align}
for all $N\in\nan$ and $X\in \Mat_{k_{L\bbV}+1}$. (Recall (\ref{eq:gaa}) for the definition of $\gaa{\beta}{b}$.)

Let
$\lambda\in\{\lambda_{i\bbV }\}_{i=1}^{k_{L\bbV}}$,
 $X\in \lmk
\sum_{i: i\ge 0,\lambda_{i\bbV}=
\lambda}E_{ii}^{(k_{R\bbV},k_{L\bbV})}\rmk
\Mat_{k_{R\bbV}+k_{L\bbV}+1}
\lmk
\sum_{i: i\ge 0, \lambda_{i\bbV}=
\lambda}E_{ii}^{(k_{R\bbV},k_{L\bbV})}\rmk$,
and $0\neq \xi\in \lmk\sum_{i: i\ge 0,\lambda_{i\bbV}=
\lambda}E_{ii}^{(k_{R\bbV},k_{L\bbV})}\rmk\bbC^{k_{R\bbV}+k_{L\bbV}+1}$.
Substituting this to (\ref{eq:genak}) and using (\ref{eq:k1ll}), we get
\begin{align}\label{eq:vac}
&C_\xi^{-1}\sum_{k=0}^{2(k_{L\bbV}+1)}
\lv \lambda\rv^{2N} {}_NC_k \braket{\xi}{\lmk \Ad\lmk 1+Y_{\bbV}^*\rmk -\unit \rmk^{k}\lmk X\rmk \xi}
=
C_\xi^{-1}\lv \lambda\rv^{2N}\braket{\xi}{\lmk \Ad\lmk 1+Y_{\bbV}^*\rmk\rmk^N\lmk X\rmk \xi}\nonumber\\
&=
 \sigma_\xi\lmk y_\bbV^{\frac 12}\lmk \Ad V_1^*\rmk^N \lmk X\rmk y_\bbV^{\frac 12}\rmk
=
 \sigma_\xi\lmk y_\bbV^{\frac 12}\lmk \Ad V_1^*\rmk^N \lmk \bbL_\bbV\lmk A_X\rmk\rmk y_\bbV^{\frac 12}\rmk
\nonumber\\
&=\sum_{\beta,\beta'=0}^{n_0-1}\sum_{b,b'=0}^{k_{L}}
\lmk \overline{\lambda_{L\beta}}\lambda_{L\beta'}
\overline{\lambda_{Rb}}\lambda_{Rb'}\rmk^{N}
\braket{\gaa{\beta}{b}}{ \bbL_\bbA\lmk A_X\rmk \gaa{\beta'}{b'}}
\braket{\gaa{\beta'}{b'}}{D_\xi \gaa{\beta}{b}},
\end{align}
for all $N\in\nan$.
Applying Lemma C.7 of \cite{Ogata1}, we see that
$\braket{\xi}{\lmk \Ad\lmk 1+Y_{\bbV}^*\rmk -\unit \rmk^{k}\lmk X\rmk \xi}=0$
for all $k\in\nan$. As this holds for all $\lambda$, $\xi$ and $X$ as above, we conclude that
$Y_\bbV=0$.

Finally, we show that $\lal_\bbV\in \Wo'(n_0(k_R+1)-1,n_0(k_L+1)-1)$. Namely, we show that the numbers
$\{\lambda_{i,\bbV}\}_{i=1}^{k_{L\bbV}}$ (resp. $\{\lambda_{i,\bbV}\}_{-k_{R\bbV}}^{-1}$) are distinct.
Substituting $Y_\bbV=0$ to (\ref{eq:vac}), we have 
\begin{align}\label{eq:lka}
&C_\xi^{-1}
\lv \lambda\rv^{2N} \braket{\xi}{X\xi}
&=\sum_{\beta,\beta'=0}^{n_0-1}\sum_{b,b'=0}^{k_{L}}
\lmk \overline{\lambda_{L\beta}}\lambda_{L\beta'}
\overline{\lambda_{Rb}}\lambda_{Rb'}\rmk^{N}
\braket{\gaa{\beta}{b}}{ \bbL_\bbA\lmk A_X\rmk \gaa{\beta'}{b'}}
\braket{\gaa{\beta'}{b'}}{D_\xi \gaa{\beta}{b}},
\end{align} for all $N\in\nan$,
$\lambda\in\{\lambda_{i\bbV}\}_{i=1}^{k_{L\bbV}}$,
 $X\in \lmk
\sum_{i: i\ge 0,\lambda_{i\bbV}=
\lambda}E_{ii}^{(k_{R\bbV},k_{L\bbV})}\rmk
\Mat_{k_{R\bbV}+k_{L\bbV}+1}
\lmk
\sum_{i: i\ge 0,\lambda_{i\bbV}=
\lambda}E_{ii}^{(k_{R\bbV},k_{L\bbV})}\rmk$,
and $0\neq \xi\in \lmk\sum_{i: i\ge 0,\lambda_{i\bbV}=
\lambda}E_{ii}^{(k_{R\bbV},k_{L\bbV})}\rmk\bbC^{k_{R\bbV}+k_{L\bbV}+1}$.
In particular, for each $\lambda\in\{\lambda_{i\bbV}\}_{i=1}^{k_{L\bbV}}$, take any unit vector 
$\xi\in \sum_{i: i\ge 0,\lambda_{i\bbV}=\lambda}E_{ii}^{(k_{R\bbV},k_{L\bbV})}\bbC^{k_{R\bbV}+k_{L\bbV}+1}$
and set $X:=\ket{\xi}\bra{\xi}$. Then we obtain
\begin{align*}
C_\xi^{-1}
\lv \lambda\rv^{2N} 
=\sum_{\beta,\beta'=0}^{n_0-1}\sum_{b,b'=0}^{k_{L}}
\lmk \overline{\lambda_{L\beta}}\lambda_{L\beta'}
\overline{\lambda_{Rb}}\lambda_{Rb'}\rmk^{N}
\braket{\gaa{\beta}{b}}{ \bbL_\bbA\lmk A_{\ket{\xi}\bra{\xi}}\rmk \gaa{\beta'}{b'}}
\braket{\gaa{\beta'}{b'}}{D_\xi \gaa{\beta}{b}},
\end{align*}
for all $N\in\nan$.
As the coefficient $C_\xi^{-1}$ of $\lv \lambda\rv^{2N}$ on the left hand side
is nonzero, another application of Lemma C.7 of \cite{Ogata1} and distinction of numbers in (\ref{eq:llll}) implies that 
for each $\lambda\in \{\lambda_i\}_{i=1}^{k_{L\bbV}}$, there exists 
$\overline{\lambda_{L\beta}}\lambda_{L\beta'}
\overline{\lambda_{Rb}}\lambda_{Rb'}$ such that $\lv\lambda\rv^2=\overline{\lambda_{L\beta}}\lambda_{L\beta'}
\overline{\lambda_{Rb}}\lambda_{Rb'}$.
By the consideration around (\ref{eq:llll}), we see that 
$\beta=\beta'$, $b=b'$, $|\lambda|^2=|\lambda_{L\beta}|^2|\lambda_{Rb}|^2$.
Furthermore, as the numbers in (\ref{eq:llll}) are distinct, such a pair $(\beta,b)$ is unique.
We denote the pair by $(\beta_\lambda,b_\lambda)$.
For any
unit vector 
$\xi\in \sum_{i: i\ge 0\lambda_{i\bbV}=\lambda}E_{ii}^{(k_{R\bbV},k_{L\bbV})}$,
we have
\begin{align}\label{eq:nzl}
\braket{\gaa{\beta_\lambda}{b_\lambda}}{ \bbL_\bbA\lmk A_{\ket{\xi}\bra{\xi}}\rmk \gaa{\beta_\lambda}{b_\lambda}}
\braket{\gaa{\beta_\lambda}{b_\lambda}}{D_\xi \gaa{\beta_\lambda}{b_\lambda}}=
C_\xi^{-1}\neq 0.
\end{align}

Now we are ready to show that $\lambda_{i, \bbV} \neq \lambda_{j,\bbV}$ for any $1\le i<j\le k_{L\bbV}$. This proves that
the numbers
$\{\lambda_{i,\bbV}\}_{i=1}^{k_{L\bbV}}$ are distinct.
We assume that  $\lambda_{i, \bbV} =\lambda_{j,\bbV}$ for some $1\le i<j\le k_{L\bbV}$ and show a contradiction.
First, substituting  $\lambda:=\lambda_{i, \bbV}=\lambda_{j,\bbV}$,
$X=E_{j,j}^{(k_{R\bbV},k_{L\bbV})}$ and $\xi=f_{i}^{(k_{R\bbV},k_{L\bbV})}$
to (\ref{eq:lka}), we have
\begin{align*}
0=\sum_{\beta,\beta'=0}^{n_0-1}\sum_{b,b'=0}^{k_{L}}
\lmk \overline{\lambda_{L\beta}}\lambda_{L\beta'}
\overline{\lambda_{Rb}}\lambda_{Rb'}\rmk^{N}
\braket{\gaa{\beta}{b}}{ \bbL_\bbA\lmk A_{E_{j,j}^{(k_{R\bbV},k_{L\bbV})}}\rmk \gaa{\beta'}{b'}}
\braket{\gaa{\beta'}{b'}}{D_{f_{i}^{(k_{R\bbV},k_{L\bbV})}} \gaa{\beta}{b}},
\end{align*} for all $N\in\nan$.
As the numbers in (\ref{eq:llll}) are distinct, this equality implies
\begin{align}\label{eq:ijld}
\braket{\gaa{\beta}{b}}{ \bbL_\bbA\lmk A_{E_{j,j}^{(k_{R\bbV},k_{L\bbV})}}\rmk \gaa{\beta'}{b'}}
\braket{\gaa{\beta'}{b'}}{D_{f_{i}^{(k_{R\bbV},k_{L\bbV})}} \gaa{\beta}{b}}=0,
\end{align}
for all $\beta,\beta'=0,\ldots, n_0-1$ and $ b,b'=0,\ldots,k_L$.
For $\lambda=\lambda_{i, \bbV} =\lambda_{j,\bbV}$, there exists a unique pair $(\beta_\lambda,b_\lambda)$
such that $\lv \lambda\rv=\lv \lambda_{L\beta_\lambda}\lambda_{Rb_\lambda}\rv$.
From (\ref{eq:nzl}) with $\xi=f_{i}^{(k_{R\bbV},k_{L\bbV})}$, and $\xi=f_{j}^{(k_{R\bbV},k_{L\bbV})}$, we have
\begin{align}\label{eq:nzl1}
&\braket{\gaa{\beta_\lambda}{b_\lambda}}{ \bbL_\bbA\lmk A_{{E_{ii}^{(k_{R\bbV},k_{L\bbV})}}}\rmk \gaa{\beta_\lambda}{b_\lambda}}
\braket{\gaa{\beta_\lambda}{b_\lambda}}{D_{f_{i}^{(k_{R\bbV},k_{L\bbV})}} \gaa{\beta_\lambda}{b_\lambda}}
\neq 0,\nonumber\\
&\braket{\gaa{\beta_\lambda}{b_\lambda}}{ \bbL_\bbA\lmk A_{{E_{j,j}^{(k_{R\bbV},k_{L\bbV})}}}\rmk \gaa{\beta_\lambda}{b_\lambda}}
\braket{\gaa{\beta_\lambda}{b_\lambda}}{D_{f_{j}^{(k_{R\bbV},k_{L\bbV})}} \gaa{\beta_\lambda}{b_\lambda}}
\neq 0.
\end{align}
On the other hand, we have
\begin{align*}
\braket{\gaa{\beta_\lambda}{b_\lambda}}{ \bbL_\bbA\lmk A_{E_{j,j}^{(k_{R\bbV},k_{L\bbV})}}\rmk \gaa{\beta_\lambda}{b_\lambda}}
\braket{\gaa{\beta_\lambda}{b_\lambda}}{D_{f_{i}^{(k_{R\bbV},k_{L\bbV})}} \gaa{\beta_\lambda}{b_\lambda}}=0,
\end{align*} from  (\ref{eq:ijld}) .
This and (\ref{eq:nzl1}) can not hold simultaneously. Hence we obtain a contradiction.
Similarly, we can show that  the numbers
$\{\lambda_{i,\bbV}\}_{-k_{R\bbV}}^{-1}$ are distinct.
\end{proof}

\subsection{Properties of $\caD_l$}Throughout this subsection, let $2\le n_0\in\nan$, $k_R,k_L\in\nan\cup\{0\}$, and $\bbA$, $\bb(t)$ be given by (\ref{eq:defa}), (\ref{eq:pbp}).
Recall that $\caD_l$ is the
 the linear subspace of $\bigotimes_{i=0}^{l-1}\bbC^n$ spanned by $\xi_{\alpha,\beta,a,b}^{(l)}$, 
$\alpha,\beta\in\{1,\ldots,n_0\}$,
$a=0,1,\ldots, k_R$, $b=0,1,\ldots,k_L$, and that $G_l$ is 
the orthogonal  projection onto 
$\caD_l$.
In this subsection, we investigate properties of $\caD_l$.
First we show that the projections $G_l$ and $G_{l,\bbA}$ are close.
In order for that, we estimate $\Upsilon_l$ in (\ref{eq:el1}).
\begin{lem}\label{lem:ups}
There exists a constant $C_7>0$ such that 
\[
\lV  r_{\alpha}^l\Upsilon_l(x)\rV\le C_7\kappa^{\frac l2}\lV x\rV,
\]we
for all $\alpha=2,\ldots,n_0$, $l\in\nan$, and $x\in \Mat_{k_R+k_L+1}$.
\end{lem}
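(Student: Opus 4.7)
\medskip
\noindent\textbf{Proof plan.}
The plan is to estimate each of the ten $\zeta$-terms in (\ref{eq:el1}) directly and show that $\lV \Upsilon_l(x)\rV$ grows at most polynomially in $l$, so that the prefactor $r_\alpha^l$ (which has modulus at most $\kappa^l$ for $\alpha\ge 2$) supplies the required exponential decay $\kappa^{l/2}$ with plenty to spare. The starting observation is that every monomial appearing in $\Upsilon_l$ contains exactly one $K_1$ and exactly one $K_3$, with $K_1$ always to the left of $K_3$, and at most two $K_2$'s in between or around. I would exploit this structure in a completely mechanical way.

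First I would factorise on the left tensor slot. Since $K_2^{(L)}=R$, the intermediate $K_2$'s are absorbed into the powers of $R$, and the remaining $K_1^{(L)}=\ket{\cnz1}\bra{\eta_0^{(n_0)}}$ and $K_3^{(L)}=\ket{\eta_0^{(n_0)}}\bra{\cnz1}$ are rank-one. After using $R^{p-1}\ket{\cnz 1}=\cnz 1$ on the far left and $R^{l-q}\ket{\cnz 1}=\cnz 1$ on the far right of a typical monomial $M=M_L\otimes M_R$, one gets
\begin{equation*}
(M_L)_{11}=\sum_{\alpha=2}^{n_0} r_\alpha^{\,d}, \qquad |(M_L)_{11}|\le (n_0-1)\,\kappa^{d},
\end{equation*}
where $d$ is the "$R$-distance" between the unique $K_1$ and the unique $K_3$ in the monomial. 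The key point is that this distance $d$ equals the difference of two of the indices $m_1<\cdots<m_k$ (minus one), independent of which $\zeta$-pattern we are looking at.

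Second, I would bound the right tensor slot trivially: $M_R$ is a product of at most $k\le 4$ factors of $\Lambda_{\lal_R}$ and $V\Lambda_{\lal_R}$, so $\lV M_R\rV\le \lV V\rV^2$, and hence $|\Tr(x M_R^*)|\le (k_R+k_L+1)\lV V\rV^2\,\lV x\rV$. Combining this with the pointwise bound on $(M_L)_{11}$ and using orthogonality of the basis vectors $\psi_1^{\otimes\cdot}\otimes\psi_2\otimes\cdots$ for different tuples $(m_1,\ldots,m_k)$, I obtain, for each of the ten terms,
\begin{equation*}
\lV \zeta^{(k)(l)}_{i_1\cdots i_k}(\zeij{11}\otimes x)\rV^2
\le C\,\lV x\rV^2\sum_{1\le m_1<\cdots<m_k\le l}\kappa^{2d}
\le C'\,l^{\,k-1}\,\lV x\rV^2,
\end{equation*}
since the sum over the two indices carrying the distance $d$ gives a geometric series bounded by $l/(1-\kappa^2)$, and the remaining $k-2$ indices contribute at most $l^{k-2}$ freedom. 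With $k\le 4$ and ten terms, this yields $\lV \Upsilon_l(x)\rV\le C\,l^{3/2}\,\lV x\rV$.

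Finally, since $|r_\alpha|=\kappa^{\alpha-1}\le \kappa$ for $\alpha\ge 2$, we have $|r_\alpha^l|\le \kappa^l$, so
\begin{equation*}
\lV r_\alpha^l\,\Upsilon_l(x)\rV \le C\,\kappa^l\,l^{3/2}\,\lV x\rV \le C_7\,\kappa^{l/2}\,\lV x\rV,
\end{equation*}
with $C_7:=C\sup_{l\in\nan}\kappa^{l/2}l^{3/2}$, which is finite since $0<\kappa<1$. The main (only) obstacle is simply the bookkeeping of checking that the factorisation argument and the identification of $d$ as an index difference indeed apply uniformly to all ten $\zeta$-patterns; I would handle this by inspecting that in each pattern $K_1$ stands to the left of $K_3$, which is visible from the list in (\ref{eq:el1}).
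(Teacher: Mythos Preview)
Your argument is correct and is actually a slight refinement of the paper's proof. The paper proceeds more crudely: it rewrites $\lV\zeta_{i_1\cdots i_k}^{(k)l}(\zeij{11}\otimes x)\rV^2$ as a single sum over $(m_1,\dots,m_k)$ of matrix elements of the composition $T_0^{m_1-1}\circ\Ad K_{i_1}\circ\cdots\circ T_0^{l-m_k}$ applied to $\zeij{11}\otimes x^*(\cdot)x$, where $T_0=\Ad(R\otimes\Lambda_{\lal_R})$ is a CP map with $\lV T_0\rV=\lV T_0(\unit)\rV\le 1$. Bounding each summand by $(\max_i\lV\Ad K_i\rV)^k\lV x\rV^2$ gives $\lV\zeta^{(k)l}\rV^2\le C\,l^k\lV x\rV^2$, hence $\lV\Upsilon_l(x)\rV\le C\,l^{2}\lV x\rV$, and then $|r_\alpha|^l\le\kappa^l$ kills the polynomial exactly as you say.

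Your route instead factorises the left tensor slot explicitly, observing that all left factors collapse to $\ket{\cnz 1}\bra{\cnz 1}$ times $\sum_{\alpha\ge 2}r_\alpha^{\,d}$, where $d$ is the $R$-distance between the unique $K_1$ and the unique $K_3$. This extra $\kappa^{d}$ lets you beat the paper's polynomial by one power of $l$ (you get $l^{3/2}$ instead of $l^{2}$), at the cost of a little more bookkeeping to identify $d$ in each of the ten patterns. Both bounds are, of course, equally adequate for the conclusion, since the factor $\kappa^{l/2}$ swallowed into $C_7$ already absorbs any fixed polynomial in $l$; the paper simply opts for the quickest argument that suffices.
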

\begin{proof}
From (\ref{eq:el1}), it suffices to show that each $\zeta_{i_1\cdots i_k}^{(k)l}(\zeij{11}\otimes x)$,
is bounded by
${\rm const}\times  l^k\times \lV x\rV$.
By the definition 
\begin{align*}
&\lV \zeta_{i_1\cdots i_k}^{(k)l}\lmk \zeij{11}\otimes x\rmk\rV^2\\
&=
\sum_{1\le m_1<\cdots<m_k\le l}\sum_{j,j'=-k_R}^{k_L}\\
&\braket{\cnz{1}\otimes \fii{j}}
{
T_0^{m_1-1}\circ \Ad K_{i_1}\circ\cdots\circ T_0^{m_k-m_{k-1}-1}\circ\Ad{K_{i_k}}\circ T_0^{l-m_k}
\lmk
\zeij{11}\otimes x^*\eij{jj'}x
\rmk
\lmk \cnz{1}\otimes \fii{j'}\rmk}.
\end{align*}
Here, $T_0$ is a CP map given by $T_0:=\Ad\lmk R\otimes \Lambda_{\lal_R}\rmk$, and 
we have a bound $\lV T_0\rV=\lV T_0(1)\rV\le 1$.
From this and the above formula, we obtain the bound
\begin{align*}
\lV \zeta_{i_1\cdots i_k}^{(k)l}\lmk \zeij{11}\otimes x\rmk\rV^2
\le \sum_{1\le m_1<\cdots<m_k\le l}\sum_{j,j'=-k_R}^{k_L}
\lmk\max_{i=1,2,3}\left\{ \lV \Ad K_i\rV\right\} \rmk^k
\lV x\rV^2\le{\rm const}\cdot l^k\cdot \lV x\rV^2.
\end{align*}
This completes the proof.
\end{proof}

\begin{lem}\label{lem:dfsn}
There exist $l_3\in\nan$ and a constant $C_8>0$
such that for all $l\ge l_3$, 
\begin{enumerate}
\item the vectors $\xi_{\alpha,\beta,a,b}^{(l)}$, 
$\alpha,\beta\in\{1,\ldots,n_0\}$,
$a=0,1,\ldots, k_R$, and $b=0,1,\ldots,k_L$,,
are linearly independent, and
\item \begin{align*}
\lV
G_l-G_{l,\bbA}
\rV
\le C_8\kappa^{\frac l2}.
\end{align*}
\end{enumerate}

\end{lem}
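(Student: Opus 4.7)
The plan is to regard the vectors $\xi_{\alpha,\beta,a,b}^{(l)}$ as exponentially small perturbations of the generators $\tilde\xi_{\alpha,\beta,a,b}^{(l)}:=\Gamma_{l,\bbA}^{(R)}(\eijz{-(\alpha-1),\beta-1}\otimes \eij{-a,b})$ of $\caG_{l,\bbA}$, and then deduce both claims from standard projection-perturbation estimates. Indeed, by the definition (\ref{eq:sndp}) and Lemma \ref{lem:ups},
\[
\xi_{\alpha,\beta,a,b}^{(l)}=\tilde\xi_{\alpha,\beta,a,b}^{(l)}+\epsilon_{\alpha,\beta,a,b}^{(l)},\qquad \lV \epsilon_{\alpha,\beta,a,b}^{(l)}\rV\le C_7\kappa^{l/2}.
\]

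Next I would invoke Lemma \ref{lem:aprop}: the pentad for $\bbA$ satisfies \textit{Conditions 2, 3, 4}, and by Proposition 2.6 of \cite{Ogata1} the family $\caG_{l,\bbA}$ satisfies \textit{Condition 1}, so in particular the $n_0^2(k_R+1)(k_L+1)$ vectors $\{\tilde\xi_{\alpha,\beta,a,b}^{(l)}\}$ are linearly independent for all $l\ge M_{\bbA,\puuz,\pddz}$. The crucial additional input is that the Gram matrix
\[
\tilde G^{(l)}_{(\alpha\beta ab),(\alpha'\beta'a'b')}:=\braket{\tilde\xi_{\alpha,\beta,a,b}^{(l)}}{\tilde\xi_{\alpha',\beta',a',b'}^{(l)}}
\]
has smallest eigenvalue bounded below uniformly in $l$. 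Expressing $\tilde G^{(l)}$ through matrix elements of $T_\bbA^l$ and invoking Spectral Property~II of $T_\bbA$ (so that $T_\bbA^l$ converges to the rank-one peripheral map $X\mapsto \varphi_\bbA(X)e_\bbA$ at the geometric rate $s_\bbA^l$), one sees that $\tilde G^{(l)}$ converges exponentially to a positive definite limit. Hence there exist $\gamma_\bbA>0$ and $l_4\in\nan$ such that $\tilde G^{(l)}\ge \gamma_\bbA\unit$ for all $l\ge l_4$.

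With this lower bound secured, the rest is routine matrix perturbation. Collecting the vectors into matrices $V_l,\tilde V_l$ of size $n^l\times n_0^2(k_R+1)(k_L+1)$, the column-wise estimate $\lV V_l-\tilde V_l\rV\le C\kappa^{l/2}$ combined with the uniform upper bound on $\lV \tilde V_l\rV$ (again from Spectral Property~II of $T_\bbA$) yields $\lV V_l^*V_l-\tilde G^{(l)}\rV\le C'\kappa^{l/2}$. Choose $l_3\ge l_4$ so that the right hand side is below $\gamma_\bbA/2$; then $V_l^*V_l\ge (\gamma_\bbA/2)\unit$ for all $l\ge l_3$, proving part~1 and giving $\lV (V_l^*V_l)^{-1}\rV\le 2/\gamma_\bbA$. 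Writing $G_l=V_l(V_l^*V_l)^{-1}V_l^*$ and $G_{l,\bbA}=\tilde V_l(\tilde V_l^*\tilde V_l)^{-1}\tilde V_l^*$, and expanding their difference with the resolvent identity
\[
(V_l^*V_l)^{-1}-(\tilde V_l^*\tilde V_l)^{-1}=-(V_l^*V_l)^{-1}\lmk V_l^*V_l-\tilde V_l^*\tilde V_l\rmk (\tilde V_l^*\tilde V_l)^{-1},
\]
then yields $\lV G_l-G_{l,\bbA}\rV\le C_8\kappa^{l/2}$ for a suitable $C_8$, proving part~2.

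The main obstacle is the uniform lower bound $\tilde G^{(l)}\ge \gamma_\bbA \unit$, since every other step is standard bookkeeping of constants. This bound rests on Spectral Property~II of $T_\bbA$ established in Lemma \ref{lem:aprop}, together with the non-degeneracy information already contained in Lemma \ref{lem:asp} on $\caK_l(\bbA)$.
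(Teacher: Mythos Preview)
Your argument is correct and is essentially the same as the paper's: the paper compresses your Gram-matrix lower bound into the single citation ``Lemma~2.16 of \cite{Ogata1}'' (displayed there as the two-sided estimate $C'\lV X\rV_2\le \lV\Gamma_{l,\bbA}^{(R)}(X)\rV\le (C')^{-1}\lV X\rV_2$), and then invokes Lemma~\ref{lem:ups} together with the projection-perturbation Lemmas~\ref{lem:lb} and~\ref{lem:pq} from the appendix in place of your explicit $V_l(V_l^*V_l)^{-1}V_l^*$ computation. One cosmetic point: the symbol $l_4$ is already reserved in the paper for the threshold in Lemma~\ref{lem:intg}, so you should rename your intermediate index.
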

\begin{proof}This follows from the fact that there exist $C'>0$ and $L_\bbA\in\nan$ such that
\begin{align}\label{eq:gee}
C'\lV X\rV_2
\le
\lV\Gamma_{l,\bbA}^{(R)}\lmk
X
\rmk\rV
\le (C')^{-1}\lV X\rV_2,\;\; \text{for all } L_{\bbA}\le l,\;\;
X\in\puuz\lmk
\mnz\otimes \mkk
\rmk\pddz
\end{align}
and 
Lemma \ref{lem:ups}.
(Recall Lemma 2.16 of \cite{Ogata1}.)
\end{proof}
\begin{lem}\label{lem:gtz}
For any $l\ge \max\{l_3, n_0^6(k_R+1)(k_L+1)\}$ (where $l_3$ is given in Lemma \ref{lem:dfsn}), $t\in[0,1]$,
and  $\alpha,\beta\in\{1,\ldots,n_0\}$,
$a=0,1,\ldots, k_R$, and $b=0,1,\ldots,k_L$, set
\begin{align}\label{eq:defhx}
\hat \xi_{\alpha,\beta,a,b}^{(l)}\lmk t\rmk:=\left\{
\begin{gathered}
\gbtr{l}\lmk
\Theta_{l,t}\lmk \zeij{\alpha\beta}\otimes \eij{-a,b}
\rmk\rmk,\quad \text{if}\quad t\in(0,1]\\
 \xi_{\alpha,\beta,a,b}^{(l)},\quad \text{if}\quad t=0
\end{gathered}
\right..
\end{align}
Let $\hat G_l(t)$ be the orthogonal projection onto
the subspace of $\bigotimes_{i=0}^{l-1}\bbC^n$ spanned by $\hat \xi_{\alpha,\beta,a,b}^{(l)}\lmk t\rmk$,
$\alpha,\beta\in\{1,\ldots,n_0\}$,
$a=0,1,\ldots, k_R$, and $b=0,1,\ldots,k_L$,
for each $t\in[0,1]$ and $l\ge \max\{l_3, n_0^6(k_R+1)(k_L+1)\}$.
Then the maps
$[0,1]\ni t\mapsto \hat \xi_{\alpha,\beta,a,b}^{(l)}\lmk t\rmk\in\bigotimes_{i=0}^{l-1}\bbC^n$,
$[0,1]\ni t\mapsto \hat G_l(t)\in \bigotimes_{i=0}^{l-1}\Mat_n$ are $C^\infty$.
\end{lem}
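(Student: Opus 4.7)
The path $\bb(t)$ is $C^\infty$ on $[0,1]$ since the scalar $r_{T_{\tilde\oo(t)}}^{-1/2}$ is smooth on $[0,1]$ by Lemma \ref{lem:cinf}, and $\Theta_{l,t}=\hee+t^{-1}\Xi_l$ is smooth on $(0,1]$. Because $\gbtr{l}(X)$ depends polynomially on the entries of $\bb(t)$ and linearly on $X$, the map $t\mapsto \hat\xi^{(l)}_{\alpha,\beta,a,b}(t)$ is manifestly $C^\infty$ on $(0,1]$. The delicate point is the transition through $t=0$, where the prefactor $t^{-1}$ in $\Theta_{l,t}$ threatens to blow up, so the whole task reduces to checking that the apparent singularity cancels.

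To control the endpoint I would split
\[
\hat\xi^{(l)}_{\alpha,\beta,a,b}(t)=\gbtr{l}\!\lmk\hee\lmk\zeij{\alpha\beta}\otimes\eij{-a,b}\rmk\rmk+\frac{1}{t}\,F^{(l)}_{\alpha,\beta,a,b}(t),\qquad F^{(l)}_{\alpha,\beta,a,b}(t):=\gbtr{l}\!\lmk\Xi_l\!\lmk\zeij{\alpha\beta}\otimes\eij{-a,b}\rmk\rmk.
\]
Since $\Xi_l$ is $t$-independent and $\bb(t)$ is smooth, both the first summand and $F^{(l)}_{\alpha,\beta,a,b}$ are $C^\infty$ on $[0,1]$. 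Hadamard's lemma then tells us that $t^{-1}F^{(l)}_{\alpha,\beta,a,b}(t)$ extends to a $C^\infty$ function on $[0,1]$ as soon as $F^{(l)}_{\alpha,\beta,a,b}(0)=0$. Once this vanishing is verified, Lemma \ref{lem:flim} identifies the extended value at $t=0$ with $\xi^{(l)}_{\alpha,\beta,a,b}$, matching the definition (\ref{eq:defhx}).

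The main step is the case analysis of $\Xi_l(\zeij{\alpha\beta}\otimes\eij{-a,b})$. When $\alpha=1$ one has $\overline{\hee}\zeij{1\beta}=0$ and the partial trace defining $\varpi_l$ kills $\zeij{1\beta}$, so $\Xi_l$ vanishes and $F^{(l)}_{1,\beta,a,b}\equiv 0$. When $\alpha\neq 1,\beta$ the operator $\Xi_l$ reduces to the identity on $\zeij{\alpha\beta}\otimes\eij{-a,b}$, and when $\alpha=\beta\neq 1$ it yields $(\zeij{\alpha\alpha}-\bar r_\alpha^l\zeij{11})\otimes\eij{-a,b}$. For both of the nontrivial cases, the required vanishing $\Gamma^{(R)}_{l,\bb(0)}(\Xi_l(\cdot))=0$ is exactly the monomial count already established in the proof of Lemma \ref{lem:flim}: at $t=0$ one has $B_2(0)=K_1+K_2$ (the $tK_3$ contribution disappears), and every nonzero monomial in $B_1(0),K_1,K_2$ lies either in a diagonal summand orthogonal to $\zeij{\alpha\beta}-\bar r_\alpha^l\zeij{11}\delta_{\alpha\beta}$ or inside $(\zeij{11}\mnz)\otimes\mkk$, whose overlap with the target is zero. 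Hence $F^{(l)}_{\alpha,\beta,a,b}(0)=0$ in every case, and $t\mapsto \hat\xi^{(l)}_{\alpha,\beta,a,b}(t)$ is $C^\infty$ on $[0,1]$.

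For the projection $\hat G_l$, the family $\{\hat\xi^{(l)}_{\alpha,\beta,a,b}(t)\}$ is linearly independent for every $t\in[0,1]$: at $t=0$ by Lemma \ref{lem:dfsn}, and for $t\in(0,1]$ because $\bb(t)\in\Class(n,n_0,k_R,k_L)$ together with Lemma \ref{lem:klmain} gives $l\ge n_0^6(k_R+1)(k_L+1)\ge l_{\bb(t)}$, so $\gbtr{l}$ is injective on the relevant subspace while $\Theta_{l,t}$ is a bijection of it. The orthogonal projection onto a subspace spanned by smoothly varying linearly independent vectors is $C^\infty$ by the standard formula used in Lemma \ref{li}, finishing the proof. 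The main obstacle is plainly the identity $F^{(l)}_{\alpha,\beta,a,b}(0)=0$, which demands precisely the monomial-by-monomial bookkeeping already performed in Lemma \ref{lem:flim}; everything else is a standard application of Hadamard's lemma and smooth-projection theory.
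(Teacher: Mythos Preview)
Your proposal is correct and follows essentially the same approach as the paper. The paper's proof is a terse one-liner (``immediate from the proof of Lemma~\ref{lem:flim}''), relying on the reader to observe from that proof that $\hat\xi^{(l)}_{\alpha,\beta,a,b}(t)$ is a smooth factor $r_{T_{\tilde\oo(t)}}^{-l/2}$ times a polynomial in $t$; you make this explicit by isolating the potentially singular term $t^{-1}F^{(l)}_{\alpha,\beta,a,b}(t)$, verifying $F^{(l)}_{\alpha,\beta,a,b}(0)=0$ via exactly the monomial analysis carried out in Lemma~\ref{lem:flim}, and invoking Hadamard's lemma to conclude smoothness across $t=0$. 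The linear-independence and projection arguments match the paper's word for word.
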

\begin{proof}
That $[0,1]\ni t\mapsto \hat \xi_{\alpha,\beta,a,b}^{(l)}\lmk t\rmk\in\bigotimes_{i=0}^{l-1}\bbC^n$ is $C^\infty$, is immediate from the proof of Lemma \ref{lem:flim}.
By Lemma \ref{lem:dfsn},  the vectors $\xi_{\alpha,\beta,a,b}^{(l)}$, 
are linearly independent for any $l\ge l_3$.
For any $t\in(0,1]$, $\bb(t)$ belongs to 
$\Class(n,n_0,k_R,k_L)\subset \ClassA$ from Lemma \ref{lem:tspec}.
Therefore, by Proposition 3.1 of \cite{Ogata1} and Lemma \ref{lem:klmain},
for any $l\ge  n_0^6(k_R+1)(k_L+1)\ge l_{\bb(t)}$, 
the vectors 
$
\left\{
\gbtr{l}\circ \Theta_{l,t}\lmk\zeij{\alpha\beta}\otimes \eij{-a,b}\rmk
\right\}_{\alpha,\beta=1,\ldots,n_0,a=0,\ldots,k_R,b=0,\ldots,k_L}
$
are linearly independent.
Therefore, by Lemma \ref{li}, $\hat G_l$ is $C^{\infty}$ for 
$l\ge \max\{l_3,n_0^6(k_R+1)(k_L+1)\}$.
\end{proof}
By the bijectivity of $\Theta_{l,t}$ on $\mnz\otimes \pu\mkk\pd$ and the definition of $G_l$, we have
$\hat G_l(t)=G_{l,\bbB(t)}$ for $t\in(0,1]$ and $\hat G_l(0)=G_l$.

Next we show the intersection property of $\caD_l$.
It is convenient to represent the vectors $\xi_{\alpha,\beta,a,b}^{(l)}$ in a matrix product form.
We set $\bbF\in \lmk\Mat_{n_0+1}\otimes\mkk\rmk^{\times n}$ by
\begin{align*}
&F_1:=\Lambda_{\lal_F}\otimes \Lambda_{\lal_R},\\
&F_2:= \ket{\chi_1^{(n_0+1)}}\bra{\eta_F}\otimes \Lambda_{\lal_R}
+\ket{\eta_F}\bra{\chi_{n_0+1}^{(n_0+1)}}\otimes \Lambda_{\lal_R}
+\Lambda_{\lal_F}\otimes V\Lambda_{\lal_R},\\
&F_\mu=0,\quad \mu\ge 3.
\end{align*}
Here, we set
\[
\eta_F:=\sum_{i=2}^{n_0}\chi_i^{(n_0+1)}\in\bbC^{n_0+1},\quad \lal_{F}:=\lmk 1,r_2,\ldots,r_{n_0}, 1\rmk\in\bbC^{n_0+1}.
\]
(Recall $r_\alpha$, $V$ from (\ref{eq:rd}), and (\ref{eq:vd}).)
With this definition, it is straight forward to check
\[
\Upsilon_l\lmk X\rmk=\Gamma_{l,\bbF}^{(R)}\lmk e_{1,n_0+1}^{(n_0+1)}\otimes X\rmk,
\]
for $X\in\pu\mkk\pd$, $l\in\nan$.
In order to represent  $\xi_{\alpha,\beta,a,b}^{(l)}$, we set ${\bbA'}$ as
\begin{align}\label{eq:defap}
{A'}_{\mu}:=A_\mu\oplus F_\mu\in 
\lmk\Mat_{2n_0-1}\otimes \mkk\rmk\oplus \lmk\Mat_{n_0+1}\otimes \mkk\rmk
\simeq
\lmk \Mat_{2n_0-1}\oplus\Mat_{n_0+1}\rmk \otimes \mkk,\quad \mu=1,\ldots,n.
\end{align}
Using these matrices, our $\xi_{\alpha,\beta,a,b}^{(l)}$ is represented as
\begin{align*}
\xi_{\alpha,\beta,a,b}^{(l)}
=\Gamma_{l{\bbA'}}^{(R)}\lmk \Theta_l\lmk \zeij{\alpha\beta}\rmk\otimes\eij{-a,b}\rmk.
\end{align*}
Here, $\Theta_l: \mnz\to \Mat_{2n_0-1}\oplus \Mat_{n_0+1}$ is a linear map given by
\begin{align}\label{eq:tl}
\Theta_l\lmk\zeij{\alpha\beta}\rmk
:=\eijz{-(\alpha-1),\beta-1}\oplus \lmk -\delta_{\alpha\beta}\lmk 1-\delta_{\alpha,1}\rmk \bar r_\alpha^l e_{1,n_0+1}^{(n_0+1)}\rmk,\quad
\alpha,\beta=1,\ldots, n_0.
\end{align}
Hence, we obtain
\begin{align}\label{eq:matd}
\caD_l=\Gamma_{l,{\bbA'}}^{(R)}\lmk
\lmk \Theta_l\otimes\id\rmk\lmk \mnz\otimes \pu\mkk\pd\rmk\rmk,
\end{align}
and
\begin{align}\label{eq:apb}
\lim_{t\downarrow 0}\Gamma_{l,\bbB(t)}^{(R)}\circ \Theta_{l,t}\lmk X\rmk
=\Gamma_{l,\bbA'}^{(R)}\circ \lmk \Theta_{l}\otimes\id\rmk\lmk X\rmk,\quad
X\in\mnz\otimes \mkk,\quad l\in\nan.
\end{align}

\begin{lem}\label{lem:intg}
There exists an $l_4\in\nan$ such that
\[
\caD_{l+1}=\lmk \caD_l\otimes\bbC^n\rmk\cap \lmk \bbC^n\otimes\caD_l\rmk,\quad l_4\le l.
\]
\end{lem}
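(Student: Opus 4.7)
My plan is to exploit the matrix product representation (\ref{eq:matd}), which writes $\caD_l = \Gamma_{l,\bbA'}^{(R)}\!\bigl((\Theta_l\otimes\id)(\mnz\otimes\pu\mkk\pd)\bigr)$ with the block-diagonal generator $\bbA'=\bbA\oplus\bbF$. The inclusion $\caD_{l+1}\subseteq (\caD_l\otimes\bbC^n)\cap(\bbC^n\otimes\caD_l)$ is the easy direction, proved by a direct transfer-step computation; the reverse inclusion is obtained by a dimension count using the intersection property of $\caG_{l,\bbA'}$.

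For the easy direction I apply $\Gamma_{l+1,\bbA'}^{(R)}(Y)=\sum_{\mu=1}^n\Gamma_{l,\bbA'}^{(R)}(Y(A'_\mu)^*)\otimes\psi_\mu$ to $Y=\Theta_{l+1}(\zeij{\alpha\beta})\otimes\eij{-a,b}=\bigl(\eijz{-(\alpha-1),\beta-1}\oplus c_{\alpha,\beta,l+1}\,e_{1,n_0+1}^{(n_0+1)}\bigr)\otimes\eij{-a,b}$, where $c_{\alpha,\beta,l+1}=-\delta_{\alpha\beta}(1-\delta_{\alpha,1})\bar r_\alpha^{l+1}$. Using the explicit forms (\ref{eq:defa}), (\ref{eq:defap}) of $A_\mu$ and $F_\mu$, a short calculation shows that each product $Y(A'_\mu)^*$ lies in $(\Theta_l\otimes\id)(\mnz\otimes\pu\mkk\pd)$ modulo $\ker\Gamma_{l,\bbA'}^{(R)}$, so that each $\mu$-summand lies in $\caD_l$. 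The coefficient bookkeeping is forced by the key identity $r_\alpha=\lambda_{L,\alpha-1}=(\kappa e^{i\theta})^{\alpha-1}$ for $\alpha\ge 2$ from (\ref{eq:rd}): the scalar $\bar\lambda_{L,\beta-1}$ picked up by $\eijz{-(\alpha-1),\beta-1}$ under $\Lambda_{\lal_L}^*$ matches the ratio $c_{\alpha,\beta,l+1}/c_{\alpha,\beta,l}=\bar r_\alpha$ when $\alpha=\beta$, while the corresponding action of $\Lambda_{\lal_F}^*$ on $e_{1,n_0+1}^{(n_0+1)}$ leaves the coefficient unchanged because the last component of $\lal_F$ equals $1$. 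The analogous left transfer-step computation gives $\caD_{l+1}\subseteq\bbC^n\otimes\caD_l$.

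For the reverse direction, $\dim\caD_{l+1}=n_0^2(k_R+1)(k_L+1)$ for $l\ge l_3$ by Lemma \ref{lem:dfsn}, so combined with the first inclusion it suffices to show $\dim\bigl((\caD_l\otimes\bbC^n)\cap(\bbC^n\otimes\caD_l)\bigr)\le n_0^2(k_R+1)(k_L+1)$. Since $\caD_l\subseteq\caG_{l,\bbA'}=\caG_{l,\bbA}+\caG_{l,\bbF}$, the intersection lies in $(\caG_{l,\bbA'}\otimes\bbC^n)\cap(\bbC^n\otimes\caG_{l,\bbA'})$. After verifying that $\bbF$ satisfies Conditions 2, 3, 4 of Part I by the same routine as Lemma \ref{lem:aprop} (with a modified Vandermonde argument to handle the repeated eigenvalue $1$ in $\lal_F$), $\caG_{l,\bbA'}$ inherits the intersection property for $l$ large, so this larger intersection equals $\caG_{l+1,\bbA'}$. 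Pulling the intersection back through $\Gamma_{l+1,\bbA'}^{(R)}$, whose restriction to the appropriate support space is injective by Condition 3 for both blocks, converts two-sided intersection membership into a pair of linear constraints on the preimage which, combined with the identity $r_\alpha=\lambda_{L,\alpha-1}$, pin down the $\bar r_\alpha^{l+1}$ coefficient and force the preimage into $\Ran(\Theta_{l+1}\otimes\id)$. The main technical obstacle is this last lift: because $\Theta_l$ depends on $l$ through the scalars $\bar r_\alpha^l$, tracking the compatibility of the left- and right-intersection constraints at the preimage level requires careful bookkeeping in the block space $\Mat_{2n_0-1}\oplus\Mat_{n_0+1}$.
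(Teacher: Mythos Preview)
Your easy direction is fine in spirit, though different from the paper: the paper avoids all the transfer-step algebra by using the continuity of $\hat G_l(t)$ (Lemma~\ref{lem:gtz}) and the already-known intersection property of $\caG_{l,\bb(t)}$ for $t\in(0,1]$, then passing to the limit $t\downarrow 0$. Your computation for $\mu=1$ is correct and does hinge on the identity $r_\alpha=\lambda_{L,\alpha-1}$, but the $\mu=2$ case (and the left variant) requires tracking several cross-terms between the $\bbA$ and $\bbF$ blocks, which you have not written out.

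The hard direction has a genuine gap. Your plan is (i) embed the intersection into $\caG_{l+1,\bbA'}$ via an intersection property for $\bbA'$, then (ii) pull back through $\Gamma_{l+1,\bbA'}^{(R)}$ to land in $\Ran(\Theta_{l+1}\otimes\id)$. Step (i) is both unjustified and unnecessary: you have not established Conditions~2,~3,~4 for $\bbF$ (the repeated eigenvalue $1$ in $\lal_F$ is a real obstruction to the standard argument, not a cosmetic issue), and in any case knowing the intersection sits inside $\caG_{l+1,\bbA'}$ gives no dimension bound since $\dim\caG_{l+1,\bbA'}>n_0^2(k_R+1)(k_L+1)$. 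All the work is in step (ii), which you explicitly leave undone.

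The paper bypasses step (i) entirely. It first computes $\caK_l(\bbA')=\caV_l$ explicitly (an honest Vandermonde argument, no spectral Conditions needed), and then uses two clean algebraic facts:
\[
(A'_1)^{-1}\caV_l\subset\caV_{l-1},\qquad (A'_1{}^*)^{-1}(\Theta_l\otimes\id)\bigl(\mnz\otimes\pu\mkk\pd\bigr)\subset(\Theta_{l+1}\otimes\id)\bigl(\mnz\otimes\pu\mkk\pd\bigr).
\]
Given $\Phi=\sum_\mu\Gamma_{l,\bbA'}^{(R)}((\Theta_l\otimes\id)(C_\mu))\otimes\psi_\mu=\sum_\nu\psi_\nu\otimes\Gamma_{l,\bbA'}^{(R)}((\Theta_l\otimes\id)(D_\nu))$, equating coefficients shows $(A'_\nu)^*(\Theta_l\otimes\id)(C_\mu)-(\Theta_l\otimes\id)(D_\nu)(A'_\mu)^*\perp\caV_{l-1}$; the first inclusion then forces $(\Theta_l\otimes\id)(C_\mu)-(A'_1{}^*)^{-1}(\Theta_l\otimes\id)(D_1)(A'_\mu)^*\perp\caV_l=\caK_l(\bbA')$, hence lies in $\ker\Gamma_{l,\bbA'}^{(R)}$; the second inclusion then gives $\Phi=\Gamma_{l+1,\bbA'}^{(R)}((\Theta_{l+1}\otimes\id)(W))\in\caD_{l+1}$. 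This is exactly the ``lift'' you identify as the obstacle, and the two displayed inclusions are the missing ideas that make it go through.
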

\begin{proof}
The inclusion $\subset$ is easy. For each $l\ge \max\{2n_0^6(k_R+1)(k_L+1),l_3\}$ and $t\in(0,1]$, by Proposition 3.1 of \cite{Ogata1} and Lemma \ref{lem:klmain}, Lemma \ref{lem:tspec},
we have $l\ge  2n_0^6(k_R+1)(k_L+1)\ge2 l_{\bb(t)}\ge m_{\bb(t)}$. Therefore,  we have
\[
\caG_{l+1,\bb(t)}=\lmk \caG_{l,\bb(t)}\otimes\bbC^n\rmk\cap \lmk \bbC^n\otimes\caG_{l,\bb(t)}\rmk.
\]
For any $\eta\in\caD_{l+1}$, by Lemma \ref{lem:gtz},
we have
\begin{align*}
\eta=G_{l+1}\eta=\lim_{t\downarrow 0} G_{l+1,\bb(t)}\eta=
\begin{cases}
\lim_{t\downarrow 0} \lmk G_{l,\bb(t)}\otimes\unit\rmk G_{l+1,\bb(t)}\eta=\lmk G_l\otimes\unit\rmk\eta\\
\lim_{t\downarrow 0} \lmk\unit\otimes G_{l,\bb(t)}\rmk G_{l+1,\bb(t)}\eta=\lmk\unit \otimes G_l\rmk\eta.
\end{cases}
\end{align*}
This proves $\caD_{l+1}\subset \lmk \caD_l\otimes\bbC^n\rmk\cap \lmk \bbC^n\otimes\caD_l\rmk$.

The proof of opposite inclusion is similar to the proof of Lemma 2.14 of \cite{Ogata1}.

By the routine argument using Lemma C.7 of \cite{Ogata1}, we see that there exists an $l_4'\in\nan$
such that
\begin{align*}
\caK_l\lmk {\bbA'}\rmk =\caV_l,
\end{align*}
for all $l\ge l_4'$.
The subspace $\caV_l$ is defined by
\begin{align}
\caV_l
=\spa&\lmk \left\{
\begin{gathered}
\lmk \unit_{2n_0-1}\oplus\unit_{n_0+1}\rmk, 
\lmk E_{0,\beta-1}^{(n_0-1,n_0-1)}\oplus e_{1\beta}^{(n_0+1)}\rmk,\\
\lmk E_{-(\alpha-1),0}^{(n_0-1,n_0-1)}\oplus e_{\alpha,n_0+1}^{(n_0+1)}\rmk,\\
\lmk 0_{2n_0-1}\oplus e_{1,n_0+1}^{(n_0+1)}\rmk,\\
\lmk  E_{-(\alpha-1),\beta-1}^{(n_0-1,n_0-1)}\oplus 0_{n_0+1}\rmk 
\end{gathered}\right\}_{\alpha,\beta=2,\ldots, n_0}
\otimes\left\{
\begin{gathered}
\unit, E_{-a,0}^{(k_R,k_L)}, \\
E_{0,b}^{(k_R,k_L)}, E_{-a,b}^{(k_R,k_L)}
\end{gathered}
\right\}_{a=1,\ldots, k_R, b=1,\ldots, k_L}\rmk\hat\Lambda^l.
\end{align}
Here, we set
$
\hat \Lambda:= \lmk \Lambda_{\lal_L}\oplus \Lambda_{\lal_F}\rmk \otimes\Lambda_{\lal_R}
$.

Now we are ready to prove $\caD_{l+1}=\lmk \caD_l\otimes\bbC^n\rmk\cap \lmk \bbC^n\otimes\caD_l\rmk$
for $l\ge l_4'+1$.
Let $l_4'+1\le l\in \nan$ and $\Phi\in\lmk \caD_l\otimes\bbC^n\rmk\cap \lmk \bbC^n\otimes\caD_l\rmk$.
We show that $\Phi\in \caD_{l+1}$.
By (\ref{eq:matd}), there exist $C_\mu,D_\nu\in \mnz\otimes\pu\mkk\pd$, $\mu,\nu=1,\ldots, n$ such that
\begin{align}\label{eq:pcd}
\Phi=\sum_{\mu=1}^n \Gamma_{l,{\bbA'}}^{(R)}\lmk\lmk \Theta_l\otimes\id\rmk\lmk C_\mu\rmk\rmk\otimes\psi_\mu
=\sum_{\nu=1}^n \psi_\nu\otimes\Gamma_{l,{\bbA'}}^{(R)}\lmk\lmk \Theta_l\otimes\id\rmk\lmk D_\nu\rmk\rmk.
\end{align}
Therefore, for all $\varpi^{(l-1)}\in \{1,\ldots,n\}^{\times l-1}$ and $\mu,\nu\in\{1,\ldots,n\}$, we have
\begin{align*}
\Tr\lmk
\lmk \Theta_l\otimes\id\rmk\lmk C_\mu\rmk\lmk {A'}_\nu\widehat{{A'}_{\varpi^{(l-1)}}}
\rmk^*\rmk
=\Tr\lmk
\lmk \Theta_l\otimes\id\rmk\lmk D_\nu\rmk\lmk \widehat{{A'}_{\varpi^{(l-1)}}}{A'}_\mu
\rmk^*\rmk.
\end{align*}
From this, we see that for each $\mu,\nu\in\{1\ldots,n\}$,
\begin{align*}
 {A'}_\nu^* \lmk \Theta_l\otimes\id\rmk\lmk C_\mu\rmk-\lmk \Theta_l\otimes\id\rmk\lmk D_\nu\rmk{A'}_\mu^*
\end{align*}
is orthogonal to $\caK_{l-1}(\bbA')$ with respect to the inner product 
$\braket{}{}_{\Tr}$ given by $\braket{X}{Y}_{\Tr}=\Tr X^*Y$.
As $l-1\ge l_4'$, we have
 $\caK_{l-1}(\bbA')=\caV_{l-1}$.

Note that
$
{A'_1}^{-1}\caV_l\subset \caV_{l-1}
$
from the definition of $\caV_l$.
From this, if $X\in \lmk \Mat_{2n_0-1}\oplus\Mat_{n_0+1}\rmk \otimes \mkk$ belongs to the orthogonal complement of $\caV_{l-1}$ with respect to
$\braket{}{}_{\Tr}$, then $\lmk {A'_1}^*\rmk^{-1} X$ belongs to the orthogonal complement of $\caV_{l}$.
Applying this to $ X={A'}_1^* \lmk \Theta_l\otimes\id\rmk\lmk C_\mu\rmk-\lmk \Theta_l\otimes\id\rmk\lmk D_1\rmk{A'}_\mu^*$,
we see for each $\mu=1,\ldots,n$, that 
\[
Z_\mu:=\lmk \Theta_l\otimes\id\rmk\lmk C_\mu\rmk-\lmk {A'_1}^*\rmk^{-1}\lmk \Theta_l\otimes\id\rmk\lmk D_1\rmk{A'}_\mu^*
\]
is orthogonal to $\caV_l=\caK_l(\bbA')$.
This means that
$Z_\mu$ is in the kernel of $\Gamma_{l,\bbA'}^{(R)}$ for each $\mu=1,\ldots,n$.

By the definition of $\Theta_l$, we have
\begin{align*}
\lmk {A'_1}^*\rmk^{-1}\lmk \lmk \Theta_l\otimes \id\rmk\lmk \zeij{\alpha\beta}\otimes\eij{-a,b}\rmk\rmk
=\lmk \Theta_{l+1}\otimes\id\rmk\lmk \lmk\overline{\lambda_{L, -(\alpha-1)}}\rmk^{-1}\lmk \overline{\lambda_{R,-a}}\rmk^{-1}\zeij{\alpha\beta}\otimes\eij{-a,b}\rmk.
\end{align*}
Therefore, we have
\begin{align*}
\lmk {A'_1}^*\rmk^{-1}\lmk \Theta_l\otimes\id\rmk
\lmk \mnz\otimes\pu\mkk\pd\rmk
\subset \lmk \Theta_{l+1}\otimes\id\rmk\lmk \mnz\otimes \pu\mkk\pd\rmk.
\end{align*}
In particular, we have
\[
\lmk {A'_1}^*\rmk^{-1}\lmk \Theta_l\otimes\id\rmk\lmk D_1\rmk
=\lmk \Theta_{l+1}\otimes\id\rmk\lmk W\rmk,
\]
with some $W\in \mnz\otimes \pu\mkk\pd$.
Hence for each $\mu=1,\ldots,n$, we obtain
\begin{align*}
\lmk \Theta_l\otimes\id\rmk\lmk C_\mu\rmk
=\lmk \Theta_{l+1}\otimes\id\rmk\lmk W\rmk{A'}_\mu^*+Z_\mu,
\end{align*}
where $W\in \mnz\otimes \pu\mkk\pd$ and $Z_\mu\in\ker\Gamma_{l,\bbA'}^{(R)}$.
Substituting this to (\ref{eq:pcd}), we obtain
\begin{align*}
\Phi=\sum_{\mu=1}^n \Gamma_{l,{\bbA'}}^{(R)}\lmk\lmk \Theta_l\otimes\id\rmk\lmk C_\mu\rmk\rmk\otimes\psi_\mu
= \Gamma_{l+1,{\bbA'}}^{(R)}\lmk  \lmk \Theta_{l+1}\otimes\id\rmk\lmk W\rmk \rmk\in\caD_{l+1}.
\end{align*}
This completes the proof.
\end{proof}
The Hamiltonian $H_{\Phi_{1-G_m}}$ given by the interaction $1-G_m$ via the formula (\ref{hamdef}), (\ref{GenHamiltonian}) is gapped with respect to the open boundary conditions:
\begin{lem}\label{lem:ggap}
For any $m,N\in\nan$ with $\max\{l_3,l_4\} \le m\le N$, the kernel of 
$\lmk H_{\Phi_{1-G_m}}\rmk_{[0,N-1]}$ is equal to $\caD_N$,
and its dimension is  $n_0^2(k_L+1)(k_R+1)$.
For each $m\ge \max\{l_3,l_4\}$, there exist a $\gamma_m>0$ and an $N_m\in\nan$ such that
\[
\gamma_m\lmk 1-G_{N}\rmk
\le \lmk H_{\Phi_{1-G_m}}\rmk_{[0,N-1]},\;\;\text{ for all } N\ge N_m.
\] 
(The numbers $l_3,l_4$ are given in Lemma \ref{lem:dfsn}, Lemma \ref{lem:intg}.)
\end{lem}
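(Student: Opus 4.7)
The plan is to apply Lemma~\ref{lem:vpath} to the \emph{constant} family of subspaces $\caD_l(t) \equiv \caD_l$, i.e., taking $\xi_{\alpha,\beta,a,b}^{(l)}(t) \equiv \xi_{\alpha,\beta,a,b}^{(l)}$ for all $t \in [0,1]$. Once \emph{Condition~5} is verified for this constant family, both assertions of Lemma~\ref{lem:ggap} follow directly from parts~3 and~5 of Lemma~\ref{lem:vpath}.

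Conditions~5(i), (ii), (iii) require only bookkeeping. Part~(i) is trivial because the family is constant in $t$. Part~(ii) holds with $l' := l_3$, thanks to Lemma~\ref{lem:dfsn}(1). Part~(iii) is precisely the content of Lemma~\ref{lem:intg}, so one may take $m' := l_4$.

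The main technical obstacle is Condition~5(iv). My approach is to transfer the corresponding estimate from $G_{l,\bbA}$. By Lemma~\ref{lem:aprop}, the pentad $(n,(2n_0-1)(k_L+k_R+1),\puuz,\pddz,\bbA)$ satisfies Conditions~2, 3, and 4, so Proposition~\ref{prop:maingen} applied to the constant path at $\bbA$ yields Condition~5 for the family $\{\Gamma_{l,\bbA}^{(R)}(X)\}_{X\in\caD_\bbA,\; l\in\nan}$. In particular there exist constants $C>0$ and $0<\eta<1$ such that
\begin{equation*}
\lV \lmk\unit_{[0,N-l]} \otimes G_{l,\bbA}\rmk \lmk G_{N,\bbA} \otimes \unit - G_{N+1,\bbA}\rmk \rV \le C\eta^l, \quad N \ge 2l.
\end{equation*}
Combining this with the comparison estimate $\lV G_l - G_{l,\bbA}\rV \le C_8 \kappa^{l/2}$ from Lemma~\ref{lem:dfsn}(2), a triangle-inequality expansion of $(\unit \otimes G_l)(G_N \otimes \unit - G_{N+1})$ about its $\bbA$-analog yields Condition~5(iv) with $\varepsilon_l = O(\eta^l + \kappa^{l/2})$, which is smaller than $1/(2\sqrt l)$ beyond some $l''\in\nan$.

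With Condition~5 in hand, Lemma~\ref{lem:vpath}(3) immediately gives $\ker(H_{\Phi_{1-G_m}})_{[0,N-1]} = \caD_N$ for every $m \ge \max\{l_3,l_4\}$ and $N \ge m$, and Lemma~\ref{lem:dfsn}(1) fixes the dimension at $n_0^2(k_R+1)(k_L+1)$, proving the first assertion. For the second, Lemma~\ref{lem:vpath}(5) supplies
\begin{equation*}
\frac{\gamma_l^m \wedge \gamma_{2l}^m}{4(l+2)} \lmk\unit - G_N\rmk \le \lmk H_{\Phi_{1-G_m}}\rmk_{[0,N-1]}
\end{equation*}
for $l \ge \max\{l'',l_3,m\}$ and $N \ge 2l+1$; fixing any such $l$ and setting $\gamma_m := (\gamma_l^m \wedge \gamma_{2l}^m)/(4(l+2))$, which is strictly positive by Lemma~\ref{lem:vpath}(4), together with $N_m := 2l+1$, completes the proof.
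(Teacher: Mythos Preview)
Your proof is correct and follows essentially the same route as the paper's own argument. Both proofs transfer the overlap/gap estimate from the $\bbA$-sequence $\{\caG_{l,\bbA}\}$ to $\{\caD_l\}$ via the exponential bound $\lV G_l - G_{l,\bbA}\rV \le C_8\kappa^{l/2}$ of Lemma~\ref{lem:dfsn}, then invoke Nachtergaele's theorem; the only cosmetic difference is that the paper phrases this in terms of \emph{Condition~1} of \cite{Ogata1} and applies \cite{Nachtergaele:1996vc} directly, whereas you package the same steps as verifying \emph{Condition~5} for a constant path and read off the conclusion from Lemma~\ref{lem:vpath}.
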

\begin{proof}
From Lemma \ref{lem:intg}, $\{\caD_l\}_l$ satisfies the intersection property. 
From Lemma \ref{lem:aprop},
 $\{\caG_{l,\bbA}\}_l$ satisfies {\it Condition 1} of \cite{Ogata1} Definition 2.1.
By the second statement of Lemma \ref{lem:dfsn}, the differemce between $G_l$ and $G_{l,\bbA}$ decays exponentially fast, as $l\to\infty$.
Therefore, $\{\caD_l\}_l$ satisfies the second property of {\it Condition 1}.
Hence we conclude that $\{\caD_l\}_l$ satisfies {\it Condition 1} .
Therefore, by the Theorem of Nachtergaele, \cite{Nachtergaele:1996vc}(see Theorem 2.2 of \cite{Ogata1} for the version we use)
we obtain the claim of the current Lemma.
\end{proof}
Note that for $m\ge \max\{l_3,l_4\}$, we have $n^m>n_0^2(k_R+1)(k_L+1)$ by Remark \ref{rem:nt}.
The Hamiltonian $H_{\Phi_{1-G_{m_1}}}$ and $H_{\Phi_{m_2,\bbA}}$ are equivalent with respect to the typeII-$C^1$-classification.
\begin{lem}\label{lem:gaeq}
Let $m_1,m_2\in\nan
$ with $\max\{l_3,l_4\} \le m_1$ and 
$\max\{l_\bbA,m_\bbA\} \le m_2$. (Recall the definition of $l_\bbA$ from Lemma \ref{lem:asp}.)
For each $t\in[0,1]$ we define $\Phi(t)\in\caJ$ by
\[
\Phi(X; t):=(1-t)\Phi_{1-G_{m_1}}(X)+t\Phi_{m_{2},\bbA}(X),\quad X\in {\mathfrak S}_{\bbZ}.
\]
Then
 we have
  $H_{\Phi_{1-G_{m_1}}}\simeq_{II}H_{\Phi_{m_2,\bbA}}$.
\end{lem}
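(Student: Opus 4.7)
The plan is to verify Definition~\ref{def:phasec} directly for the straight-line path $\Phi(t) := (1-t)\Phi_{1-G_{m_1}} + t\Phi_{m_2,\bbA}$, exploiting the exponential closeness of the ground-state projections $G_N$ and $G_{N,\bbA}$ provided by Lemma~\ref{lem:dfsn}. Set $m := \max\{m_1,m_2\}$ so that $\Phi : [0,1] \to \caJ_m$ is affine and in particular continuous and piecewise $C^\infty$, and $(H_{\Phi(t)})_{[0,N-1]} = (1-t)(H_1)_{[0,N-1]} + t(H_2)_{[0,N-1]}$ with $H_1 := H_{\Phi_{1-G_{m_1}}}$ and $H_2 := H_{\Phi_{m_2,\bbA}}$. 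Both endpoint Hamiltonians are frustration-free with spectral gaps on finite intervals that are uniform in $N$: the gap $\gamma_{m_1}$ of $H_1$ with kernel $\caD_N$ comes from Lemma~\ref{lem:ggap}, and the gap $\gamma_2$ of $H_2$ with kernel $\caG_{N,\bbA}$ comes from Nachtergaele's theorem applied to $\{\caG_{l,\bbA}\}_l$, whose {\it Condition 1} is established in Lemma~\ref{lem:aprop}.

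The key low-energy estimate is an exponentially small bound on the energy of vectors in $\caD_N$ under $H_2$. For $\xi \in \caD_N$, since $(H_2)_{[0,N-1]}$ vanishes on $\caG_{N,\bbA}$ (for $N \ge m_2 \ge m_\bbA$ the local kernels intersect to $\caG_{N,\bbA}$), one has
\[
(H_2)_{[0,N-1]}\xi = (H_2)_{[0,N-1]}(G_N - G_{N,\bbA})\xi.
\]
Using the trivial operator norm bound $\lV (H_2)_{[0,N-1]}\rV \le CN$ (a sum of $O(N)$ local terms of bounded norm) together with Lemma~\ref{lem:dfsn}, this yields $\lV (H_2)_{[0,N-1]}\xi\rV \le C' N \kappa^{N/2}\lV\xi\rV$. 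Consequently $\langle \xi, (H_{\Phi(t)})_{[0,N-1]}\xi\rangle \le C' N \kappa^{N/2}\lV\xi\rV^2$ for every $t \in [0,1]$, and by min-max with trial subspace $\caD_N$, the lowest $\dim \caD_N = n_0^2(k_R+1)(k_L+1)$ eigenvalues of $(H_{\Phi(t)})_{[0,N-1]}$ lie in $[0, C'N\kappa^{N/2}]$.

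For the matching lower bound on the remainder of the spectrum, let $\zeta \in \caD_N^\perp$ be a unit vector. Lemma~\ref{lem:ggap} gives $\langle \zeta,(H_1)_{[0,N-1]}\zeta\rangle \ge \gamma_{m_1}$, and the gap of $H_2$ combined with $\lV G_{N,\bbA}\zeta\rV = \lV(G_{N,\bbA} - G_N)\zeta\rV \le C_8 \kappa^{N/2}$ yields $\langle \zeta,(H_2)_{[0,N-1]}\zeta\rangle \ge \gamma_2 \lV (1 - G_{N,\bbA})\zeta\rV^2 \ge \gamma_2(1 - C_8^2 \kappa^N)$. Convex-combining and applying min-max, $\lambda_{\dim\caD_N + 1}((H_{\Phi(t)})_{[0,N-1]}) \ge \tfrac12 \min(\gamma_{m_1},\gamma_2)$ uniformly in $t$ for $N$ large. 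Setting $\gamma := \tfrac14 \min(\gamma_{m_1},\gamma_2)$ and taking $I(t) := [0,\gamma]$ (constant in $t$, hence trivially smoothly varying) together with $\varepsilon_N := C' N \kappa^{N/2} \to 0$, both conditions (i) and (ii) of Definition~\ref{def:phasec} hold for $N \ge N_0$ with $N_0$ sufficiently large.

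The main obstacle is the passage from the a priori local closeness of $\caD_l$ and $\caG_{l,\bbA}$ to an estimate with the correct scaling in $N$: the naive sum over the $O(N)$ local terms of $H_2$, each controlled by $\lV G_{m_2} - G_{m_2,\bbA}\rV \le C_8 \kappa^{m_2/2}$ (which does not depend on $N$), would produce a linearly growing $\varepsilon_N$. The crucial observation is to instead use the identity $(H_2)\xi = (H_2)(G_N - G_{N,\bbA})\xi$ and invoke the \emph{global} closeness $\lV G_N - G_{N,\bbA}\rV \le C_8\kappa^{N/2}$ at the full interval length $N$, so that the polynomial factor $\lV (H_2)_{[0,N-1]}\rV = O(N)$ is absorbed by the exponential decay $\kappa^{N/2}$.
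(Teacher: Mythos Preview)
Your proof is correct and follows essentially the same route as the paper: the paper's one-line proof invokes Corollary~1.4 of \cite{Ogata2}, which is precisely the straight-line interpolation argument you have written out, using the exponential closeness $\lV G_N-G_{N,\bbA}\rV\le C_8\kappa^{N/2}$ from Lemma~\ref{lem:dfsn} together with the uniform gaps of the two endpoint Hamiltonians and min--max to control the low-lying spectrum of $(H_{\Phi(t)})_{[0,N-1]}$.
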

\begin{proof}
It is proven by the same argument as in the proof of Corollary 1.4 of \cite{Ogata2}.
\end{proof}
At the end of this subsection, we prove Lemmas which we will use in the next section to discuss about the bulk classification. 
We introduce the following notation.
\begin{notation}\label{nota:ts}
Let $H_\Phi$ be a frustration free Hamiltonian given by a positive interaction 
$\Phi\in \caJ$.
For each $\Gamma\subset\bbZ$, we set
\[
\widetilde{\caS_{\bbZ,\Gamma}}\lmk H_\Phi\rmk:=
\left\{
\varphi\mid \varphi \;\text{is a state on} \;\caA_\bbZ \;\text{with}\;
\varphi\lmk\Phi(X)\rmk=0\;\text{for all} \;X\in \mathfrak S_{\Gamma}
\right\}.
\]
\end{notation}
\begin{rem}\label{rem:ic}
By definition, we clearly have
$\tilde\caS_{\bbZ,\Gamma_1}\lmk H_\Phi\rmk\subset \tilde\caS_{\bbZ,\Gamma_2}\lmk H_\Phi\rmk$
if $\Gamma_2\subset \Gamma_1$.
\end{rem}
From {\it 2.} of Lemma \ref{lem:dfsn}, we obtain the following.
\begin{lem}\label{lem:sag}
For any $m_1,m_2\in\nan$ with $m_1\ge l_4$, $m_2\ge  m_{\bbA}$, and $ M\in\nan$, we have
\begin{align}
\caS_{\bbZ}\lmk H_{\Phi_{1-G_{m_1}}}\rmk =\caS_{\bbZ}\lmk H_{\Phi_{m_2,\bbA}}\rmk,\quad
\widetilde{\caS_{\bbZ,[-M,M]^{c}}}\lmk H_{\Phi_{1-G_{m_1}}}\rmk
=\widetilde{\caS_{\bbZ,[-M,M]^{c}}}\lmk H_{\Phi_{m_2,\bbA}}\rmk.
\end{align}
\end{lem}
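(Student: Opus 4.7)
The plan is to observe that a state $\varphi\in\caA_{\bbZ}$ lies in $\caS_{\bbZ}(H_{\Phi_{1-G_{m_1}}})$ iff $\varphi\lmk\tau_x(1-G_{m_1})\rmk=0$ for every $x\in\bbZ$, and that by the intersection property of $\{\caD_l\}$ (Lemma \ref{lem:intg}) this is equivalent to the restriction $\rho_{\varphi,[x,x+N-1]}$ being supported in $\tau_x\lmk\caD_N\rmk$ for every $x\in\bbZ$ and every $N\ge m_1\ge l_4$. The analogous statement holds for $\caS_{\bbZ}(H_{\Phi_{m_2,\bbA}})$ with $\caG_{N,\bbA}$ in place of $\caD_N$, using that $\{\caG_{l,\bbA}\}$ satisfies \emph{Condition 1} (Lemma \ref{lem:aprop}) and $m_2\ge m_\bbA$. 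So the first equality reduces to transferring the ``support in $\caD_N$'' condition to the ``support in $\caG_{N,\bbA}$'' condition.

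For the forward direction of the first equality, let $\varphi\in\caS_{\bbZ}(H_{\Phi_{1-G_{m_1}}})$ and fix $x\in\bbZ$. Iterating the intersection property of $\{\caG_{l,\bbA}\}$ yields the operator inequality $\tau_x\lmk 1-G_{m_2,\bbA}\rmk\le \tau_x\lmk1- G_{N,\bbA}\rmk$ (after tensoring with identities on the rest of $[x,x+N-1]$) for all $N\ge m_2$. Evaluating on $\varphi$ and using $\varphi(\tau_x(1-G_N))=0$ together with Lemma \ref{lem:dfsn}.2 gives
\begin{align*}
0\le \varphi\lmk\tau_x(1-G_{m_2,\bbA})\rmk
\le \varphi\lmk\tau_x(1-G_{N,\bbA})\rmk
= \varphi\lmk\tau_x(G_N-G_{N,\bbA})\rmk
\le C_8\kappa^{N/2}.
\end{align*}
Letting $N\to\infty$ shows $\varphi\lmk\tau_x(1-G_{m_2,\bbA})\rmk=0$, hence $\varphi\in\caS_{\bbZ}(H_{\Phi_{m_2,\bbA}})$. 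The reverse inclusion is completely symmetric, swapping the roles of $G_l$ and $G_{l,\bbA}$ and using the intersection property of $\{\caD_l\}$.

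For the second equality, the same argument works with one technical adjustment: the set $\mathfrak S_{[-M,M]^c}$ only controls interactions supported in $(-\infty,-M-1]\cup[M+1,\infty)$. Given $\varphi\in\widetilde{\caS_{\bbZ,[-M,M]^c}}\lmk H_{\Phi_{1-G_{m_1}}}\rmk$, one obtains $\varphi\lmk\tau_x(1-G_N)\rmk=0$ only for those $x,N$ with $[x,x+N-1]\subset[-M,M]^c$. However, for any $a$ with $[a,a+m_2-1]\subset[-M,M]^c$, the interval lies in one of the two half-lines $(-\infty,-M-1]$ or $[M+1,\infty)$, so one can extend $[a,a+m_2-1]$ to $[a-k,a+m_2-1]$ (or $[a,a+m_2-1+k]$) for arbitrarily large $k$ while staying inside $[-M,M]^c$. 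Iterating the intersection property of $\{\caG_{l,\bbA}\}$ on this enlarged interval and applying Lemma \ref{lem:dfsn}.2 as above yields $\varphi\lmk\tau_a(1-G_{m_2,\bbA})\rmk\le C_8\kappa^{(k+m_2)/2}\to 0$, so $\varphi\in\widetilde{\caS_{\bbZ,[-M,M]^c}}\lmk H_{\Phi_{m_2,\bbA}}\rmk$; the reverse inclusion is analogous.

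The only slightly delicate point is verifying the inequality $\tau_x(1-G_{m_2,\bbA})\le \tau_x(1-G_{N,\bbA})$ (tensored appropriately with identities), which follows because Condition 1 gives $\caG_{N,\bbA}\subset \bbC^{\otimes j}\otimes\caG_{m_2,\bbA}\otimes\bbC^{\otimes(N-m_2-j)}$ for every $0\le j\le N-m_2$; checking the same for $\caD_N$ with $\caD_{m_1}$ uses the intersection property from Lemma \ref{lem:intg}. No other ingredient is needed beyond the exponential estimate in Lemma \ref{lem:dfsn}.2 and the frustration-free characterization of the ground state spaces.
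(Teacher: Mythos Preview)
Your proof is correct and follows exactly the route the paper indicates: the paper's own ``proof'' is the single line ``From {\it 2.} of Lemma \ref{lem:dfsn}, we obtain the following,'' and you have simply written out what that one-liner means, using the intersection properties of $\{\caD_l\}$ (Lemma \ref{lem:intg}) and $\{\caG_{l,\bbA}\}$ (Lemma \ref{lem:aprop}) together with the exponential bound $\lV G_N-G_{N,\bbA}\rV\le C_8\kappa^{N/2}$ to squeeze the relevant expectations to zero. The half-line extension argument you give for the second equality is the natural way to handle the $[-M,M]^c$ restriction and needs no further justification.
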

\begin{defn}
We say that an interaction $\Phi\in \caJ$ satisfies {\it Condition 6} if the followings hold.
\begin{enumerate}
\item There exists a state $\omega_\infty$ on $\caA_{\bbZ}$ such that
$\caS_{\bbZ}\lmk H_{\Phi}\rmk= \left\{\omega_\infty\right\}$.
\item For any $M\in\nan$ and $\varphi\in\widetilde{\caS_{\bbZ,[-M,M]^{c}}}\lmk H_{\Phi}\rmk$,
$\varphi$ is quasi-equivalent to $\omega_\infty$.
\end{enumerate}
\end{defn}
\begin{lem}\label{lem:24}
Let $h\in \caA_{\bbZ}$ be a positive operator given by either of the followings:
\begin{align*}
h:=\left\{
\begin{gathered}
1-G_{m,\bbB},\quad \bbB\in\Class A,\quad m\ge m_{\bbB},\\
1-G_{m,\bbA}, \quad \bbA : \text{given in }\quad (\ref{eq:defa}),\quad m\ge m_{\bbA} \\
1-G_m, \quad G_m : \text{given in }\quad (\ref{eq:gldef}),\quad m\ge {l_4} 
\end{gathered}
\right..
\end{align*}
Then
$\Phi_h$  given by the formula (\ref{hamdef}) satisfies
the {\it Condition 6}.
(Recall $l_4$ given in Lemma \ref{lem:intg}.)
\end{lem}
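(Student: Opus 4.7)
The proof splits into three cases according to the form of $h$. The case $h = 1-G_m$, $m \ge l_4$, reduces at once to the case $h = 1-G_{m',\bbA}$: for any $m' \ge \max\{l_\bbA, m_\bbA\}$ and any $M \in \nan$, Lemma \ref{lem:sag} gives
\[
\caS_{\bbZ}\lmk H_{\Phi_{1-G_m}}\rmk = \caS_{\bbZ}\lmk H_{\Phi_{m',\bbA}}\rmk,\qquad \widetilde{\caS_{\bbZ,[-M,M]^{c}}}\lmk H_{\Phi_{1-G_m}}\rmk = \widetilde{\caS_{\bbZ,[-M,M]^{c}}}\lmk H_{\Phi_{m',\bbA}}\rmk,
\]
so Condition 6 for $\Phi_{1-G_m}$ follows from Condition 6 for $\Phi_{m',\bbA}$.

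For $h = 1-G_{m,\bbA}$, property (1) of Condition 6 is already contained in (\ref{eq:ainf}): $\caS_{\bbZ}(H_{\Phi_{m,\bbA}}) = \{\omega_{\bbA,\infty}\}$ with $\omega_{\bbA,\infty} = \bigotimes_{\bbZ}\rho_0$, an infinite tensor product of the pure vector state $\rho_0 = \braket{\chi_1^{(n)}}{\lmk\cdot\rmk \chi_1^{(n)}}$, so $\omega_{\bbA,\infty}$ is itself pure on $\caA_{\bbZ}$. For property (2), fix $M \in \nan$ and $\varphi \in \widetilde{\caS_{\bbZ,[-M,M]^{c}}}(H_{\Phi_{m,\bbA}})$. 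Lemma \ref{lem:mme} supplies the tail estimate
\[
\lV \lmk \varphi - \omega_{\bbA,\infty}\rmk |_{\caA_{[-L+1,L-1]^{c}}}\rV \le C_\bbA^{'''} s_\bbA^{L-M},\qquad L \ge M+1,
\]
whose right-hand side tends to $0$ as $L \to \infty$. Purity of $\omega_{\bbA,\infty}$, combined with the standard criterion that any state whose restrictions to the complements of finite intervals norm-converge to those of a pure state lies in its folium (see e.g.\ \cite{BR2}), then implies that $\varphi$ is quasi-equivalent to $\omega_{\bbA,\infty}$.

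The case $h = 1-G_{m,\bbB}$, $\bbB \in \ClassA$, is entirely parallel. Part I \cite{Ogata1} identifies the unique element $\omega_{\bbB,\infty}$ of $\caS_{\bbZ}(H_{\Phi_{m,\bbB}})$ and produces via primitivity of $\oo_\bbB$ its minimal MPS representation, from which purity of $\omega_{\bbB,\infty}$ follows. Lemma 3.25 of \cite{Ogata1} is the $\bbB$-analog of Lemma \ref{lem:mme}, giving the same exponential tail-norm estimate, and the folium argument above then delivers quasi-equivalence.

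The only non-routine ingredient is the passage from exponential norm convergence of the tail restrictions of $\varphi - \omega_\infty$ to quasi-equivalence of $\varphi$ and $\omega_\infty$. This step is where purity of $\omega_\infty$ is essential; the tail-norm estimate by itself controls only the restriction of $\varphi$ to tail algebras, and it is purity that promotes this into a statement about the global GNS representation.
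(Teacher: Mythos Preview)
Your proof is correct and tracks the paper's argument closely: the reduction of the $G_m$ case to the $\bbA$ case via Lemma \ref{lem:sag}, and the appeal to Lemma \ref{lem:mme} (resp.\ Lemma 3.25 of \cite{Ogata1}) for the exponential tail estimate, are exactly what the paper does. The one point of divergence is the quasi-equivalence step. The paper invokes its own Lemma \ref{lem:qe}, a self-contained appendix result stating that exponential tail agreement of two states on a spin-chain algebra already forces quasi-equivalence, with no purity or factoriality hypothesis. You instead route through purity of $\omega_\infty$ and a folium-type criterion from \cite{BR2}. Both arguments are valid here; your closing remark that purity is ``essential'' is slightly overstated, since Lemma \ref{lem:qe} shows the tail estimate by itself suffices.
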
 
\begin{proof}
Let us first consider  $1-G_{m,\bbB}$, $1-G_{m,\bbA}$ .
The first condition of {\it Condition 6} for $1-G_{m,\bbB}$, $1-G_{m,\bbA}$ is already checked.
From Lemma \ref{lem:mme} and Lemma \ref{lem:qe}, the second property of {\it Condition 6} holds
for $1-G_{m,\bbA}$.
As the corresponding property holds for  $1-G_{m,\bbB}$
from Lemma 3.25 of \cite{Ogata1}, the latter property holds also for $1-G_{m,\bbB}$.

Let us consider $1-G_m$.
From Lemma \ref{lem:sag} and {\it Condition 6} for $1-G_{m,\bbA}$ ,
we obtain  {\it Condition 6} for $1-G_m$.
\end{proof}
\begin{defn}
Let $\Phi_0,\Phi_1\in \caJ$ be positive interactions.
For each $\Lambda \in {\mathfrak S}_{\bbZ}$ and $i=0,1$, we denote by $G_{\Lambda,i}$
the orthogonal projection onto $\lmk \ker H_{\Phi_i}\rmk_{\Lambda}$.
We say the pair $(\Phi_0,\Phi_1)$ satisfies {\it Condition 7} if the followings hold.
\begin{enumerate}
\item The Hamiltonians $H_{\Phi_0},H_{\Phi_1}$ are frustration free.
\item The Hamiltonian $H_{\Phi_1}$ is gapped with respect to the open boundary conditions.
\item There exist constants $C>0$ and $0<r<1$ such that
\begin{align}\label{eq:g1g0d}
\lV G_{[0,N-1],0}G_{[0,N-1],1}-G_{[0,N-1],1}\rV\le C r^N,\quad \text{for all}\;N\in\nan.
\end{align}
\item There exists a state $\omega_\infty$ on $\caA_{\bbZ}$ such that
$\caS_{\bbZ}\lmk H_{\Phi_0}\rmk= \caS_{\bbZ}\lmk H_{\Phi_1}\rmk=\left\{\omega_\infty\right\}$.
\item For any $M\in\nan$ and $\varphi\in\widetilde{\caS_{\bbZ,[-M,M]^{c}}}\lmk H_{\Phi_1}\rmk$,
$\varphi$ is quasi-equivalent to $\omega_\infty$.
\end{enumerate}
\end{defn}
\begin{defn}
Let $\Phi_0,\Phi_1\in \caJ$ be positive interactions. We say the pair $(\Phi_0,\Phi_1)$ satisfies {\it Condition 8} if 
it satisfies  {\it Condition 7} and either (i) or (ii) of the following conditions holds.
\begin{enumerate}
\item[(i)] 
\begin{enumerate}
\item[(a)]There exists a unique $\alpha_{\Phi_0}$-ground state, and
\item[(b)]$\Phi_0\in\caJ_B$.
\end{enumerate}\item[(ii)]
\begin{enumerate}
\item[(a)] The Hamiltonian $H_{\Phi_0}$ is gapped with respect to the open boundary conditions, and
\item[(b)]for any $M\in\nan$ and $\varphi\in\widetilde{\caS_{\bbZ,[-M,M]^{c}}}\lmk H_{\Phi_0}\rmk$,
$\varphi$ is quasi-equivalent to $\omega_\infty$.
(Here, $\omega_\infty$ is the state given by {\it 4.} of {\it Condition 7}.)
\end{enumerate}
\end{enumerate}
\end{defn}
From Lemma \ref{lem:a1a5a}, Lemma \ref{lem:dfsn}, Lemma \ref{lem:intg}, Lemma \ref{lem:ggap},  Lemma  \ref{lem:sag} and Lemma \ref{lem:24}, we have the following.
\begin{lem}\label{lem:ag7}
Let $m_1,m_2$ with $\max\{l_3,l_4\} \le m_1$ and 
$2l_\bbA \le m_2$. 
Then $ (\Phi_{1-G_{m_1}},\Phi_{m_2,\bbA})$ satisfies {\it Condition 8}.
\end{lem}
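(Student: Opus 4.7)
The plan is to verify conditions 1--5 of Condition 7, and then to verify option (ii) of Condition 8, for the pair $(\Phi_0,\Phi_1):=(\Phi_{1-G_{m_1}},\Phi_{m_2,\bbA})$. Most of the pieces are already in place: frustration-freeness of $H_{\Phi_0}$ (item 1) is the first statement of Lemma \ref{lem:ggap}; frustration-freeness of $H_{\Phi_1}$ and its gap with respect to the open boundary condition (items 1 and 2) follow from Lemma \ref{lem:a1a5a} together with the MPS construction for $\bbA$. For item 4, Lemma \ref{lem:24} applied to $\Phi_{m_2,\bbA}$ gives that $\caS_\bbZ(H_{\Phi_1})=\{\omega_\infty\}$ is a singleton, and Lemma \ref{lem:sag} shows the same singleton is $\caS_\bbZ(H_{\Phi_0})$. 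Item 5 is precisely the second clause of Condition 6 for $\Phi_{m_2,\bbA}$, which holds by Lemma \ref{lem:24}.

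The one genuine computation is item 3, the exponential estimate on $\lV G_{[0,N-1],0}G_{[0,N-1],1}-G_{[0,N-1],1}\rV$. For $N\ge m_1$, Lemma \ref{lem:ggap} identifies $G_{[0,N-1],0}$ with $G_N$, and by definition $G_{[0,N-1],1}=G_{N,\bbA}$. Since $G_{N,\bbA}$ is a projection, I would write
\begin{equation*}
G_N G_{N,\bbA}-G_{N,\bbA}=(G_N-G_{N,\bbA})G_{N,\bbA},
\end{equation*}
so that $\lV G_N G_{N,\bbA}-G_{N,\bbA}\rV\le \lV G_N-G_{N,\bbA}\rV\le C_8\kappa^{N/2}$ by part 2 of Lemma \ref{lem:dfsn}. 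Taking $r:=\sqrt{\kappa}\in(0,1)$ and enlarging the constant $C$ to absorb the finitely many small $N$ gives item 3.

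It remains to select alternative (ii) of Condition 8. Item (ii)(a) ($H_{\Phi_0}$ gapped with respect to the open boundary condition) is exactly the second statement of Lemma \ref{lem:ggap}, which applies thanks to the hypothesis $m_1\ge \max\{l_3,l_4\}$. Item (ii)(b) is the quasi-equivalence clause of Condition 6 applied to $\Phi_{1-G_{m_1}}$, established in Lemma \ref{lem:24}. Putting everything together yields Condition 8.

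I do not expect any real obstacle beyond bookkeeping of the indices. The one thing to be careful about is that the preceding lemmas (Lemmas \ref{lem:sag}, \ref{lem:24}, \ref{lem:aprop}) are stated with thresholds such as $m_\bbA$, $l_\bbA$, $m_{\bbA}$, and one must check that $m_2\ge 2l_\bbA$ implies $m_2\ge m_\bbA$ (using $m_\bbA\le 2l_\bbA$ from Lemma \ref{lem:aprop}) and $m_2\ge l_\bbA$, so that each invocation of those lemmas is valid. Otherwise the argument is a straightforward assembly of the results collected in the previous subsection.
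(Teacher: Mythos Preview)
Your proposal is correct and follows essentially the same approach as the paper, which simply cites Lemmas \ref{lem:a1a5a}, \ref{lem:dfsn}, \ref{lem:intg}, \ref{lem:ggap}, \ref{lem:sag}, and \ref{lem:24} without further detail. Your write-up is in fact more explicit than the paper's, correctly identifying which lemma supplies each numbered item of Conditions 7 and 8 and spelling out the one nontrivial computation (item 3) via $G_NG_{N,\bbA}-G_{N,\bbA}=(G_N-G_{N,\bbA})G_{N,\bbA}$ together with part 2 of Lemma \ref{lem:dfsn}; your remark about absorbing the finitely many small $N$ into the constant and the threshold check $m_\bbA\le 2l_\bbA$ from Lemma \ref{lem:aprop} are exactly the bookkeeping needed.
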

By Corollary 1.4 of \cite{Ogata2}, Theorem 1.18 of \cite{Ogata1}, Lemma \ref{lem:a1a5a}, Lemma \ref{lem:dfsn}, and Lemma \ref{lem:24},
we obtain the following.
\begin{lem}\label{lem:av7}
Let $m_1,m_2$ with $2l_\bbA  \le m_1$ and 
$2l_\bbV\le m_2$. 
Then $ (\Phi_{{m_1,\bbA}},\Phi_{m_2,\bbV})$ satisfies {\it Condition 8}.
\end{lem}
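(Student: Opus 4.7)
The plan is to verify all five items of Condition 7 together with option (ii) of Condition 8 for the pair $(\Phi_{m_1,\bbA},\Phi_{m_2,\bbV})$, invoking the already-established machinery for $\bbA$ and the MPS $\bbV$.

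First the routine items. Frustration freeness (Condition 7(1)) is built into the MPS construction: both $1-G_{m_1,\bbA}$ and $1-G_{m_2,\bbV}$ are orthogonal projections with non-empty kernels on every finite interval. The gap (Condition 7(2)) for $H_{\Phi_{m_2,\bbV}}$ follows from Theorem 1.18 of \cite{Ogata1} applied to $\bbV\in\ClassA$ with $m_2\ge 2l_\bbV$. Condition 7(4) is immediate from (\ref{eq:aakk}) in Notation \ref{nota:bbk}: $\caS_{\bbZ}(H_{\Phi_{m_1,\bbA}})=\caS_{\bbZ}(H_{\Phi_{m_2,\bbV}})=\{\omega_{\bbA,\infty}\}$, so both Hamiltonians share the same unique bulk ground state $\omega_\infty:=\omega_{\bbA,\infty}=\omega_{\bbV,\infty}$. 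Condition 7(5) is Lemma \ref{lem:24} applied to $h=1-G_{m_2,\bbV}$ (with $m_2\ge 2l_\bbV\ge m_\bbV$), which gives Condition 6 for $\Phi_{m_2,\bbV}$ and in particular the quasi-equivalence of any $\varphi\in\widetilde{\caS_{\bbZ,[-M,M]^c}}(H_{\Phi_{m_2,\bbV}})$ to $\omega_\infty$.

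For Condition 8 we take option (ii). Part (a) asks that $H_{\Phi_{m_1,\bbA}}$ be gapped with respect to open boundary conditions; this is condition [A4] of \cite{Ogata2}, which holds for $H_{\Phi_{m_1,\bbA}}$ by Lemma \ref{lem:a1a5a} (using $m_1\ge 2l_\bbA$). Part (b) is Lemma \ref{lem:24} applied to $h=1-G_{m_1,\bbA}$ (with $m_1\ge 2l_\bbA\ge m_\bbA$ by Lemma \ref{lem:aprop}), which yields Condition 6 for $\Phi_{m_1,\bbA}$, and in particular the required quasi-equivalence statement.

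The remaining, and main, obstacle is Condition 7(3), the exponential decay
\[
\lV G_{[0,N-1],\bbA}\,G_{[0,N-1],\bbV}-G_{[0,N-1],\bbV}\rV\le C r^N,
\]
where the two projections are the ground-state projections of $H_{\Phi_{m_1,\bbA}}$ and $H_{\Phi_{m_2,\bbV}}$ on $[0,N-1]$. This is precisely the content of the spectral-projection approximation underlying Corollary 1.4 of \cite{Ogata2}: since $\bbV$ was constructed in Notation \ref{nota:bbk} as the MPS approximation of $H_{\Phi_{m_1,\bbA}}$, the finite-interval ground-state projections of the two Hamiltonians differ by an exponentially small amount in $N$, and combined with the uniform spectral gap of $H_{\Phi_{m_2,\bbV}}$ from Theorem 1.18 of \cite{Ogata1} this yields the stated bound. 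This step plays for the present pair exactly the role that the explicit decay $\lV G_l-G_{l,\bbA}\rV\le C_8\kappa^{l/2}$ from Lemma \ref{lem:dfsn} played for $(\Phi_{1-G_{m_1}},\Phi_{m_2,\bbA})$ in the proof of Lemma \ref{lem:ag7}.
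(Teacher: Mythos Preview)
Your proof is correct and takes essentially the same approach as the paper: you verify each item of Condition 7 together with option (ii) of Condition 8 by invoking Corollary 1.4 of \cite{Ogata2} for the exponential projection approximation, Theorem 1.18 of \cite{Ogata1} for the gap of $H_{\Phi_{m_2,\bbV}}$, Lemma \ref{lem:a1a5a} for the gap of $H_{\Phi_{m_1,\bbA}}$, and Lemma \ref{lem:24} for the quasi-equivalence statements. Your write-up is in fact more explicit than the paper's one-line list of citations; the paper additionally mentions Lemma \ref{lem:dfsn}, but that lemma concerns the pair $(G_l,G_{l,\bbA})$ and is not directly needed here.
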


\subsection{The overlaps of $G_{N,\bbB(t)}$}
Throughout this subsection let $2\le n_0\in\nan$, $k_R,k_L\in\nan\cup\{0\}$, and $\bbA$, $\bb(t)$ be given by (\ref{eq:defa}), (\ref{eq:pbp}).
As $\bbB(t)$ belongs to $\Class A$ if $t\in (0,1]$ from Lemma \ref{lem:tspec}, 
we obtain $e_{\bbB(t)}$ and $\varphi_{\bbB(t)}$ given in Proposition 3.1 of \cite{Ogata1}.
In this subsection, we show the following Lemma.
\begin{lem}\label{lem:overl}
There exist $C_9$, $0<s_4<1$, and $l_5\in\nan$ such that
\begin{align}
\sup_{t\in[0,1]}\lV \lmk \unit_{[0,N-l]}\otimes \hat G_{l}(t)\rmk\lmk \hat G_{N}(t)\otimes\unit_{\{N\}}-\hat G_{N+1}(t)\rmk \rV
\le  n_0^2\lmk k_R+1\rmk\lmk k_L+1\rmk C_9\lmk s_4^l+s_{4}^{N-l}\rmk,
\end{align} 
for all $N,l\in\nan$ such that $N\ge l\ge \max\{l_3,l_5,  n_0^6\lmk k_R+1\rmk\lmk k_L+1\rmk\}$.
(Recall $l_3$ given in Lemma \ref{lem:dfsn}.)
\end{lem}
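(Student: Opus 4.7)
The plan is to follow the general strategy of Proposition~\ref{prop:maingen}, splitting $[0,1]$ at the singular endpoint $t=0$ and gluing the two regimes using the continuity of $\hat G_l(t)$ from Lemma~\ref{lem:gtz}. The decisive new feature compared to Proposition~\ref{prop:maingen} is that $\bbB(0)\notin\ClassA$, so the role of $\bbB$ near $t=0$ must be played by the auxiliary tuple $\bbA'$ of~(\ref{eq:defap}), with $\bbA$ as its essential part.

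For the endpoint $t=0$, I would use the matrix-product representation~(\ref{eq:matd}) of $\caD_l$ via $\bbA'=\bbA\oplus\bbF$. Lemma~\ref{lem:aprop} verifies \textit{Conditions 2, 3, 4} for the pentad $(n,(2n_0-1)(k_R+k_L+1),\puuz,\pddz,\bbA)$, so Lemma 2.17 of Part~I \cite{Ogata1} gives the overlap estimate
\[
\lV(\unit\otimes G_{l,\bbA})(G_{N,\bbA}\otimes\unit-G_{N+1,\bbA})\rV \le C(s_\bbA^l+s_\bbA^{N-l})
\]
for all $l$ and $N\ge l$ sufficiently large. Lemma~\ref{lem:dfsn} provides the approximation $\lV G_l-G_{l,\bbA}\rV\le C_8\kappa^{l/2}$; three applications of the triangle inequality (to $G_l$, $G_N$, $G_{N+1}$) then transfer the bound to $\hat G_l(0)=G_l$ with decay rate $\max\{s_\bbA,\kappa^{1/2}\}$.

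For $t\in(0,1]$, $\hat G_l(t)=G_{l,\bbB(t)}$ by bijectivity of $\Theta_{l,t}$ on $\mnz\otimes\pu\mkk\pd$, and Lemmas~\ref{lem:tspec} and~\ref{lem:klmain} place $\bbB(t)$ in $\Class(n,n_0,k_R,k_L)$ with $l_{\bbB(t)}\le n_0^6(k_R+1)(k_L+1)$ uniformly in $t$. By Lemmas 3.6--3.8 of Part~I \cite{Ogata1}, the pentad $(n,n_0(k_R+k_L+1),\hpu,\hpd,\bbB(t))$ satisfies \textit{Conditions 2, 3, 4}, so Proposition~\ref{prop:maingen} yields the required overlap bound pointwise in $t$. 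Uniformity on any $[\epsilon,1]$ with $\epsilon>0$ is automatic from compactness together with Lemma~\ref{lem:s1} (uniform spectral gap of $T_{\bbB(t)}$) and Lemma~\ref{lem:cinf}, which together ensure that $(z-T_{\bbB(t)})^{-1}$ is uniformly bounded on a circle $|z|=s'$ with $s_1<s'<1$ and that $\lV T_{\bbB(t)}^N(1-P_{\{1\}}^{T_{\bbB(t)}})\rV$ decays exponentially with uniform constants.

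The main obstacle is the region $t\in(0,\epsilon]$, where the constants $a_{\bbB(t)},c_{\bbB(t)}$ that appear in the proof of Proposition~\ref{prop:maingen} degenerate: the ranks of $e_{\bbB(t)}$ and $\varphi_{\bbB(t)}$ drop from $n_0(k_R+1),n_0(k_L+1)$ at $t>0$ to $k_R+1,k_L+1$ at $t=0$ (compare Lemmas~\ref{lem:tspec} and~\ref{lem:0spec}). To close this gap I would use the $t=0$ bound as the anchor and estimate the perturbation $\hat G_l(t)-\hat G_l(0)$ directly from the explicit matrix-product form~(\ref{eq:defhx}) of the generating vectors, writing $\gbtr{l}\circ\Theta_{l,t}-\Gamma_{l,\bbA'}^{(R)}\circ(\Theta_l\otimes\id)$ as a sum of $t$-weighted monomials in $B_1(0),K_1,K_2,K_3$ and an $O(t)$-correction and exploiting the uniform spectral gap of $T_{\bbB(t)}$ from Lemma~\ref{lem:s1} to dominate the excess terms by $(s_4^l+s_4^{N-l})$ times a constant independent of $t$. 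Taking $s_4=\max\{s_\bbA,\kappa^{1/2},s'\}$ and absorbing all prefactors into $C_9$ then yields the stated bound; the delicate point, and what I expect to require the most care, is verifying that this perturbative argument is uniform in $l$ and $N$ rather than only pointwise in $t$.
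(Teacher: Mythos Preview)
Your proposal correctly identifies the difficulty (the constants $a_{\bbB(t)},c_{\bbB(t)}$ degenerate as $t\downarrow 0$), but the remedy you sketch for the region $t\in(0,\epsilon]$ does not close. Estimating the perturbation $\hat G_l(t)-\hat G_l(0)$ and combining it with the $t=0$ overlap bound via the triangle inequality would give
\[
\lV(\unit\otimes\hat G_l(t))(\hat G_N(t)\otimes\unit-\hat G_{N+1}(t))\rV
\le C(s^l+s^{N-l}) + O\bigl(\lV\hat G_l(t)-\hat G_l(0)\rV\bigr),
\]
but the last term is a function of $t$ alone and carries no decay in $l$ or $N-l$. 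You cannot absorb it into the claimed $s_4^l+s_4^{N-l}$ uniformly over $t\in(0,\epsilon]$: for each fixed $t>0$ it is a nonzero constant. The ``$t$-weighted monomial'' expansion you mention describes the spanning \emph{vectors}, not the overlap operator, and the uniform spectral gap of $T_{\bbB(t)}$ (Lemma~\ref{lem:s1}) only controls exponential decay in the spectral direction, not the degeneration of the eigenvector $e_{\bbB(t)}$.

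The paper takes a different route: it never splits $[0,1]$ and never perturbs off $t=0$. Instead it works entirely on $(0,1]$ in the rescaled frame given by $\Theta_{l,t}$, proving that the relevant sesquilinear forms are \emph{uniformly} bounded in $t$ (Lemmas~\ref{lem:26}, \ref{lem:pud}, \ref{lem:qest}). The key mechanism is a cancellation: one expands quantities such as
\[
\braket{\Gamma_{l,\bbB(t)}^{(R)}\circ\Theta_{l,t}(X_1)}{\Gamma_{l,\bbB(t)}^{(R)}\circ\Theta_{l,t}(X_2)}
\]
as a Laurent series $\sum_{j\ge -2}t^{j}b_j$ via the contour representation of $(z-T_{\bbB(t)})^{-1}$, and then observes from the monomial structure of Lemma~\ref{lem:flim} that $t$ times the full quantity tends to $0$ as $t\downarrow 0$, forcing all negative-order coefficients $b_{-j}$ to vanish. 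What remains is uniformly bounded with the correct exponential decay. This gives Lemma~\ref{lem:zp} directly for all $t\in(0,1]$ with $t$-independent constants; the endpoint $t=0$ is then reached by continuity of $\hat G_l$ (Lemma~\ref{lem:gtz}), which is legitimate precisely because the bound on $(0,1]$ is already uniform. Your argument has this logic reversed.
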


First we estimate various sesqui-linear forms.
\begin{lem}\label{lem:26}
There exist constants $C_3,C_4>0$, $0<s_2<1$ such that
\begin{align}
&\lv
\braket{ \Gamma_{l,\bbB(t)}^{(R)}\circ \Theta_{l,t}\lmk X_1\rmk}{ \Gamma_{l,\bbB(t)}^{(R)}\circ\Theta_{l,t}\lmk X_2\rmk}-\varphi_{\bbB(t)}\lmk X_1^*a_t e_{\bbB(t)}a_tX_2\rmk
\rv
\le C_3 s_2^l\lV X_1\rV\lV X_2\rV,\label{eq:26}\\
&\lv
\braket{ \Gamma_{l,\bbB(t)}^{(R)}\lmk \Theta_{l+1,t}\lmk X_1\rmk B_{\mu}(t)^*\rmk }
{\Gamma_{l,\bbB(t)}^{(R)}\circ\Theta_{l,t}\lmk X_2\rmk}-\varphi_{\bbB(t)}\lmk B_{\mu}(t) X_1^*a_t e_{\bbB(t)}a_tX_2\rmk
\rv
\le C_3 s_2^l\lV X_1\rV\lV X_2\rV,\label{eq:272}\\
&\lv \varphi_{\bbB(t)}\lmk X_1^*a_t e_{\bbB(t)}a_tX_2\rmk\rv\le C_4 \lV X_1\rV\lV X_2\rV,\label{eq:peb1}\\
&\lv \varphi_{\bbB(t)}\lmk B_{\mu}(t) X_1^*a_t e_{\bbB(t)}a_tX_2\rmk\rv\le C_4 \lV X_1\rV\lV X_2\rV,\label{eq:psb}\\
&\text{for all}\;t\in(0,1], \; l\in\nan,\; X_1,X_2\in \mnz\otimes \pu\mkk\pd,\; \mu\in\{1,\ldots,n\}\nonumber.
\end{align}
\end{lem}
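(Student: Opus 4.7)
The idea is to apply Spectral Property II for $T_{\bbB(t)}$ uniformly in $t$, combined with the Hilbert--Schmidt identity for $\Gamma_{l,\bbB(t)}^{(R)}$-inner products (the one underlying (\ref{eq:gee}) and Lemma~2.16 of \cite{Ogata1}), and then carefully track how $\Theta_{l,t}$ differs from multiplication by $a_t$. For $t\in(0,1]$, $\bbB(t)\in\ClassA$ by Lemma~\ref{lem:tspec}, so Proposition~3.1 of \cite{Ogata1} provides Spectral Property~II with triple $(s_{\bbB(t)},e_{\bbB(t)},\varphi_{\bbB(t)})$. Lemma~\ref{lem:s1} gives $s_{\bbB(t)}\le s_1<1$ uniformly, and Lemma~\ref{lem:cinf} applied via Lemmas~\ref{lem:tspec} and \ref{lem:0spec} ensures that $(e_{\bbB(t)},\varphi_{\bbB(t)})$ depends continuously on $t$ up to $t=0$. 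Combined, these give a uniform exponential bound $\|T_{\bbB(t)}^l(M)-\varphi_{\bbB(t)}(M)e_{\bbB(t)}\|\le C_0 s_1^l\|M\|$.

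Via the standard identity this yields $\bigl|\langle\Gamma_{l,\bbB(t)}^{(R)}(Y_1),\Gamma_{l,\bbB(t)}^{(R)}(Y_2)\rangle - \varphi_{\bbB(t)}(Y_1^*e_{\bbB(t)}Y_2)\bigr|\le C_1 s_1^l \|Y_1\|\|Y_2\|$ uniformly in $t\in(0,1]$. Substituting $Y_i=\Theta_{l,t}(X_i)=a_tX_i-\tfrac{1}{t}\varpi_l(X_i)$ and using $\|\varpi_l(X)\|\le C_2\kappa^l\|X\|$ (immediate from $\|\overline{\hee}R^{*l}\|\le \kappa^l$ and the partial trace in the definition of $\varpi_l$), the bracket expands into the target $\varphi_{\bbB(t)}(X_1^*a_te_{\bbB(t)}a_tX_2)$ plus cross-terms bounded by $\tfrac{\kappa^l}{t}\|e_{\bbB(t)}a_t\|\|X_1\|\|X_2\|$ and $\tfrac{\kappa^{2l}}{t^2}\|e_{\bbB(t)}\|\|X_1\|\|X_2\|$. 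Given (\ref{eq:peb1}) and analogous uniform bounds on $\|e_{\bbB(t)}a_t\|$ and $\|e_{\bbB(t)}\|$, these cross-terms are $O(\kappa^l\|X_1\|\|X_2\|)$, so (\ref{eq:26}) follows with $s_2:=\max(\kappa,s_1)$. The estimate (\ref{eq:272}) is obtained by folding $B_\mu(t)^*$ into one additional application of $T_{\bbB(t)}$ and running the same argument, and (\ref{eq:psb}) follows from (\ref{eq:peb1}) together with the uniform boundedness $\sup_{t\in[0,1]}\|B_\mu(t)\|<\infty$ (immediate from continuity of $B_\mu(t)$ and the strict positivity of $r_{T_{\tilde\oo(t)}}$ on $[0,1]$).

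The main obstacle is (\ref{eq:peb1}), i.e.\ taming the $\tfrac{1}{t}$-divergence of $a_t=\hee+\tfrac{1}{t}\overline{\hee}$. This reduces to the complementary decay estimates $\|\overline{\hee}\,e_{\bbB(t)}\,\overline{\hee}\|=O(t^2)$ and $\|\overline{\hee}\,e_{\bbB(t)}\,\hee\|=O(t)$ as $t\downarrow 0$. I would establish them by perturbative analysis of the Perron eigenvector equation $T_{\bbB(t)}(e_{\bbB(t)})=e_{\bbB(t)}$. Inspection of (\ref{eq:pbp}) shows that in the $\hee\oplus\overline{\hee}$ block decomposition, the only summand of any $B_\mu(t)$ producing a $\hee\to\overline{\hee}$ transition is $t|\eta_0^{(n_0)}\rangle\langle\chi_1^{(n_0)}|$ inside $\omega_2(t)$, so every $(2,1)$-block of each $B_\mu(t)$ carries an explicit factor $t$ while all other blocks are $O(1)$. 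A Neumann-series expansion of $e_{\bbB(t)}$ in this small coupling, legitimate uniformly thanks to the uniform gap of Lemma~\ref{lem:s1}, then yields the claimed off-diagonal decay, which exactly compensates the singular growth of $a_t$ when sandwiching $e_{\bbB(t)}$. This is the only step where the specific singular structure of the path $\bbB(t)$ enters the argument essentially.
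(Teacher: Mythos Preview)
There is a genuine gap in the main error term. After you invoke the ``standard identity''
\[
\bigl|\langle\Gamma_{l,\bbB(t)}^{(R)}(Y_1),\Gamma_{l,\bbB(t)}^{(R)}(Y_2)\rangle-\varphi_{\bbB(t)}(Y_1^*e_{\bbB(t)}Y_2)\bigr|\le C_1 s_1^l\|Y_1\|\|Y_2\|
\]
and substitute $Y_i=\Theta_{l,t}(X_i)$, the right-hand side inherits $\|a_t\|^2\sim t^{-2}$ and becomes $C_1 s_1^l t^{-2}\|X_1\|\|X_2\|$, which diverges as $t\downarrow 0$. Your block-decay estimates $\|\overline{\hee}\,e_{\bbB(t)}\,\overline{\hee}\|=O(t^2)$, $\|\overline{\hee}\,e_{\bbB(t)}\,\hee\|=O(t)$ are designed to cure the $t^{-2}$ inside the Perron piece $\varphi_{\bbB(t)}(Y_1^*e_{\bbB(t)}Y_2)$, but they say nothing about the spectral remainder $T_{\bbB(t)}^l\bigl(\id-P_{\{1\}}^{T_{\bbB(t)}}\bigr)$, which produces the above error and sees the full size of $Y_1,Y_2$. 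To rescue your route you would need the analogous $O(t),O(t^2)$ off-diagonal decay for the entire resolvent $(z-T_{\bbB(t)})^{-1}$ on the small circle, uniformly in $l$; your Neumann-series remark concerns only $e_{\bbB(t)}$.

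The paper avoids this altogether by a soft cancellation argument. It first notes (from the computation behind Lemma~\ref{lem:flim}) that for each fixed $l$ the inner product equals $r_{T_{\tilde\oo(t)}}^{-(2l+1)/2}$ times a polynomial in $t$, so $t$ times it tends to $0$ as $t\downarrow 0$. Independently it writes the same inner product through the contour-integral representation of $T_{\bbB(t)}^l$ from Lemma~\ref{lem:s1}, Taylor-expands $(z-T_{\bbB(t)})^{-1}$ and $B_\mu(t)$ to second order in $t$ (uniformly on the contour), and collects the result as $t^{-2}b_2(l,\cdot)+t^{-1}b_1(l,\cdot)+b_0(t,l,\cdot)$ with $b_0$ bounded by $C_4\|X_1\|\|X_2\|$ uniformly in $t,l$. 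The polynomial-in-$t$ observation forces $b_2=b_1=0$ identically, so the inner product is exactly $b_0$; this gives (\ref{eq:psb}) at once, and taking $l\to\infty$ via the $|z|=(1+s_1)/2$ contour yields $b_\infty=\varphi_{\bbB(t)}(B_\mu(t)X_1^*a_te_{\bbB(t)}a_tX_2)$ together with $|b_0-b_\infty|\le C_3 s_2^l\|X_1\|\|X_2\|$. No explicit perturbative control of $e_{\bbB(t)}$ is needed.
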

\begin{proof}
We prove (\ref{eq:272}) and (\ref{eq:psb}). The proofs of (\ref{eq:26}) and (\ref{eq:peb1}) are the same.
First, from the definition, we have
\begin{align*}
\Gamma_{l,\bbB(t)}^{(R)}\lmk \Theta_{l+1,t}\lmk X_1\rmk B_{\mu}(t)^*\rmk
=\sum_{\mu^{(l)}\in \{1,\ldots,n\}^{\times l}}\lmk \Tr\Theta_{l+1,t}\lmk X_1\rmk \lmk B_{\mu^{(l)}}(t)B_\mu(t)\rmk^*\rmk \ws{l}.
\end{align*} 
In the $\Tr$ on the right hand side, we have $l+1$ number of $B_\mu(t)$s and $\Theta_{l+1,t}\lmk X_1\rmk$.
Therefore, from the proof of Lemma \ref{lem:flim},
we see that it is of the form
$r_{T_{\tilde \oo(t)}}^{-\frac {l+1}2}\times\text{a polynimial of }t$.
Similarly, $\Gamma_{l,\bbB(t)}^{(R)}\circ\Theta_{l,t}\lmk X_2\rmk$ is 
of the form
$r_{T_{\tilde \oo(t)}}^{-\frac l2}\times\text{a polynimial of }t$.
Therefore, we have
\begin{align}\label{eq:pt}
\lim_{t\downarrow 0}t\braket{ \Gamma_{l,\bbB(t)}^{(R)}\lmk \Theta_{l+1,t}\lmk X_1\rmk B_{\mu}(t)^*\rmk }
{\Gamma_{l,\bbB(t)}^{(R)}\circ\Theta_{l,t}\lmk X_2\rmk}
=0.
\end{align}
Next, let $0<s_1<1$ be the constant given in Lemma \ref{lem:s1}. and set $S_1:=\{z\in\bbC\mid |z|=\frac{1+s_1}{2}\}\cup
\{z\in\bbC\mid |z-1|=\frac{1-s_1}{2}\}$. 
Then by Spectral Property II of $T_{\bb(t)}$, $t\in[0,1]$, and the definition of $s_1$ from Lemma \ref{lem:s1},
we have
\begin{align}\label{eq:tdecom}
T_{\bbB(t)}^l=P_{\{1\}}^{T_{\bb(t)}}+T_{\bbB(t)}^l\lmk \id -P_{\{1\}}^{T_{\bb(t)}}\rmk
=\lmk \oint_{|z-1|=\frac{1-s_1}2}dz+\oint_{ |z|=\frac{1+s_1}{2}}dz z^l \rmk
\lmk z-T_{\bbB(t)}\rmk^{-1}
\end{align}
From this and a routine calculation from \cite{Ogata1}, we obtain
\begin{align}\label{eq:rt}
&\braket{ \Gamma_{l,\bbB(t)}^{(R)}\lmk \Theta_{l+1,t}\lmk X_1\rmk B_{\mu}(t)^*\rmk }
{\Gamma_{l,\bbB(t)}^{(R)}\circ\Theta_{l,t}\lmk X_2\rmk}\nonumber\\
&=\sum_{\alpha,\beta=1}^{n_0}\sum_{i,j=-k_R}^0
\braket{\cnz{\alpha}\otimes\fii{i}}
{T_{\bbB(t)}^l\lmk B_\mu(t)\Theta_{l+1,t}\lmk X_1\rmk^*\ket{\cnz{\alpha}\otimes\fii{i}}
\bra{\cnz{\beta}\otimes\fii{j}}\Theta_{l,t}\lmk X_2\rmk\rmk\cnz{\beta}\otimes\fii{j}}\nonumber\\
&=\lmk \oint_{|z-1|=\frac{1-s_1}2}dz+\oint_{ |z|=\frac{1+s_1}{2}}dz z^l \rmk\sum_{\alpha,\beta=1}^{n_0}\sum_{i,j=-k_R}^0\nonumber\\
&\braket{\cnz{\alpha}\otimes\fii{i}}
{\lmk z-T_{\bbB(t)}\rmk^{-1}\lmk B_\mu(t)\Theta_{l+1,t}\lmk X_1\rmk^*\ket{\cnz{\alpha}\otimes\fii{i}}
\bra{\cnz{\beta}\otimes\fii{j}}\Theta_{l,t}\lmk X_2\rmk\rmk\cnz{\beta}\otimes\fii{j}}
\end{align} 
for each $t\in(0,1]$.
Now, we expand $\lmk z-T_{\bbB(t)}\rmk^{-1}$, $ B_\mu(t)$, as
$\lmk z-T_{\bbB(t)}\rmk^{-1}=\sum_{j=0,1} t^{j}A_j\lmk z\rmk+t^2A_2\lmk z,t\rmk$,
$B_\mu(t)=\sum_{j=0,1}t^jW_{j}+t^2W_2(t)$, $t\in[0,1]$, $z\in S_1$.
Here $A_j\lmk z\rmk$, $A_2\lmk z,t\rmk$, $W_j$, $W_2(t)$ are operators on $\mnz\otimes\mkk$
such that 
$\sup_{z\in S_1 }\lV A_j\lmk z\rmk\rV$, $\sup_{t\in[0,1],z\in S_1 }\lV A_2\lmk z,t\rmk\rV$,
$\sup_{t\in[0,1]}\lV W_2(t)\rV$ are finite.
Substituting these in  (\ref{eq:rt}), the right hand side of (\ref{eq:rt})
can be expanded as $t^{-2}b_2\lmk l,  X_1,X_2\rmk +t^{-1}b_1\lmk l, X_1,X_2\rmk+b_0\lmk t, l, X_1,X_2\rmk$.
Here, for the map $b_0:(0,1]\times \nan\times \lmk \Mat_{n_0}\otimes \pu\mkk\pd\rmk^{\times 2} \ni (t,l,X_1,X_2)\mapsto b_{0}\lmk t,l, X_1, X_2\rmk\in \bbC$, there exists
a constant $C_4>0$ such that 
\begin{align}\label{eq:bif}
\sup_{t\in(0,1]} \sup_{l\in\nan}\lv b_{0}\lmk t,l, X_1, X_2\rmk\rv\le C_4 \lV X_1\rV\lV X_2\rV.
\end{align}	
Furthermore, there exists a limit
$b_\infty\lmk t,X_1,X_2\rmk=\lim_{l\to\infty} b_{0}\lmk t,l, X_1, X_2\rmk$.
From (\ref{eq:bif}), we have
\begin{align}\label{eq:biff}
\sup_{t\in(0,1]} \lv b_{\infty}\lmk t,X_1, X_2\rmk\rv\le C_4 \lV X_1\rV\lV X_2\rV.
\end{align}	
There exist constants $C_3>0$ and  $0<s_2<1$ such that 
\begin{align}\label{eq:xlb}
\sup_{t\in(0,1]}\lv b_\infty\lmk t,X_1,X_2\rmk-b_{0}\lmk t,l, X_1, X_2\rmk\rv\le C_3 s_2^l\lV X_1\rV\lV X_2\rV,\quad
X_1,X_2\in \mnz\otimes \pu\mkk\pd\quad l\in\nan.
\end{align}
On the other hand, from the property (\ref{eq:pt}),
we know that $\lim_{t\downarrow 0} t^{-1}b_2\lmk l,  X_1,X_2\rmk +b_1\lmk l, X_1,X_2\rmk+tb_0\lmk t, l, X_1,X_2\rmk=0$.
This means $b_2\lmk l,  X_1,X_2\rmk =b_1\lmk l, X_1,X_2\rmk=0$.
Substituting these, we obtain
\begin{align}\label{eq:lcn}
\braket{ \Gamma_{l,\bbB(t)}^{(R)}\lmk \Theta_{l+1,t}\lmk X_1\rmk B_{\mu}(t)^*\rmk }
{\Gamma_{l,\bbB(t)}^{(R)}\circ\Theta_{l,t}\lmk X_2\rmk}=b_0\lmk t, l, X_1,X_2\rmk.
\end{align} 
Take $l\to\infty$ limit of this equation.
From (\ref{eq:xlb}) and the first equality of (\ref{eq:rt}) combined with the spectral property of $T_{\bbB(t)}$,
we obtain
\begin{align}\label{eq:li}
\varphi_{\bbB(t)}\lmk B_{\mu}(t) X_1^*a_t e_{\bbB(t)}a_tX_2\rmk
=b_\infty\lmk t,  X_1,X_2\rmk.
\end{align}
From (\ref{eq:xlb}), (\ref{eq:lcn}), and (\ref{eq:li}),
we obtain (\ref{eq:272}).
Furthremore, from (\ref{eq:biff}) and (\ref{eq:li}), we obtain (\ref{eq:psb}).
\end{proof}
\begin{lem}\label{lem:pud}
There exist constants $C_5, C_6>0$ such that
\begin{align}
C_5\Tr X^*X\le \varphi_{\bbB(t)}\lmk X^*a_t e_{\bbB(t)}a_tX\rmk
\le C_6 \Tr X^*X,\quad X\in \mnz\otimes \pu\mkk\pd.
\end{align}
\end{lem}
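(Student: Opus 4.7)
I will deduce both bounds from the asymptotics obtained in Lemma \ref{lem:26}. Writing $Q(t, X) := \varphi_{\bbB(t)}\lmk X^*a_t e_{\bbB(t)}a_tX\rmk$, the upper bound is immediate from (\ref{eq:peb1}): the estimate $|Q(t,X)| \le C_4 \lV X\rV^2$ combined with $\lV X\rV \le \sqrt{\Tr X^*X}$ on the finite-dimensional space $\mnz\otimes\pu\mkk\pd$ yields $C_6 = C_4$.

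For the lower bound, my plan is to extend $Q$ continuously to $t=0$ and then exploit compactness of $[0,1]$. By (\ref{eq:26}) combined with (\ref{eq:apb}) and Lemma \ref{lem:gtz}, for each fixed $X$ the function $t\mapsto \lV \Gamma_{l,\bbB(t)}^{(R)}\circ \Theta_{l,t}(X)\rV^2$ is $C^\infty$ on $(0,1]$ and extends continuously to $[0,1]$ with value $\lV \Gamma_{l,\bbA'}^{(R)} \circ (\Theta_l\otimes\id)(X)\rV^2$ at $t=0$; moreover this family converges to $Q(t,X)$ uniformly in $t$ at rate $C_3 s_2^l \lV X\rV^2$. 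Consequently $Q$ extends continuously to $[0,1]$ by setting $Q(0,X) := \lim_{l\to\infty}\lV \Gamma_{l,\bbA'}^{(R)}\circ (\Theta_l\otimes\id)(X)\rV^2$. Since $Q(t,\cdot)$ is a positive semidefinite quadratic form on the finite-dimensional space $\mnz\otimes\pu\mkk\pd$, its Gram matrix $M(t)$ with respect to $\braket{X}{Y}_\Tr = \Tr X^*Y$ depends continuously on $t\in[0,1]$, so it will suffice to check that $\lambda_{\min}(M(t))>0$ for every $t$.

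For $t\in(0,1]$ this positivity will follow from faithfulness: $a_t$ is invertible, $X\hpd=X$ for $X\in\mnz\otimes\pu\mkk\pd$, and $a_tX$ lies in the range of $\hpu$; together with $s(e_{\bbB(t)})=\hpu$ and $s(\varphi_{\bbB(t)})=\hpd$ from Lemma \ref{lem:tspec}, the identity $Q(t,X)=0$ will force $e_{\bbB(t)}^{1/2}a_tX=0$ and hence $X=0$. At $t=0$ the decomposition $\bbA'_\mu=A_\mu\oplus F_\mu$ gives
\[
\Gamma_{l,\bbA'}^{(R)}\circ (\Theta_l\otimes\id)(X) = \Gamma_{l,\bbA}^{(R)}(\Theta_l^{(1)}(X)) + \Gamma_{l,\bbF}^{(R)}(\Theta_l^{(2)}(X)),
\]
where $\Theta_l^{(1)}$ is the isometric embedding $\zeij{\alpha\beta}\mapsto \eijz{-(\alpha-1),\beta-1}$ and $\Theta_l^{(2)}$ carries the damping factor $\bar r_\alpha^l$. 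Lemma \ref{lem:ups} bounds the second summand by a term of order $\kappa^{l/2}\lV X\rV$, while (\ref{eq:gee}) applied to $\bbA$ lower-bounds the first by $C'\lV\Theta_l^{(1)}(X)\rV_2 = C'\lV X\rV_2$; passing to $l\to\infty$ gives $Q(0,X)\ge (C')^2\,\Tr X^*X > 0$ for $X\ne 0$. Compactness of $[0,1]$ then delivers $C_5 := \inf_{t\in[0,1]}\lambda_{\min}(M(t))>0$.

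The main obstacle is arranging the continuous extension of $Q$ to $t=0$: although $a_t$ blows up and $s(e_{\bbB(t)})$ collapses from $\hpu$ to $\zeij{11}\otimes\pu$ at $t=0$ (Lemma \ref{lem:0spec}), these two singularities compensate to produce a finite positive limit, and this cancellation is exactly what is encoded in the auxiliary tuple $\bbA'$ together with the maps $\Theta_{l,t}$ introduced for this purpose.
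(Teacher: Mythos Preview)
Your proof is correct and follows essentially the same route as the paper's: both use (\ref{eq:26}) to pass from $Q(t,X)$ to the norms $\lV\Gamma_{l,\bbB(t)}^{(R)}\circ\Theta_{l,t}(X)\rV^2$, approximate $\bbA'$ by $\bbA$ via Lemma~\ref{lem:ups}, and conclude by continuity and compactness on $[0,1]$. The only cosmetic difference is that the paper identifies the $t\downarrow 0$ limit explicitly as $\varphi_{\bbA}\!\left(((\iota\otimes\id)(X))^*e_{\bbA}(\iota\otimes\id)(X)\right)$ (with $\iota$ your $\Theta_l^{(1)}$) and reads off positive-definiteness from $s(e_{\bbA})=\puuz$, $s(\varphi_{\bbA})=\pddz$, whereas you lower-bound $Q(0,X)$ directly via (\ref{eq:gee}).
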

\begin{proof}
We define a linear map $\iota:\mnz\to \Mat_{2n_0-1}$ by $\iota\lmk\zeij{\alpha\beta}\rmk=
\eijz{-(\alpha-1),\beta-1}$, $\alpha,\beta=1,\ldots,n_0$. By Lemma \ref{lem:ups},
we have
\begin{align}\label{eq:aap}
\lV
\Gamma_{l,\bbA}^{(R)}\circ \lmk \iota\otimes\id\rmk\lmk X\rmk
-\Gamma_{l,\bbA'}^{(R)}\circ \lmk \Theta_{l}\otimes\id\rmk\lmk X\rmk
\rV
\le n_0C_7\kappa^{\frac l2}\lV X\rV,\quad
l\in\nan,\quad X\in \mnz\otimes \pu\mkk\pd.
\end{align}
By (33) of \cite{Ogata1} with Lemma \ref{lem:aprop}, and  (\ref{eq:26}), (\ref{eq:aap}),
we have a ($t$-independent) sequence of  positive numbers
$\delta_l$, $l\in\nan$ with  $\delta_l\to 0$, $l\to\infty$, such that
\begin{align*}
&\lv \varphi_{\bbB(t)}\lmk X_0^*a_t e_{\bbB(t)}a_tX_1\rmk-\varphi_{\bbA}\lmk
\lmk
\lmk \iota\otimes \id\rmk\lmk X_0\rmk
\rmk^*e_{\bbA}
\lmk \iota\otimes \id\rmk\lmk X_1\rmk
\rmk\rv\\
&\le
\lv \braket{ \Gamma_{l,\bbB(t)}^{(R)}\circ \Theta_{l,t}\lmk X_0\rmk}{ \Gamma_{l,\bbB(t)}^{(R)}\circ\Theta_{l,t}\lmk X_1\rmk}
-\braket{ \Gamma_{l,\bbA'}^{(R)}\circ \lmk \Theta_{l}\otimes\id\rmk\lmk X_0\rmk}{ \Gamma_{l,\bbA'}^{(R)}\circ\lmk \Theta_{l}\otimes \id\rmk
\lmk X_1\rmk}\rv
+\delta_l.
\end{align*}
We first take  $\limsup_{t\downarrow 0}$, applying (\ref{eq:apb}) for this inequality, and then take $l\to\infty$ limit. Hence we obtain
\begin{align}\label{eq:tlim}
\lim_{t\downarrow 0}\varphi_{\bbB(t)}\lmk X_0^*a_t e_{\bbB(t)}a_tX_1\rmk
=\varphi_{\bbA}\lmk
\lmk
\lmk \iota\otimes \id\rmk\lmk X_0\rmk
\rmk^*e_{\bbA}
\lmk \iota\otimes \id\rmk\lmk X_1\rmk
\rmk,\quad
X_0,X_1\in \mnz\otimes \pu\mkk\pd.
\end{align}
As $\iota\otimes \id$ is a bijection from $\mnz\otimes\pu\mkk\pd$ to $\puuz\lmk \Mat_{2n_0-1}\otimes\mkk\rmk\pddz$
and $s\lmk e_\bbA\rmk=\puuz$ and $s\lmk \varphi_\bbA\rmk=\pddz$,
$(X_0, X_1)\mapsto \varphi_{\bbA}\lmk
\lmk
\lmk \iota\otimes \id\rmk\lmk X_0\rmk
\rmk^*e_{\bbA}
\lmk \iota\otimes \id\rmk\lmk X_1\rmk
\rmk$, $X_0,X_1\in\mnz\otimes\pu\mkk\pd$ defines an inner product on $\mnz\otimes\pu\mkk\pd$.
From this and (\ref{eq:tlim}), and the finite dimensionality of  $\mnz\otimes\pu\mkk\pd$,
we prove the claim of the Lemma.
\end{proof}
From the upper bound of the last Lemma, we obtain the following.
\begin{cor}\label{lem:aea}
We have
$\sup_{t\in (0,1]}\lV a_t e_{\bbB(t)}a_t\rV<\infty$.
\end{cor}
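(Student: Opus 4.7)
The plan is to extract the uniform norm bound on $a_t e_{\bbB(t)} a_t$ by testing the upper inequality of Lemma \ref{lem:pud} against a well-chosen rank-one operator $X_t$. First I would observe that $\hee=\zeij{11}\otimes\unit_{k_R+k_L+1}$ commutes with the projections $\hpu=\unit_{n_0}\otimes \pu$ and $\hpd=\unit_{n_0}\otimes \pd$ since they act on disjoint tensor factors of $\mnz\otimes\mkk$. Hence $a_t=\hee+\frac{1}{t}\overline{\hee}$ commutes with both $\hpu$ and $\hpd$. Combined with $s(e_{\bbB(t)})=\hpu$ from Lemma \ref{lem:tspec}, this yields $\hpu(a_t e_{\bbB(t)}a_t)\hpu = a_t e_{\bbB(t)}a_t$, so $a_t e_{\bbB(t)}a_t$ is a positive operator supported on $\Ran\hpu$ and its operator norm equals $\sup\{\braket{\xi}{a_t e_{\bbB(t)}a_t \xi}:\xi\in\Ran\hpu,\;\nn{\xi}=1\}$.

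Next, since $\varphi_{\bbB(t)}$ is a state on $\mnz\otimes\mkk$ with support $\hpd$ (Lemma \ref{lem:tspec}), and $d:=\rank\hpd=n_0(k_L+1)$ is independent of $t$, picking any orthonormal basis $\{e_j\}_{j=1}^d$ of $\Ran\hpd$ gives
\[
1=\varphi_{\bbB(t)}(\hpd)=\sum_{j=1}^d \varphi_{\bbB(t)}\lmk\ket{e_j}\bra{e_j}\rmk,
\]
so for each $t\in(0,1]$ there exists a unit vector $\eta_t\in\Ran\hpd$ with $\varphi_{\bbB(t)}\lmk\ket{\eta_t}\bra{\eta_t}\rmk\ge 1/d$.

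Now, given a unit vector $\xi\in\Ran\hpu$, I would set $X_t:=\ket{\xi}\bra{\eta_t}$. Writing $\xi=\sum_i c_i\otimes v_i$ with $v_i\in\Ran\pu$ and $\eta_t=\sum_j d_j\otimes w_j$ with $w_j\in\Ran\pd$, one gets
\[
X_t=\sum_{i,j}\ket{c_i}\bra{d_j}\otimes \ket{v_i}\bra{w_j}\in \mnz\otimes \pu\mkk\pd,
\]
so $X_t$ is admissible in Lemma \ref{lem:pud}. A direct computation gives
\[
X_t^*\,a_t e_{\bbB(t)} a_t\, X_t = \braket{\xi}{a_t e_{\bbB(t)} a_t\,\xi}\,\ket{\eta_t}\bra{\eta_t},\qquad \Tr(X_t^*X_t)=\nn{\xi}^2\nn{\eta_t}^2=1.
\]
Plugging into the second inequality of Lemma \ref{lem:pud} with $X=X_t$ yields
\[
\braket{\xi}{a_t e_{\bbB(t)} a_t\,\xi}\cdot \varphi_{\bbB(t)}\lmk\ket{\eta_t}\bra{\eta_t}\rmk \le C_6,
\]
hence $\braket{\xi}{a_t e_{\bbB(t)} a_t\,\xi}\le C_6 d$. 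Taking the supremum over unit $\xi\in\Ran\hpu$ gives $\nn{a_t e_{\bbB(t)} a_t}\le C_6 d$ uniformly in $t\in(0,1]$.

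There is no real obstacle here: everything is essentially a one-line manipulation once Lemma \ref{lem:pud} is available. The only points needing a careful sentence are (i) the commutation of $a_t$ with $\hpu$ and $\hpd$, which reduces to the tensor-factor structure of $\hee$, and (ii) verifying that $\ket{\xi}\bra{\eta_t}\in\mnz\otimes\pu\mkk\pd$, which follows by expanding $\xi,\eta_t$ in tensor components as above.
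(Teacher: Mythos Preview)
Your proof is correct and follows exactly the route the paper intends: the corollary is stated as an immediate consequence of the upper bound in Lemma~\ref{lem:pud}, and your argument simply spells out the details of how to extract the uniform norm bound on $a_t e_{\bbB(t)}a_t$ from that inequality by testing it against rank-one operators $X_t=\ket{\xi}\bra{\eta_t}$. The two auxiliary observations you flag (commutation of $a_t$ with $\hpu$, and membership of $\ket{\xi}\bra{\eta_t}$ in $\mnz\otimes\pu\mkk\pd$) are correct and are precisely the points one needs to check.
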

Furthermore, from Lemma \ref{lem:26} and Lemma \ref{lem:pud} , we obtain the following.
\begin{cor}\label{cor:lb}
Let $C_5$ be the constant given
in Lemma \ref{lem:pud}.
Then there exists an $l_5\in\nan$ such that
\begin{align}
\frac{C_5}{2}\Tr X^*X\le \lV \Gamma_{l,\bbB(t)}^{(R)}\circ \Theta_{l,t}\lmk X\rmk\rV^2,\quad
X\in\mnz\otimes \pu\mkk\pd,\quad l_5\le l,\quad t\in(0,1].
\end{align}
\end{cor}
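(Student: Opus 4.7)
The plan is to combine Lemma \ref{lem:26} and Lemma \ref{lem:pud} directly, exploiting the fact that all the constants involved are independent of both $l$ and $t$. Specifically, applying (\ref{eq:26}) with $X_1=X_2=X$ yields
\[
\lV \Gamma_{l,\bbB(t)}^{(R)}\circ\Theta_{l,t}(X)\rV^2
\ge \varphi_{\bbB(t)}\lmk X^*a_t e_{\bbB(t)} a_t X\rmk - C_3 s_2^l \lV X\rV^2
\]
for all $t\in(0,1]$, $l\in\nan$, and $X\in\mnz\otimes \pu\mkk\pd$. The lower bound in Lemma \ref{lem:pud} then gives
\[
\lV \Gamma_{l,\bbB(t)}^{(R)}\circ\Theta_{l,t}(X)\rV^2
\ge C_5 \Tr X^*X - C_3 s_2^l \lV X\rV^2.
\]

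Next, I would use the elementary operator-theoretic inequality $\lV X\rV^2 \le \Tr X^*X$ relating the operator norm and the Hilbert--Schmidt norm on $\mnz\otimes \pu\mkk\pd$, which upgrades the estimate to
\[
\lV \Gamma_{l,\bbB(t)}^{(R)}\circ\Theta_{l,t}(X)\rV^2
\ge \lmk C_5 - C_3 s_2^l\rmk \Tr X^*X.
\]
Since $0<s_2<1$ and $C_3, C_5$ are both independent of $t$ and $l$, one may choose $l_5\in\nan$ so large that $C_3 s_2^l \le C_5/2$ for every $l\ge l_5$. For such $l$ the claimed inequality $\frac{C_5}{2}\Tr X^*X \le \lV\Gamma_{l,\bbB(t)}^{(R)}\circ\Theta_{l,t}(X)\rV^2$ follows at once, uniformly in $t\in(0,1]$.

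There is no real obstacle here; the corollary is essentially a packaging of the preceding two lemmas. The only subtlety worth noting is that the uniformity in $t$ of the threshold $l_5$ relies on the $t$-independence of $C_3$, $C_5$, and $s_2$, which is already built into the statements of Lemmas \ref{lem:26} and \ref{lem:pud}. All other steps are routine linear algebra on the finite-dimensional space $\mnz\otimes\pu\mkk\pd$.
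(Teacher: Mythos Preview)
Your proof is correct and is exactly the argument the paper has in mind: the text preceding the corollary states explicitly that it follows ``from Lemma \ref{lem:26} and Lemma \ref{lem:pud}'', and you have spelled out precisely how. The only thing the paper leaves implicit is the elementary bound $\lV X\rV^2\le \Tr X^*X$ and the choice of $l_5$, both of which you handle correctly.
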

For any $N,l\in\nan$ with $N\ge l\ge 2$ and $t\in(0,1]$, $Z,X_\mu\in \mnz\otimes\pu \mkk\pd$, $\mu=1,\ldots,n$,
we define 
\begin{align*}
&\caQ\lmk t, N,l,Z,\{X_\mu\}_{\mu}\rmk:=
\sum_{\mu^{(N-l+1)}\in\{1,\ldots,n\}^{\times N-l+1}}
\lv
\sum_{\mu=1}^n
\braket{\Gamma_{l-1,\bbB(t)}^{(R)}\lmk \Theta_{l,t}\lmk Z\rmk B_{\mu}\lmk t\rmk^*\rmk}
{\Gamma_{l-1,\bbB(t)}^{(R)}\lmk
\lmk \widehat{B_{\mu^{(N-l+1)}}(t)}\rmk^*\Theta_{N,t}\lmk X_\mu\rmk
\rmk}
\rv^2,\\
&\caR\lmk  t,Z,\{X_\mu\}_{\mu}\rmk:=
\sum_{\mu=1}^na_{t}X_{\mu}\rho_{\bbB(t)}B_{\mu}(t)Z^* a_t e_{\bbB(t)}\in \mnz\otimes \pu\mkk\pu.
\end{align*}
\begin{lem}\label{lem:qest}
There exist constants $C_{10}>0$, $0<s_3<1$ satisfying the followings.:
For any $N,l\in\nan$ with $N\ge l\ge 2$, $t\in(0,1]$,and $Z,X_\mu\in \mnz\otimes\pu\mkk\pd$, $\mu=1,\ldots,n$,
we have
\begin{align}\label{eq:qqnl}
\lv
\caQ\lmk t, N,l,Z,\{X_\mu\}_{\mu}\rmk-\varphi_{\bb(t)}
\lmk
\lmk \caR\lmk  t,Z,\{X_\mu\}_{\mu}\rmk\rmk^*
e_{\bbB(t)}
\caR\lmk  t,Z,\{X_\mu\}_{\mu}\rmk
\rmk
\rv
\le C_{10}\lmk s_3^{l}+s_3^{N-l}\rmk
\lmk\sum_{\mu=1}^n\lV X_\mu\rV^2\rmk
\lV Z\rV^2.
\end{align}
\end{lem}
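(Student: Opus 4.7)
The strategy is to reduce $\caQ$ to an expression involving only the invariant functional $\varphi_{\bb(t)}$ and the positive eigenvector $e_{\bb(t)}$ of the transfer operator $T_{\bb(t)}$, by two successive spectral approximations: one furnishing the $s_3^{N-l}$ error (coming from $T_{\bb(t)}^{N-l+1}$) and the other the $s_3^l$ error (coming from Lemma \ref{lem:26}).

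First I would expand the squared modulus defining $\caQ$ and interchange the order of summation, putting the sum over $\mu^{(N-l+1)}$ innermost. Using the definition of $\Gamma_{l-1,\bb(t)}^{(R)}$ together with the cyclic property of the trace, the $\mu^{(N-l+1)}$-sum takes the shape
\[
\sum_{\mu^{(N-l+1)}}\Tr\!\lmk \Theta_{N,t}(X_\mu)\,\lmk\widehat{B_{\mu^{(N-l+1)}}(t)}\,\widehat{B_{\nu^{(l-1)}}(t)}\rmk^{*}\rmk\,\overline{\Tr\!\lmk \Theta_{N,t}(X_{\mu'})\,\lmk\widehat{B_{\mu^{(N-l+1)}}(t)}\,\widehat{B_{\nu'^{(l-1)}}(t)}\rmk^{*}\rmk},
\]
and this is precisely a trace of a quantity built from $T_{\bb(t)}^{N-l+1}$ applied to a rank-one-like operator supported on the boundary of size $l-1$. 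Using the contour representation analogous to (\ref{eq:tdecom}) and Lemma \ref{lem:s1}, I would write $T_{\bb(t)}^{N-l+1} = P^{T_{\bb(t)}}_{\{1\}} + R_{N-l+1}(t)$ with $\lV R_{N-l+1}(t)\rV\le C s_1^{N-l+1}$ uniformly in $t\in(0,1]$, where $P^{T_{\bb(t)}}_{\{1\}}(Y)=\varphi_{\bb(t)}(Y)\,e_{\bb(t)}$.

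The remainder term $R_{N-l+1}(t)$ is controlled by the $L^2$-bound on $\Gamma^{(R)}_{l-1,\bb(t)}$ combined with the obvious norm bound $\lV\Theta_{N,t}(X_\mu)\rV\le C\lV X_\mu\rV$ (the divergences in $t^{-1}$ cancel against those built into $a_te_{\bb(t)}a_t$, which is uniformly bounded by Corollary \ref{lem:aea}). Substituting the leading term $P^{T_{\bb(t)}}_{\{1\}}$, the sum over $\mu^{(N-l+1)}$ factorizes, and the two resulting inner products are exactly of the form handled by (\ref{eq:272}), so applying (\ref{eq:272}) to each factor on the size-$l$ boundary gives an additional $s_2^l$-error bounded by Lemma \ref{lem:26} together with (\ref{eq:psb}) and Lemma \ref{lem:pud}. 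After rearranging using the definitions of $\caR$ and of $e_{\bb(t)},\varphi_{\bb(t)}$, the principal term consolidates into $\varphi_{\bb(t)}\lmk \caR(t,Z,\{X_\mu\})^* e_{\bb(t)} \caR(t,Z,\{X_\mu\})\rmk$. Setting $s_3:=\max\{s_1,s_2\}^{1/2}$ (or some comparable choice) and tracking the $\lV Z\rV$, $\lV X_\mu\rV$ dependences through each estimate yields the claimed bound.

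The main obstacle will be Step 2, i.e.\ carrying out the transfer-operator calculation cleanly enough that the $t\downarrow 0$ singularities of $\Theta_{N,t}$ and $a_t$ are absorbed by $e_{\bb(t)}$ uniformly. In particular one must check that the product $a_t e_{\bb(t)}a_t$ remains bounded (Corollary \ref{lem:aea}) and that the intermediate quantity obtained after applying $P^{T_{\bb(t)}}_{\{1\}}$ can be recognized as $\varphi_{\bb(t)}(\caR^* e_{\bb(t)}\caR)$ without additional $t^{-1}$ blowups. A secondary technical point is ensuring that the boundary contribution $\widehat{B_{\nu^{(l-1)}}(t)}$ can be handled by a single application of (\ref{eq:272}) rather than requiring an intermediate approximation, which I expect follows from the identities in the proof of Lemma \ref{lem:26} together with the bound (\ref{eq:peb1}).
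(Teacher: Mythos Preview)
Your overall architecture (one spectral replacement to produce an $s_3^{N-l}$ error, then an application of Lemma~\ref{lem:26} to produce an $s_3^{l}$ error) is reasonable, but the bound you assert on the remainder term does not hold uniformly in $t$, and this is exactly the step where the paper's proof differs from yours.

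Concretely, you claim the ``obvious norm bound $\lV\Theta_{N,t}(X_\mu)\rV\le C\lV X_\mu\rV$''. This is false: by definition $\Theta_{N,t}=\hee(\cdot)+\tfrac1t\,\Xi_N(\cdot)$, so $\lV\Theta_{N,t}\rV$ is of order $t^{-1}$. The bound $\sup_{t}\lV a_te_{\bb(t)}a_t\rV<\infty$ from Corollary~\ref{lem:aea} only helps when the two $\Theta$-factors sandwich the Perron eigenvector $e_{\bb(t)}$, i.e.\ in the \emph{leading} contribution from $P^{T_{\bb(t)}}_{\{1\}}$. In the remainder $R_{N-l+1}(t)=T_{\bb(t)}^{N-l+1}(\id-P^{T_{\bb(t)}}_{\{1\}})$ there is no $e_{\bb(t)}$ present, so nothing absorbs the two factors of $t^{-1}$ coming from $\Theta_{N,t}(X_\mu)$ and $\Theta_{N,t}(X_{\mu'})$. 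Your remainder estimate is therefore only $O(t^{-2}s_1^{N-l})$, which is useless for uniformity in $t\in(0,1]$. (The same issue resurfaces when you feed the leading term into (\ref{eq:272}): there the $s_2^l$-remainder again carries uncompensated $\Theta_{N,t}$ factors.)

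The paper's proof sidesteps this entirely by a different mechanism, the same one used in Lemma~\ref{lem:26}. One writes $\caQ$ via a triple contour integral in $T_{\bb(t)}$ (three resolvents, for the powers $N-l+1$, $l-1$, $l-1$), expands the resolvents and $B_\mu(t)$ as polynomials in $t$, and obtains an expansion $\caQ=\sum_{j=1}^{4}t^{-j}b_j(N,l,Z,\{X_\mu\})+b_0(t,N,l,Z,\{X_\mu\})$. The key observation (analogue of (\ref{eq:pt})) is that $t\,\caQ\to0$ as $t\downarrow0$, which follows from the polynomial-in-$t$ structure of $\Gamma^{(R)}_{l-1,\bb(t)}\circ\Theta_{l,t}$ established in Lemma~\ref{lem:flim}. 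This forces $b_1=\cdots=b_4=0$ identically, so $\caQ=b_0$, and $b_0$ is \emph{manifestly} bounded uniformly in $t$ from the contour representation. The $(s_3^l+s_3^{N-l})$ error then comes from comparing $b_0$ to its iterated limit $b_\infty=\lim_{l\to\infty}\lim_{N\to\infty}b_0$, which one identifies with $\varphi_{\bb(t)}(\caR^*e_{\bb(t)}\caR)$.

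In short: you must use the algebraic vanishing of the $t^{-j}$ coefficients (via $\lim_{t\downarrow0}t\,\caQ=0$) rather than attempt a direct norm estimate on the spectral remainder; the latter cannot be made $t$-uniform.
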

\begin{proof}
The proof is analogous to that of Lemma \ref{lem:26}.
Let $0<s_1<1$ be the constant given in Lemma \ref{lem:s1}.
In this proof, we use a notation
\begin{small}
\begin{align*}
&\hat{\int_{N,l}}\\
&:= \frac{1}{\lmk 2\pi i\rmk^3}
\lmk\oint_{|z|=\frac{1+s_1}{2}} dzz^{N-l+1}+\oint_{|z-1|=\frac{1-s_1}{2}}dz\rmk
\lmk\oint_{|\zeta|=\frac{1+s_1}{2}} d\zeta \zeta^{l-1}+\oint_{|\zeta-1|=\frac{1-s_1}{2}}d\zeta\rmk
\lmk\oint_{|w|=\frac{1+s_1}{2}} dw w^{l-1}+\oint_{|w-1|=\frac{1-s_1}{2}}dw\rmk,
\end{align*}\end{small}
Note that 
$\Gamma_{l-1,\bbB(t)}^{(R)}\lmk \Theta_{l,t}\lmk Z\rmk B_{\mu}\lmk t\rmk^*\rmk$ and 
$\Gamma_{l-1,\bbB(t)}^{(R)}\lmk
\lmk \widehat{B_{\mu^{(N-l+1)}}(t)}\rmk^*\Theta_{N,t}\lmk X_\mu\rmk
\rmk$ 
are of the form
$r_{T_{\tilde \oo(t)}}^{-\frac l2}\times\text{a polynimial of }t$ and 
$r_{T_{\tilde \oo(t)}}^{-\frac N2}\times\text{a polynimial of }t$,
from the proof of Lemma \ref{lem:flim}.
Therefore, we have
\begin{align}\label{eq:pt27}
\lim_{t\downarrow 0}t\cdot \caQ\lmk t, N,l,Z,\{X_\mu\}_{\mu}\rmk
=0.
\end{align}
From (\ref{eq:tdecom}) and a routine calculation from \cite{Ogata1}, we obtain
\begin{small}
\begin{align}
&\caQ\lmk t, N,l,Z,\{X_\mu\}_{\mu}\rmk\nonumber\\
&=
\sum_{\mu,\mu'=1}^n
\sum_{\alpha,\beta=1}^{n_0}\sum_{i,j=-k_R}^0
\sum_{\alpha',\beta'=1}^{n_0}\sum_{i',j'=-k_R}^0\nonumber\\
&\braket{\cnz{\beta'}\otimes\fii{j'}}
{T_{\bbB(t)}^{N-l+1}\lmk
\begin{gathered}
T_{\bbB(t)}^{l-1}\lmk
\Theta_{N,t}\lmk X_{\mu'}\rmk^*\ket{\cnz{\beta'}\otimes\fii{j'}}\bra{\cnz{\alpha'}\otimes\fii{i'}} 
\Theta_{lt}\lmk Z\rmk {B_{\mu'}(t)}^*\rmk\\
\cdot \ket{\cnz{\alpha'}\otimes\fii{i'}}\bra{\cnz{\alpha}\otimes\fii{i}} \\
\cdot
T_{\bbB(t)}^{l-1}\lmk B_{\mu}(t)
{\Theta_{lt}\lmk Z\rmk}^*
\ket{\cnz{\alpha}\otimes\fii{i}}\bra{\cnz{\beta}\otimes\fii{j}}\Theta_{N,t}\lmk X_{\mu}\rmk
\rmk
\end{gathered}
\rmk\lmk
\cnz{\beta}\otimes\fii{j}
\rmk
}
\label{eq:te}\\
&=\widehat{\int_{N,l}}\sum_{\mu,\mu'=1}^n
\sum_{\alpha,\beta=1}^{n_0}\sum_{i,j=-k_R}^0
\sum_{\alpha',\beta'=1}^{n_0}\sum_{i',j'=-k_R}^0\nonumber\\
&\left\langle{\cnz{\beta'}\otimes\fii{j'}},\right.\nonumber\\
&\left.{
\lmk z-T_{\bb(t)}\rmk^{-1}\lmk
\begin{gathered}
\lmk \zeta-T_{\bb(t)}\rmk^{-1}\lmk
\Theta_{N,t}\lmk X_{\mu'}\rmk^*\ket{\cnz{\beta'}\otimes\fii{j'}}\bra{\cnz{\alpha'}\otimes\fii{i'}} 
\Theta_{lt}\lmk Z\rmk {B_{\mu'}(t)}^*\rmk\\
\cdot \ket{\cnz{\alpha'}\otimes\fii{i'}}\bra{\cnz{\alpha}\otimes\fii{i}} \\
\cdot
\lmk w-T_{\bb(t)}\rmk^{-1}\lmk B_{\mu}(t)
{\Theta_{lt}\lmk Z\rmk}^*
\ket{\cnz{\alpha}\otimes\fii{i}}\bra{\cnz{\beta}\otimes\fii{j}}\Theta_{N,t}\lmk X_{\mu}\rmk
\rmk
\end{gathered}
\rmk\lmk
\cnz{\beta}\otimes\fii{j}
\rmk}\right\rangle\label{eq:inte}
\end{align} 
\end{small}
for each $t\in(0,1]$.
Now, as in the proof of Lemma \ref{lem:26},
expanding  $\lmk z-T_{\bbB(t)}\rmk^{-1}$, $\lmk w-T_{\bbB(t)}\rmk^{-1}$, $\lmk \zeta-T_{\bbB(t)}\rmk^{-1}$,
$ B_\mu(t)$ we obtain an expansion
\begin{align}\label{eq:qesp}
\caQ\lmk t, N,l,Z,\{X_\mu\}_{\mu}\rmk=\sum_{j=1}^4t^{-j}b_j\lmk N, l,  Z,\{X_\mu\}_\mu\rmk +b_0\lmk t, N, l, Z,\{X_\mu\}_\mu\rmk.
\end{align}
By taking the limit $N\to\infty$ and then $l\to\infty$ after that for $b_0\lmk t, N, l, Z,\{X_\mu\}\rmk$
we obtain a function
$b_\infty\lmk t,Z,\{X_\mu\}\rmk
=\lim_{l\to\infty}\lmk \lim_{N\to\infty} b_0\lmk t, N, l, Z,\{X_\mu\}\rmk\rmk$.
There exist
 constants $C_{10}>0$ and $0<s_3<1$ such that 
\begin{align}\label{eq:bif7}
&\sup_{t\in(0,1]} \sup_{N,l\in\nan;N\ge l}\lv b_0\lmk t, N, l, Z,\{X_\mu\}\rmk
\rv,\;
\max_{j=1,\ldots, 4}\sup_{N,l\in\nan;N\ge l}\lv b_j\lmk N, l,  Z,\{X_\mu\}\rmk\rv,\;
\sup_{t\in(0,1]} \lv b_\infty\lmk t,Z,\{X_\mu\}\rmk
\rv\le C_{10} \sum_{\mu=1}^n\lV X_\mu\rV^2\cdot \lV Z\rV^2,\\
&\sup_{t\in(0,1]} \sup_{N,l\in\nan;N\ge l}
\lv
b_\infty\lmk t,Z,\{X_\mu\}\rmk
-b_0\lmk t, N, l, Z,\{X_\mu\}\rmk
\rv
\le C_{10}\lmk s_3^l+s_{3}^{N-l}\rmk \sum_{\mu=1}^n\lV X_\mu\rV^2\cdot \lV Z\rV^2,\label{eq:xlb27}
\end{align}	
for any $Z, X_\mu\in\mnz\otimes \pu\mkk\pd$.

The expansion (\ref{eq:qesp}) and the property (\ref{eq:pt27}) implies 
$ b_j\lmk N, l,  Z,\{X_\mu\}\rmk=0$ for any $j=1,\ldots,4$,
$N,l\in\nan$ with $N\ge l$, $t\in (0,1]$ and  $Z, X_\mu\in\mnz\otimes \pu\mkk\pd$.
Hence we obtain
\begin{align}\label{eq:lcn27}
\caQ\lmk t, N,l,Z,\{X_\mu\}_{\mu}\rmk=b_0\lmk t, N, l, Z,\{X_\mu\}\rmk,
\end{align} 
for any $N,l\in\nan$ with $N\ge l$, $t\in(0,1]$, $Z,X_\mu\in \mnz\otimes\pu\mkk\pd$, $\mu=1,\ldots,n$.
Take $N\to\infty$ limit and $l\to \infty$ then after in this equation.
From (\ref{eq:te}) combined with the spectral property of $T_{\bbB(t)}$,
we obtain
\begin{align}\label{eq:li27}
\varphi_{\bb(t)}
\lmk
\lmk \caR\lmk  t,Z,\{X_\mu\}_{\mu}\rmk\rmk^*
e_{\bbB(t)}
\caR\lmk  t,Z,\{X_\mu\}_{\mu}\rmk
\rmk
=b_\infty\lmk t,Z,\{X_\mu\}\rmk.
\end{align}
From (\ref{eq:xlb27}), (\ref{eq:lcn27}), and (\ref{eq:li27}),
we obtain (\ref{eq:qqnl}).
\end{proof}
\begin{lem}\label{lem:zp}
There exist constants $C_9>0$, $0<s_4<1$ satisfying the followings.:
For any $N,l\in\nan$ with $N\ge l\ge \max\{l_5,  n_0^6\lmk k_R+1\rmk\lmk k_L+1\rmk\}$, $t\in(0,1]$, $Z\in \mnz\otimes\pu\mkk\pd$,
$\Phi\in \lmk \caG_{N,\bbB(t)}\otimes \bbC^n\rmk \cap \caG_{N+1,\bbB(t)}^{\perp}$
we have
\begin{align}\label{eq:zieq}
\lV \lmk 
\unit_{N+1-l}\otimes \bra{\Gamma_{l,\bbB(t)}^{(R)}\circ\Theta_{l,t}\lmk Z\rmk}\rmk\Phi
\rV^2
\le 
C_9\lmk s_4^l+s_4^{N-l}\rmk
\lV \Phi\rV^2\lV  \Gamma_{l,\bbB(t)}^{(R)}\circ\Theta_{l,t}\lmk Z\rmk\rV^2.
\end{align}
(Recall $l_5$ given in Corollary \ref{cor:lb}.)
\end{lem}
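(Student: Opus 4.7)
The plan is to identify the squared left-hand side with $\caQ(t,N,l,Z,\{X_\mu\}_\mu)$, apply Lemma \ref{lem:qest}, and then exploit $\Phi\perp\caG_{N+1,\bbB(t)}$ to control the surviving leading term $\varphi_{\bbB(t)}(\caR^*e_{\bbB(t)}\caR)$. For $t\in(0,1]$ one has $\bbB(t)\in\ClassA$ (Lemma \ref{lem:tspec}) and $N\ge l\ge n_0^6(k_R+1)(k_L+1)\ge l_{\bbB(t)}$ (Lemma \ref{lem:klmain}), so $X\mapsto\Gamma_{N,\bbB(t)}^{(R)}(\Theta_{N,t}(X))$ is a bijection from $\mnz\otimes\pu\mkk\pd$ onto $\caG_{N,\bbB(t)}$. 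Thus I may write $\Phi=\sum_{\mu=1}^{n}\Gamma_{N,\bbB(t)}^{(R)}(\Theta_{N,t}(X_\mu))\otimes\psi_\mu$, and a direct basis expansion gives that the left-hand side of \eqref{eq:zieq} squared equals $\caQ(t,N,l,Z,\{X_\mu\}_\mu)$. Lemma \ref{lem:qest} then bounds this by $\varphi_{\bbB(t)}(\caR^*e_{\bbB(t)}\caR)+C_{10}(s_3^l+s_3^{N-l})(\sum_\mu\lV X_\mu\rV^2)\lV Z\rV^2$, and Corollary \ref{cor:lb} yields the Hilbert--Schmidt bounds $\sum_\mu\lV X_\mu\rV^2\le(2/C_5)\lV\Phi\rV^2$ and $\lV Z\rV^2\le(2/C_5)\lV\Gamma_{l,\bbB(t)}^{(R)}\Theta_{l,t}(Z)\rV^2$, so the $(s_3^l+s_3^{N-l})$ error is already of the desired form.

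For the leading term, since $\caG_{N+1,\bbB(t)}=\Gamma_{N+1,\bbB(t)}^{(R)}(\Theta_{N+1,t}(\mnz\otimes\pu\mkk\pd))$ and $\Gamma_{N+1,\bbB(t)}^{(R)}(\Theta_{N+1,t}(X_1))=\sum_\mu\Gamma_{N,\bbB(t)}^{(R)}(\Theta_{N+1,t}(X_1)B_\mu(t)^*)\otimes\psi_\mu$, the orthogonality condition combined with formula \eqref{eq:272} of Lemma \ref{lem:26} (applied at $l=N$) and cyclic permutation under the trace produces, for $F:=\sum_\mu X_\mu\rho_{\bbB(t)}B_\mu(t)$,
\[
\lv\Tr(X_1^*\,a_te_{\bbB(t)}a_tF)\rv\le C_3\,s_2^N\lV X_1\rV\sum_\mu\lV X_\mu\rV,\quad X_1\in\mnz\otimes\pu\mkk\pd.
\]
Taking $X_1$ over the Hilbert--Schmidt unit ball of $\mnz\otimes\pu\mkk\pd$ and using that $a_te_{\bbB(t)}a_t$ has support $\hpu$ on both sides yields $\lV a_te_{\bbB(t)}a_tF\hpd\rV_{\mathrm{HS}}\le C_3\,s_2^N\sum_\mu\lV X_\mu\rV$.

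To convert this into a bound on $\lV F\hpd\rV_{\mathrm{HS}}$, I pair it with the coercivity of Lemma \ref{lem:pud} applied to $F\hpd\in\mnz\otimes\pu\mkk\pd$: Cauchy--Schwarz in the trace gives
\[
C_5\lV F\hpd\rV_{\mathrm{HS}}^2\le\varphi_{\bbB(t)}\bigl((F\hpd)^*a_te_{\bbB(t)}a_tF\hpd\bigr)\le\lV F\hpd\rV_{\mathrm{HS}}\,\lV a_te_{\bbB(t)}a_tF\hpd\rV_{\mathrm{HS}},
\]
whence $\lV F\hpd\rV_{\mathrm{HS}}\le(C_3/C_5)s_2^N\sum_\mu\lV X_\mu\rV$. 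Because $Z^*=\hpd Z^*$, one has $\caR=a_tF\hpd Z^*a_te_{\bbB(t)}$ and therefore
\[
\varphi_{\bbB(t)}(\caR^*e_{\bbB(t)}\caR)\le\lV e_{\bbB(t)}^{1/2}\caR\rV_{\mathrm{HS}}^2\le\lV e_{\bbB(t)}^{1/2}a_t\rV_{\mathrm{op}}^2\lV F\hpd\rV_{\mathrm{HS}}^2\lV Z^*a_te_{\bbB(t)}\rV_{\mathrm{op}}^2.
\]
The two operator norms are uniformly bounded on $(0,1]$ since $\lV e_{\bbB(t)}^{1/2}a_t\rV_{\mathrm{op}}^2=\lV a_te_{\bbB(t)}a_t\rV_{\mathrm{op}}$ and $\lV a_te_{\bbB(t)}\rV_{\mathrm{op}}^2\le\lV e_{\bbB(t)}\rV_{\mathrm{op}}\lV a_te_{\bbB(t)}a_t\rV_{\mathrm{op}}$ together with Corollary \ref{lem:aea}; combining with Corollary \ref{cor:lb} gives $\varphi_{\bbB(t)}(\caR^*e_{\bbB(t)}\caR)\le C's_2^{2N}\lV\Phi\rV^2\lV\Gamma_{l,\bbB(t)}^{(R)}\Theta_{l,t}(Z)\rV^2$. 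Since $s_2^{2N}\le s_2^l+s_2^{N-l}$ for $N\ge l$, setting $s_4:=\max(s_2,s_3)<1$ and collecting constants yields the claim.

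The main obstacle is the passage from the linear bound on $\Tr(X_1^*a_te_{\bbB(t)}a_tF)$ to a quadratic bound on $\lV F\hpd\rV_{\mathrm{HS}}$ that is uniform as $t\downarrow 0$. The Cauchy--Schwarz ``dualization'' trick in the third paragraph crucially relies on the $t$-uniform coercivity of Lemma \ref{lem:pud}, which is itself the reason the singularity-absorbing bijection $\Theta_{l,t}$ was introduced in the first place.
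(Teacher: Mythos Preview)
Your proof is correct and follows essentially the same architecture as the paper's: expand $\Phi$ in the $\Gamma_{N,\bbB(t)}^{(R)}\circ\Theta_{N,t}$ basis, identify the left-hand side squared with $\caQ(t,N,l,Z,\{X_\mu\})$, apply Lemma \ref{lem:qest}, and use the orthogonality $\Phi\perp\caG_{N+1,\bbB(t)}$ together with \eqref{eq:272} of Lemma \ref{lem:26} to kill the leading term $\varphi_{\bbB(t)}(\caR^*e_{\bbB(t)}\caR)$ up to an $O(s_2^N)$ remainder.

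The one place you deviate is in how you exploit the orthogonality bound. The paper plugs a single explicit test element
\[
S_t:=\sum_{\mu'}\Phi_{\mu'}\rho_{\bbB(t)}B_{\mu'}(t)Z^*a_te_{\bbB(t)}\rho_{\bbB(t)}e_{\bbB(t)}a_tZ
\]
into the linear bound $\bigl|\sum_\mu\varphi_{\bbB(t)}(B_\mu(t)X^*a_te_{\bbB(t)}a_t\Phi_\mu)\bigr|\le C_3\sqrt{2n/C_5}\,s_2^N\lV\Phi\rV\lV X\rV$, which directly outputs $\varphi_{\bbB(t)}(\caR^*e_{\bbB(t)}\caR)$ after a cyclic permutation of the trace. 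You instead dualize over $X_1$ to get a Hilbert--Schmidt bound on $a_te_{\bbB(t)}a_tF\hpd$, then use Lemma \ref{lem:pud} and Cauchy--Schwarz to pass to an HS bound on $F\hpd$ itself, and finally estimate $\varphi_{\bbB(t)}(\caR^*e_{\bbB(t)}\caR)$ by operator-norm inequalities. Both routes hinge on the same uniform control of $\lV a_te_{\bbB(t)}a_t\rV$ from Corollary \ref{lem:aea}; the paper's is one step shorter, while yours makes the role of the coercivity lemma more transparent.
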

\begin{proof} 
Let
$N,l\in\nan$ with $N\ge l\ge \max\{l_5,  n_0^6\lmk k_R+1\rmk\lmk k_L+1\rmk\}$, $t\in(0,1]$, $Z\in \mnz\otimes\pu\mkk\pd$, and
$\Phi\in \lmk \caG_{N,\bbB(t)}\otimes \bbC^n\rmk \cap \caG_{N+1,\bbB(t)}^{\perp}$.
As $\bbB(t)\in\Class (n,n_0,k_R,k_L)$, and $N\ge n_0^6\lmk k_R+1\rmk\lmk k_L+1\rmk\ge l_{\bbB(t)}$,
from  Proposition 3.1 of \cite{Ogata1}, for $\Phi\in  \caG_{N,\bbB(t)}\otimes \bbC^n$
there exist $\{\Phi_\mu\}_{\mu=1}^{n}\subset \mnz\otimes \pu\mkk\pd$
such that
$\Phi=\sum_{\mu=1}^n\Gamma_{N,\bbB(t)}^{(R)}\circ\Theta_{Nt}\lmk\Phi_\mu\rmk\otimes\psi_\mu$.
By Corollary \ref{cor:lb}, we have
\begin{align}\label{eq:phu}
\sum_{\mu=1}^{n}\lV\Phi_\mu\rV^2
\le
\frac{2}{C_5}\sum_{\mu=1}^{n} \lV \Gamma_{l,\bbB(t)}^{(R)}\circ \Theta_{l,t}\lmk \Phi_\mu\rmk\rV^2
=\frac{2}{C_5}\lV\Phi\rV^2.
\end{align}
We claim 
\begin{align}\label{eq:pbxa}
\lv
\sum_{\mu=1}^n \varphi_{\bbB(t)}\lmk B_{\mu}(t) X^* a_t e_{\bbB(t)}a_t\Phi_\mu\rmk
\rv
\le C_3\cdot\sqrt{\frac {2n}{C_5}}\cdot s_1^N\cdot\lV\Phi\rV\lV X\rV,\quad
X\in \mnz\otimes \pu\mkk\pd. 
\end{align}
To see this, let $X\in \mnz\otimes \pu\mkk\pd$.
From $\Phi\in \caG_{N+1,\bbB(t)}^{\perp}$, we have
\begin{align*}
0=\braket{\Gamma_{N+1,\bbB(t)}^{(R)}\circ \Theta_{N+1,t}
\lmk
X
\rmk}{\Phi}
=\sum_{\mu=1}^n\braket{\Gamma_{N,\bbB(t)}^{(R)}\lmk \Theta_{N+1,t}\lmk X\rmk B_{\mu}(t)^*  \rmk}
{ \Gamma_{N,\bbB(t)}^{(R)}\circ\Theta_{Nt}\lmk\Phi_\mu\rmk }.
\end{align*}
From this, 
we obtain
\begin{align*}
&\lv
\sum_{\mu=1}^n \varphi_{\bbB(t)}\lmk B_{\mu}(t) X^* a_t e_{\bbB(t)}a_t\Phi_\mu\rmk
\rv\\
&=
\lv
\sum_{\mu=1}^n \lmk
\varphi_{\bbB(t)}\lmk B_{\mu}(t) X^* a_t e_{\bbB(t)}a_t\Phi_\mu\rmk-\braket{\Gamma_{N,\bbB(t)}^{(R)}\lmk \Theta_{N+1,t}\lmk X\rmk B_{\mu}(t)^*  \rmk}
{ \Gamma_{N,\bbB(t)}^{(R)}\circ\Theta_{Nt}\lmk\Phi_\mu\rmk }
\rmk
\rv
\le
 C_3 s_2^N\lV X\rV \sum_{\mu=1}^n\lV \Phi_\mu\rV\\
&\le
C_3 s_2^N\lV X\rV\sqrt{\frac {2n}{C_5}}\lV\Phi\rV.
\end{align*}
In the first inequality, we used Lemma \ref{lem:26}.
The second inequality is due to (\ref{eq:phu}) and  Cauchy-Schwartz inequality.
This proves the claim (\ref{eq:pbxa}).

From Corollary \ref{lem:aea}, 
\[
C^{'}:=\sup_{t\in(0,1]}\lmk
\lV e_{\bbB(t)}\rV\lV a_te_{\bbB(t)}a_t\rV \lmk \sum_{\mu=1}^n \lV B_\mu(t)\rV^2\rmk^{\frac 12}
\rmk,\quad
C^{''}:=\frac {2\sqrt{n}}{C_5}C_3C'
\]
are finite positive constants.
We apply (\ref{eq:pbxa}) to 
\[
S_t:=\sum_{\mu'=1}^n\Phi_{\mu'}\rho_{\bbB(t)}B_{\mu'}(t)Z^*a_te_{\bbB(t)}\rho_{\bbB(t)}e_{\bbB(t)}a_tZ\in \mnz\otimes \pu\mkk\pd.
\]
Note that, by $\Tr\rho_{\bbB(t)}=1$ and Cauchy-Schwartz inequality and (\ref{eq:phu}), we have
\[
\lV S_t\rV\le C'\cdot \lV Z\rV^2\lmk \sum_{\mu=1}^n \lV\Phi_\mu\rV^2\rmk^{\frac 12}
\le
 C'\cdot \lV Z\rV^2\sqrt{\frac2{C_5}}\lV \Phi\rV.
\]
Applying  (\ref{eq:pbxa}) to $S_t$, we obtain 
\begin{align}\label{eq:asa}
&\lv
\varphi_{\bb(t)}
\lmk
\lmk \caR\lmk  t,Z,\{\Phi_\mu\}_{\mu}\rmk\rmk^*
e_{\bbB(t)}
\caR\lmk  t,Z,\{\Phi_\mu\}_{\mu}\rmk
\rmk
\rv
=\lv\sum_{\mu=1}^n
\varphi_{\bb(t)}
\lmk
B_{\mu}(t) S_t^*a_t e_{\bbB(t)}a_t\Phi_\mu
\rmk
\rv\nonumber\\
&\le  C_3\cdot\sqrt{\frac {2n}{C_5}}\cdot s_1^N\cdot\lV\Phi\rV\lV S_t\rV
\le
C^{''}s_1^N\lV Z\rV^2\lV \Phi\rV^2.
\end{align}
Now we are ready to show (\ref{eq:zieq}).
It is straightforward to check that
$\lV \lmk 
\unit_{N+1-l}\otimes \bra{\Gamma_{l,\bbB(t)}^{(R)}\circ\Theta_{l,t}\lmk Z\rmk}\rmk\Phi
\rV^2$ is equal to $\caQ\lmk t, N,l,Z,\{\Phi_\mu\}_{\mu}\rmk$.
From this, applying Lemma \ref{lem:qest} and (\ref{eq:asa}),
\begin{align*}
&\lV \lmk 
\unit_{N+1-l}\otimes \bra{\Gamma_{l,\bbB(t)}^{(R)}\circ\Theta_{l,t}\lmk Z\rmk}\rmk\Phi
\rV^2=\caQ\lmk t, N,l,Z,\{\Phi_\mu\}_{\mu}\rmk\\
&\le
\lv
\caQ\lmk t, N,l,Z,\{\Phi_\mu\}_{\mu}\rmk-
\varphi_{\bb(t)}
\lmk
\lmk \caR\lmk  t,Z,\{\Phi_\mu\}_{\mu}\rmk\rmk^*
e_{\bbB(t)}
\caR\lmk  t,Z,\{\Phi_\mu\}_{\mu}\rmk
\rmk
\rv
+
\varphi_{\bb(t)}
\lmk
\lmk \caR\lmk  t,Z,\{\Phi_\mu\}_{\mu}\rmk\rmk^*
e_{\bbB(t)}
\caR\lmk  t,Z,\{\Phi_\mu\}_{\mu}\rmk
\rmk\\
&\le C_{10}\lmk s_3^{l}+s_3^{N-l}\rmk
\lmk\sum_{\mu=1}^n\lV \Phi_\mu\rV^2\rmk
\lV Z\rV^2
+C^{''}s_1^N\lV Z\rV^2\lV \Phi\rV^2
\le C_{10}\lmk s_3^{l}+s_3^{N-l}\rmk
\frac{2}{C_5}\lV\Phi\rV^2
\lV Z\rV^2
+C^{''}s_2^N\lV Z\rV^2\lV \Phi\rV^2.
\end{align*}
In the last inequality, we used (\ref{eq:phu}).
This and Corollary \ref{cor:lb} proves the claim of Lemma.
\end{proof}
\begin{proofof}[Lemma \ref{lem:overl}]
Let
$C_9$, $0<s_4<1$ be the constants in Lemma \ref{lem:zp}, and $l_5\in\nan$ in Corollary \ref{cor:lb}.
Let $N,l\in\nan$ such that $N\ge l\ge \max\{l_5,  n_0^6\lmk k_R+1\rmk\lmk k_L+1\rmk\}$.

For each $t\in(0,1]$, we have $\bbB(t)\in\Class(n,n_0,k_R,k_L)$ and $l\ge n_0^6\lmk k_R+1\rmk\lmk k_L+1\rmk\ge l_{\bbB(t)}$.
Therefore, by Proposition 3.1 of \cite{Ogata1}, there exist $Z_{t,i}\in\mnz\otimes\pu\mkk\pd$, $i=1,\ldots,n_0^2\lmk k_R+1\rmk\lmk k_L+1\rmk$ such that
$\{\Gamma_{l,\bbB(t)}^{(R)}\circ\Theta_{l,t}\lmk Z_{t,i}\rmk\}_{i=1}^{n_0^2\lmk k_R+1\rmk\lmk k_L+1\rmk}$
is a CONS of $\caG_{l,\bbB(t)}$.

Let $\xi\in\bigotimes_{i=0}^N\bbC^n$ and $t\in(0,1]$.
Then $\Phi_t:=\lmk G_{N,\bbB(t)}\otimes\unit_{\{N\}}-G_{N+1,\bbB(t)}\rmk \xi\in \lmk \caG_{N,\bbB(t)}\otimes \bbC^n\rmk \cap \caG_{N+1,\bbB(t)}^{\perp}$.
From Lemma \ref{lem:zp}, we have
\begin{align*}
&\lV\lmk \unit_{[0,N-l]}\otimes G_{l,\bbB(t)}\rmk\lmk G_{N,\bbB(t)}\otimes\unit_{\{N\}}-G_{N+1,\bbB(t)}\rmk\xi\rV^2
= \sum_{i=1}^{n_0^2\lmk k_R+1\rmk\lmk k_L+1\rmk}
\lV
\lmk \unit_{[0,N-l]}\otimes
\bra{\Gamma_{l,\bbB(t)}^{(R)}\circ\Theta_{l,t}\lmk Z_{t,i}\rmk}\rmk
\Phi_t
\rV^2\\
&\le C_9\lmk s_4^l+s_4^{N-l}\rmk\sum_{i=1}^{n_0^2\lmk k_R+1\rmk\lmk k_L+1\rmk}
\lV \Phi_t\rV^2\lV  \Gamma_{l,\bbB(t)}^{(R)}\circ\Theta_{l,t}\lmk Z_{t,i}\rmk\rV^2
\le  n_0^2\lmk k_R+1\rmk\lmk k_L+1\rmk C_9\lmk s_4^l+s_4^{N-l}\rmk\lV\xi\rV^2.
\end{align*}
Hence we obtain
\begin{align*}
&\sup_{t\in[0,1]}\lV \lmk \unit_{[0,N-l]}\otimes \hat G_{l}(t)\rmk\lmk \hat G_{N}(t)\otimes\unit_{\{N\}}-\hat G_{N+1}(t)\rmk \rV
=\sup_{t\in(0,1]}\lV \lmk \unit_{[0,N-l]}\otimes G_{l,\bbB(t)}\rmk\lmk G_{N,\bbB(t)}\otimes\unit_{\{N\}}-G_{N+1,\bbB(t)}\rmk \rV\\
&\le n_0^2\lmk k_R+1\rmk\lmk k_L+1\rmk C_9\lmk s_4^l+s_4^{N-l}\rmk.
\end{align*}
For the first equality, we used the continuity of $\hat G_l(t)$.
\end{proofof}
\begin{lem}\label{lem:c5c}
We define $l_6\in\nan$ by
 \[
l_6:=\inf\{ l'\in\nan\mid l'\ge l_5,\; n_0^2\lmk k_R+1\rmk\lmk k_L+1\rmk C_9s_4^l<\frac{1}{8\sqrt l},\quad
\text{for all}\quad l\ge l'\}.
\]
The paths $\{\hat \xi_{\alpha,\beta,a,b}^{(l)}\}$ given in  Lemma \ref{lem:gtz}
satisfies the {\it Condition 5} with respect to $(\max\{ l_3, n_0^6(k_R+1)(k_L+1)\},\max\{ l_4,  2n_0^6(k_R+1)(k_L+1)\}, l_6)$.
(Recall that $C_9$, $s_4$ are given in Lemma \ref{lem:overl}, and 
$l_3,l_4,l_5$ by Lemma \ref{lem:dfsn}, \ref{lem:intg}, \ref{lem:overl}, respectively.)

\end{lem}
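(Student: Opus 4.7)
The plan is to verify the four clauses (i)--(iv) of \emph{Condition 5} directly for the family $\{\hat\xi_{\alpha,\beta,a,b}^{(l)}\}$ using the results already established in the preceding subsections, with the triple $(l',m',l'')=(\max\{l_3,n_0^6(k_R+1)(k_L+1)\},\max\{l_4,2n_0^6(k_R+1)(k_L+1)\},l_6)$.

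Clause (i), the continuity and piecewise $C^\infty$-property, is immediate: Lemma \ref{lem:gtz} already asserts that $[0,1]\ni t\mapsto \hat\xi_{\alpha,\beta,a,b}^{(l)}(t)$ is $C^\infty$ for every $l\ge\max\{l_3,n_0^6(k_R+1)(k_L+1)\}$, which is stronger than what clause (i) requires. Clause (ii) (linear independence for $l\ge l'$) also follows from Lemma \ref{lem:gtz}: for $t\in(0,1]$ linear independence was verified there using $\bbB(t)\in\Class(n,n_0,k_R,k_L)$ together with the estimate $l_{\bbB(t)}\le n_0^6(k_R+1)(k_L+1)$ from Lemma \ref{lem:klmain} and Proposition 3.1 of Part I, while for $t=0$ it is supplied by part (1) of Lemma \ref{lem:dfsn} for $l\ge l_3$.

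For clause (iii), I use the same splitting into $t\in(0,1]$ and $t=0$. For $t\in(0,1]$, since $\bbB(t)\in\ClassA$, Proposition 3.1 of Part I gives $m_{\bbB(t)}\le 2l_{\bbB(t)}\le 2n_0^6(k_R+1)(k_L+1)$, so for $N\ge m'\ge 2n_0^6(k_R+1)(k_L+1)$ the subspaces $\{\caG_{l,\bbB(t)}\}$ enjoy the intersection property, i.e.\ $\caG_{N+1,\bbB(t)}=(\caG_{N,\bbB(t)}\otimes\bbC^n)\cap(\bbC^n\otimes\caG_{N,\bbB(t)})$. Since on $\mnz\otimes\hpu\mkk\hpd$ the map $\Theta_{l,t}$ is a bijection, this exactly means $\hat G_{N+1}(t)=(\hat G_N(t)\otimes\unit)\wedge(\unit\otimes\hat G_N(t))$ for such $t$. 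At $t=0$, the equality $\caD_{N+1}=(\caD_N\otimes\bbC^n)\cap(\bbC^n\otimes\caD_N)$ for $N\ge l_4$ is the content of Lemma \ref{lem:intg}. Combining, (iii) holds for all $N\ge m'+1$.

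Finally, clause (iv) is handled by Lemma \ref{lem:overl}, which gives
\[
\sup_{t\in[0,1]}\lV \lmk\unit_{[0,N-l]}\otimes\hat G_{l}(t)\rmk\lmk\hat G_{N}(t)\otimes\unit_{\{N\}}-\hat G_{N+1}(t)\rmk\rV
\le n_0^2(k_R+1)(k_L+1)C_9\lmk s_4^{l}+s_4^{N-l}\rmk
\]
for all $N\ge l\ge\max\{l_3,l_5,n_0^6(k_R+1)(k_L+1)\}$. By the very definition of $l_6$, for $l\ge l''=l_6$ we have $n_0^2(k_R+1)(k_L+1)C_9 s_4^{l}<\tfrac{1}{8\sqrt l}$, and since we need the bound $\varepsilon_l<\tfrac{1}{2\sqrt l}$ for $N\ge 2l$, we note that when $N\ge 2l$ we have $N-l\ge l$, so $s_4^{N-l}\le s_4^l$, and hence the right hand side is bounded by $2\cdot n_0^2(k_R+1)(k_L+1)C_9 s_4^l<\tfrac{1}{4\sqrt l}<\tfrac{1}{2\sqrt l}$. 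This yields (iv) with $\varepsilon_l:=2n_0^2(k_R+1)(k_L+1)C_9 s_4^{l}$. No single step is really an obstacle --- all of the analytic work has been done in Lemmas \ref{lem:dfsn}--\ref{lem:overl}; the only bookkeeping point to be careful about is that the constants $l_3,l_4,l_5,l_6$ and the uniform bound $l_{\bbB(t)}\le n_0^6(k_R+1)(k_L+1)$ are all compatible with the triple $(l',m',l'')$ stated in the lemma.
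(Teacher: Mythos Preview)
Your proof is correct and follows essentially the same route as the paper's: each of (i)--(iv) is reduced to the corresponding previously established lemma (Lemma~\ref{lem:gtz} for (i), Lemma~\ref{lem:dfsn} and Proposition~3.1 of Part~I together with the bound $l_{\bbB(t)}\le n_0^6(k_R+1)(k_L+1)$ for (ii) and (iii), Lemma~\ref{lem:intg} for the $t=0$ part of (iii), and Lemma~\ref{lem:overl} for (iv)). The only cosmetic difference is your choice $\varepsilon_l=2n_0^2(k_R+1)(k_L+1)C_9s_4^l$ versus the paper's $\varepsilon_l=4n_0^2(k_R+1)(k_L+1)C_9s_4^l$, both of which work given the definition of $l_6$.
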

\begin{proof}
(i) of {\it Condition 5} follows from Lemma \ref{lem:gtz}.
(ii) of  {\it Condition 5}  for $l'=\max\{ l_3, n_0^2(k_R+1)(k_L+1)\}$ follows from
Lemma \ref{lem:dfsn} and Proposition 3.1 of \cite{Ogata1} with $l_{\bb(t)}\le n_0^6(k_R+1)(k_L+1)$.
(iii) of  {\it Condition 5}  for $m'=\max\{ l_4,2 n_0^6(k_R+1)(k_L+1)\}$ follows from
Lemma \ref{lem:intg} and Proposition 3.1 of \cite{Ogata1} with $l_{\bb(t)}\le n_0^6(k_R+1)(k_L+1)$.
(iv) of  {\it Condition 5}  with $\varepsilon_l:=4n_0^2\lmk k_R+1\rmk\lmk k_L+1\rmk C_9s_4^l$ and $l^{''}=l_6$
is from Lemma \ref{lem:overl}.
\end{proof}

\subsection{Proof of Theorem \ref{singthm}}
In this subsection, we complete the proof of Theorem \ref{singthm}.
We use the following notations.
\begin{nota}
If $2\le n_0\in\nan$ and $k_R,k_L\in\nan\cup\{0\}$,
we denote the path $\bbB(t)\in\Class(n,n_0,k_R,k_L)$
given in (\ref{eq:pbp})
by $\bbB_{s, n_0,k_R,k_L}(t)$.
We denote $\bbA$ given in (\ref{eq:defa}) by $\bbA_{s,n_0,k_R,k_L}$,
and $\bbV\in \Class(n,1,n_0(k_R+1)-1, n_0(k_L+1)-1)$ given in Notation \ref{nota:bbk} associated to $\bbA_{s,n_0,k_R,k_L}$ by $\bbV_{s,n_0,k_R,k_L}$.
Furthermore, we denote the path of the orthogonal projections $\hat G_l(t)$ given in Lemma \ref{lem:gtz}
by $\hat G_{n_0,k_R,k_L,l}(t)$.
We also set
\begin{align}\label{eq:mti}
\tilde m(n_0,k_R,k_L)
:=\max\left\{
2n_0^6\lmk k_R+1\rmk\lmk k_L+1\rmk,
2l_{\bbA_{s,n_0,k_R,k_L}}, 2l_{\bbV_{s,n_0,k_R,k_L}},
l_{4, n_0,k_R,k_L }, l_{3, n_0,k_R,k_L }
\right\},
\end{align}
where $ l_{\bbA_{s,n_0,k_R,k_L}}$,
$l_{4, n_0,k_R,k_L }$, $l_{3, n_0,k_R,k_L}$
are the numbers 
$l_\bbA$, $l_4$, $l_3$ given in Lemma \ref{lem:asp},
Lemma \ref{lem:intg}, Lemma \ref{lem:dfsn}, respectively.
The number 
$ l_{\bbV_{s,n_0,k_R,k_L}}$ is the one defined in Definition 1.13 \cite{Ogata1}
for $\bbV_{s,n_0,k_R,k_L}\in \Class A$.
For $n_0=1$ and each $k_R,k_L\in\nan\cup\{0\}$, we fix some $\bbV_{s,1,k_R,k_L}\in\Class(n,1,k_R, k_L) $
and set
$\bbB_{s, 1,k_R,k_L}(t):=\bbV_{s,1,k_R,k_L}$, $t\in[0,1]$.
We also set $\tilde m(1,k_R,k_L):=2(k_R+1)(k_L+1)$.
\end{nota}
\begin{proofof}[Theorem \ref{singthm}]
Let $n_0, n_0'\in\nan$, $k_R,k_L,k_R',k_L'\in\nan\cup\{0\}$,
 $\bb\in\Class(n,n_0,k_R,k_L)$, and
$\bb'\in\Class(n,n_0',k_R',k_L')$.
Let $m\ge 2l_\bbB$, and
$m'\ge 2l_{\bbB'}$.
From \cite{Bachmann:2011kw} and Theorem 1.18 (vi) of \cite{Ogata1}, 
 if
$H_{\Phi_{m,\bb}}\simeq_{II} H_{\Phi_{m',\bb'}}$,
 we have
$
n_0(k_R+1)=n_0'(k_R'+1),\;\text{and}\;
n_0(k_L+1)=n_0'(k_L'+1)
$.

Now we show the converse. Set
$m_0:=\max\{m,m',\tilde m(n_0,k_R,k_L),\tilde m(n_0',k_R',k_L')\}$.We consider the following sequence of paths.
\begin{enumerate}
\item By the same argument as Lemma 4.2 of \cite{bo}, 
by considering a path $\Phi(X;t):=(1-t)\Phi_{m,\bbB}(X)+t\Phi_{m_0,\bbB}(X)$, we obtain
$H_{\Phi_{m,\bbB}}\simeq_{I}H_{\Phi_{m_0,\bbB}}$.
For $N$ large enough, the kernel of 
$\lmk H_{\Phi_{m,\bbB}}\rmk_{[0,N-1]}$ and $\lmk H_{\Phi_{m_0,\bbB}}\rmk_{[0,N-1]}$ coincide.
From this, Lemma \ref{lem:24}, and Theorem 1.18 of \cite{Ogata1}
we see that $(\Phi_{m,\bbB}, {\Phi_{m_0,\bbB}})$ satisfy {\it Condition 8}.
\item
As $\bbB, \bbB_{s, n_0,k_R,k_L}(1)\in \Class(n,n_0,k_R,k_L)$, 
we have $H_{\Phi_{m_0,\bbB}}\simeq_{I}H_{\Phi_{m_0,\bbB_{s, n_0,k_R,k_L}(1)}}$ from Proposition \ref{classthm}. 
By the proof of  Proposition \ref{classthm}, the path connecting $\Phi_{m_0,\bbB}$ and 
$\Phi_{m_0,\bbB_{s, n_0,k_R,k_L}(1)}$ is given by $\Phi_{m_0,\bbB(t)}$ with $\bbB(t)\in  \Class(n,n_0,k_R,k_L)$, a path 
of positive frustration free translation invariant finite range interactions satisfying {\it Condition 6}.
\item (If $n_0\ge 2$.)
By Lemma \ref{lem:c5c}, 
the paths $\{\hat \xi_{\alpha,\beta,a,b}^{(l)}\}$ given by  Lemma \ref{lem:gtz}
satisfies the {\it Condition 5} with $m'=\max\{l_{4,n_0,k_R,k_L}, 2n_0^6(k_R+1)(k_L+1)\}\le m_0$. From Lemma \ref{lem:vpath}, this implies
$H_{\Phi_{m_0,\bbB_{s, n_0,k_R,k_L}(1)}}=H_{\Phi_{\overline{\hat G_{n_0,k_R,k_L,m_0}(1)}}}\simeq_{I}H_{\Phi_{\overline{\hat G_{n_0,k_R,k_L,m_0}(0)}}}$.
The path connecting them is given by $\Phi(t)=\Phi_{\overline{\hat G_{n_0,k_R,k_L,m_0}(t)}}$, 
a path of positive frustration free translation invariant finite range interactions satisfying {\it Condition 6}.
\item  (If $n_0\ge 2$.) By Lemma \ref{lem:gaeq}, 
we have $H_{\Phi_{\overline{\hat G_{n_0,k_R,k_L,m_0}(0)}}}\simeq_{II}H_{\Phi_{m_0,\bbA_{s,n_0,k_R,k_L}}}$
via the path $\Phi(X;t):=(1-t)\Phi_{\overline{\hat G_{n_0,k_R,k_L,m_0}(0)}}(X)+t{\Phi_{m_0,\bbA_{s,n_0,k_R,k_L}}}(X)$. From Lemma \ref{lem:ag7}, the pair of the interactions $(\Phi_{\overline{\hat G_{n_0,k_R,k_L,m_0}(0)}}, {\Phi_{m_0,\bbA_{s,n_0,k_R,k_L}}})$ satisfies {\it Condition 8}.
\item (If $n_0\ge 2$.)
By Notation \ref{nota:bbk},  we have $H_{\Phi_{m_0,\bbA_{s,n_0,k_R,k_L}}}\simeq_{II}H_{\Phi_{m_0,\bbV_{s,n_0,k_R,k_L}}}$,
via the path $\Phi(X;t):=(1-t){\Phi_{m_0,\bbA_{s,n_0,k_R,k_L}}}(X)+ t{\Phi_{m_0,\bbV_{s,n_0,k_R,k_L}}}(X)$.
From Lemma \ref{lem:av7}, the pair of the interactions $({\Phi_{m_0,\bbA_{s,n_0,k_R,k_L}}}, {\Phi_{m_0,\bbV_{s,n_0,k_R,k_L}}})$ satisfies {\it Condition 8}.
\end{enumerate}
Hence we obtain
$H_{\Phi_{m,\bbB}}\simeq_{II}H_{\Phi_{m_0,\bbV_{s,n_0,k_R,k_L}}}$. Recall that we set
$\bbB_{s, 1,k_R,k_L}(1):=\bbV_{s,1,k_R,k_L}$, for $n_0=1$. Parallel argument implies
$H_{\Phi_{m',\bbB'}}\simeq_{II}H_{\Phi_{m_0,\bbV_{s,n_0',k_R',k_L'}}}$.
If
$
n_0(k_R+1)=n_0'(k_R'+1),\;\text{and}\;
n_0(k_L+1)=n_0'(k_L'+1)
$,
then
we have $\bbV_{s,n_0,k_R,k_L}, \bbV_{s,n_0',k_R',k_L'}\in \Class(n,1,n_0(k_R+1)-1,n_0(k_L+1)-1)=
\Class(n,1,n_0'(k_R'+1)-1,n_0'(k_L'+1)-1)$.
Therefore we have $H_{\Phi_{m_0,\bbV_{s,n_0,k_R,k_L}}}\simeq_{I}H_{\Phi_{m_0,\bbV_{s,n_0',k_R',k_L'}}}$ from Proposition \ref{classthm}. 
This completes the proof.
\end{proofof}

\section{Bulk classification}
In this section we prove Theorem \ref{thm:bulk}.
First, recall that gap of local Hamiltonians imply that of bulk Hamiltonians.
\begin{lem}\label{lem:obb}
Let $\Phi$ be a translation invariant finite range interaction on $\caA_{\bbZ}$.
Assume that the Hamiltonian $H_\Phi$ is gapped with respect to the open boundary condition.
Let $\gamma>0$ be 
a lower bound of the gap.
Then for any $\omega\in\caS_{\bbZ}(H_{\Phi})$, we have
\[
\sigma\lmk H_{\omega,\alpha_\Phi}\rmk\setminus \{0\}\subset [\gamma,\infty).
\]
\end{lem}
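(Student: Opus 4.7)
The plan is to transfer the open-boundary gap to the GNS Hamiltonian via a Fourier-analytic argument combined with a Lieb--Robinson bound.  It suffices to show that for every local $A\in\caA_{\bbZ}$ and every Schwartz function $f$ whose Fourier transform $\tilde f(\mu):=\int f(t)e^{it\mu}\,dt$ has $\supp\tilde f\subset(0,\gamma)$, one has $\omega(A_f^*A_f)=0$, where $A_f:=\int f(t)\alpha_{\Phi,t}(A)\,dt\in\caA_{\bbZ}$.  Indeed, the $\alpha_\Phi$-invariance of $\omega$ together with the spectral theorem for $H_{\omega,\alpha_\Phi}$ gives
\[
\omega(A_f^*A_f)=\int \lv\tilde f(\lambda)\rv^{2}\,d\mu_A(\lambda),
\]
with $\mu_A$ the spectral measure of $H_{\omega,\alpha_\Phi}$ at $\pi(A)\Omega$.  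Since $\pi(\caA_{\bbZ})\Omega$ is dense in the GNS space and such $f$'s separate points of $(0,\gamma)$, this vanishing for all $(A,f)$ forces $\sigma(H_{\omega,\alpha_\Phi})\cap(0,\gamma)=\emptyset$, which is the desired conclusion.

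To establish the vanishing, I would invoke the characterisation of $\caS_{\bbZ}(H_\Phi)$ (from Subsection~1.1 of \cite{Ogata1}; cf.\ \cite{BR2}, Prop.~5.3.25) and write $\omega$ as a weak-$*$ limit of states $\omega_k$ on $\caA_{\bbZ}$ whose restrictions to ever-larger intervals $\Lambda_k=[a_k,b_k]$ have density matrices supported on the ground-state projection $G_{\Lambda_k}$ of $(H_\Phi)_{\Lambda_k}$; by translation invariance, the open-boundary gap hypothesis applies uniformly to each $\Lambda_k$.  For a fixed local $A$, with $\supp A$ eventually deep in the interior of $\Lambda_k$, I would introduce the finite-volume surrogate
\[
A_f^{(k)}:=\int f(t)\,e^{it(H_\Phi)_{\Lambda_k}}\,A\,e^{-it(H_\Phi)_{\Lambda_k}}\,dt\in\caA_{\Lambda_k}.
\]
Diagonalising $K_{\Lambda_k}:=(H_\Phi)_{\Lambda_k}-\inf\sigma((H_\Phi)_{\Lambda_k})$, whose spectrum lies in $\{0\}\cup[\gamma,\infty)$ for $k$ large, the matrix elements of $A_f^{(k)}$ in this eigenbasis carry the factor $\tilde f(\lambda_j-\lambda_{j'})$, which is nonzero only when both $\lambda_j,\lambda_{j'}\ge\gamma$; hence $A_f^{(k)}=P_k^\perp A_f^{(k)}P_k^\perp$ with $P_k^\perp:=\unit-G_{\Lambda_k}$.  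Since $\omega_k$ annihilates every operator sandwiched by $P_k^\perp$, this yields $\omega_k\bigl((A_f^{(k)})^*A_f^{(k)}\bigr)=0$ for every $k$.

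A Lieb--Robinson bound then gives $\|\alpha_{\Phi,t}(A)-e^{it(H_\Phi)_{\Lambda_k}}Ae^{-it(H_\Phi)_{\Lambda_k}}\|\le C\|A\|e^{-c'(d_k-v|t|)}$, where $d_k:=\mathrm{dist}(\supp A,\Lambda_k^c)\to\infty$; integrating against the Schwartz function $f$ (whose rapid decay in $|t|$ dominates the linear Lieb--Robinson cone) yields $\|A_f-A_f^{(k)}\|\to 0$, and hence
\[
\omega(A_f^*A_f) = \lim_{k}\omega_k(A_f^*A_f) = \lim_{k}\omega_k\bigl((A_f^{(k)})^*A_f^{(k)}\bigr) = 0,
\]
completing the argument.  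The main technical point is controlling the Lieb--Robinson approximation uniformly in $k$ even though $A_f$ is integrated over all $t$; the Schwartz decay of $f$ is what absorbs the $e^{v|t|}$ factor in the LR bound.  A second, purely bookkeeping, point is the verification that $\caS_{\bbZ}(H_\Phi)$ indeed consists of weak-$*$ limits of finite-volume ground states, which is built into the Part~I definition and the cited BR2 proposition.
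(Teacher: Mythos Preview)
Your argument is correct, but it takes a heavier route than the one the paper defers to. The paper simply cites \cite{KN97} and \cite{ss03}, whose method is the following elementary quadratic-form inequality: for $A\in\caA_{\bbZ}^{\rm loc}$ and $\Lambda_k$ large enough that $[(H_\Phi)_{\Lambda_k},A]=i\delta_\Phi(A)$, set $K_k:=(H_\Phi)_{\Lambda_k}-\inf\sigma((H_\Phi)_{\Lambda_k})$ and compute
\[
\omega_k\bigl(\delta_\Phi(A)^*\delta_\Phi(A)\bigr)=\omega_k\bigl(A^*K_k^2A\bigr)\ \ge\ \gamma\,\omega_k\bigl(A^*K_kA\bigr)=-i\gamma\,\omega_k\bigl(A^*\delta_\Phi(A)\bigr),
\]
using only that $K_k^2\ge\gamma K_k$ (the open-boundary gap) and that the finite-volume ground state $\omega_k$ kills any monomial beginning or ending in $K_k$. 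Passing to the weak-$*$ limit yields $\lV H_{\omega,\alpha_\Phi}\pi(A)\Omega\rV^2\ge\gamma\,\la\pi(A)\Omega,H_{\omega,\alpha_\Phi}\pi(A)\Omega\ra$ on the core $\pi(\caA_{\bbZ}^{\rm loc})\Omega$, i.e.\ $H_{\omega,\alpha_\Phi}^2\ge\gamma H_{\omega,\alpha_\Phi}$, which is precisely the spectral statement. No Lieb--Robinson bounds and no Fourier smearing are needed.

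Your approach trades this algebra for a spectral-localisation picture (smearing by $f$ with $\supp\tilde f\subset(0,\gamma)$, then comparing finite and infinite dynamics via Lieb--Robinson). That is a legitimate and more functional-analytic proof, and it generalises better to situations where $\delta_\Phi(A)$ is not given by a single finite-volume commutator; here it is simply more machinery than required. One minor technical remark on your write-up: since $\tilde f$ has compact support, $f$ is Schwartz but \emph{not} exponentially decaying, so the Lieb--Robinson tail integral cannot be bounded by a single $e^{-c'(d_k-v|t|)}$ estimate; you need the standard splitting into $|t|\le d_k/(2v)$ (LR bound) and $|t|>d_k/(2v)$ (Schwartz decay of $f$). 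The conclusion $\lV A_f-A_f^{(k)}\rV\to0$ is unaffected.
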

\begin{proof}
The proof is the same as that of \cite{KN97}, and \cite{ss03}.
\end{proof}
As an immediate consequence of Lemma \ref{lem:obb}, we obtain the following.
\begin{lem}\label{lem:tl}
Let $H_0,H_1$ be Hamiltonians associated with interactions
$\Phi_{0},\Phi_{1}\in{\caJ}$ that are gapped with respect to the open boundary conditions.
Assume that $H_0$ and $H_1$ are type I-$C^1$-equivalent. Let
 $m\in\nan$,  $\Phi:[0,1]\to {\caJ}_m$, and $\gamma>0$ be the corresponding path of
interactions and the uniform lower bound of the gap in Definition \ref{def:phafst}.
Then, for any $t\in[0,1]$ and $\omega_t\in\caS_{\bbZ}\lmk H_{\Phi(t)}\rmk$,
we have
\[
\sigma\lmk H_{\omega_t,\alpha_{\Phi(t)}}\rmk \setminus \{0\}\subset [\gamma,\infty).
\]
\end{lem}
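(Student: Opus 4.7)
The plan is to derive Lemma \ref{lem:tl} as a direct corollary of Lemma \ref{lem:obb}, using the fact that type I-$C^1$-equivalence builds a uniform open-boundary gap into its very definition.

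First, I would fix an arbitrary $t\in[0,1]$ and verify that $H_{\Phi(t)}$ is gapped with respect to the open boundary condition, with a lower bound of the gap equal to the same $\gamma>0$ appearing in Definition \ref{def:phafst}. Unpacking that definition, for all $N\ge N_0$ the lowest eigenvalue $\lambda_0(t,N)$ of $H(t)_{[0,N-1]}$ lies in $I(t)=[a(t),b(t)]$, while the remainder of $\sigma(H(t)_{[0,N-1]})$ is contained in $[b(t)+\gamma,\infty)$. Writing $G_{[0,N-1]}$ for the spectral projection onto the lowest-eigenvalue eigenspace, this gives
\[
\lambda_0(t,N)\,\unit+\gamma\bigl(\unit-G_{[0,N-1]}\bigr)\le H(t)_{[0,N-1]},
\]
since any eigenvalue different from $\lambda_0(t,N)$ is at least $b(t)+\gamma\ge \lambda_0(t,N)+\gamma$. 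Thus $H_{\Phi(t)}$ satisfies the hypothesis of Lemma \ref{lem:obb} with the same $\gamma$ and $N_0$, independent of $t$.

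Then, for any $\omega_t\in\caS_{\bbZ}(H_{\Phi(t)})$, applying Lemma \ref{lem:obb} to the single Hamiltonian $H_{\Phi(t)}$ yields
\[
\sigma\bigl(H_{\omega_t,\alpha_{\Phi(t)}}\bigr)\setminus\{0\}\subset[\gamma,\infty),
\]
which is the desired conclusion.

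There is really no obstacle here; the lemma is essentially a repackaging exercise. The only point requiring care is the bookkeeping that the $\gamma$ appearing in the $C^1$-equivalence (measured above $b(t)$) is at least as large as the gap above the ground state eigenvalue $\lambda_0(t,N)\le b(t)$, which is the form of ``gap'' needed to invoke Lemma \ref{lem:obb}. Once that identification is made, Lemma \ref{lem:obb} applies uniformly in $t$ and the proof is complete.
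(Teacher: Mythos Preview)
Your proposal is correct and matches the paper's approach exactly: the paper states this lemma as ``an immediate consequence of Lemma~\ref{lem:obb}'' without further proof, and you have simply written out the verification that the type~I-$C^1$ gap condition at each $t$ yields the open-boundary gap hypothesis needed to invoke Lemma~\ref{lem:obb}.
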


\subsection{Frustration free Hamiltonians and their bulk ground states}
To check that some interaction $\Phi$ belongs to $\caJ_B$, we have to consider not only elements in
$\caS_{\bbZ}(H_{\Phi})$ but all elements in 
${\mathfrak B}_{\Phi}$. 
For the interactions that is considered in this paper, these set coincides.
Recall Notation \ref{nota:ts}.
The proof of the following Lemma is exactly the same as that of  Lemma 3.10 \cite{Ogata1}.
\begin{lem}\label{lem:tss}
Let $H_\Phi$ be a frustration free Hamiltonian given by a positive interaction 
$\Phi\in \caJ$.
Then we have $\tilde\caS_{\bbZ,\bbZ}\lmk H_\Phi\rmk=\caS_\bbZ\lmk H_\Phi\rmk$.
If $\omega$ is a unique element in $\caS_\bbZ\lmk H_\Phi\rmk$, 
it is pure.
\end{lem}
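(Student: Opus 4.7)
The plan is to establish the set equality $\widetilde{\caS_{\bbZ,\bbZ}}(H_\Phi)=\caS_\bbZ(H_\Phi)$ first, and then deduce purity of a unique element from the face structure of the left-hand side.

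For the inclusion $\caS_\bbZ(H_\Phi)\subset\widetilde{\caS_{\bbZ,\bbZ}}(H_\Phi)$, I would use the characterization of $\caS_\bbZ(H_\Phi)$ recalled from Part I~\cite{Ogata1} together with frustration-freeness: since $1\le\dim\ker(H_\Phi)_{[0,N-1]}$ for every $N$, the lowest eigenvalue of $(H_\Phi)_{[0,N-1]}$ is $0$, so any $\omega\in\caS_\bbZ(H_\Phi)$ satisfies $\omega((H_\Phi)_\Lambda)=0$ for all finite $\Lambda\in\mathfrak I_\bbZ$. Because $\Phi$ is a positive interaction, $0\le\Phi(X)\le(H_\Phi)_\Lambda$ whenever $X\subset\Lambda$, so squeezing gives $\omega(\Phi(X))=0$ for every $X\in\mathfrak S_\bbZ$, i.e.\ $\omega\in\widetilde{\caS_{\bbZ,\bbZ}}(H_\Phi)$.

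For the reverse inclusion, suppose $\omega\in\widetilde{\caS_{\bbZ,\bbZ}}(H_\Phi)$, so $\omega(\Phi(X))=0$ for every $X$. Then $\omega((H_\Phi)_\Lambda)=0$ for every finite $\Lambda$, and since $(H_\Phi)_\Lambda\ge 0$, Cauchy--Schwarz in the GNS representation of $\omega$ yields $\omega(B(H_\Phi)_\Lambda)=\omega((H_\Phi)_\Lambda B)=0$ for all $B\in\caA_\bbZ$. For any local $A\in\caA_\Lambda^{\rm loc}$, choosing $\Lambda'$ large enough so that $\delta_{\alpha_\Phi}(A)=i[(H_\Phi)_{\Lambda'},A]$, one then gets $-i\omega(A^*\delta_{\alpha_\Phi}(A))=\omega(A^*(H_\Phi)_{\Lambda'}A)\ge 0$, so $\omega$ is an $\alpha_\Phi$-ground state. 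The same local identity, together with the vanishing $\omega((H_\Phi)_\Lambda)=0$ on every interval, gives the local ground-state property used to define $\caS_\bbZ(H_\Phi)$ in Part I. Thus $\omega\in\caS_\bbZ(H_\Phi)$, and the equality is proved. (This is exactly the argument of Lemma 3.10 of \cite{Ogata1}, where the only ingredients are positivity of $\Phi$ and frustration-freeness.)

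For the purity claim, I would observe that $\widetilde{\caS_{\bbZ,\bbZ}}(H_\Phi)$ is a face of the state space on $\caA_\bbZ$: if $\omega=s\omega_1+(1-s)\omega_2$ for states $\omega_1,\omega_2$ and $s\in(0,1)$, and $\omega\in\widetilde{\caS_{\bbZ,\bbZ}}$, then $s\omega_1(\Phi(X))+(1-s)\omega_2(\Phi(X))=0$ with both terms non-negative because $\Phi(X)\ge 0$, forcing $\omega_i(\Phi(X))=0$ and hence $\omega_i\in\widetilde{\caS_{\bbZ,\bbZ}}$. By the first part, $\caS_\bbZ(H_\Phi)$ inherits this face property. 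Hence if $\caS_\bbZ(H_\Phi)=\{\omega\}$ is a singleton, then $\omega$ admits no nontrivial convex decomposition into states on $\caA_\bbZ$, i.e.\ it is extremal in the full state space, which for a $C^*$-algebra is equivalent to purity.

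The only subtle point is matching the Part I definition of $\caS_\bbZ(H_\Phi)$ with the frustration-free local condition $\omega((H_\Phi)_\Lambda)=0$ for all $\Lambda$; once this is in hand, the argument is a short exercise in positivity and Cauchy--Schwarz, and nothing beyond what appears in Lemma 3.10 of \cite{Ogata1} is needed.
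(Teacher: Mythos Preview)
Your proposal is correct and matches the paper's approach: the paper simply cites Lemma~3.10 of \cite{Ogata1} for the proof, and your argument (positivity of $\Phi$ plus frustration-freeness for the set equality, then the face property of $\widetilde{\caS_{\bbZ,\bbZ}}(H_\Phi)$ for purity) is precisely that argument spelled out. There is no gap.
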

\begin{lem}\label{lem:kh}
Let $H_\Phi$ be a frustration free Hamiltonian 
given by a positive interaction $\Phi\in \caJ$.
Let $\omega$ be an element in $\caS_{\bbZ}(H_\Phi)$ and $(\caH,\pi,\Omega)$
its GNS triple.
Then for any unit vector $\xi\in\ker H_{\omega,\alpha_\Phi}$,
$\omega_\xi:=\braket{\xi}{\pi\lmk\cdot\rmk\xi}$ defines a state in 
 $\caS_{\bbZ}(H_\Phi)$.
\end{lem}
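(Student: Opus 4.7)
The plan is to invoke Lemma \ref{lem:tss}, which reduces the claim to showing $\omega_\xi(\Phi(X))=0$ for every $X\in\mathfrak S_\bbZ$. I will proceed in three steps.

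\emph{Step 1: $\alpha_\Phi$-invariance of $\omega_\xi$.} By cyclicity of $\Omega$ for $\pi(\mathfrak A)$, the defining relation $e^{itH_{\omega,\alpha_\Phi}}\pi(A)\Omega=\pi(\alpha_{\Phi,t}(A))\Omega$ extends, for any $A,B\in\mathfrak A$, to
\[
e^{itH_{\omega,\alpha_\Phi}}\pi(A)\pi(B)\Omega=\pi(\alpha_{\Phi,t}(A))e^{itH_{\omega,\alpha_\Phi}}\pi(B)\Omega,
\]
yielding the intertwiner $\pi(\alpha_{\Phi,t}(A))=e^{itH_{\omega,\alpha_\Phi}}\pi(A)e^{-itH_{\omega,\alpha_\Phi}}$ on $\caH$. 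Combining with $e^{itH_{\omega,\alpha_\Phi}}\xi=\xi$ gives $\omega_\xi\circ\alpha_{\Phi,t}=\omega_\xi$.

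\emph{Step 2: the $\alpha_\Phi$-ground state inequality for $\omega_\xi$.} Differentiating the intertwiner at $t=0$ on a suitable core yields $\pi(\delta_{\alpha_\Phi}(A))\xi=iH_{\omega,\alpha_\Phi}\pi(A)\xi-i\pi(A)H_{\omega,\alpha_\Phi}\xi=iH_{\omega,\alpha_\Phi}\pi(A)\xi$ for $A\in\caD(\delta_{\alpha_\Phi})$, using $H_{\omega,\alpha_\Phi}\xi=0$. Therefore
\[
-i\omega_\xi(A^*\delta_{\alpha_\Phi}(A))=\braket{\pi(A)\xi}{H_{\omega,\alpha_\Phi}\pi(A)\xi}=\lV H_{\omega,\alpha_\Phi}^{1/2}\pi(A)\xi\rV^2\ge 0,
\]
so $\omega_\xi\in\mathfrak B_\Phi$.

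\emph{Step 3: the vanishing $\omega_\xi(\Phi(X))=0$.} Since $\omega\in\caS_\bbZ(H_\Phi)$, Lemma \ref{lem:tss} gives $\omega(\Phi(X))=0$, and positivity of $\Phi(X)$ yields $\pi(\Phi(X))^{1/2}\Omega=0$. By Step 1 and the von Neumann mean ergodic theorem applied to $\{e^{itH_{\omega,\alpha_\Phi}}\}$,
\[
\omega_\xi(\Phi(X))=\lim_{T\to\infty}\frac{1}{2T}\int_{-T}^T\omega_\xi(\alpha_{\Phi,t}(\Phi(X)))\,dt=\braket{\xi}{E_0\xi},
\]
where $E_0:=\lim_{T\to\infty}(2T)^{-1}\int_{-T}^T e^{itH_{\omega,\alpha_\Phi}}\pi(\Phi(X))e^{-itH_{\omega,\alpha_\Phi}}dt$ in the strong operator topology. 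The operator $E_0\ge 0$ commutes with $H_{\omega,\alpha_\Phi}$, and $\pi(\Phi(X))^{1/2}\Omega=0$ together with positivity gives $E_0^{1/2}\Omega=0$. A Cauchy-Schwartz-plus-density argument then propagates this vanishing from $\Omega$ to $\xi$: expressing $\xi=P_{\{0\}}^{H_{\omega,\alpha_\Phi}}\eta$ for vectors $\eta$ approximable by $\pi(A_n)\Omega$, and bounding $\lV E_0^{1/2}\xi\rV$ via the relation $E_0=\lim_T(2T)^{-1}\int_{-T}^T\pi(\alpha_{\Phi,t}(\Phi(X)))dt$ combined with $\pi(\alpha_{\Phi,t}(\Phi(X))^{1/2})\Omega=e^{itH_{\omega,\alpha_\Phi}}\pi(\Phi(X)^{1/2})e^{-itH_{\omega,\alpha_\Phi}}\Omega=0$, gives $E_0^{1/2}\xi=0$ and hence $\omega_\xi(\Phi(X))=0$.

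\emph{Main obstacle.} Steps 1 and 2 are routine adaptations of standard GNS-level computations for ground states. The hard step is the last one: the facts that $E_0\ge 0$, that $E_0$ commutes with $H_{\omega,\alpha_\Phi}$, and that $E_0^{1/2}\Omega=0$ do not by themselves imply $E_0$ vanishes on the entire spectral subspace at $0$, since a positive operator may have a single distinguished null vector without being identically zero. The argument must therefore exploit the additional structure that $E_0$ is built from an ergodic average of $\pi(\Phi(X))$ (rather than being a generic positive operator commuting with the dynamics), so that its square root is the ergodic average of $\pi(\Phi(X)^{1/2})$-type terms — each of which kills $\Omega$ — and this lifts through the intertwining on any $\xi\in\ker H_{\omega,\alpha_\Phi}$ via Cauchy-Schwartz applied in the dense subspace $\pi(\mathfrak A)\Omega$.
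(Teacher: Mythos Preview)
Your Step 3 has a genuine gap. The ergodic averaging adds nothing: since by Step 1 you already have $\omega_\xi\circ\alpha_{\Phi,t}=\omega_\xi$, the time average on the state side is identically $\omega_\xi(\Phi(X))$, and on the operator side, because $e^{-itH_{\omega,\alpha_\Phi}}\xi=\xi$, one gets $\langle\xi,E_0\xi\rangle=\langle\xi,\pi(\Phi(X))\xi\rangle$ back again. The proposed rescue --- that ``the square root of $E_0$ is the ergodic average of $\pi(\Phi(X)^{1/2})$-type terms'' --- is false: the square root of an average of positive operators is not the average of the square roots. Knowing that $E_0\ge 0$ commutes with $H_{\omega,\alpha_\Phi}$ and kills $\Omega$ does \emph{not} force it to vanish on all of $\ker H_{\omega,\alpha_\Phi}$, as you yourself note; nothing in your argument closes that door. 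Approximating $\xi$ by $\pi(A_n)\Omega$ does not help either, since $\pi(\Phi(X))\pi(A_n)\Omega$ has no reason to be small.

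The paper's argument bypasses all of this by proving a \emph{form inequality}: for local $A$ supported in an interval $I=[c,d]$ one has $\delta_{\alpha_\Phi}(A)=i[(H_\Phi)_{[c-R,d+R]},A]$ with $R$ the interaction length, so
\[
\langle\pi(A)\Omega,H_{\omega,\alpha_\Phi}\pi(A)\Omega\rangle=-i\omega(A^*\delta_{\alpha_\Phi}(A))=\omega\bigl(A^*(H_\Phi)_{[c-R,d+R]}A\bigr)-\omega\bigl(A^*A\,(H_\Phi)_{[c-R,d+R]}\bigr).
\]
The second term vanishes because $\omega\in\widetilde{\caS_{\bbZ,\bbZ}}(H_\Phi)$ (frustration freeness, via Lemma~\ref{lem:tss}) forces $\pi((H_\Phi)_\Lambda)\Omega=0$. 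The first term dominates $\omega(A^*(H_\Phi)_{[a,b]}A)$ for any $[a,b]\subset[c-R,d+R]$ by positivity of $\Phi$. Since $\pi(\caA_\bbZ^{\rm loc})\Omega$ is a core for $H_{\omega,\alpha_\Phi}$, this yields
\[
\lV\pi((H_\Phi)_{[a,b]})^{1/2}\eta\rV\le\lV H_{\omega,\alpha_\Phi}^{1/2}\eta\rV
\]
for all $\eta$ in the form domain. Now $\xi\in\ker H_{\omega,\alpha_\Phi}$ gives $\pi((H_\Phi)_{[a,b]})\xi=0$ for every finite interval, hence $\omega_\xi\in\widetilde{\caS_{\bbZ,\bbZ}}(H_\Phi)=\caS_\bbZ(H_\Phi)$. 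Note that Steps 1 and 2 are then unnecessary: the form inequality does all the work.
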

\begin{proof}
We first claim that
\begin{align}\label{eq:ihfh}
\lV
\lmk \pi\lmk \lmk H_{\Phi}\rmk_{[a,b]} \rmk\rmk^\frac 12 \eta
\rV
\le
\lV H_{\omega,\alpha_\Phi}^\frac12 \eta\rV,
\end{align}
for all $\eta$ in the domain of  $(H_{\omega,\alpha_\Phi}^\frac12)$ and
$a\le b$.
To see this, we consider arbitrary $a\le b$, finite interval $I=[c,d]$ with 
$c\le a$, $b\le d$, and $A\in\caA_I$.
Let $R\in\nan$ be the interaction length of $\Phi$.
Then we have 
$\delta_{\Phi}(A)=i\left[ \lmk H_{\Phi}\rmk_{ [c-R, d+R]}, A\right]$, by the definition of $\alpha_\Phi$.
From this , we have
\begin{align*}
&\braket{\pi(A)\Omega}{H_{\omega,\alpha_\Phi}\pi(A)\Omega}
=-i\omega\lmk A^*\delta_{\Phi}(A)\rmk=
\omega\lmk A^* \lmk H_{\Phi}\rmk_{[c-R, d+R]} A\rmk
- \omega\lmk A^*A\lmk \lmk H_{\Phi}\rmk_{[c-R, d+R]} \rmk\rmk\\
&=\omega\lmk A^* \lmk H_{\Phi}\rmk_{[c-R, d+R]} A\rmk
\ge \omega\lmk A^* \lmk H_{\Phi}\rmk_{[a,b]} A\rmk.
\end{align*}
The third equality is due to the frustration freeness of $\Phi$ and that $\omega\in\caS_{\bbZ}(H_\Phi)=\widetilde{\caS_{\bbZ,\bbZ}}(H_\Phi)$,
due to Lemma \ref{lem:tss}.
The inequality is by the positivity of $\Phi$.
As $\pi \lmk \caA_{\bbZ}^{\rm loc}\rmk\Omega$ is a core of $H_{\omega,\alpha_\Phi}$, this prove the claim.

By (\ref{eq:ihfh}), for any
unit vector $\xi\in\ker H_{\omega,\alpha_\Phi}$, and fintie interval $[a,b]$, 
we have $\pi\lmk \lmk H_{\Phi}\rmk_{[a,b]} \rmk \xi=0$. This means
$\omega_\xi\in \widetilde{\caS_{\bbZ,\bbZ}}(H_\Phi)=\caS_{\bbZ}(H_\Phi)$.
\end{proof}
\begin{lem}\label{lem:zee}Let $\Phi_0,\Phi_1\in \caJ$ be positive interactions.
Assume that the pair $(\Phi_0,\Phi_1)$ satisfies the {\it Condition 7}.
Let $\omega_\infty$be the state in {\it 4} of {\it Condition 7}.
For each $s\in[0,1]$ we define $\Phi(s)\in\caJ$ by
\begin{align}\label{eq:int10}
\Phi(X; s):=(1-s)\Phi_{0}(X)+s\Phi_{1}(X),\quad X\in {\mathfrak S}_{\bbZ}.
\end{align}
Then, for each $s\in(0,1)$, $\omega_\infty$ is the unique $\alpha_{\Phi(s)}$-ground state.
\end{lem}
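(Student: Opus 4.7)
Stage 1 of the plan is to reduce the relevant frustration-free state space to a singleton. Since $\Phi_0$ and $\Phi_1$ are positive and $\Phi(X;s)=(1-s)\Phi_0(X)+s\Phi_1(X)$, a state $\omega$ on $\caA_\bbZ$ satisfies $\omega(\Phi(X;s))=0$ if and only if $\omega(\Phi_0(X))=\omega(\Phi_1(X))=0$. Hence $\widetilde{\caS_{\bbZ,\Gamma}}(H_{\Phi(s)})=\widetilde{\caS_{\bbZ,\Gamma}}(H_{\Phi_0})\cap\widetilde{\caS_{\bbZ,\Gamma}}(H_{\Phi_1})$ for every $\Gamma\subset\bbZ$. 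Taking $\Gamma=\bbZ$ and invoking item~4 of {\it Condition 7} together with Lemma~\ref{lem:tss} yields $\caS_{\bbZ}(H_{\Phi(s)})=\{\omega_\infty\}$; in particular $\omega_\infty$ is itself an $\alpha_{\Phi(s)}$-ground state (by Proposition~5.3.25 of \cite{BR2}) and is pure.

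Stage 2 is to establish a spectral gap for the bulk Hamiltonian $H_s:=H_{\omega_\infty,\alpha_{\Phi(s)}}$ in the GNS triple $(\caH,\pi,\Omega)$ of $\omega_\infty$. Since $\omega_\infty$ is a ground state for each of $\alpha_{\Phi_0},\alpha_{\Phi_1},\alpha_{\Phi(s)}$, the bulk Hamiltonians $H_i:=H_{\omega_\infty,\alpha_{\Phi_i}}$ ($i=0,1$) are also well defined. On the core $\pi(\caA_\bbZ^{\rm loc})\Omega$, the identity $\delta_{\Phi(s)}=(1-s)\delta_{\Phi_0}+s\delta_{\Phi_1}$ yields $H_s=(1-s)H_0+sH_1$, whence $H_s\ge sH_1$ by positivity of $H_0$. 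By Lemma~\ref{lem:obb} applied to the open-boundary gap of $H_{\Phi_1}$ (item~2 of {\it Condition 7}), $\sigma(H_1)\setminus\{0\}\subset[\gamma,\infty)$; the purity of $\omega_\infty$ combined with Lemma~\ref{lem:kh} and $\caS_{\bbZ}(H_{\Phi_1})=\{\omega_\infty\}$ forces $\ker H_1=\bbC\Omega$. Thus $H_1\ge\gamma(\unit-P_\Omega)$, hence $H_s\ge s\gamma(\unit-P_\Omega)$, so $\ker H_s=\bbC\Omega$ with spectral gap at least $s\gamma$ above $0$.

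Stage 3: given an arbitrary $\alpha_{\Phi(s)}$-ground state $\varphi$, I would show $\varphi=\omega_\infty$. Since $\Phi_1(X)\le s^{-1}\Phi(X;s)$, it suffices first to establish that $\varphi(\Phi(X;s))=0$ for all $X$ supported outside a fixed bounded interval $[-M,M]$; this will give $\varphi\in\widetilde{\caS_{\bbZ,[-M,M]^c}}(H_{\Phi_1})$, and item~5 of {\it Condition 7} then yields that $\varphi$ is quasi-equivalent to $\omega_\infty$. Purity of $\omega_\infty$ (so irreducibility of $\pi$) lets us write $\varphi=\braket{\xi}{\pi(\cdot)\xi}$ for some unit vector $\xi\in\caH$. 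The $\alpha_{\Phi(s)}$-invariance of $\varphi$ combined with irreducibility of $\pi$ forces $e^{-itH_s}\xi$ to be a scalar multiple of $\xi$, so $\xi$ is an eigenvector of $H_s$; the ground-state property of $\varphi$, together with the fact that $\pi(\caA_\bbZ)\xi$ can approximate $\Omega$, forces the corresponding eigenvalue to be $0$. By stage~2, $\ker H_s=\bbC\Omega$, so $\xi\in\bbC\Omega$ and $\varphi=\omega_\infty$.

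The main obstacle is the substep that $\varphi\in\widetilde{\caS_{\bbZ,[-M,M]^c}}(H_{\Phi_1})$ for some $M$. Because $H_{\Phi(s)}$ is not assumed gapped with respect to open boundary conditions, the proof of Lemma~\ref{lem:kh} does not apply to $\varphi$, and one cannot directly transfer the abstract ground-state condition for $\alpha_{\Phi(s)}$ to local frustration-freeness. The argument will have to leverage the open-boundary gap of $H_{\Phi_1}$, the lower bound $(H_{\Phi(s)})_\Lambda\ge s\gamma(\unit-G_{\Lambda,1})$, the projection decay~(\ref{eq:g1g0d}), and the ground-state inequality $\varphi(A^*(H_{\Phi(s)})_\Lambda A)\ge\varphi(A^*A(H_{\Phi(s)})_\Lambda)$ for local $A$ in order to bootstrap from the bulk ground-state condition to genuine local frustration-freeness off a compact set.
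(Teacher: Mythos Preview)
Your Stages 1 and 2 are correct and in fact Stage 2 anticipates an argument the paper only makes later (in Lemma~\ref{lem:ugs}). The genuine gap is the one you flag yourself in Stage 3, and your proposed resolution aims at the wrong target.

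You set out to prove that an arbitrary $\alpha_{\Phi(s)}$-ground state $\varphi$ satisfies $\varphi(\Phi(X;s))=0$ for all $X\subset[-M,M]^c$, i.e.\ that $\varphi$ itself lies in $\widetilde{\caS_{\bbZ,[-M,M]^c}}(H_{\Phi_1})$. There is no reason this should hold a priori; the abstract ground-state inequality $-i\varphi(A^*\delta_{\Phi(s)}(A))\ge0$ does not force local frustration-freeness outside any compact set. What the paper proves instead is the weaker statement
\[
\sum_{X\in{\mathfrak S}_{\bbZ}}\omega(\Phi(X;s))<\infty,
\]
obtained from the variational characterization of ground states (Theorem~6.2.52 of \cite{BR2}) by comparing $\omega$ with the trial state $\omega_M':=\sigma_M\otimes\omega|_{\caA_{[-M,M]^c}}$, where $\sigma_M$ is supported in $\ker(H_{\Phi_1})_{[-M,M]}$; the decay estimate (\ref{eq:g1g0d}) is what makes $\sigma_M((H_{\Phi_0})_{[-M,M]})\to0$. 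From the resulting tail bound one works in the GNS of $\omega$ (not of $\omega_\infty$): the open-boundary gap of $\Phi_1$ gives $\|(1-G_{[-N,N]\setminus[-M_0,M_0],1})\Omega\|<\tfrac12$, and the limit $\tilde\xi$ of the projected vectors $G_{[-N,N]\setminus[-M_0,M_0],1}\Omega$ yields an $\omega$-normal state $\omega_\xi$ that \emph{does} lie in $\widetilde{\caS_{\bbZ,[-M_0,M_0]^c}}(H_{\Phi_1})$. It is $\omega_\xi$, not $\omega$, to which item~5 of {\it Condition~7} is applied; quasi-equivalence of $\omega_\xi$ with $\omega$ (purity) then transfers the conclusion.

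A minor further issue: in Stage~3 you pass from quasi-equivalence to $\varphi=\braket{\xi}{\pi(\cdot)\xi}$ for a single vector $\xi$. Irreducibility of $\pi$ only gives a density matrix; the reduction to a vector requires $\varphi$ pure. The paper handles this by first reducing to extremal (hence pure) ground states via Krein--Milman.
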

\begin{proof}
{\it Step 1.} 
Note that for any $s\in(0,1)$, we have $\omega_{\infty}\in 
\widetilde{\caS_{\bbZ,\bbZ}}\lmk H_{\Phi(s)}\rmk$. This is because 
\[
\omega_\infty\lmk \Phi(X;s)\rmk=(1-s)\omega_\infty\lmk \Phi_0(X)\rmk+ s\omega_\infty\lmk \Phi_1(X)\rmk
=0,\quad X\in {\mathfrak S}_{\bbZ}
\]
due to the frustration freeness of $H_{\Phi_0}$, $H_{\Phi_1}$ and Lemma \ref{lem:tss}.
Therefore, $H_{\Phi(s)}$ is frustration free and $\omega_\infty\in \caS_{\bbZ}(H_{\Phi(s)})$ for all $s\in(0,1)$. In particular,
$\omega_\infty$ is an $\alpha_{\Phi(s)}$-ground state on $\caA_{\bbZ}$ for all $s\in(0,1)$.
Furthermore, $\omega_\infty$ is pure, because it is the unique element in $\caS_{\bbZ}\lmk H_{\Phi_0}\rmk$, 
(Lemma \ref{lem:tss}).
\\\\
{\it Step 2.} We claim that for any 
$s\in(0,1)$, any pure $\alpha_{\Phi(s)}$-ground state $\omega$ on $\caA_{\bbZ}$
is quasi-equivalent to $\omega_\infty$.
Let $s\in(0,1)$ and $\omega$ be an arbitrary pure $\alpha_{\Phi(s)}$-ground state on $\caA_{\bbZ}$ and 
$(\caH,\pi,\Omega)$ the GNS triple of $\omega$. 

As $\Phi_0,\Phi_1$ are of finite range, there exists an $R>0$ such that $\Phi_0(X)=\Phi_1(X)=0$ 
if the diameter of $X$ is larger than $R$. Let $\gamma>0$ and $N_0\in\nan$ be numbers satisfying
$\gamma\le \inf_{N\ge N_0}\inf \lmk \sigma\lmk \lmk H_{\Phi_1}\rmk_{[0,N-1]}\rmk\setminus\{0\}\rmk$, 
from {\it 2.} of {\it Condition 7}.

First we claim
\begin{align}\label{eq:acp}
\sum_{X\in{\mathfrak S}_{\bbZ}}\omega\lmk \Phi(X;s)\rmk<\infty.
\end{align}
To see this, recall the characterization of the ground state in Thorem 6.2.52 of \cite{BR2}.
For each $M\in\nan$, we choose a state $\sigma_M$ on $\caA_{[-M,M]}$ whose support is in the kernel
of $\lmk H_{\Phi_1}\rmk_{[-M,M]}$. We consider the state 
$\omega_M':=\sigma_M\otimes\omega\vert_{\caA_{\bbZ\setminus [-M,M]}}$ on $\caA_{\bbZ}$.
Note that $\omega_M'$ and $\omega$ coincide on $\caA_{\bbZ \setminus [-M,M]}$.
By Thorem 6.2.52 of \cite{BR2}, as $\omega$ is an $\alpha_{\Phi(s)}$-ground state, we obtain
\begin{align}\label{eq:m01b}
&0\le \sum_{X:X\cap[-M,M]\neq\emptyset }\omega\lmk \Phi(X;s)\rmk\le
\sum_{X:X\cap[-M,M]\neq\emptyset }\omega_M'\lmk \Phi(X;s)\rmk=
\sum_{X:X\cap[-M,M]\neq\emptyset }\lmk(1-s)\omega_M'\lmk \Phi_0(X)\rmk+s\omega_M'\lmk \Phi_1(X)\rmk\rmk
\nonumber\\
&=
(1-s)\sigma_M\lmk\lmk H_{\Phi_0}\rmk_{[-M,M]}\rmk
+s\sigma_M\lmk\lmk H_{\Phi_1}\rmk_{[-M,M]}\rmk+\sum_{\stackrel{X\cap[-M,M]\neq\emptyset}{X\cap[-M,M]^c\neq\emptyset} }\omega_M'\lmk \Phi\lmk X;s\rmk\rmk.
\end{align}
As the support of $\sigma_M$ is in the kernel
of $\lmk H_{\Phi_1}\rmk_{[-M,M]}$, we have $\sigma_M\lmk\lmk H_{\Phi_1}\rmk_{[-M,M]}\rmk=0$.
In order to estimate $\sigma_M\lmk\lmk H_{\Phi_0}\rmk_{[-M,M]}\rmk$,
set $C_1:=\sum_{X: 0 \in X
}\lV \Phi_1(X)\rV$, and note that
$\lV \lmk H_{\Phi_0}\rmk_{[-M,M]}\rV\le C_1(2M+1)$.
We then obtain
\begin{align*}
&\sigma_M\lmk\lmk H_{\Phi_0}\rmk_{[-M,M]}\rmk
=\sigma_M\lmk\lmk H_{\Phi_0}\rmk_{[-M,M]} \lmk G_{[-M,M],1}-G_{[-M,M],0}\rmk G_{[-M,M],1}\rmk
+\sigma_M\lmk\lmk H_{\Phi_0}\rmk_{[-M,M]}G_{[-M,M],0}G_{[-M,M],1}\rmk\\
&=\sigma_M\lmk\lmk H_{\Phi_0}\rmk_{[-M,M]} \lmk G_{[-M,M],1}-G_{[-M,M],0}\rmk G_{[-M,M],1}\rmk\\
&\le\lV\lmk G_{[-M,M],1}-G_{[-M,M],0}\rmk G_{[-M,M],1}\rV
\lV  \lmk H_{\Phi_0}\rmk_{[-M,M]}\rV
\le C_1\cdot C r^{2M+1}(2M+1).
\end{align*}
Here, $C$, $r$ are constants given in {\it 3.} of {\it Condition 7}.
Hence $\sigma_M\lmk\lmk H_{\Phi_0}\rmk_{[-M,M]}\rmk$ converges to $0$ as $M\to\infty$.

By the translation invariance of $\Phi_0,\Phi_1$ and the definition of $R$,
for any $M\ge R$, we have
\[
\sum_{\stackrel{X\cap[-M,M]\neq\emptyset}{X\cap[-M,M]^c\neq\emptyset} }
(1-s)\lV\Phi_0(X)\rV+s\lV\Phi_1(X)\rV
=\sum_{\stackrel{X\cap[-R,R]\neq\emptyset}{X\cap[-R,R]^c\neq\emptyset} }
(1-s)\lV\Phi_0(X)\rV+s\lV\Phi_1(X)\rV=:C_2.
\]
Substitute this and $\lim_M\sigma_M\lmk\lmk H_{\Phi_0}\rmk_{[-M,M]}\rmk=0$, $\sigma_M\lmk\lmk H_{\Phi_1}\rmk_{[-M,M]}\rmk=0$ to (\ref{eq:m01b}).
Taking $M\to\infty$ limit, we obtain 
$0\le \sum_{X\in{\mathfrak S}_{\bbZ} }\omega\lmk \Phi(X;s)\rmk\le
C_2<\infty$, proving the claim (\ref{eq:acp}).

From this, there exists an $M_0\in\nan$ with $R\le M_0$ such that 
\begin{align}
0\le s\sum_{X\subset[-M_0,M_0]^c }\omega\lmk \Phi_1(X)\rmk\le \sum_{X\subset[-M_0,M_0]^c }\omega\lmk \Phi(X;s)\rmk<\frac {s\gamma}{4}.
\end{align}
By the spectral gap condition for $\Phi_1$, for any $N\in\nan$ with $M_0+N_0\le N$, we have
\begin{align}\label{eq:og}
&\lV \lmk 1-G_{[-N,N]\setminus [-M_0,M_0],1}\rmk\Omega\rV^2
\le \gamma^{-1}\sum_{X\subset [-M_0,M_0]^c }\omega\lmk \Phi_1(X)\rmk<\frac 14.
\end{align}

Now we set for each $N_0+M_0\le N$, 
a vector $\xi_N:=G_{[-N,N]\setminus [-M_0,M_0],1}\Omega$.
The norm of these vectors are decreasing by the definition. From (\ref{eq:og}), we have 
$\lV \xi_N\rV\ge \frac 12$ for each $N_0+M_0\le N$. Therefore $\lV \xi_N\rV$ converges to a strictly positive constant.
Furthermore, for $N'\le N$, by the definition, we have
\begin{align*}
\lV \xi_N-\xi_{N'}\rV^2
=\lV \xi_N\rV^2+\lV \xi_{N'}\rV^2-\braket{\xi_N}{\xi_{N'}
}-\braket{\xi_{N'}}{\xi_{N}}
=-\lV \xi_N\rV^2+\lV \xi_{N'}\rV^2.
\end{align*}
This and the fact that  $\lV \xi_N\rV$ converges means that $\{\xi_N\}_N$ is a Cauchy sequence.
Therefore, it has a limit $\tilde \xi\in\caH$. From the bound $\lV \xi_N\rV\ge \frac 12$,
$\tilde \xi$ is non-zero, and we define a vector 
$\xi:=\lV \tilde \xi\rV^{-1}\tilde \xi$ in $\caH$ and an $\omega$-normal state
$\omega_{\xi}:=\braket{\xi}{\pi\lmk \cdot \rmk\xi}$.
As $\omega$ is pure,  $\omega_{\xi}$ and $\omega$ are quasi-equivalent.
From the definition of this $\tilde \xi$, for any $X\in{\mathfrak S}_{\bbZ}$, with
$X\cap[-M_0,M_0]=\emptyset$, we have $\omega_\xi\lmk \Phi_1(X)\rmk=0$.
Hence $\omega_{\xi}$ belongs to $\widetilde{\caS_{\bbZ,[-M_0,M_0]^c}}\lmk H_{\Phi_1}\rmk$.
By {\it 5.} of the assumption, this implies that $\omega_\xi$ and $\omega_\infty$ are quasi-equivalent.
Combined with the quasi-equivalence of $\omega$ and $\omega_\xi$, we conclude that
$\omega$ and $\omega_\infty$ are quasi-equivalent, proving the claim of {\it Step 2}.
\\\\
{\it Step 3.} We claim that for any 
$s\in(0,1)$, if $\omega$ is a pure $\alpha_{\Phi(s)}$-ground state, then
we have $\omega=\omega_\infty$.
Let $(\caH_\infty,\pi_\infty,\Omega_\infty)$ be the GNS triple of $\omega_\infty$.
As $\omega_\infty$ is pure, we have
$\pi_\infty(\caA_{\bbZ})^{''}=\caB(\caH_\infty)$.
From {\it Step 2}, 
$\omega$ is $\omega_\infty$-normal, hence represented by a density matrix on $\caH$.
As furthermore $\omega$ is pure, the density matrix is a one rank projection onto $\bbC\xi$ for some unit vector
$\xi\in\caH_\infty$ i.e., $\omega=\omega_\xi=\braket{\xi}{\pi_\infty\lmk \cdot\rmk\xi}$.
By the $\alpha_{\Phi(s)}$-invariance of $\omega$ and $\pi_\infty(\caA_{\bbZ})^{''}=\caB(\caH_\infty)$, 
we have $e^{itH_{\omega_\infty,\alpha_{\Phi(s)}}}\lmk \ket{\xi}\bra{\xi}\rmk e^{-it H_{\omega_\infty,\alpha_{\Phi(s)}}}= \ket{\xi}\bra{\xi}$
for all $t\in\bbR$.
From this and the strong continuity of the one parameter group of unitatries $e^{itH_{\omega_\infty,\alpha_{\Phi(s)}}}$,
$\xi$ is an eigenvector of $H_{\omega_\infty,\alpha_\Phi(s)}$ with some eigenvalue $\lambda$.
From the positivity of  $H_{\omega_\infty,\alpha_{\Phi(s)}}$, we have $\lambda\ge 0$.
The triple $(\caH_\infty, \pi_\infty, \xi)$ is the GNS triple of  $\omega$ because of $\pi_\infty(\caA_{\bbZ})^{''}=\caB(\caH_\infty)$,
 and
$H_{\omega_\infty,\alpha_{\Phi(s)}}-\lambda$ satisfies
\[
e^{it\lmk H_{\omega_\infty,\alpha_{\Phi(s)}}-\lambda\rmk}\pi_\infty\lmk A\rmk\xi
=\pi_\infty\lmk \alpha_{\Phi(s),t}(A)\rmk\xi,\quad A\in\caA_{\bbZ},\quad t\in[0,1].
\]
Hence we have $H_{\omega,\alpha_{\Phi(s)}}= H_{\omega_\infty,\alpha_{\Phi(s)}}-\lambda$.
As this operator have to be positive, we have
\[
0\le \braket{\Omega_\infty}{H_{\omega,\alpha_{\Phi(s)}}\Omega_\infty}
=\braket{\Omega_\infty}{H_{\omega_\infty,\alpha_{\Phi(s)}}\Omega_\infty}-\lambda
=-\lambda.
\]
Hence we conclude $\lambda=0$, and $\xi$ is in the kernel of $H_{\omega_\infty,\alpha_{\Phi(s)}}$.
From Lemma \ref{lem:kh} we obtain that
$\omega=\omega_\xi\in \caS_{\bbZ}(H_{\Phi(s)})=\widetilde{\caS_{\bbZ,\bbZ}}(H_{\Phi(s)})\subset \widetilde{\caS_{\bbZ,\bbZ}}(H_{\Phi_1})=\caS_{\bbZ}(H_{\Phi_1})=\{\omega_\infty\}$
, i.e., $\omega=\omega_\infty$.
\\
{\it Step 4.} Let $s\in(0,1)$.
As the set of all $\alpha_{\Phi(s)}$-ground states 
${\mathfrak B}_{\Phi(s)}$ is a non-empty $wk*$-compact convex set, 
by Krein-Milman theorem, ${\mathfrak B}_{\Phi(s)}$ is the convex closed hull of its extremal points.
Any extremal point of  is pure ${\mathfrak B}_{\Phi(s)}$ (Theorem 5.3.37 of \cite{BR2}).
Therefore, from {\it Step 3}, $\omega_\infty$ is the unique extremal point of ${\mathfrak B}_{\Phi(s)}$.
This means $\omega_\infty$ is the unique $\alpha_{\Phi(s)}$-ground state.
\end{proof}
We obtain the following corollary as a special case of the previous Lemma.
\begin{cor}\label{lem:frfh}
Let $H_{\Phi}$ be a frustration free Hamiltonian given by a positive interaction $\Phi\in \caJ$, and assume that
$H_{\Phi}$ is gapped with respect to the open boundary conditions.
Assume $\Phi$ satisfies {\it Condition 6}.
Then, the $\alpha_{\Phi}$-ground state is unique.
\end{cor}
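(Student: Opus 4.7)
The plan is to derive this as an immediate consequence of Lemma \ref{lem:zee} by choosing $\Phi_0=\Phi_1=\Phi$. First I would verify that the pair $(\Phi,\Phi)$ satisfies \emph{Condition 7}. Items (1), (2) of \emph{Condition 7} hold by the hypotheses of the corollary (frustration freeness and gap with respect to the open boundary conditions). Items (4), (5) of \emph{Condition 7} are exactly the two items in \emph{Condition 6}, which is assumed. Item (3) is trivial in this degenerate case, since $G_{[0,N-1],0}=G_{[0,N-1],1}$ forces $G_{[0,N-1],0}G_{[0,N-1],1}-G_{[0,N-1],1}=0$, so the estimate (\ref{eq:g1g0d}) holds with any $C>0$ and any $0<r<1$.

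Next I would observe that the interpolating interaction (\ref{eq:int10}) collapses to $\Phi(s)=\Phi$ for every $s\in[0,1]$, so the $C^*$-dynamics $\alpha_{\Phi(s)}$ coincides with $\alpha_{\Phi}$ for every $s$. Consequently, picking any $s\in(0,1)$ and applying Lemma \ref{lem:zee} yields that the state $\omega_\infty$ guaranteed by \emph{Condition 6} is the unique $\alpha_{\Phi(s)}=\alpha_{\Phi}$-ground state on $\caA_{\bbZ}$. This gives exactly the conclusion of the corollary.

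There is essentially no obstacle: the corollary is really a packaging of Lemma \ref{lem:zee} in the diagonal case $\Phi_0=\Phi_1$. The only thing one has to be slightly careful about is checking that \emph{Condition 7} is satisfied in this degenerate situation; the nontrivial-looking estimate (\ref{eq:g1g0d}) becomes vacuous, and the remaining items are direct translations of the hypotheses together with \emph{Condition 6}.
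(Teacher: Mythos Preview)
Your proof is correct and is exactly the paper's approach: the paper's proof reads in its entirety ``Set $\Phi_0=\Phi_1=\Phi$ in Lemma \ref{lem:zee}.'' Your additional remarks verifying that {\it Condition 7} holds trivially in the diagonal case and that the restriction $s\in(0,1)$ is harmless since $\Phi(s)\equiv\Phi$ are precisely the details one needs to fill in.
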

\begin{proof}
Set $\Phi_0=\Phi_1=\Phi$ in Lemma \ref{lem:zee}.
\end{proof}
\subsection{Bulk equivalence}
In this subsection, we derive sufficient conditions for the bulk equivalence.
\begin{lem}\label{lem:19}
Let $H_0,H_1$ be gapped  Hamiltonians with respect to the open boundary conditions, associated with interactions
$\Phi_{0},\Phi_{1}\in{\caJ}$.
Assume that $H_{\Phi_0}$ and $H_{\Phi_1}$ are type I-$C^1$-equivalent, and let $\Phi$ be the corresponding path of interactions.
Assume that for each $s\in[0,1]$, $H_{\Phi(s)}$ is frustration-free and $\Phi(s)$
 satisfies {\it Condition 6}.
Then we have $H_{\Phi_0}\simeq_BH_{\Phi_1}$,
\end{lem}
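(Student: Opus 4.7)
The continuous piecewise $C^1$ path $\Phi:[0,1]\to\caJ_m$ supplied by the type I-$C^1$-equivalence, together with the uniform open-boundary gap $\gamma>0$ appearing in Definition \ref{def:phafst}, is already a candidate for the bulk-equivalence path of Definition \ref{def:bulk}; the only thing to check is that for each $s\in[0,1]$ the interaction $\Phi(s)$ actually lies in $\caJ_B$, and that the \emph{same} constant $\gamma$ serves as the uniform lower bound on the bulk gap across $\mathfrak{B}_{\Phi(s)}$. The key mechanism is that the combined hypotheses collapse $\mathfrak{B}_{\Phi(s)}$ to a single pure state, which then coincides with the unique element of $\caS_{\bbZ}(H_{\Phi(s)})$, so that Lemma \ref{lem:obb} transfers the open-boundary gap to a bulk gap directly.

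\textbf{Main step.} Fix $s\in[0,1]$. Since $H_{\Phi(s)}$ is gapped with respect to the open boundary (with lower bound $\gamma$ extracted from the type I-$C^1$-path), is frustration free by assumption, and $\Phi(s)$ satisfies \textit{Condition 6}, Corollary \ref{lem:frfh} yields that there is a unique $\alpha_{\Phi(s)}$-ground state $\omega_s$, so $\mathfrak{B}_{\Phi(s)}=\{\omega_s\}$. Because $\caS_{\bbZ}(H_{\Phi(s)})\subset\mathfrak{B}_{\Phi(s)}$ is nonempty (Proposition 5.3.25 of \cite{BR2}) and is itself a singleton by \textit{Condition 6}, $\omega_s\in\caS_{\bbZ}(H_{\Phi(s)})$, and by Lemma \ref{lem:tss} it is pure, so its GNS representation $(\caH_s,\pi_s,\Omega_s)$ is irreducible. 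Applying Lemma \ref{lem:obb} to $\omega_s$ gives
\[
\sigma\bigl(H_{\omega_s,\alpha_{\Phi(s)}}\bigr)\setminus\{0\}\subset[\gamma,\infty).
\]
To see that $0$ is a \emph{non-degenerate} eigenvalue of $H_{\omega_s,\alpha_{\Phi(s)}}$, let $\xi$ be any unit vector in $\ker H_{\omega_s,\alpha_{\Phi(s)}}$; by Lemma \ref{lem:kh}, the vector state $\braket{\xi}{\pi_s(\cdot)\xi}$ lies in $\caS_{\bbZ}(H_{\Phi(s)})=\{\omega_s\}$, and irreducibility of $\pi_s$ forces $\xi$ to be a phase multiple of $\Omega_s$. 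Thus both clauses of Definition \ref{def:jb} hold for $\Phi(s)$ with the single constant $\gamma$, independent of $s$.

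\textbf{Conclusion.} Consequently $\Phi:[0,1]\to\caJ_m\cap\caJ_B$ is a continuous path with $\Phi(0)=\Phi_0$, $\Phi(1)=\Phi_1$, and
\[
\sigma\bigl(H_{\varphi_s,\alpha_{\Phi(s)}}\bigr)\setminus\{0\}\subset[\gamma,\infty)
\]
for every $s\in[0,1]$ and every $\varphi_s\in\mathfrak{B}_{\Phi(s)}$, since $\mathfrak{B}_{\Phi(s)}=\{\omega_s\}$. Both conditions of Definition \ref{def:bulk} are satisfied, yielding $H_{\Phi_0}\simeq_{B}H_{\Phi_1}$.

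\textbf{Main obstacle.} There is no serious technical obstacle once the right reductions are in place: the gap estimate in the bulk is immediate from Lemma \ref{lem:obb}, and the subtle clause in $\caJ_B$, namely non-degeneracy of the zero eigenvalue of the bulk Hamiltonian, follows from purity of the unique $\caS_{\bbZ}$-state (Lemma \ref{lem:tss}) combined with Lemma \ref{lem:kh}. The only conceptual point worth emphasizing is that \textit{Condition 6} together with the open-boundary gap and frustration-freeness is precisely what forces $\mathfrak{B}_{\Phi(s)}=\caS_{\bbZ}(H_{\Phi(s)})$, which is the reason a gap statement about $\caS_{\bbZ}$-states (Lemma \ref{lem:obb}) suffices to conclude a statement about all bulk ground states.
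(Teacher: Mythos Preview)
Your proof is correct and follows essentially the same approach as the paper's: apply Corollary \ref{lem:frfh} to obtain a unique $\alpha_{\Phi(s)}$-ground state $\omega_s\in\caS_{\bbZ}(H_{\Phi(s)})$, transfer the uniform open-boundary gap $\gamma$ to the bulk via Lemma \ref{lem:obb} (the paper phrases this through Lemma \ref{lem:tl}, which is just Lemma \ref{lem:obb} specialized to the type I path), and conclude non-degeneracy of $0$ from uniqueness of the ground state. Your treatment of the non-degeneracy step via Lemma \ref{lem:kh} and irreducibility is a bit more explicit than the paper's one-line appeal to uniqueness, but the content is the same.
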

\begin{proof}
By Corollary \ref{lem:frfh},
there exists a unique $\alpha_{\Phi(s)}$-ground state $\omega_s$ for each $s\in[0,1]$.
As $\caS_{\bbZ}\lmk H_{\Phi(s)}\rmk$ is non-empty, this means $\omega_s\in \caS_{\bbZ}\lmk H_{\Phi(s)}\rmk$.
Let $\gamma>0$ be the uniform lower bound of the gap in Definition \ref{def:phafst}.
By Lemma \ref{lem:tl}, 
we then have 
\[
\sigma\lmk H_{\omega_s,\alpha_{\Phi(s)}}\rmk \setminus \{0\}\subset [\gamma,\infty).
\]
By the uniqueness of the ground state, $0$ is a non-degenerate eigenvalue of $ H_{\omega_s,\alpha_{\Phi(s)}}$, for each $s\in[0,1]$.
This proves $\Phi(s)\in\caJ_B$, and $H_{\Phi_0}\simeq_BH_{\Phi_1}$.
\end{proof}
\begin{lem}\label{lem:ugs}Let $\Phi_0,\Phi_1\in \caJ$ be positive interactions.
Assume that the pair $(\Phi_0,\Phi_1)$ satisfies the {\it Condition 8}.
Then we have $\Phi_0,\Phi_1\in\caJ_B$.
Furthermore, for each $s\in[0,1]$ we define $\Phi(s)\in\caJ$ by (\ref{eq:int10}).
Then we have $H_{\Phi_0}\simeq_B H_{\Phi_1}$ with respect to this path.
\end{lem}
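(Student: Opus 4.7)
The proof has three ingredients: showing the endpoints $\Phi_0, \Phi_1$ already lie in $\caJ_B$, checking that $\omega_\infty$ remains the unique $\alpha_{\Phi(s)}$-ground state along the whole interpolation $\Phi(s) := (1-s)\Phi_0 + s\Phi_1$, and producing a uniform bulk spectral gap by convexity.

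For the endpoints, note that items {\it 4.} and {\it 5.} of {\it Condition 7} are exactly the two clauses of {\it Condition 6}, so $\Phi_1$ satisfies {\it Condition 6}; combined with the open-boundary gap of $H_{\Phi_1}$ from {\it 2.} of {\it Condition 7}, Corollary \ref{lem:frfh} gives $\omega_\infty$ as the unique $\alpha_{\Phi_1}$-ground state, and Lemma \ref{lem:obb} supplies a bulk gap $\gamma_1 > 0$ with $\sigma(H_{\omega_\infty, \alpha_{\Phi_1}})\setminus\{0\} \subset [\gamma_1,\infty)$, hence $\Phi_1 \in \caJ_B$. For $\Phi_0$ the same conclusions hold: case (i) of {\it Condition 8} gives them directly, and case (ii) rederives them by feeding (ii)(a), (ii)(b) and item {\it 4.} of {\it Condition 7} into Corollary \ref{lem:frfh} and Lemma \ref{lem:obb}, yielding a gap $\gamma_0 > 0$. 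Now the path $\Phi(s)$ lies in $\caJ_m$ for $m := \max(m_0, m_1)$ (positivity and translation invariance are preserved by convex combinations) and is manifestly continuous in $s$; Lemma \ref{lem:zee}, whose hypotheses are exactly {\it Condition 7}, identifies $\omega_\infty$ as the unique $\alpha_{\Phi(s)}$-ground state for $s \in (0,1)$, and the endpoints are already handled. So the bulk-classification requirement reduces to a uniform gap for $H_{\omega_\infty, \alpha_{\Phi(s)}}$.

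The uniform gap will come from a convex decomposition of the bulk Hamiltonian's quadratic form. Writing $(\caH_\infty, \pi_\infty, \Omega_\infty)$ for the GNS triple of $\omega_\infty$ and $P_\Omega$ for the projection onto $\bbC\Omega_\infty$, the identity in the proof of Lemma \ref{lem:kh} (the cross term $\omega_\infty(A^*A (H_{\Phi(s)})_\Lambda)$ vanishes by Cauchy--Schwarz applied to the positive operator $(H_{\Phi(s)})_\Lambda$, which $\omega_\infty$ annihilates by Lemma \ref{lem:tss}) gives
\begin{align*}
\lV H_{\omega_\infty, \alpha_{\Phi(s)}}^{1/2}\pi_\infty(A)\Omega_\infty\rV^2
= (1-s)\,\lV H_{\omega_\infty, \alpha_{\Phi_0}}^{1/2}\pi_\infty(A)\Omega_\infty\rV^2
+ s\,\lV H_{\omega_\infty, \alpha_{\Phi_1}}^{1/2}\pi_\infty(A)\Omega_\infty\rV^2
\end{align*}
on the dense core $\pi_\infty(\caA_\bbZ^{\rm loc})\Omega_\infty$. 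Since the endpoint bulk Hamiltonians have one-dimensional kernel $\bbC\Omega_\infty$ (by Lemma \ref{lem:kh} plus endpoint uniqueness) and gap $\gamma_j > 0$, they satisfy $H_{\omega_\infty, \alpha_{\Phi_j}} \ge \gamma_j(1-P_\Omega)$, whence the convex decomposition yields
\begin{align*}
H_{\omega_\infty, \alpha_{\Phi(s)}} \ge \bigl((1-s)\gamma_0 + s\gamma_1\bigr)(1-P_\Omega) \ge \gamma\,(1-P_\Omega),\quad \gamma := \min(\gamma_0, \gamma_1) > 0,
\end{align*}
first as quadratic forms on the core and then, since the right-hand side is bounded, on the full form domain. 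A last invocation of Lemma \ref{lem:kh} together with uniqueness of the $\alpha_{\Phi(s)}$-ground state forces the kernel of $H_{\omega_\infty, \alpha_{\Phi(s)}}$ to be one-dimensional, converting the form inequality into $\sigma(H_{\omega_\infty, \alpha_{\Phi(s)}})\setminus\{0\} \subset [\gamma,\infty)$, uniformly in $s$. The main obstacle will be making this last form-to-spectrum passage precise together with verifying that the common core for the three bulk Hamiltonians really does support the convex identity above; this is standard but deserves careful bookkeeping.
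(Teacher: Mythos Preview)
Your proposal is correct and follows essentially the same approach as the paper: both proofs establish endpoint membership in $\caJ_B$ via Corollary~\ref{lem:frfh} (or the direct assumption in case (i) of {\it Condition 8}) together with Lemma~\ref{lem:obb}, invoke Lemma~\ref{lem:zee} for uniqueness along the open interval, and then derive the uniform bulk gap from the convex identity $\langle\pi_\infty(A)\Omega_\infty, H_{\omega_\infty,\alpha_{\Phi(s)}}\pi_\infty(A)\Omega_\infty\rangle = (1-s)\langle\cdot,H_{\omega_\infty,\alpha_{\Phi_0}}\cdot\rangle + s\langle\cdot,H_{\omega_\infty,\alpha_{\Phi_1}}\cdot\rangle$ on the common core $\pi_\infty(\caA_\bbZ^{\rm loc})\Omega_\infty$. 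The only cosmetic difference is that the paper packages the form-to-spectrum passage as a contradiction argument (assume a unit vector $\xi\perp\Omega_\infty$ in the spectral subspace for $[0,\lambda]$ with $\lambda<\gamma$, approximate it in graph norm by core vectors, and derive $\lambda\ge\gamma$), which sidesteps exactly the domain bookkeeping you flag in your last sentence; your direct form-inequality phrasing is equivalent once that bookkeeping is done.
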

\begin{proof}
First we note that for any $s\in[0,1]$, $\omega_\infty$ in {\it 4.} of {\it Condition 7} is the unique $\alpha_{\Phi(s)}$-ground state.:
By  Lemma \ref{lem:zee}, $\omega_\infty$ is the unique $\alpha_{\Phi(s)}$-ground state for each $s\in(0,1)$.
For $s=0,1$, we obtain it by Corollary \ref{lem:frfh}.

Let $(\caH_\infty,\pi_\infty,\Omega_\infty)$ be the GNS triple of $\omega_\infty$.
We have $\pi_{\infty}\lmk \caA_{\bbZ}\rmk^{''}=\caB\lmk \caH_{\infty}\rmk$
because $\omega_\infty$ is pure.
Therefore, for any $s\in[0,1]$,
any unit vector $\eta$ in the kernel of $H_{\omega_\infty,\alpha_{\Phi(s)}}$
orthogonal to $\Omega_\infty$ gives an $\alpha_{\Phi(s)}$-ground state $\braket{\eta}{\pi_\infty(\cdot)\eta}$ which is different from $\omega_\infty$.
Recall that our $\omega_\infty$ is the unique $\alpha_{\Phi(s)}$-ground state.
Therefore, we have $\ker H_{\omega_\infty,\alpha_{\Phi(s)}}=\bbC\Omega_\infty$ for any $s\in[0,1]$.

From the assumption and Lemma \ref{lem:obb}, there exists $\gamma>0$ such that 
\begin{align}\label{eq:b01g}
\sigma\lmk H_{\omega_\infty,\alpha_{\Phi_i}}\rmk\setminus \{0\}\subset [\gamma,\infty),\quad i=0,1.
\end{align}
In particular, we have $\Phi_0,\Phi_1\in \caJ_B$.
In order to complete the proof, it suffices to show 
\begin{align}\label{eq:intpg}
\sigma\lmk H_{\omega_\infty,\alpha_{\Phi(s)}}\rmk \setminus \{0\}\subset [\gamma,\infty)
\end{align}
for any $s\in[0,1]$.
To prove this, it suffices to show that
the spectral projection of $H_{\omega_\infty,\alpha_{\Phi(s)}}$ corresponding to $[0,\lambda]$
is equal to the orthogonal projection
onto $\bbC\Omega_{\infty}$, for all $0<\lambda<\gamma$.
We assume this is not the case, and show a contradiction.
Suppose that there exists a $0<\lambda<\gamma$ and a unit vector $\xi\in\caH_{\infty}$ such that orthogonal to
$\Omega_{\infty}$ and
$\Proj[H_{\omega_\infty,\alpha_{\Phi(s)}}\in [0,\lambda]]\xi=\xi$. (Recall the notation $\Proj$ in Appendix A \cite{Ogata1}.) 
As $\xi$ is in the domain of $H_{\omega_\infty,\alpha_{\Phi(s)}}$ and $\pi_\infty\lmk \caA^{\rm loc}_{\bbZ}\rmk\Omega_{\infty}$ is a core of
$H_{\omega_\infty,\alpha_{\Phi(s)}}$ \cite{BR2}, there exists a net $\{ A_\beta\}_\beta$ in $\caA^{\rm loc}_{\bbZ}$ such that
\begin{align}
\lim_{\beta}\lV \pi_\infty\lmk  A_\beta\rmk \Omega_{\infty}- \xi\rV=0,\quad
\lim_{\beta}\lV H_{\omega_\infty,\alpha_{\Phi(s)}}\pi_\infty\lmk  A_\beta\rmk \Omega_{\infty}- H_{\omega_\infty,\alpha_{\Phi(s)}}\xi\rV=0.
\end{align}
Note that $\omega_\infty( A_\beta)=\braket{\Omega_{\infty}}{\pi_{\infty}\lmk  A_\beta\rmk\Omega_{\infty}}\to \braket{\Omega_{\infty}}{\xi}=0$ from this equation.
Therefore, setting
 $\tilde  A_\beta:= A_\beta-\omega_\infty( A_\beta)$, we obtain
\begin{align}
\lim_{\beta}\lV \pi_\infty\lmk \tilde  A_\beta\rmk \Omega_{\infty}- \xi\rV=0,\quad
\lim_{\beta}\lV H_{\omega_\infty,\alpha_{\Phi(s)}}\pi_\infty\lmk \tilde  A_\beta\rmk \Omega_{\infty}- H_{\omega_\infty,\alpha_{\Phi(s)}}\xi\rV=0.
\end{align}
By the definition, ${\pi_\infty\lmk \tilde  A_\beta\rmk \Omega_{\infty}}$ and
${\Omega_{\infty}}$ are orthogonal. Recall that we have $\ker H_{\omega_\infty,\alpha_{\Phi_i}}=\bbC\Omega_\infty$ $i=0,1$,
and (\ref{eq:b01g}).
From these facts, ${\pi_\infty\lmk \tilde  A_\beta\rmk \Omega_{\infty}}$ belongs to $\Proj[H_{\omega_\infty,\alpha_{\Phi_i}}\in [\gamma,\infty)]\caH$,
$i=0,1$.
Now using the definition of the bulk Hamiltonian \cite{BR2}, we have
\begin{align*}
&\braket{\pi_\infty\lmk \tilde  A_\beta\rmk \Omega_{\infty}}{ H_{\omega_\infty,\alpha_{\Phi(s)}}\pi_\infty\lmk \tilde  A_\beta\rmk \Omega_{\infty}}
=-i\braket{\pi_\infty\lmk \tilde  A_\beta\rmk \Omega_{\infty}}{\pi_\infty\lmk\delta_{\alpha_{\Phi(s)}} \lmk \tilde  A_\beta\rmk \rmk \Omega_{\infty}}\\
&=-i\braket{\pi_\infty\lmk \tilde  A_\beta\rmk \Omega_{\infty}}{\lmk (1-s)\pi_\infty\lmk\delta_{\alpha_{\Phi_0}} \lmk \tilde  A_\beta\rmk \rmk +
s\pi_\infty\lmk\delta_{\alpha_{\Phi_1}} \lmk \tilde  A_\beta\rmk \rmk\rmk\Omega_{\infty}}\\
&=(1-s)\braket{\pi_\infty\lmk \tilde  A_\beta\rmk \Omega_{\infty}}{ H_{\omega_\infty,\alpha_{\Phi_0}}\pi_\infty\lmk \tilde  A_\beta\rmk \Omega_{\infty}}
+s\braket{\pi_\infty\lmk \tilde  A_\beta\rmk \Omega_{\infty}}{ H_{\omega_\infty,\alpha_{\Phi_1}}\pi_\infty\lmk \tilde  A_\beta\rmk \Omega_{\infty}}\\
&\ge\lmk (1-s)\gamma+s\gamma\rmk \lV \pi_\infty\lmk \tilde  A_\beta\rmk \Omega_{\infty}\rV^2
=\gamma\lV\pi_\infty\lmk \tilde  A_\beta\rmk \Omega_{\infty}\rV^2.
\end{align*}
Taking the $\beta\to \infty$ limit, 
from the choice of $\xi$,
we obtain
\begin{align*}
\lambda\ge \braket{\xi}{H_{\omega_\infty,\alpha_{\Phi(s)}}\xi}\ge \gamma,
\end{align*}
which contradict $\lambda<\gamma$.
This proves (\ref{eq:intpg}) for all $s\in[0,1]$.
\end{proof}
\begin{lem}\label{lem:bob}
Let $\bb\in\ClassA$
with respect to $(n_0,k_R,k_L,\lal, \bbD,\bbG,Y)$. 
Let $\oo_{\bbB}\in\Prim(n,n_0)$ be the $n$-tuple given by Lemma 3.2 of
\cite{Ogata1}.
Then we have
$H_{\Phi_{m,\bbB}}\simeq_{B}H_{\Phi_{m,\omega_\bbB}}$,
for $m\ge 2 l_\bbB$.
\end{lem}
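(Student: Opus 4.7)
The plan is to apply Lemma \ref{lem:ugs} to the pair $(\Phi_0,\Phi_1) := (\Phi_{m,\bbB}, \Phi_{m,\omega_\bbB})$, viewing $\omega_\bbB\in\Prim(n,n_0)$ as an element of $\Class(n,n_0,0,0)\subset\ClassA$; once \textit{Condition 8.(i)} is verified for this pair, the desired bulk equivalence is obtained from Lemma \ref{lem:ugs} along the linear interpolation $\Phi(s) = (1-s)\Phi_0+s\Phi_1$.

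Most items of \textit{Condition 8.(i)} are routine consequences of what is already established. Both Hamiltonians are MPS and therefore frustration-free. Projecting the span condition (\ref{eq:lblb}) for $\bbB$ onto the central block $1\otimes E_{00}^{(k_R,k_L)}$, using the identity $(1\otimes E_{00}^{(k_R,k_L)}) B_\mu (1\otimes E_{00}^{(k_R,k_L)})=\omega_\mu\otimes E_{00}^{(k_R,k_L)}$ of Lemma 3.2 of Part I \cite{Ogata1} together with the fact that every generator $D_a,G_b$ has vanishing $(0,0)$-entry, yields $l_{\omega_\bbB}\le l_\bbB$, so $m\ge 2l_{\omega_\bbB}$ and Theorem 1.18 of Part I \cite{Ogata1} supplies the open-boundary gap for $H_{\Phi_1}$. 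Lemma \ref{lem:24} supplies \textit{Condition 6} for both $\Phi_i$, which via Corollary \ref{lem:frfh} gives the unique $\alpha_{\Phi_0}$-ground state (\textit{Condition 8.(i).(a)}) and provides the quasi-equivalence property (\textit{Condition 7.5}); Lemma \ref{lem:obb} combined with the open-boundary gap yields $\Phi_0\in\caJ_B$ (\textit{Condition 8.(i).(b)}); and the common bulk ground state $\caS_\bbZ(H_{\Phi_0})=\caS_\bbZ(H_{\Phi_1})=\{\omega_{\bbB,\infty}\}$ (\textit{Condition 7.4}) is delivered by the argument of Lemma 3.16 of Part I \cite{Ogata1}, since both MPS Hamiltonians share the same primitive bulk $\omega_\bbB$.

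The substantive step is \textit{Condition 7.3},
\[
\lV G_{[0,N-1],\bbB}\,G_{[0,N-1],\omega_\bbB} - G_{[0,N-1],\omega_\bbB}\rV \le C r^N,
\]
which asserts that $\caG_{N,\omega_\bbB}$ sits exponentially close to $\caG_{N,\bbB}$. For each $Y\in\mnz$ I would match the generator $\Gamma_{N,\omega_\bbB}^{(R)}(Y)$ of $\caG_{N,\omega_\bbB}$ with the candidate $\Gamma_{N,\bbB}^{(R)}(Y\otimes E_{00}^{(k_R,k_L)})\in\caG_{N,\bbB}$. The $\psi_{\mu^{(N)}}$-coefficient of the latter equals $\Tr(Y\,D_{\mu^{(N)}}^*)$, where $D_{\mu^{(N)}}\otimes E_{00}^{(k_R,k_L)} := (1\otimes E_{00}^{(k_R,k_L)})B_{\mu^{(N)}}(1\otimes E_{00}^{(k_R,k_L)})$. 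Expanding $1=E_{00}^{(k_R,k_L)}+\overline{E_{00}^{(k_R,k_L)}}$ between consecutive factors produces the principal term $\omega_{\mu^{(N)}}\otimes E_{00}^{(k_R,k_L)}$ plus corrections, each of which contains at least one transition through $\overline{E_{00}^{(k_R,k_L)}}$; Spectral Property II of $T_\bbB$ (Proposition 3.1 of Part I \cite{Ogata1}) acts on the $\overline{\hpu}$-, $\overline{\hpd}$-components with spectral radius strictly below $1$, producing geometric decay in the length of each intervening stretch of the matrix product. A resummation in the spirit of Lemmas \ref{lem:26}--\ref{lem:qest} then yields
\[
\lV\Gamma_{N,\bbB}^{(R)}(Y\otimes E_{00}^{(k_R,k_L)}) - \Gamma_{N,\omega_\bbB}^{(R)}(Y)\rV \le C s^N \lV Y\rV_2
\]
for some $0<s<1$ and $C>0$ independent of $Y$ and $N$; combined with the two-sided comparability $\lV\Gamma_{N,\omega_\bbB}^{(R)}(Y)\rV \asymp \lV Y\rV_2$ for $N$ large (the analog of (\ref{eq:gee}) applied to the primitive $\omega_\bbB$), this establishes \textit{Condition 7.3}.

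The main obstacle is the uniform-in-$\mu^{(N)}$ bookkeeping of the correction pattern, which is essentially a combinatorial expansion whose terms have to be uniformly dominated by the sub-principal spectrum of $T_\bbB$; this is a technical but routine computation once the resolvent expansion around the principal eigenprojection of $T_\bbB$ has been organized as in Lemma \ref{lem:26}. Once \textit{Condition 7.3} is in place, \textit{Condition 8.(i)} is satisfied, and Lemma \ref{lem:ugs} applied to $(\Phi_{m,\bbB},\Phi_{m,\omega_\bbB})$ yields $H_{\Phi_{m,\bbB}} \simeq_B H_{\Phi_{m,\omega_\bbB}}$, completing the proof.
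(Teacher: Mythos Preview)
Your overall strategy---apply Lemma \ref{lem:ugs} to the pair $(\Phi_{m,\bbB},\Phi_{m,\omega_\bbB})$---is exactly the paper's. The difference lies in \textit{Condition 7.3}, which you treat as the substantive step and attack via a spectral-perturbative estimate. In fact it is trivial: one has the exact inclusion $G_{N,\omega_\bbB}\le G_{N,\bbB}$, so the left-hand side of \textit{7.3} is identically zero. The reason is purely algebraic: each $B_\mu$ lies in $\mnz\otimes\UT_{k_R+k_L+1}$, so the second tensor factor is upper triangular, and the $(0,0)$-block of a product of upper triangular matrices is the product of the $(0,0)$-blocks. Hence your $D_{\mu^{(N)}}$ equals $\widehat{\omega_{\mu^{(N)}}}$ on the nose, and every ``correction through $\overline{E_{00}^{(k_R,k_L)}}$'' vanishes identically. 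This gives $\Gamma_{N,\bbB}^{(R)}(Y\otimes E_{00}^{(k_R,k_L)})=\Gamma_{N,\omega_\bbB}^{(R)}(Y)$ exactly, hence $\caG_{N,\omega_\bbB}\subset\caG_{N,\bbB}$. Your proposed resummation in the style of Lemmas \ref{lem:26}--\ref{lem:qest} is therefore unnecessary; as sketched it is also not quite on target, since Spectral Property II bounds iterates of $T_\bbB$, not individual monomials $\widehat{B_{\mu^{(N)}}}$, so the ``geometric decay in each intervening stretch'' would still need justification.

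A minor point: the paper verifies \textit{Condition 8} via option (ii) (open-boundary gap for $H_{\Phi_{m,\bbB}}$ from Theorem 1.18 of \cite{Ogata1}, plus Lemma \ref{lem:24}), whereas you use option (i). Both work here.
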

\begin{proof}
For any $l\ge l_\bbB$, we have
\[
\caK_l\lmk \oo_\bbB\rmk \otimes \eij{00}
=\lmk \unit_{n_0}\otimes \eij{00}\rmk\caK_l(\bbB)\lmk \unit_{n_0}\otimes \eij{00}\rmk
=\mnz\otimes \eij{00}.
\]
From this, we have $l_{\oo_\bbB}\le l_{\bbB}$, and
$G_{N,\oo_\bbB}\le G_{N,\bbB}$ for all $l_\bbB\le N$.

We apply Lemma \ref{lem:ugs} to $(\Phi_0,\Phi_1)=({\Phi_{m,\bbB}}, {\Phi_{m,\omega_\bbB}})$
with $m\ge 2 l_\bbB$.
We check {\it Condition 7} for $({\Phi_{m,\bbB}}, {\Phi_{m,\omega_\bbB}})$.
The first and second condition of {\it Condition 7} follows from Theorem 1.18 of \cite{Ogata1}.
The third condition follows from $G_{N,\oo_\bbB}\le G_{N,\bbB}$, $l_\bbB\le N$.
From this $G_{N,\oo_\bbB}\le G_{N,\bbB}$ and the frustration freeness, we have 
$\{\omega_{\oo_{\bbB,\infty}}\}=\caS_\bbZ\lmk H_{\Phi_{m, \oo_\bbB}}\rmk\subset \caS_\bbZ\lmk H_{{\Phi}_{m,\bbB}}\rmk =\{\omega_{\bbB,\infty}\}$.
This proves the fourth condition of  {\it Condition 7} .
The fifth condition of  {\it Condition 7} follows from Lemma \ref{lem:24}.
Furthermore, $({\Phi_{m,\bbB}}, {\Phi_{m,\omega_\bbB}})$ satisfies {\it Condition 8} with respect to (ii) by Theorem 1.18 of\cite{Ogata1} and Lemma \ref{lem:24}. 
Hence we may apply Lemma \ref{lem:ugs} and obtain $H_{\Phi_{m,\bbB}}\simeq_{B}H_{\Phi_{m,\omega_\bbB}}$.
\end{proof}

\subsection{Parents Hamiltonian}
In this subsection, we connect $\caJ_{FB}$ to MPS Hamiltonians.
The key ingredient is \cite{Matsui1}.

\begin{lem}\label{lem:vrep}
Let $\Phi\in\caJ_{FB}$.
Then there exist $n_0\in\nan$ and  $\vv\in\Prim(n,n_0)$ such that
\[
H_\Phi\simeq_{B} H_{\Phi_{m, \vv}},\quad m\ge 2l_\vv.
\]
Furthermore, we have $G_{N,\vv}\le G_{N,\Phi}$ for all $N\in\nan$, where 
$G_{N,\Phi}$ denotes the orthogonal projection onto $\ker H_{\Phi,[0,N-1]}$.
\end{lem}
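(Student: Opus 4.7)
\medskip

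\noindent\textbf{Proof plan for Lemma \ref{lem:vrep}.}
The plan is to produce the MPS representative $\vv$ directly from the unique bulk ground state of $\Phi$, verify the inclusion of kernels via a density-matrix support argument, and then invoke Lemma \ref{lem:ugs}. Let $\omega$ denote the unique $\alpha_\Phi$-ground state on $\caA_\bbZ$. By Lemma \ref{lem:tss} and $\Phi\in\caJ_{FB}$, $\omega$ is the unique element of $\caS_\bbZ(H_\Phi)$ and is pure; by the assumption $\Phi\in\caJ_B$ it has a gapped bulk Hamiltonian, which forces exponential decay of correlations. The main input is then Matsui's theorem (\cite{Matsui1}): any translation-invariant pure state on $\caA_\bbZ$ with the split property (equivalently, with sufficiently fast clustering) is finitely correlated. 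This produces $n_0\in\nan$ and a minimal standard triple whose generating tuple $\vv\in\Prim(n,n_0)$ satisfies $\omega=\omega_{\vv,\infty}$. By Theorem 1.18 of \cite{Ogata1}, for $m\ge 2l_\vv$ the MPS Hamiltonian $H_{\Phi_{m,\vv}}$ is frustration free, gapped with respect to the open boundary condition, and has $\omega$ as its unique bulk ground state.

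Next I would establish the inclusion $G_{N,\vv}\le G_{N,\Phi}$. Let $D_N$ denote the density matrix of the marginal $\omega|_{\caA_{[0,N-1]}}$. Since $H_\Phi$ is frustration free and $\omega\in\caS_\bbZ(H_\Phi)$, one has $\omega(\Phi(X))=0$ and hence $\Tr(D_N\Phi(X))=0$ for every $X\subset[0,N-1]$. Because $D_N\ge 0$ and $\Phi(X)\ge 0$, this forces $\Phi(X)$ to vanish on the range of $D_N$, so the support projection $s(D_N)$ satisfies $s(D_N)\le G_{N,\Phi}$. On the other hand, a direct computation in the matrix product representation shows that the range of $D_N$ is spanned by the vectors $\Gamma_{N,\vv}^{(R)}(X)$, $X\in\Mat_{n_0}$, so that $s(D_N)=G_{N,\vv}$ once $N\ge l_\vv$ (cf.\ Proposition 3.1 of \cite{Ogata1} together with primitivity of $\vv$). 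Combining the two yields $G_{N,\vv}\le G_{N,\Phi}$ for all sufficiently large $N$, and extending to all $N\in\nan$ is automatic since $\omega(\Phi(X))=0$ and hence $s(D_N)\subseteq G_{N,\Phi}$ for every $N$.

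To conclude, I would apply Lemma \ref{lem:ugs} to the pair $(\Phi_0,\Phi_1):=(\Phi,\Phi_{m,\vv})$ with $m\ge 2l_\vv$. The first two items of {\it Condition 7} are immediate from the frustration freeness of both Hamiltonians and from Theorem 1.18 of \cite{Ogata1}. The third follows at once from the inclusion proven above: because $G_{N,\vv}\le G_{N,\Phi}$, one has $G_{[0,N-1],\Phi}G_{[0,N-1],\vv}=G_{[0,N-1],\vv}$, so the left-hand side of (\ref{eq:g1g0d}) is identically zero. The fourth follows from $\omega=\omega_{\vv,\infty}$ together with the uniqueness of the bulk ground state on each side. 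The fifth is exactly Lemma 3.25 of \cite{Ogata1} transcribed for $\vv$, combined with Lemma \ref{lem:qe}. Finally, $\Phi\in\caJ_{FB}\subset\caJ_B$ has a unique $\alpha_\Phi$-ground state, so the option (i) of {\it Condition 8} holds. Lemma \ref{lem:ugs} then gives $H_\Phi\simeq_B H_{\Phi_{m,\vv}}$ via the linear path $\Phi(X;s):=(1-s)\Phi(X)+s\Phi_{m,\vv}(X)$.

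The main obstacle in this strategy is the MPS representation step: one must verify that the clustering produced by the bulk gap is strong enough to feed into Matsui's characterization, and then identify the support of $D_N$ precisely with $\Ran G_{N,\vv}$, which requires the minimality of the chosen standard triple and the primitivity of $\vv$. Once this support identification is in place, every other step reduces to the bookkeeping of {\it Condition 7} and {\it Condition 8} already set up in Lemma \ref{lem:ugs}.
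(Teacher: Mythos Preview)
Your proposal is correct and follows essentially the same route as the paper: Matsui's theorem (fed by the clustering that the bulk gap provides) yields the minimal standard triple and hence $\vv\in\Prim(n,n_0)$; faithfulness of $\rho$ gives $s(D_N)=G_{N,\vv}\le G_{N,\Phi}$; and Lemma \ref{lem:ugs}, via {\it Condition 8}(i), closes the argument. The only cosmetic point is that the support identity $s(D_N)=G_{N,\vv}$ already holds for every $N$ (faithfulness of $\rho$ alone suffices, no restriction to $N\ge l_\vv$ is needed), so your separate extension step is unnecessary.
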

\begin{proof}
Let $\omega$ be the unique $\alpha_\Phi$-ground state and 
$(\caH,\pi,\Omega)$ its GNS triple.
As $\omega$ is the unique $\alpha_\Phi$-ground state and $H_\Phi$ is frustration free, we have $\omega\in\widetilde\caS_{\bbZ,\bbZ}(H_\Phi)=\caS_{\bbZ}(H_\Phi)$.
The uniqueness also implies $\ker H_{\omega,\alpha_\Phi}
=\bbC\Omega$.

By Theorem 4.1 of \cite{NS}, the second condition of Definition \ref{def:jb}
and $\ker H_{\omega,\alpha_\Phi}
=\bbC\Omega$ implies 
$\lim_{N\to\infty}\omega\lmk A\tau_{\pm N}\lmk B\rmk\rmk=\omega\lmk A\rmk\omega\lmk B\rmk$, for all
$A,B\in\caA_{\bbZ}^{\rm loc}$. Therefore, we can apply Theorem 1.2 of \cite{Matsui1} and
$\omega$ is a pure finitely correlated state.
Applying Theorem 1.5 of \cite{fnwpure},
we see that $\omega$ is right-generated by a minimal standard triple (see \cite{fnwpure}
or \cite{Ogata2} Appendix C) 
$(\Mat_{n_0},\bbE,\rho)$ where $\bbE$ is given by an isometry $V:\bbC^{n_0}\to\bbC^n\otimes \bbC^{n_0}$, as
 $\bbE(X)=V^* XV$, $X\in \Mat_n\otimes \mnz$.
Furthermore, from Proposition 3.7 and Proposition 2.4  of \cite{fnwpure},
$\sigma\lmk \bbE_{\unit}\rmk\cap \bbT=\{1\}$, and $1$ is a non-degenerate eigenvalue of
$\bbE_{\unit}$.
The isometry $V$ can be decomposed as
$V\chi=\sum_{\mu=1}^n\psi_\mu\otimes v_{\mu}^*\chi$, $\chi\in\bbC^{n_0}$ with $v_\mu\in\mnz$.
With this notation, we have $T_\vv=\bbE_{\unit}$ and it is a unital CP map.
The state $\rho$ is faithful and $T_\vv$-invariant.
Therefore, $T_\vv$ is primitive and $r_{T_\vv}=1$.
Hence we have $\vv\in \Prim(n,n_0)$. 

We check that  $G_{l,\vv}\le G_{l,\Phi}$ for all $l\in\nan$.
We claim that for any $l\in\nan$, we have $s\lmk \omega\vert_{\caA_{[0,l-1]}}\rmk=G_{l,\vv}$.
For $\xi\in\bigotimes_{i=0}^{l-1}\bbC^n$, we have
\begin{align*}
\omega\lmk \ket{\xi}\bra{\xi}\rmk
=\rho\lmk\lmk\sum_{\mu^{(l)}\in \{1,\ldots,n\}^{\times l}}\braket{\widehat\psi_{\mu^{(l)}}}{\xi} \widehat v_{\mu^{(l)}}\rmk \lmk\sum_{\mu^{(l)}\in \{1,\ldots,n\}^{\times l}}\braket{\widehat\psi_{\mu^{(l)}}}{\xi} \widehat v_{\mu^{(l)}}\rmk ^*\rmk,
\end{align*}
by a straightforward calculation.
From this and the faithfulness of $\rho$, we see that
$\omega\lmk \ket{\xi}\bra{\xi}\rmk=0$ if and only if 
$\sum_{\mu^{(l)}\in \{1,\ldots,n\}^{\times l}}\braket{\widehat\psi_{\mu^{(l)}}}{\xi} \widehat v_{\mu^{(l)}}=0$.
By the definition of $\Gamma_{l,\vv}^{(R)}$, $\sum_{\mu^{(l)}\in \{1,\ldots,n\}^{\times l}}\braket{\widehat\psi_{\mu^{(l)}}}{\xi} \widehat v_{\mu^{(l)}}=0$ if and only if $\xi\in \caG_{l,\vv}^{\perp}$. This proves the claim.
As we have $\omega\in  \widetilde \caS_{\bbZ,\bbZ}\lmk H_{\Phi}\rmk$,
 $\omega\lmk H_{\Phi,[0,l-1]}\rmk=0$ holds for each $l\in\nan$.
From the claim above, it means that $G_{l,\vv}=s\lmk \omega\vert_{\caA_{[0,l-1]}}\rmk\le G_{l,\Phi}$ for all $l\in\nan$.

Now we would like to show $H_\Phi\simeq_{B} H_{\Phi_{m, \vv}}$ if $m\ge 2l_\vv$.
We apply 
Lemma \ref{lem:ugs} to $(\Phi_0,\Phi_1)=(\Phi,\Phi_{m, \vv})$.
We have to check {\it Condition 8} for $(\Phi,\Phi_{m, \vv})$.
{\it 1.2.} of {\it Condition 7} is from $\Phi\in \caJ_{FB}$ and Theorem 1.18 of \cite{Ogata1}
with $\vv\in\Prim(n,n_0)$.
The third condition follows from $G_{l,\vv}\le G_{l,\Phi}$, $l\in\nan$.
As $\omega$ is generated by $\vv$, we have $\caS_{\bbZ}(H_{\Phi_{m,\vv}})=\{\omega_{\vv,\infty}\}=\{\omega\}=\caS_{\bbZ}(H_\Phi)$,
and {\it 4} of {\it Condition 7} follows.
{\it 5.} of {\it Condition 7} follows from Lemma \ref{lem:24}. As our $\Phi$ belongs to $\caJ_{FB}$, (i) of {\it Condition 8} holds.
Hence we may apply Lemma \ref{lem:ugs} , and obtain $H_\Phi\simeq_{B} H_{\Phi_{m, \vv}}$.
\end{proof}

\subsection{Proof of Theorem \ref{thm:bulk}}
Let $\Phi_0,\Phi_1\in \caJ_{FB}$.
By Lemma \ref{lem:vrep}, we obtain $m_i$, $n_{0,i}\in\nan$ and  $\vv_i\in\Prim(n,n_{0,i})$
with $m_i\ge 2l_{\vv_i}$, for each $i=0,1$, such that
$
H_{\Phi_i}\simeq_{B} H_{\Phi_{m_i, \vv_i}}.
$Set
$k_0:=\max\{m_i,\tilde m(n_{0,i},0,0)\}_{i=0,1}$.
(Recall (\ref{eq:mti}).)
We consider the sequence of paths {\it 1.-5.} given in the proof of Theorem \ref{singthm}, for $m=m_i$ $\bbB=\vv_i$, $i=0,1$.
For the paths {\it 1.,4.,5}, {\it Condition 8} holds that we may apply Lemma \ref{lem:ugs} to get $\simeq_{B}$.
For {\it 2,3}, frustration-freeness of the path,  {\it Condition 6} and $\simeq_I$ implies
$\simeq_B$ by Lemma \ref{lem:19}.
Hence we obtain $H_{\Phi_i}\simeq_{B} H_{\Phi_{m_i, \vv_i}}\simeq_{B}H_{\Phi_{k_0,\bbV_{s,n_{0,i},0,0}}}$, $i=0,1$.
By Lemma \ref{lem:bob}, we obtain $H_{\Phi_{k_0,\bbV_{s,n_{0,i},0,0}}}\simeq_{B}H_{\Phi_{k_0,\oo_{\bbV_{s,n_{0,i},0,0}}}}$, $i=0,1$.
Note that $\oo_{\bbV_{s,n_{0,i},0,0}}\in \Prim(n, 1)$, $i=0,1$.
Therefore, as in path {\it 2.} of Theorem \ref{singthm} we obtain 
$H_{\Phi_{k_0,\oo_{\bbV_{s,n_{0,0},0,0}}}}\simeq_{B}H_{\Phi_{k_0,\oo_{\bbV_{s,n_{0,1},0,0}}}}$.
Hence we obtain 
\[
H_{\Phi_0}\simeq_{B} H_{\Phi_{m_0, \vv_0}}\simeq_{B}H_{\Phi_{k_0,\bbV_{s,n_{0,0},0,0}}}
\simeq_{B}H_{\Phi_{k_0,\oo_{\bbV_{s,n_{0,0},0,0}}}}\simeq_{B}H_{\Phi_{k_0,\oo_{\bbV_{s,n_{0,1},0,0}}}}
\simeq_{B}
H_{\Phi_{k_0,\bbV_{s,n_{0,1},0,0}}}\simeq_{B} H_{\Phi_{m_1, \vv_1}}\simeq_{B} H_{\Phi_1},
\]
proving the Theorem.
\section*{Acknowledgements}
{
This work was supported by JSPS KAKENHI Grant Number 25800057 and 16K05171. 
}
\appendix
\section{$C^\infty$-path of linear independent vectors}
\begin{lem}\label{li}
Let $k,m\in\nan$ with $m\le k$.
Let $\zeta_i:[0,1]\to\cc^k$, $i=1,\ldots, m$ be $C^\infty$-maps.
For each $t\in[0,1]$, let $P(t)$ be the orthogonal projection onto $\spa\{\zeta_i(t)\}_{i=1}^m$.
Suppose that for each $t\in[0,1]$,
the vectors $\{\zeta_i(t)\}_{i=1}^m$ are linearly independent.
Then there exist $C^\infty$-maps $\eta_i:[0,1]\to \cc^k$, $i=1,\ldots,m$
such that 
$\{\eta_i(t)\}_{i=1}^m$ is a CONS of $P(t)\cc^k$, for each $t\in[0,1]$.
In particular, 
the map $[0,1]\ni t\mapsto P(t)\in \mk$ is $C^\infty$.
\end{lem}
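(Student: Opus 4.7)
The plan is to construct the $\eta_i$ by the Gram--Schmidt procedure and verify that smoothness is preserved at each step. Define recursively
\[
\tilde\eta_1(t):=\zeta_1(t),\qquad
\tilde\eta_i(t):=\zeta_i(t)-\sum_{j=1}^{i-1}\braket{\eta_j(t)}{\zeta_i(t)}\eta_j(t),\quad i=2,\ldots,m,
\]
and set $\eta_i(t):=\tilde\eta_i(t)/\lV\tilde\eta_i(t)\rV$ for each $i$. By construction, $\{\eta_i(t)\}_{i=1}^m$ is an orthonormal basis of $\spa\{\zeta_i(t)\}_{i=1}^m=\Ran P(t)$ for every $t\in[0,1]$.

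First I would verify inductively that each $\tilde\eta_i(t)$ is nonzero for all $t\in[0,1]$: at step $i$, $\tilde\eta_i(t)$ is the component of $\zeta_i(t)$ orthogonal to $\spa\{\zeta_1(t),\ldots,\zeta_{i-1}(t)\}$, which is nonzero precisely by the linear independence assumption. Hence the denominator $\lV\tilde\eta_i(t)\rV$ is a strictly positive function of $t\in[0,1]$.

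Next I would prove smoothness by induction on $i$. Assume $\eta_1,\ldots,\eta_{i-1}$ are $C^\infty$. Then $\tilde\eta_i(t)$ is a finite sum of products of $C^\infty$ scalar functions $\braket{\eta_j(t)}{\zeta_i(t)}$ with $C^\infty$ vector-valued functions, hence $C^\infty$. Its norm $\lV\tilde\eta_i(t)\rV=\sqrt{\braket{\tilde\eta_i(t)}{\tilde\eta_i(t)}}$ is the composition of the $C^\infty$ map $t\mapsto\braket{\tilde\eta_i(t)}{\tilde\eta_i(t)}$ with the square root function, which is $C^\infty$ on $(0,\infty)$; by the previous paragraph this norm takes values in $(0,\infty)$, so $\lV\tilde\eta_i\rV:[0,1]\to(0,\infty)$ is $C^\infty$, and therefore so is its reciprocal. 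Thus $\eta_i=\tilde\eta_i\cdot\lV\tilde\eta_i\rV^{-1}$ is $C^\infty$, completing the induction.

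Finally, to conclude the smoothness of $P$, note that
\[
P(t)=\sum_{i=1}^m\ket{\eta_i(t)}\bra{\eta_i(t)},
\]
which is a finite sum of products of $C^\infty$ maps into $\cc^k$ (resp.\ its dual), hence a $C^\infty$ map into $\mk$. I do not expect any genuine obstacle here: the only nontrivial point is the strict positivity of $\lV\tilde\eta_i(\cdot)\rV$, and that is a direct consequence of the standing linear-independence hypothesis.
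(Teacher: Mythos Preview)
Your proof is correct and follows exactly the approach the paper indicates: the paper's proof reads, in its entirety, ``This is immediate by Gram--Schmidt orthogonalization.'' You have simply written out the details of that orthogonalization and the routine verification of smoothness.
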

\begin{proof}
This is immediate by Gram-Schmidt orthogonalization.
\end{proof}
\begin{lem}\label{ap}
Let $k,m\in\nan$ with $m<k$ and $\caK$ an $m$-dimensional subspace of $\cc^k$.
Let $P_{\caK}$ be the orthogonal projection onto $\caK$.
Let $\eta:[0,1]\to \cc^k$ be a $C^\infty$-map with 
$\eta(0),\eta(1)\notin \caK^{\perp}$.
Then for any $\varepsilon>0$,
there exists  a $C^\infty$-map $\xi:[0,1]\to \cc^k$ 
such that $\sup_{t\in[0,1]}\lV \xi(t)-\eta(t)\rV
<\varepsilon$, $\xi(0)=\eta(0)$, $\xi(1)=\eta(1)$ and $P_{\caK}\xi(t)\neq 0$ for all $t\in[0,1]$.
\end{lem}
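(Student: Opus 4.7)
\medskip
\noindent\textbf{Proof proposal for Lemma \ref{ap}.}
The plan is to perturb $\eta$ by a smooth bump valued in $\caK$ itself and to use a measure-theoretic argument to show that, for a generic choice of the perturbation direction, the resulting curve avoids $\caK^\perp$. First, observe that the hypothesis $\eta(0),\eta(1)\notin \caK^\perp$ forces $m\ge 1$ (otherwise $\caK^\perp=\cc^k$ and no point of $\cc^k$ lies outside $\caK^\perp$), so in particular $\caK$ has real dimension $2m\ge 2$.

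Now choose a smooth function $\chi:[0,1]\to[0,\infty)$ with $\chi(0)=\chi(1)=0$ and $\chi(t)>0$ for all $t\in(0,1)$; for concreteness one may simply take $\chi(t)=t(1-t)$. For any $c\in\caK$, define $\xi_c(t):=\eta(t)+\chi(t)\,c$. Then $\xi_c:[0,1]\to\cc^k$ is $C^\infty$, $\xi_c(0)=\eta(0)$, $\xi_c(1)=\eta(1)$, and
\[
\sup_{t\in[0,1]}\lV\xi_c(t)-\eta(t)\rV\le \lmk\sup_{t\in[0,1]}\chi(t)\rmk\lV c\rV,
\]
so the sup-norm condition is satisfied provided $\lV c\rV$ is chosen sufficiently small relative to $\varepsilon$. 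Since $c\in\caK$, we have $P_{\caK}\xi_c(t)=P_{\caK}\eta(t)+\chi(t)\,c$.

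It remains to choose $c$ so that $P_{\caK}\xi_c(t)\neq 0$ for all $t\in[0,1]$. At the endpoints $t=0,1$ we have $\chi(t)=0$ and $P_{\caK}\eta(t)\neq 0$ by hypothesis, so the condition is automatic. For $t\in(0,1)$ we have $\chi(t)>0$, and $P_{\caK}\xi_c(t)=0$ is equivalent to $c=-P_{\caK}\eta(t)/\chi(t)$. The forbidden set
\[
S:=\lmk \left\{-\chi(t)^{-1}P_{\caK}\eta(t)\;\Big|\; t\in(0,1)\right\}\rmk\subset\caK
\]
is the image of the open interval $(0,1)$ under a $C^\infty$ map into $\caK\cong\rr^{2m}$ with $2m\ge 2$; consequently $S$ has Lebesgue measure zero in $\caK$ (by the standard $C^1$ image argument, or as an elementary instance of Sard's theorem). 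Therefore the ball $\{c\in\caK:\lV c\rV<\varepsilon/\sup\chi\}$ contains points outside $S$, and any such $c$ produces the desired $\xi=\xi_c$.

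The argument is essentially mechanical and I do not foresee a genuine obstacle; the only point that requires a moment's thought is verifying that the forbidden set $S$ has measure zero, for which the hypothesis $m<k$ combined with the endpoint condition $\eta(0),\eta(1)\notin\caK^\perp$ is precisely what guarantees $2m\ge 2$ and hence that a one-dimensional smooth image cannot fill a ball in $\caK$.
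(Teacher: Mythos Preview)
Your proof is correct. The paper's proof simply cites Lemma~A.1 of \cite{bo} applied to the submanifold $\caN=\caK^\perp$, which is a general transversality-style result about perturbing smooth curves to miss submanifolds of positive codimension; your argument is a self-contained, elementary version of the same idea, specialized to the linear setting.

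The main difference is that you perturb only in the $\caK$-direction via the bump $\chi(t)c$ and explicitly parametrize the forbidden set as a one-real-dimensional smooth image in $\caK\cong\rr^{2m}$, so the measure-zero conclusion is immediate without invoking a general submanifold lemma. This buys you independence from the external reference and makes the proof readable in isolation. One small remark: the map $t\mapsto -\chi(t)^{-1}P_{\caK}\eta(t)$ blows up at the endpoints (since $P_{\caK}\eta(0),P_{\caK}\eta(1)\neq 0$), so $S$ is unbounded; you might note that one covers $(0,1)$ by countably many compact subintervals to conclude $S$ has measure zero. Also, in this paper's convention $\nan$ excludes $0$, so $m\ge 1$ is automatic and your endpoint argument for $2m\ge 2$, while correct, is not strictly needed.
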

\begin{proof}
This is immediate from Lemma A.1 of \cite{bo} for a sub-manifold $\caN=\caK^{\perp}$.
\end{proof}
\begin{lem}\label{un}
Let $k\in\nan$, $-\infty<a<b<\infty$ and $t_0\in[a,b]$.
Let
$P:[a,b]\to \mk$ be a $C^{\infty}$-map
such that $P(t)\in\caP(\mk)$ for each $t\in[a,b]$
and $\sup_{t\in[a,b]}\lV P(t)-P(t_0)\rV<\frac 14$.
Then there exists a $C^\infty$-map $U:[a,b]\to \mk$
with $U(t_0)=1$ such that $U(t)\in \caU(\mk)$ and 
$P(t)=U(t)P(t_0)U(t)^*$ for each $t\in[a,b]$.
\end{lem}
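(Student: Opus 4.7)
\medskip

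\noindent\textbf{Proof proposal for Lemma \ref{un}.}

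The plan is to apply the standard Kato-type construction via polar decomposition. For each $t\in[a,b]$, write $P_0:=P(t_0)$ and define
\[
Q(t):=P(t)P_0+\lmk 1-P(t)\rmk\lmk 1-P_0\rmk\in\mk,
\]
so that $Q(t_0)=P_0+(1-P_0)=1$ and the map $t\mapsto Q(t)$ is $C^\infty$ since $P$ is. A direct computation using $P(t)^2=P(t)$, $P_0^2=P_0$ and the self-adjointness of $P(t)-P_0$ gives
\[
Q(t)^*Q(t)=Q(t)Q(t)^*=1-\lmk P(t)-P_0\rmk^2.
\]
Under the hypothesis $\sup_{t}\lV P(t)-P_0\rV<\tfrac14$, the right-hand side is bounded below by $\tfrac{15}{16}$, hence is strictly positive and invertible, so $Q(t)$ is invertible for every $t\in[a,b]$.

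I would then set
\[
U(t):=Q(t)\,\lmk Q(t)^*Q(t)\rmk^{-\frac 12}=Q(t)\,\lmk 1-\lmk P(t)-P_0\rmk^2\rmk^{-\frac 12}.
\]
This is the polar decomposition of $Q(t)$, so $U(t)\in\caU(\mk)$ for all $t$. Smoothness of $U$ follows because the holomorphic functional calculus (applied on a small disk away from the spectrum of $(P(t)-P_0)^2$, which lies in $[0,\tfrac1{16}]$) yields a $C^\infty$ dependence of $\lmk 1-(P(t)-P_0)^2\rmk^{-1/2}$ on $t$. At $t=t_0$ one has $Q(t_0)=1$ and $(P(t_0)-P_0)^2=0$, so $U(t_0)=1$.

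The remaining point is to verify the intertwining relation $U(t)P_0U(t)^*=P(t)$. The key observation — which I would establish first — is that $(P(t)-P_0)^2$ commutes with both $P(t)$ and $P_0$; this follows from the identities $P(t)\lmk P(t)-P_0\rmk^2\lmk 1-P(t)\rmk=0$ and $P_0\lmk P(t)-P_0\rmk^2\lmk 1-P_0\rmk=0$, obtained by expanding $(P(t)-P_0)^2=P(t)+P_0-P(t)P_0-P_0P(t)$ and using idempotency. Consequently $\lmk 1-(P(t)-P_0)^2\rmk^{-1/2}$ commutes with $P_0$, and from $Q(t)P_0=P(t)P_0$ together with $P(t)Q(t)=P(t)P_0=Q(t)P_0$ one deduces $U(t)P_0=P(t)U(t)P_0$. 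Taking adjoints and combining yields that $U(t)P_0U(t)^*$ is a projection dominated by $P(t)$; equality of their ranks (guaranteed by $\lV P(t)-P_0\rV<1$) then forces $U(t)P_0U(t)^*=P(t)$.

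The only step requiring a little care is the commutativity of $(P(t)-P_0)^2$ with $P_0$, which underpins both the identification $Q^*Q=QQ^*=1-(P(t)-P_0)^2$ and the intertwining property; everything else is bookkeeping using the smooth functional calculus on a bounded self-adjoint family uniformly away from spectral obstructions.
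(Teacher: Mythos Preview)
Your proposal is correct and follows essentially the same construction as the paper: both define $Q(t)=P(t)P_0+(1-P(t))(1-P_0)$ and take $U(t)=Q(t)\lv Q(t)\rv^{-1}$. The only minor difference is in the last step: the paper observes directly that $Q(t)P_0=P(t)Q(t)$ together with $[\lv Q(t)\rv^{-1},P_0]=0$ gives $U(t)P_0=P(t)U(t)$ and hence $U(t)P_0U(t)^*=P(t)$ immediately, whereas you take the slightly longer detour through $U(t)P_0U(t)^*\le P(t)$ plus a rank comparison --- but both are perfectly valid.
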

\begin{proof}
Set $X(t):=P(t)P(t_0)+\lmk 1-P(t)\rmk \lmk 1-P(t_0)\rmk$ for $t\in[a,b]$.
Then we have $\lV X(t)-1\rV<\frac 14$ and can define
$U(t):=X(t)\lv X(t)\rv^{-1}$ as $X(t)$ is invertible.
 By definition, $U(t)$ is unitary for all $t\in[a,b]$
 and satisfies $U(t_0)=1$.
 Furthermore, $U:[a,b]\to \mk$ is $C^\infty$.
 We claim $U(t) P(t_0)U(t)^*=P(t)$
 for all $t\in[a,b]$.
 To see this, note that $X(t)P(t_0)=P(t)X(t)$. From this and its adjoint $P(t_0)X(t)^*=X(t)^*P(t)$, we also have 
 $\lv X(t)\rv^{-1}P(t_0)=P(t_0)\lv X(t)\rv^{-1}$.
 Hence we have $U(t)P(t_0)U(t)^*=X(t)\lv X(t)\rv^{-1}P(t_0)\lv X(t)\rv^{-1}X(t)^*
=P(t)$.
\end{proof}
\begin{lem}\label{up}
For $k\in\nan$, let $P:[0,1]\to \mk$ be a $C^{\infty}$-map
such that $P(t)\in\caP(\mk)$ for each $t\in[0,1]$.
Then there exists a continuous and piecewise $C^\infty$-map $U:[0,1]\to \mk$
with $U(0)=1$ such that $U(t)\in \caU(\mk)$ and 
$P(t)=U(t)P(0)U(t)^*$ for each $t\in[0,1]$.
\end{lem}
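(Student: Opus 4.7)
The plan is to reduce the global problem to finitely many local problems, each handled by Lemma~\ref{un}, and then concatenate the local unitary paths. Since $P:[0,1]\to\mk$ is $C^\infty$ on a compact interval, it is uniformly continuous, so I can choose a partition $0=t_0<t_1<\cdots<t_N=1$ fine enough that
\[
\sup_{t\in[t_{i-1},t_i]}\lV P(t)-P(t_{i-1})\rV<\frac14,\qquad i=1,\ldots,N.
\]

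On each subinterval $[t_{i-1},t_i]$, Lemma~\ref{un} (applied with $a=t_{i-1}$, $b=t_i$, $t_0=t_{i-1}$) produces a $C^\infty$ map $V_i:[t_{i-1},t_i]\to\caU(\mk)$ with $V_i(t_{i-1})=1$ and
\[
P(t)=V_i(t)\,P(t_{i-1})\,V_i(t)^*,\qquad t\in[t_{i-1},t_i].
\]
I then glue these $V_i$ together by successive conjugation. Define $U:[0,1]\to\caU(\mk)$ inductively by $U(t):=V_1(t)$ on $[0,t_1]$, and for $t\in[t_{i-1},t_i]$ with $i\ge 2$,
\[
U(t):=V_i(t)\,V_{i-1}(t_{i-1})\,V_{i-2}(t_{i-2})\cdots V_1(t_1).
\]
Then $U(0)=V_1(0)=1$, and a direct induction on $i$ using $P(t_{i-1})=V_{i-1}(t_{i-1})\cdots V_1(t_1)P(0)(V_{i-1}(t_{i-1})\cdots V_1(t_1))^*$ gives $P(t)=U(t)P(0)U(t)^*$ throughout $[t_{i-1},t_i]$. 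On the interior of each $[t_{i-1},t_i]$, $U$ is $C^\infty$ because it is the product of the $C^\infty$ map $V_i$ with a constant unitary. At the junction points $t_i$, continuity follows from
\[
\lim_{t\uparrow t_i}U(t)=V_i(t_i)V_{i-1}(t_{i-1})\cdots V_1(t_1)=1\cdot V_i(t_i)V_{i-1}(t_{i-1})\cdots V_1(t_1)=\lim_{t\downarrow t_i}U(t),
\]
since $V_{i+1}(t_i)=1$ by construction. Hence $U$ is continuous and piecewise $C^\infty$ as required.

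There is no serious obstacle here: the only subtlety is the choice of ``base'' projection in each local piece, which is why the $V_i$ are built relative to $P(t_{i-1})$ rather than $P(0)$, and why the gluing requires multiplying by the accumulated unitaries $V_{i-1}(t_{i-1})\cdots V_1(t_1)$ on the right. The existence of the partition is guaranteed by compactness of $[0,1]$ and uniform continuity of $P$, and the hypothesis $\sup\lV P(t)-P(t_{i-1})\rV<\tfrac14$ required by Lemma~\ref{un} is met automatically on sufficiently fine subintervals.
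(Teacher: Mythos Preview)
Your proof is correct and follows essentially the same approach as the paper: partition $[0,1]$ by uniform continuity so that $\lV P(t)-P(t_{i-1})\rV<\tfrac14$ on each subinterval, apply Lemma~\ref{un} locally, and concatenate by right-multiplying with the accumulated endpoint unitaries. The only difference is cosmetic indexing (your $V_i$ on $[t_{i-1},t_i]$ versus the paper's $U_{i-1}$ on $[t_{i-1},t_i]$).
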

\begin{proof}
By the continuity of $P$ and the compactness of $[0,1]$, there exist $l\in \nan$ and $0=t_0<t_1<\cdots<t_l=1$
such that $\sup_{t\in[t_i,t_{i+1}]}\lV P(t)-P(t_i)\rV<\frac 14$, $i=0,\ldots,l-1$.
By Lemma \ref{un}, there exist  $C^\infty$-maps $U_i:[t_i,t_{i+1}]\to \mk$
$i=0,\ldots,l-1$
with $U_i(t_i)=1$, such that $U_i(t)\in\caU(\mk)$, $P(t)=U_i(t)P(t_i)U_i(t)^*$
for $t\in[t_i,t_{i+1}]$.
Set $U(t):=U_{i-1}(t)U_{i-2}(t_{i-1})\cdots U_1(t_2)U_0(t_1)$
for $t\in [t_{i-1},t_i]$. Then $U(t)$ satisfies the required conditions.
\end{proof}
\begin{lem}\label{lem:cio}
Let $k,m\in\nan$ with $m<k$.
Let $\zeta_i:[0,1]\to\cc^k$, $i=1,\ldots, m$ be $C^\infty$-maps.
For each $t\in[0,1]$ let $P(t)$ be the orthogonal projection onto $\spa\{\zeta_i(t)\}_{i=1}^m$.
Suppose that for each $t\in[0,1]$,
the vectors $\{\zeta_i(t)\}_{i=1}^m$ are linearly independent.
Let $\xi_0,\xi_1\in \cc^k$ with $(1-P(0))\xi_0,(1-P(1))\xi_1\neq 0$.
Then there exists a continuous and piecewise $C^\infty$-map $\xi:[0,1]\to\cc^k$
with $\xi(0)=\xi_0$, $\xi(1)=\xi_1$ such that $(1-P(t))\xi(t)\neq 0$
for all $t\in[0,1]$.
\end{lem}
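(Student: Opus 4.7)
The approach is to transport the moving subspace $P(t)\cc^k$ back to the fixed subspace $P(0)\cc^k$ by a path of unitaries, reduce the problem to the easier one of finding a path avoiding a \emph{fixed} subspace, solve that via Lemma \ref{ap}, and push the solution forward by the unitaries.

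First, by Lemma \ref{li}, the map $[0,1]\ni t\mapsto P(t)\in\mk$ is $C^\infty$. Applying Lemma \ref{up} produces a continuous and piecewise $C^\infty$-map $U:[0,1]\to\caU(\mk)$ with $U(0)=\unit$ and $P(t)=U(t)P(0)U(t)^*$ for all $t\in[0,1]$; equivalently, $(\unit-P(t))U(t)=U(t)(\unit-P(0))$.

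Second, I reduce to the fixed subspace $P(0)\cc^k$ as follows. Set $\eta(t):=(1-t)\xi_0+tU(1)^*\xi_1$, a $C^\infty$-map on $[0,1]$. The endpoint hypotheses, together with the intertwining above, give
\[
(\unit-P(0))\eta(0)=(\unit-P(0))\xi_0\neq 0,\qquad (\unit-P(0))\eta(1)=U(1)^*(\unit-P(1))\xi_1\neq 0.
\]
Assuming $m\ge 1$ (the case $m=0$ is trivial since $\cc^k\setminus\{0\}$ is path connected for $k\ge 1$), the subspace $\caK:=(\unit-P(0))\cc^k$ has dimension $k-m<k$. Applying Lemma \ref{ap} to $\eta$ with this $\caK$---so that $P_\caK=\unit-P(0)$ and $\caK^\perp=P(0)\cc^k$---yields a $C^\infty$-map $v:[0,1]\to\cc^k$ with $v(0)=\xi_0$, $v(1)=U(1)^*\xi_1$, and $(\unit-P(0))v(t)\neq 0$ for every $t\in[0,1]$.

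Finally, define $\xi(t):=U(t)v(t)$. As a product of a continuous piecewise $C^\infty$-map and a $C^\infty$-map, $\xi$ is continuous and piecewise $C^\infty$; the boundary values are $\xi(0)=\xi_0$ and $\xi(1)=U(1)U(1)^*\xi_1=\xi_1$; and the intertwining gives
\[
(\unit-P(t))\xi(t)=U(t)(\unit-P(0))v(t)\neq 0,\qquad t\in[0,1],
\]
which is the desired conclusion. The only delicate point is that $U$ from Lemma \ref{up} is in general only continuous and piecewise $C^\infty$ rather than globally $C^\infty$, which is precisely why the conclusion of the present lemma is stated in that same (weaker) regularity; no further obstacle arises.
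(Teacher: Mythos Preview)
Your proof is correct and follows essentially the same approach as the paper: use Lemma \ref{li} to get smoothness of $P$, Lemma \ref{up} to produce the intertwining unitaries $U(t)$, reduce to the fixed subspace $P(0)\cc^k$ via the linear interpolation between $\xi_0$ and $U(1)^*\xi_1$, apply Lemma \ref{ap}, and push forward by $U(t)$. The only difference is cosmetic (your $\eta,v$ are the paper's $\tilde\eta,\tilde\xi$), and your explicit remark on the trivial $m=0$ case is harmless since here $m\in\nan$.
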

\begin{proof}
By Lemma \ref{li}, $[0,1]\ni t\mapsto P(t)\in \mk$ is $C^\infty$.
Therefore, applying Lemma \ref{up} to $P(t)$
we obtain a continuous and piecewise $C^\infty$-map 
$U:[0,1]\to\caU(\mk)$ with $U(0)=1$
such that $P(t)=U(t)P(0)U(t)^*$ for each $t\in[0,1]$.
Set $\eta_0=\xi_0$, and $\eta_1:=U(1)^*\xi_1$.
By the assumptions, $(1-P(0))\eta_0\neq  0$ and $(1-P(0))\eta_1\neq 0$.
 Let $\tilde\eta(t):=(1-t)\eta_0+t\eta_1$.
Applying Lemma \ref{ap} to this $\tilde \eta$
and $(1-P(0))\cc^k$
we obtain 
  a $C^\infty$-map $\tilde \xi:[0,1]\to \cc^k$ 
such that  $\tilde \xi(0)=\eta_0$, $\tilde \xi(1)=\eta_1$ and $\caK:=(1-P(0))\tilde \xi(t)\neq 0$ for all $t\in[0,1]$.
Set $\xi(t):=U(t)\tilde\xi(t)$ for $t\in[0,1]$.
Then $\xi:[0,1]\ni t\mapsto \xi(t)\in\cc^k$ is continuous and piecewise $C^{\infty}$.
Furthermore, we have 
\[
\lmk 1-P(t)\rmk \xi(t)=U(t)\lmk 1-P(0)\rmk U(t)^*U(t)\tilde\xi(t)
=U(t)\lmk 1-P(0)\rmk \tilde\xi(t)\neq 0,
\]
and $\xi(0)=\xi_0$, $\xi(1)=\xi_1$.
\end{proof}
The proof of the following Lemma is standard.
\begin{lem}\label{poa}
Let $k,m\in\nan$ with $m<k$.
Let $
X:[0,1]\to (\Mat_k)_+
$
be continuous and piecewise $C^\infty$-path of positive matrices such that
the rank of $X(t)$ is $m$ for all $t\in[0,1]$.
Let $S(t)$ be the support projection of $X(t)$,
and set $\gamma(t):=\drr (\sigma(X(t))\setminus\{0\},\{0\})$.
 Then, the path of projections
 \[
 S:[0,1]\ni t\mapsto S(t)\in \Mat_k
 \]
 is continuous and piecewise $C^1$
 and 
 \[
 \inf_{t\in[0,1]}\gamma(t)>0.
 \]
\end{lem}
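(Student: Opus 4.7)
\proof
The plan is to first establish the lower bound on $\gamma$, then use the resulting spectral separation together with the Riesz projection formula to obtain the regularity of $S$.

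For the lower bound, observe that since $X(t)$ is positive and self-adjoint with $\rank X(t)=m$ for every $t$, the spectrum $\sigma(X(t))$ consists of exactly $k-m$ zero eigenvalues and $m$ strictly positive eigenvalues (counted with multiplicity). Listing the eigenvalues in nondecreasing order, the $(k-m+1)$-th eigenvalue is precisely $\gamma(t)$. Since eigenvalues of self-adjoint matrices depend continuously on the matrix (e.g.\ by the Weyl or Courant--Fischer min-max characterization), the composition $t\mapsto\gamma(t)$ is continuous on $[0,1]$. It is strictly positive at every point, and $[0,1]$ is compact, so $\inf_{t\in[0,1]}\gamma(t)>0$, which is the second assertion.

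For the regularity of $S$, I would use the Riesz projection. Fix $t_0\in[0,1]$ and set $\gamma_0:=\gamma(t_0)>0$. By continuity of $\gamma$, there is a neighbourhood $I$ of $t_0$ in $[0,1]$ on which $\gamma(t)>\gamma_0/2$ and on which $\|X(t)\|<\|X(t_0)\|+1=:M$. Choose the contour $\Gamma$ in $\mathbb{C}$ consisting of the circle of radius $M+1$ together with the circle of radius $\gamma_0/4$ around the origin, oriented so that $\Gamma$ winds once around every nonzero eigenvalue of $X(t)$ and has winding number zero around $0$, for every $t\in I$. Then for $t\in I$,
\[
S(t)=\frac{1}{2\pi i}\oint_{\Gamma}(z-X(t))^{-1}\,dz.
\]
By compactness of $\Gamma$ and the spectral separation, $(z,t)\mapsto (z-X(t))^{-1}$ is continuous on $\Gamma\times I$, and it inherits piecewise $C^\infty$ dependence on $t$ from $X$; indeed $\frac{d}{dt}(z-X(t))^{-1}=(z-X(t))^{-1}X'(t)(z-X(t))^{-1}$ on each smooth piece. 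Integrating in $z$ then shows that $S$ is continuous on $I$ and piecewise $C^\infty$ (in particular piecewise $C^1$) there. Covering $[0,1]$ by finitely many such neighbourhoods and refining the partition by the finite set of points where $X$ is only continuous (not $C^\infty$) yields the global continuous, piecewise $C^1$ structure of $S$.

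The only mild subtlety is that the path $X$ is only piecewise $C^\infty$, so at the finitely many break points one has to invoke the argument above twice, once from the left and once from the right, and match the values; this matching is automatic because the Riesz projection formula only uses continuity of $X$ to produce a continuous $S$, and the one-sided smoothness of $X$ gives one-sided smoothness of $S$. I do not anticipate a real obstacle: the whole argument is a standard application of the holomorphic functional calculus, and the key nontrivial input, namely the positivity of $\inf\gamma(t)$, has already been handled by continuity of eigenvalues plus compactness of $[0,1]$.
\endproof
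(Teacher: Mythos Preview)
Your proof is correct. The paper does not actually give a proof of this lemma; it simply states that the proof is standard, and your argument (continuity of eigenvalues plus compactness for the gap, Riesz projection with differentiation under the contour integral for the regularity of $S$) is precisely the standard argument the paper is alluding to.
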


\section{CP maps of matrix algebras}
In this section we collect known results about positive maps on matrix
algebras. 
\begin{lem}\label{lem:qw}
Let $n,k\in\nan$
and $\vv\in\mk^{\times n}$.
Suppose that there exist $l_0,d\in\nan$,
such that
\begin{description}
\item[(i)]
$\kl{l_0}(\vv)$ has an invertible element, and
\item[(ii)]
there exists $l_1\in\nan$
such that $\dim\kl{l}(\vv)=d$
for all@$l\ge l_1$.
\end{description}
Then we have
\[
\min\left
\{l\in\nan\mid\dim\kl{l'}(\vv)=d,
\text{for all} \; \; l'\ge l
\right\}
\le
dl_0.
\]\end{lem}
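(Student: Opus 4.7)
Let $W\in\kl{l_0}(\vv)$ be an invertible element guaranteed by hypothesis (i). The first step will be to establish the monotonicity
\[
\dim\kl{l+l_0}(\vv)\ge\dim\kl{l}(\vv),\qquad l\in\nan,
\]
which follows at once from the observation that $W\kl{l}(\vv)\subset\kl{l+l_0}(\vv)$ (a length-$l_0$ product times a length-$l$ product is a length-$(l+l_0)$ product) together with the fact that left multiplication by the invertible matrix $W$ is injective on $\mk$. Combined with hypothesis (ii), this monotonicity immediately gives the uniform upper bound $\dim\kl{l}(\vv)\le d$ for all $l\in\nan$: picking $j$ with $l+jl_0\ge l_1$, iteration yields $\dim\kl{l}(\vv)\le\dim\kl{l+jl_0}(\vv)=d$.

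Next, I will show the lower bound $\dim\kl{s}(\vv)\ge 1$ for every $s\in\{1,\ldots,l_0\}$. For $s=l_0$ this is trivial since $W\neq 0$ lies in $\kl{l_0}(\vv)$. For $1\le s<l_0$, the factorization $v_{\mu_1}\cdots v_{\mu_{l_0}}=(v_{\mu_1}\cdots v_{\mu_s})(v_{\mu_{s+1}}\cdots v_{\mu_{l_0}})$ shows $\kl{l_0}(\vv)\subset\spa\bigl(\kl{s}(\vv)\cdot\kl{l_0-s}(\vv)\bigr)$, so $\kl{s}(\vv)=\{0\}$ would force $\kl{l_0}(\vv)=\{0\}$, contradicting the presence of $W$.

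Finally, fix $s\in\{1,\ldots,l_0\}$ and consider the sequence $a_j^{(s)}:=\dim\kl{s+jl_0}(\vv)$ for $j\ge 0$. By the first step $a_j^{(s)}$ is non-decreasing in $j$, by the second step $a_j^{(s)}\le d$, and by the third step $a_0^{(s)}\ge 1$. Since this is a non-decreasing integer sequence in $\{1,\ldots,d\}$, it must attain the value $d$ by index $j=d-1$ at the latest, and then remain there. Thus $\dim\kl{s+jl_0}(\vv)=d$ for every $j\ge d-1$ and every $s\in\{1,\ldots,l_0\}$. Any $l\ge dl_0$ admits a unique decomposition $l=s+jl_0$ with $s\in\{1,\ldots,l_0\}$ and $j\ge d-1$, so $\dim\kl{l}(\vv)=d$, which is the desired bound.

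There is no serious obstacle in the argument; the only mild subtlety is in sharpening the naive bound $(d+1)l_0$ to $dl_0$, which is precisely the role of the lower bound $a_0^{(s)}\ge 1$ established in the third step via the factorization trick.
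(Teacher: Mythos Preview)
There is a genuine gap in the final step. You assert that a non-decreasing integer sequence $a_j^{(s)}$ taking values in $\{1,\ldots,d\}$ must reach $d$ by index $j=d-1$, but this is false in general: nothing you have proved rules out, say, $a_0^{(s)}=a_1^{(s)}=\cdots=a_{100}^{(s)}=1$ followed by a jump to $d$. Non-decreasing plus bounded plus eventually equal to $d$ is not enough; you need the sequence to be \emph{strictly} increasing until it stabilizes.

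The missing ingredient is a stabilization lemma: if $\dim\kl{l}(\vv)=\dim\kl{l+l_0}(\vv)$ for some $l$, then $\dim\kl{l+jl_0}(\vv)=\dim\kl{l}(\vv)$ for all $j\ge 0$. This is exactly the key step in the Sanz--P\'erez-Garc\'{\i}a--Wolf--Cirac argument the paper cites. The proof is short: equality of dimensions together with your inclusion $W\kl{l}(\vv)\subset\kl{l+l_0}(\vv)$ forces $\kl{l+l_0}(\vv)=W\kl{l}(\vv)$; then
\[
\kl{l+2l_0}(\vv)=\spa\bigl(\kl{l+l_0}(\vv)\cdot\kl{l_0}(\vv)\bigr)=\spa\bigl(W\kl{l}(\vv)\cdot\kl{l_0}(\vv)\bigr)=W\,\kl{l+l_0}(\vv)=W^2\kl{l}(\vv),
\]
and one iterates. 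With this in hand, the sequence $a_j^{(s)}$ is strictly increasing until it first repeats, at which point it is constant; since its eventual value is $d$ and $a_0^{(s)}\ge 1$, it must hit $d$ within $d-1$ steps, and your argument then goes through. Your overall strategy is the same as the paper's reference, but the proof as written omits precisely the one nontrivial point.
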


\begin{proof}
The proof is basically in \cite{Sanz:2010aa}.
\end{proof}
\begin{lem}{\cite{Sanz:2010aa}}\label{lem:oqw}
Let $n,n_0\in\nan$ and $\oo\in\Primz(n,n_0)$.
Define
\[
l_{\oo}:=
\inf\{l\in\nan\mid \kl{l'}(\omega)=\mnz,\quad\text{for all } l'\ge l\}.
\]
Then 
we have
$l_{\omega}\le n_0^4$.
\end{lem}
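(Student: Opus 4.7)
The plan is to apply Lemma~\ref{lem:qw} with the target dimension $d = n_0^2 = \dim \mnz$, which reduces the task to producing some $l_0 \le n_0^2$ with the property that $\kl{l_0}(\oo)$ contains an invertible element, together with the (easier) fact that $\dim \kl{l}(\oo)$ eventually equals $n_0^2$. Once both hypotheses of Lemma~\ref{lem:qw} hold, its conclusion gives $l_\oo \le d \cdot l_0 \le n_0^2 \cdot n_0^2 = n_0^4$ directly. So all the work is in producing the invertible element with $l_0 \le n_0^2$.

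First I would pass to the cumulative subspaces $\caS_l := \sum_{k=1}^{l} \kl{k}(\oo) \subseteq \mnz$. This is a weakly increasing nested chain of subspaces with dimension bounded by $n_0^2$, so it must stabilize at some index $l^* \le n_0^2$, i.e. $\caS_{l^*} = \caS_{l^*+1}$. At the stabilization point one has $\omega_\mu \cdot \caS_{l^*} \subseteq \caS_{l^*+1} = \caS_{l^*}$ for each $\mu$, so $\caS_{l^*}$ is closed under left multiplication by every $\omega_\mu$, and iterating this absorption shows $\caS_{l} = \caS_{l^*}$ for all $l \ge l^*$. By the primitivity of $\oo$, some $\kl{l}(\oo)$ equals $\mnz$, and combining with the previous observation forces $\caS_{l^*} = \mnz$. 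Thus the algebra generated by $\{\omega_\mu\}_{\mu=1}^n$ coincides with $\mnz$, and this happens at cumulative length $\le n_0^2$.

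Second, I would upgrade from $\caS_{l^*} = \mnz$ (a sum of several $\kl{k}$'s containing $\unit$) to an invertible element in a single $\kl{l_0}(\oo)$. The cleanest way I know is to track the sequence of right ideals $\kl{l}(\oo)\cdot \mnz \subseteq \mnz$, whose dimensions are $n_0 \cdot \max_{\mu^{(l)}} \rank(\omega_{\mu^{(l)}})$, and to show, via a rank-monotonicity argument along words concatenated with the invertible generators produced in Step~1, that some product $\omega_{\mu^{(l_0)}}$ attains full rank $n_0$ at a length $l_0 \le n_0^2$; this single element of $\kl{l_0}(\oo)$ is then invertible. Alternatively one can use the Burnside-type quantitative refinement: once $\caS_{l^*} = \mnz$ with $l^* \le n_0^2$, a pigeonhole on the chain of right ideals $\sum_{k\le l}\omega_{\mu}\cdot\kl{k}(\oo)\cdot \mnz$ produces a full-rank word of length $\le n_0^2$. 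Either route realizes Sanz's argument in \cite{Sanz:2010aa}.

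The main obstacle is precisely this second step: the cumulative-span argument of Step~1 is elementary and dimension-counting in nature, but consolidating the resulting sum-of-spans statement into the existence of an invertible element in a single $\kl{l_0}(\oo)$ requires genuinely exploiting the multiplicative structure of $\mnz$ and the matricial shape of the Kraus operators $\omega_\mu$, not just their linear span. Once that step is done, invoking Lemma~\ref{lem:qw} is automatic: the dimension stabilization $\dim \kl{l}(\oo) = n_0^2$ for $l$ large follows because $\caS_{l^*} = \mnz$ implies that for any fixed invertible $X \in \kl{l_0}(\oo)$ the map $Z \mapsto Z \cdot X$ injects $\kl{l}(\oo) \hookrightarrow \kl{l+l_0}(\oo)$, forcing monotonicity of dimensions along arithmetic progressions modulo $l_0$ and hence eventual saturation at $n_0^2$.
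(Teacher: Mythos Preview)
The paper itself gives no proof for this lemma, only the attribution to \cite{Sanz:2010aa}. Your overall plan---reduce to Lemma~\ref{lem:qw} with $d=n_0^2$ and supply an invertible element in $\kl{l_0}(\oo)$ for some $l_0\le n_0^2$---is correct and is precisely the architecture of the Sanz et al.\ argument, so at the strategic level you match the paper.

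The gap is in Step~2, and it is not merely an omission of routine detail. First, your dimension formula is wrong: the right ideal $\kl{l}(\oo)\cdot\mnz$ has dimension $n_0\cdot\dim\bigl(\sum_{\mu^{(l)}}\Ran\,\omega_{\mu^{(l)}}\bigr)$, not $n_0\cdot\max_{\mu^{(l)}}\rank\,\omega_{\mu^{(l)}}$; the ranges of different products need not be nested, so the span can be strictly larger than any single range. Second, and more seriously, Step~1 produces no ``invertible generators'' whatsoever: the conclusion $\caS_{l^*}=\mnz$ only writes $\unit=\sum_{k\le l^*}c_k$ with $c_k\in\kl{k}(\oo)$, and no individual $c_k$ need be invertible, so there is nothing to concatenate words with. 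Your second alternative (``pigeonhole on the chain of right ideals'') is too vague to constitute an argument, and in any case yields at best an invertible element of a \emph{span}, not a full-rank \emph{word} as you claim.

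In fact the Sanz et al.\ route to the invertible element does not pass through your Step~1 at all. It tracks, for a nonzero vector $v\in\bbC^{n_0}$, the cyclic subspaces $W_l(v):=\spn\{\omega_{\mu^{(l)}}v:\mu^{(l)}\}$; one shows (using that primitivity forces $\bigcap_\mu\ker\omega_\mu=0$, so the stacked map $v\mapsto(\omega_\mu v)_\mu$ is injective) that if $W_{l+1}(v)=W_l(v)$ then this subspace is invariant and hence all of $\bbC^{n_0}$, whence $W_l(v)=\bbC^{n_0}$ for some $l\le n_0$. Iterating over a basis then gives a full-rank element in $\kl{l_0}(\oo)$ with $l_0\le n_0^2$. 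Your Step~1 is a correct observation but is orthogonal to what Step~2 actually requires.
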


The following statement is standard.
\begin{lem}\label{lem:cinf}
Let $k\in\nan$ and $T:[0,1]\to B(\mk)$
be a $C^{\infty}$-
map satisfying the following conditions for each $t\in[0,1]$.:
\begin{description}
\item [(1)]The spectral radius $r_{T(t)} $ is strictly positive, and
\item[(2)]
 the spectral radius $r_{T(t)} $ is a non degenerate eigenvalue of $T(t)$, and
 \item[(3)]
 there exists $0<s_t<r_{T(t)} $ such that $\sigma(T(t))\setminus \{r_{T(t)}\}\subset\caB_{s_t}(0)$.
\end{description}
Then 
\begin{enumerate}
\item the map
$[0,1]\ni t\mapsto P_{\{r_{T(t)}\}}^{T(t)}\in  B(\mk)$ is $C^\infty$, and
\item
$[0,1]\ni t\mapsto {r_{{T(t)}}}\in\cc$ is $C^\infty$, and
\item
there exists $0<s<1$ such that 
$\sigma(r_{T(t)}^{-1}T(t))\setminus \{1\}\subset\caB_{s} (0)$, for all $t\in[0,1]$.
\end{enumerate}
\end{lem}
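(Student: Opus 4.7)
\medskip\noindent\textbf{Proof plan for Lemma \ref{lem:cinf}.} The plan is to rely on a Riesz projection formula
\[
P_{\{r_{T(t)}\}}^{T(t)}=\frac{1}{2\pi i}\oint_{\Gamma}(z-T(t))^{-1}dz
\]
together with the fact that in finite dimensions the eigenvalues (with multiplicities) of $T(t)$ depend continuously on~$t$ once $T(t)$ does. Fix $t_{0}\in[0,1]$. By hypothesis (3), the distance $\delta_{0}:=\mathrm{d}\bigl(r_{T(t_{0})},\sigma(T(t_{0}))\setminus\{r_{T(t_{0})}\}\bigr)$ is strictly positive, and by (2) the eigenvalue $r_{T(t_{0})}$ is simple. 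By continuity of spectra, there exist an open neighbourhood $U_{t_{0}}\subset[0,1]$ of $t_{0}$ and a radius $0<\delta<\delta_{0}/2$ such that, for every $t\in U_{t_{0}}$, the circle $\Gamma_{t_{0}}:=\{z\in\bbC:|z-r_{T(t_{0})}|=\delta\}$ encloses exactly one eigenvalue of $T(t)$, which is necessarily $r_{T(t)}$ (the spectral radius, still close to $r_{T(t_{0})}$), while all other eigenvalues lie outside the disk bounded by $\Gamma_{t_{0}}$. The key point is that the contour $\Gamma_{t_{0}}$ is \emph{independent of $t$} on $U_{t_{0}}$.

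First I would use this fixed contour to write $P_{\{r_{T(t)}\}}^{T(t)}=\frac{1}{2\pi i}\oint_{\Gamma_{t_{0}}}(z-T(t))^{-1}dz$ for $t\in U_{t_{0}}$. Since $T(t)$ is $C^{\infty}$ in $t$ and the resolvent $(z-T(t))^{-1}$ is jointly smooth in $(z,t)$ for $z\in\Gamma_{t_{0}}$, differentiation under the integral sign gives assertion~(1) on $U_{t_{0}}$; since every $t_{0}\in[0,1]$ has such a neighbourhood, (1) follows globally. For (2), I would observe that the range of $P_{\{r_{T(t)}\}}^{T(t)}$ is one-dimensional, hence $\Tr P_{\{r_{T(t)}\}}^{T(t)}=1$, and $T(t)P_{\{r_{T(t)}\}}^{T(t)}=r_{T(t)}P_{\{r_{T(t)}\}}^{T(t)}$ yields
\[
r_{T(t)}=\Tr\bigl(T(t)P_{\{r_{T(t)}\}}^{T(t)}\bigr),
\]
which is $C^{\infty}$ as a product of $C^{\infty}$ maps.

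For (3), set $S(t):=r_{T(t)}^{-1}T(t)\bigl(1-P_{\{r_{T(t)}\}}^{T(t)}\bigr)$. By (1), (2) and $r_{T(t)}>0$, the map $t\mapsto S(t)$ is continuous on $[0,1]$, and $\sigma(S(t))=\{0\}\cup\{z/r_{T(t)}:z\in\sigma(T(t))\setminus\{r_{T(t)}\}\}$, so assumption (3) gives $\rho(S(t))<1$ for each $t$. In finite dimensions the eigenvalues of a continuous family of matrices can be ordered continuously (by Rouch\'e's theorem applied to the characteristic polynomial), so the spectral radius $\rho(S(t))$ is a continuous function of $t$. The compactness of $[0,1]$ then gives $s:=\sup_{t\in[0,1]}\rho(S(t))<1$, and with this $s$ the inclusion $\sigma(r_{T(t)}^{-1}T(t))\setminus\{1\}\subset\caB_{s}(0)$ holds uniformly in $t$.

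The only mild subtlety is the initial step of choosing $\Gamma_{t_{0}}$ uniformly on a neighbourhood; once that is done, everything reduces to differentiating a contour integral and to continuity of eigenvalues on a compact interval, both of which are standard. No deep obstacle is expected.
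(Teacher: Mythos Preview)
Your argument is correct and supplies precisely the standard proof the paper has in mind; the paper itself gives no proof at all, simply declaring the statement to be standard. Your three ingredients---a locally $t$-independent Riesz contour to get smoothness of the spectral projection, the trace formula $r_{T(t)}=\Tr\bigl(T(t)P_{\{r_{T(t)}\}}^{T(t)}\bigr)$ for the simple eigenvalue, and continuity of the spectral radius plus compactness of $[0,1]$ for the uniform gap---are exactly what one expects.
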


\section{Linear spaces spanned by given set of vectors}

\begin{lem}\label{lem:lb}
Let $m,k\in\nan$, with $m\le k$, and
$\{\xi_i\}_{i=1}^m$, a set of 
vectors of $\cc^k$.
Let $A$ be an $m\times m$ matrix
given by
$A=\lmk\braket{\xi_i}{\xi_j}\rmk_{i,j=1}^m$.
Let $X:=\sum_{i=1}^m\ket{\xi_i}\bra{\xi_i}\in\Mat_k$
and $P$ be the support projection of $X$.
Suppose that there  exists a positive constant
$c$ such that
$
c\unit\le A$.
Then we have
$
cP\le X$.In particular, we have
$\sigma(X)\setminus \{0\}\subset [c,\lV X\rV]$.
\end{lem}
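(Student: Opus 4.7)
The plan is to reduce everything to the standard fact that $V^*V$ and $VV^*$ have the same nonzero spectrum. Concretely, define the linear map $V:\cc^m\to\cc^k$ by $Ve_i:=\xi_i$ for $i=1,\ldots,m$, where $\{e_i\}$ is the standard basis of $\cc^m$. A direct computation gives $V^*V = \lmk \braket{\xi_i}{\xi_j}\rmk_{i,j=1}^m = A$ and $VV^* = \sum_{i=1}^m \ket{\xi_i}\bra{\xi_i} = X$.

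With this identification, the hypothesis $c\unit_m\le A = V^*V$ implies $\sigma(V^*V)\subset [c,\lV V^*V\rV]$; in particular $V^*V$ is invertible and its nonzero spectrum coincides with $\sigma(V^*V)$. Since $V^*V$ and $VV^*$ share the same nonzero eigenvalues (with multiplicity), we obtain $\sigma(X)\setminus\{0\}=\sigma(VV^*)\setminus\{0\}=\sigma(V^*V)\subset [c,\lV X\rV]$, using $\lV V^*V\rV=\lV VV^*\rV=\lV X\rV$. This already gives the spectral claim.

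For the operator inequality $cP\le X$, I would note that $P$ is the orthogonal projection onto $\Ran X = \Ran VV^*$, and $X$ restricted to $P\cc^k$ is a positive invertible operator on $P\cc^k$ whose spectrum lies in $[c,\lV X\rV]$ by the spectral statement. Thus $X\rv_{P\cc^k}\ge c\cdot \unit_{P\cc^k}$, which together with $X=PXP$ yields $X\ge cP$ as operators on $\cc^k$.

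The only subtle point is making sure the equality of nonzero spectra is accompanied by equality of multiplicities so that no eigenvalue of $X$ outside $\{0\}$ can be smaller than $c$; this follows from the standard fact that for any rectangular $V$ and any $\lambda\neq 0$, $\ker(V^*V-\lambda)$ and $\ker(VV^*-\lambda)$ are isomorphic via $V$ and $V^*/\sqrt{\lambda}$, so no calculation-heavy step is needed. This is a short, routine argument with no real obstacle.
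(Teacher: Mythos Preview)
Your proof is correct. The paper states this lemma in Appendix~C without proof, treating it as a standard fact, so there is nothing to compare against; your argument via $V^*V=A$ and $VV^*=X$ sharing nonzero spectrum is the natural route and fully justifies both the spectral inclusion and the operator inequality $cP\le X$.
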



\begin{lem}\label{lem:pq}
Let $0<a_1<a_2$.
There is a constant $C_{a_1,a_2}, C_{a_1,a_2}'>0$ which depend only on
$a_1,a_2$ satisfying the following.: 
Let $k,m\in\nan$ with $m\le k$,
$\{\xi_i\}_{i=1}^m$, $\{\eta_i\}_{i=1}^m$
sets of linearly independent vectors in $\cc^k$.
Let $P,Q$ be orthogonal projections onto the subspace
spanned by $\{\xi_i\}$, $\{\eta_i\}$, respectively.
Assume that the spectrum of $X=\sum_{i=1}^m\ket{\xi_i}\bra{\xi_i},
Y=\sum_{i=1}^m\ket{\eta_i}\bra{\eta_i}
\in\Mat_k$ satisfy
$\sigma(X)\setminus \{0\},\sigma(Y)\setminus\{0\}\subset (a_1,a_2)$, and
$\lV X-Y\rV<C_{a_1,a_2}$.
Then we have
\[
\lV P-Q\rV\le C_{a_1,a_2}'\lV X-Y\rV.
\]
\end{lem}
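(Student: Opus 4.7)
The plan is to prove this via a contour-integral (Riesz projection) argument, exploiting the fact that $P$ and $Q$ are nothing but the spectral projections of the positive matrices $X$ and $Y$ onto the subsets $(a_1,a_2)$ of their spectra. Since $X$ and $Y$ are self-adjoint and by hypothesis their nonzero spectra are strictly separated from $0$ by the interval $(a_1,a_2)$, one can choose a single closed contour that encloses both nonzero spectra while avoiding $0$, and then use the standard perturbative estimate on resolvents.

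More concretely, I would first fix a simple closed contour $\gamma$ in $\mathbb{C}$, depending only on $a_1,a_2$, that encloses the interval $[a_1,a_2]$ but not the point $0$ and keeps positive distance $\delta=\delta(a_1,a_2)>0$ from $\{0\}\cup [a_1,a_2]$ (e.g.\ a circle centered at $(a_1+a_2)/2$ with radius $(a_2-a_1)/2+a_1/4$ works, giving $\delta\ge a_1/4$). Since $P$ is the orthogonal projection onto $\Ran X$ and $\sigma(X)\setminus\{0\}\subset(a_1,a_2)$, one has
\[
P=\frac{1}{2\pi i}\oint_{\gamma}(z-X)^{-1}\,dz,\qquad
Q=\frac{1}{2\pi i}\oint_{\gamma}(z-Y)^{-1}\,dz.
\]
The self-adjointness of $X,Y$ gives the uniform resolvent bound $\|(z-X)^{-1}\|,\|(z-Y)^{-1}\|\le\delta^{-1}$ for all $z\in\gamma$.

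Applying the resolvent identity
\[
(z-X)^{-1}-(z-Y)^{-1}=(z-X)^{-1}(X-Y)(z-Y)^{-1}
\]
and integrating gives
\[
\|P-Q\|\le\frac{|\gamma|}{2\pi}\,\sup_{z\in\gamma}\|(z-X)^{-1}\|\,\sup_{z\in\gamma}\|(z-Y)^{-1}\|\,\|X-Y\|
\le\frac{|\gamma|}{2\pi\delta^{2}}\,\|X-Y\|,
\]
which is the desired inequality with $C'_{a_1,a_2}:=|\gamma|/(2\pi\delta^{2})$.

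The role of $C_{a_1,a_2}$ is only to keep the two spectral data comparable: if one picks $C_{a_1,a_2}$ small enough (e.g.\ $C_{a_1,a_2}<\delta$), then for $\|X-Y\|<C_{a_1,a_2}$ the spectrum of $Y$ remains in the complement of $\gamma$ (so the Riesz formula for $Q$ makes sense even if the a priori spectral hypothesis were only imposed on $X$). Here both spectral conditions are already assumed, so the only real obstacle is the bookkeeping of choosing $\gamma$ and $\delta$ to make the constants $C_{a_1,a_2},C'_{a_1,a_2}$ depend solely on $a_1,a_2$; once $\gamma$ is fixed as above, the argument is routine.
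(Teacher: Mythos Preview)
Your argument is correct: the Riesz projection formula combined with the resolvent identity gives exactly the Lipschitz bound with constants depending only on $a_1,a_2$. In the paper this lemma is stated in the appendix without proof (as a standard fact), so there is nothing to compare against; your contour-integral proof is the canonical one.
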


\section{Quasi-equivalence of states}
\begin{lem}\label{lem:qe}
Let $d\in\nan$, and $\mathfrak A$ be a unital $C^*$-algebra.
Suppose that for any $N\in\nan$, there exists a unital $C^*$-algebra  ${\mathfrak B}_N$
such that $\mathfrak A=\lmk \bigotimes_{i=1}^N\Mat_d\rmk\otimes 
{\mathfrak B}_N$.
Let $\varphi_1,\varphi_2$ be states on $\mathfrak A$.
Suppose that there exist constants $C>0$, $0<s<1$ such that
\begin{align*}
\lv
\varphi_1(A)-\varphi_2(A)
\rv
\le C s^N\lV A\rV,\quad N\in\nan,\quad A\in \unit_{\bigotimes_{i=1}^N\Mat_d}\otimes {\mathfrak B}_N.
\end{align*}
Then $\varphi_1$ and $\varphi_2$ are quasi-equivalent.
\end{lem}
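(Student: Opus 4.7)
Plan: To prove the lemma I will show that $\varphi_2$ extends to a normal state on $\pi_1(\mathfrak A)''$, where $(\caH_1,\pi_1,\Omega_1)$ is the GNS triple of $\varphi_1$; by symmetry the same holds for $\varphi_1$ on $\pi_2(\mathfrak A)''$, and quasi-equivalence then follows from the mutual-normalcy characterization of the folium.

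The main structural step is to exploit the finite-dimensional matrix factor. For each $N$, the subalgebra $\caA_N := \bigotimes_{i=1}^N\Mat_d \cong \Mat_{d^N}$ is a type I factor, so the restriction $\pi_1|_{\caA_N\otimes\unit}$ is a multiple of the identity representation. This yields a canonical identification $\caH_1 \cong \bbC^{d^N}\otimes\caK_1^{(N)}$, in which $\pi_1(A\otimes B) = A\otimes\sigma_1^{(N)}(B)$ for $A\in\caA_N$, $B\in\mathfrak B_N$, where $\sigma_1^{(N)}$ is some representation of $\mathfrak B_N$ on the multiplicity space $\caK_1^{(N)}$. Writing the cyclic vector as $\Omega_1 = \sum_k\ket{k}\otimes\eta_k^{(N)}$, the restriction $\varphi_1|_{\unit\otimes\mathfrak B_N}$ becomes the explicit normal state $B\mapsto\sum_k\braket{\eta_k^{(N)}}{\sigma_1^{(N)}(B)\eta_k^{(N)}}$ on $\sigma_1^{(N)}(\mathfrak B_N)''$, implemented by the trace-class operator $\sum_k\ket{\eta_k^{(N)}}\bra{\eta_k^{(N)}}$ on $\caK_1^{(N)}$.

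Next, for each $N$ I build a candidate positive trace-class operator $\rho^{(N)} \in \caL^1(\caH_1)$ whose block decomposition, in the matrix units $\{e_{ij}^{(N)}\}$ of $\caA_N$, implements the linear functional $B \mapsto \varphi_2(e_{ji}^{(N)}\otimes B)$ against $\sigma_1^{(N)}(B)$. The hypothesis bounds these functionals by the corresponding $\varphi_1$-values (which are already of trace-class form) up to an error of order $s^N\|B\|$; after a suitable refinement of scale this produces a sequence $(\rho^{(N)})$ which is Cauchy in trace norm, with limit $\rho\in\caL^1(\caH_1)$ giving $\varphi_2(A) = \Tr(\rho\,\pi_1(A))$ for every $A\in\mathfrak A$, hence normalcy of $\varphi_2$ on $\pi_1(\mathfrak A)''$. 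The symmetric argument concludes.

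The main obstacle will be the apparent dimensional amplification: the block index runs over $d^{2N}$ matrix units, and a naive term-by-term estimate would multiply the decay rate $s^N$ by a factor of $d^{2N}$, losing the Cauchy property. Overcoming this requires exploiting the tensor-product structure of $\Omega_1$ directly (for instance via a Schmidt decomposition in $\bbC^{d^N}\otimes\caK_1^{(N)}$ and a basis of $\caA_N$ adapted to it) so that the correction between $\rho^{(N)}$ and $\ket{\Omega_1}\bra{\Omega_1}$ can be controlled in the trace-norm ideal without a combinatorial explosion. Once that technical bound is in place, the Cauchy property of $(\rho^{(N)})_N$ and the identification $\varphi_2(A)=\Tr(\rho\,\pi_1(A))$ are routine, yielding the claimed quasi-equivalence.
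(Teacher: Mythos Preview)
The paper gives no proof here beyond the reference to Lemma~6.2.55 of Bratteli--Robinson~\cite{BR2}, so there is no detailed argument to compare against. Your overall strategy---establishing that each state is normal with respect to the GNS representation of the other, and hence that the folia coincide---is the standard one.

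There is, however, a genuine gap, and it lies earlier than the ``dimensional amplification'' obstacle you flag. Your $\rho^{(N)}$ is supposed to have $(i,j)$-block $\rho_{ij}^{(N)}\in\caL^1(\caK_1^{(N)})$ with $\Tr\bigl(\rho_{ij}^{(N)}\,\sigma_1^{(N)}(B)\bigr)=\varphi_2(e_{ji}^{(N)}\otimes B)$ for all $B\in\mathfrak B_N$. For such a trace-class operator to exist, the functional $B\mapsto\varphi_2(e_{ji}^{(N)}\otimes B)$ would already have to be $\sigma_1^{(N)}$-normal; since $\sigma_1^{(N)}$ is built from $\varphi_1$, this is precisely the $\pi_1$-normality of $\varphi_2$ that you are trying to establish. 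Your appeal to the hypothesis to ``bound these functionals by the corresponding $\varphi_1$-values up to $s^N\lV B\rV$'' is also unjustified: the assumption controls $\varphi_1-\varphi_2$ only on elements of the form $\unit\otimes B$, not on $e_{ji}^{(N)}\otimes B$, and in any case norm-closeness to a normal functional does not imply normality.

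The argument in~\cite{BR2} avoids this circularity by never attempting to realize $\varphi_2|_{\mathfrak B_N}$ inside $\caH_1$. One instead builds, for each $N$, a state $\psi_N$ that is \emph{manifestly} $\pi_1$-normal---assembled only from the GNS vector $\Omega_1$ and the finite-dimensional density matrix of $\varphi_2|_{\caA_N}$---and then proves $\lV\psi_N-\varphi_2\rV\to 0$, so that $\varphi_2$ lies in the norm-closed folium of $\varphi_1$. The passage from the $\mathfrak B_N$-estimate to a bound on all of $\mathfrak A$ via the matrix units of $\caA_N$ is where the real work sits, and it is there (not before) that the Schmidt decomposition of $\Omega_1$ enters.
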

\begin{proof}
The proof is analogous to that of Lemma 6.2.55 of \cite{BR2}.
We omit the details.
\end{proof}

\end{document}